\renewenvironment{abstract}
 {\small
  \begin{center}
  \bfseries \abstractname\vspace{-.5em}\vspace{0pt}
  \end{center}
  \list{}{
    \setlength{\leftmargin}{.63cm}%
    \setlength{\rightmargin}{\leftmargin}%
  }%
  \item\relax}
 {\endlist}
\crefname{claim}{Claim}{Claims}
\crefname{lemma}{Lemma}{Lemmata}
\newtheorem{inflem}{Informal Lemma}
\newtheorem{theorem}{Theorem}[section]
\newtheorem{lemma}[theorem]{Lemma}
\newtheorem{claim}[theorem]{Claim}
\newtheorem{corollary}[theorem]{Corollary}
\theoremstyle{definition}
\newtheorem{example}{Example}
\newtheorem{definition}[theorem]{Definition}
\theoremstyle{remark}
\newcommand\mycommfont[1]{\footnotesize\ttfamily\textcolor{gray}{#1}}
\newlist{threecases}{enumerate}{1}
\setlist[threecases]{label={(\alph*)}}
\crefname{threecasesi}{Case}{Cases}
\newlist{properties}{enumerate}{1}
\setlist[properties]{label={(\alph*)}}
\crefname{propertiesi}{Property}{Properties}
\newlist{pproperties}{enumerate}{1}
\setlist[pproperties]{label={(F.\arabic*)}}
\crefname{ppropertiesi}{Property}{Properties}
\newlist{levels}{enumerate}{1}
\setlist[levels]{label={(L.\arabic*)}}
\crefname{levelsi}{Level}{Levels}
\def \X {\mathbf{X}\xspace}
\newcommand{\Gr}{G_\mathrm{r}\xspace}
\newcommand{\Ge}{G_\mathrm{e}\xspace}
\newcommand{\G}{G\xspace}
\newcommand{\numagents}{7\xspace}
\newcommand{\Gde}{G_\mathrm{de}\xspace}
\newcommand{\potential}{\mathcal{F}\xspace}
\newcommand{\PP}{\mathcal{P}\xspace}
\newcommand{\CR}{\textnormal{\textsc{CycleResolution}}\xspace}
\newcommand{\ACR}{\textnormal{\textsc{AllCyclesResolution}}\xspace}
\newcommand{\PR}{\textnormal{\textsc{PathResolution}}\xspace}
\newcommand{\PRPA}{\textnormal{\textsc{PathResolution$^*$}}\xspace}
\newcommand{\PPA}{\textnormal{\textsc{3PA}}\xspace}
\newcommand{\SP}{\textnormal{\textsc{SingletonPool}}\xspace}
\newcommand{\UCG}{\textnormal{\textsc{UncontestedCri\-ti\-cal}}\xspace}
\newcommand{\SRR}{\textnormal{\textsc{SingleRoundRobin}}\xspace}
\newcommand{\MGA}{\textnormal{\textsc{MultigraphAllocate}}\xspace}
\newcommand{\FAA}{\textnormal{\textsc{FewAgentsAllocate}}\xspace}
\newcommand{\TVA}{\textnormal{\textsc{ThreeValuesAllocate}}\xspace}
\newcommand{\iso}{secluded\xspace}
\newcommand{\Iso}{Secluded\xspace}
\newcommand{\regplus}{reduced$+$ envy graph\xspace}
\newcommand{\REGplus}{Reduced$+$ Envy Graph\xspace}
\newcommand{\enhplus}{enhanced$+$ envy graph\xspace}
\newcommand{\ENHplus}{Enhanced$+$ Envy Graph\xspace}
\newcommand{\Grplus}{G_\mathrm{r}^+\xspace}
\newcommand{\Geplus}{G_\mathrm{e}^+\xspace}
\newcommand{\PPAplusplus}{\textnormal{\textsc{3PA$^{++}$}}\xspace}
\newcommand{\ECE}{\textnormal{\textsc{EnvyCycleElimination}}\xspace}
\DeclareMathOperator*{\argmax}{arg\,max}
\newcounter{note}[section]
\renewcommand{\paragraph}[1]{\medskip\noindent\textbf{#1.\;}}
\begin{document}

\title{Pushing the Frontier on Approximate EFX Allocations}

\author[1,3]{Georgios Amanatidis}
\author[2]{Aris Filos-Ratsikas}
\author[1,3]{Alkmini Sgouritsa}

\affil[1]{Department of Informatics, Athens University of Economics and Business; Athens, Greece}

\affil[2]{School of Informatics, University of Edinburgh; Edinburgh, UK}

\affil[3]{Archimedes / Athena RC; Athens, Greece}

\predate{}
\postdate{}
\date{}

\maketitle

\begin{abstract}
\noindent We study the problem of allocating a set of indivisible goods to a set of agents with additive valuation functions, aiming to achieve approximate envy-freeness up to any good ($\alpha$-EFX). The state-of-the-art results on the problem include that (exact) EFX allocations exist when (a) there are at most three agents, or (b) the agents' valuation functions can take at most two values, or (c) the agents' valuation functions can be represented via a graph. For $\alpha$-EFX, it is known that a $0.618$-EFX allocation exists for any number of agents with additive valuation functions. 
In this paper, we show that $2/3$-EFX allocations exist when (a) there are at most \emph{seven agents}, (b) the agents' valuation functions can take at most \emph{three values}, or (c) the agents' valuation functions can be represented via a \emph{multigraph}. Our results can be interpreted in two ways. First, by relaxing the notion of EFX to $2/3$-EFX, we obtain existence results for strict generalizations of the settings for which exact EFX allocations are known to exist. Secondly, by imposing restrictions on the setting, we manage to beat the barrier of $0.618$ and achieve an approximation guarantee of $2/3$. Therefore, our results  push the \emph{frontier} of existence and computation of approximate EFX allocations, and provide insights into the challenges of settling the existence of exact EFX allocations. 
\end{abstract}

\section{Introduction}
\label{sec:intro}

The existence of EFX allocations is arguably the most important open problem in the area of computational fair division with indivisible goods. In this setting, a set of agents have values over a set of indivisible goods, and the goal is to allocate the goods to the agents in a way that is perceived as \emph{fair} by everyone. One of the most well-established notions of fairness is \emph{envy-freeness}, introduced in the late 1950s by \citet{gamow1958puzzle} (see also \citep{foley1966resource,varian1973equity}) in the context of \emph{divisible} resources;
this notion stipulates that no agent would prefer another agent's allocation to her own. For indivisible goods, it is not hard to see that envy-free allocations may not be possible. Indeed, consider the case where there is a single good which is positively valued by multiple agents; some agent will receive the good and the others will be envious.

Motivated by this obvious impossibility, the literature has defined \emph{relaxed} fairness notions, which are appropriate for indivisible goods allocation. \citet{budish2011combinatorial} first defined the notion of \emph{envy-freeness up to one good (EF1)}, which deems an allocation fair if the envy of an agent is eliminated after the removal of \emph{some} good from the bundle of another envied agent. This notion was implicitly introduced earlier by \citet{lipton2004approximately}, who showed that EF1 allocations for monotone valuation functions always exist and can be computed in polynomial time via simple algorithms. On the minus side, in certain applications EF1 might be unsatisfactory, as it might require the removal of a very valuable good to restore envy-freeness. To address this shortcoming, \citet{GourvesMT14} and \citet{caragiannis2019unreasonable} introduced the notion of \emph{envy-freeness up to any good (EFX)}, which stipulates that the envy is eliminated even if the least valuable good, from the envious agent's perspective, is removed from the envied agent's bundle. 

Contrary to the case of EF1 allocations, the existence of EFX allocations is much more intricate. In fact, as we mentioned above, this is one of the major open problems in the area of computational fair division, currently carrying great momentum and being met with intensive efforts from the research community. The associated research has adopted a systematic approach to tackling this evidently very challenging question, by first obtaining existence results for special cases of the problem, developing a deeper understanding of its intricacies, and ultimately aiming to synthesize these ideas into an answer to the main problem. Among those results, three of the most notable ones are that EFX allocations exist when 
\begin{itemize}[leftmargin=20pt]
\item[-] there are at most three agents \citep{chaudhury2020efx},
\item[-] each agent's value for each good can be one of \emph{two} numbers $a$ or $b$ \citep{amanatidis2021maximum},
\item[-] the agents' valuation functions can be represented by a \emph{graph}, with edges corresponding to goods and nodes corresponding to agents \citep{christodoulou2023fair}. 
\end{itemize}
Other interesting restrictions that have been studied in the literature include agents with values that induce the same ordering over goods \citep{PR18} and leaving some goods unallocated \citep{caragiannis2019charity,CKMS20,aaai/BergerCFF22}.

A related line of work has studied \emph{approximations} to the EFX notion, referred to as $\alpha$-EFX. An allocation is $\alpha$-EFX if after the removal of any good from the envied agent's bundle, the envy is bounded by a multiplicative factor of $\alpha$. A $1/2$-EFX allocation is achievable via slight modifications to techniques that achieve EF1, e.g., see \citep{ChanCLW19}. The state of the art for approximate EFX allocations is a $(\!\sqrt{5}-1)/2 \approx 0.618$ approximation due to \citet{ANM2019}, 
which in fact seems to be the limit of the techniques that the literature has developed for the problem; see also \citet{farhadi2021almost}. \citet{markakis2023improved} were able to beat the $0.618$ barrier and achieve $2/3$-EFX allocations for a  restricted case where all of the $n$ agents agree on which $n$ goods are the most valuable. Whether $2/3$-EFX allocations can be achieved in general is a very intriguing open problem. If sufficiently many goods are left unallocated, it has been shown that near optimal approximations to EFX can be achieved \citep{ChaudhuryGMMM21}.

The state-of-the-art results in the literature can thus be seen as lying on a certain kind of \emph{frontier}: 
indeed, a certain set of parameters (e.g., the number of agents, the type of values of the agents, the approximation factor), can be seen as a point on a search space, with those points for which we have obtained existence results constituting the frontier of our current understanding of the problem. The ultimate goal is to move towards the point corresponding to (exact) EFX for any number of agents and without restrictions on the values. Given the challenging nature of this endeavor, however, it seems sensible to try to ``push'' this frontier towards that goal.

\subsection{Our Results and Techniques}
In this work, we prove that $2/3$-EFX allocations  exist and can be efficiently computed for agents with additive valuation functions in three important cases, namely when:
\begin{itemize}[leftmargin=20pt]
    \item[-] There are at most \textit{seven} agents (\cref{thm:7_agents}).\smallskip
    \item[-] Each agent's value for each good can be one of \emph{three} non-negative numbers $a$, $b$, or $c$ (\cref{thm:3-value-instances}). \smallskip
    \item[-] The agents' values can be represented by a \emph{multigraph}, with edges corresponding to goods and nodes corresponding to agents (\cref{thm:multigraph}). Here an agent has nonzero value for a good only if this good is incident to her. This setting generalizes the setting studied recently by \citet{christodoulou2023fair}.
\end{itemize}
We make progress in all three settings through the same algorithmic framework, although each one requires its own modifications. While all of these are nontrivial, the most intricate case is, somewhat surprisingly, the case of three values.
We next present an overview of our techniques.

\paragraph{Property-Preserving Partial Allocations}  Our approach is based on the following general principle: obtain a \emph{partial} allocation $\X$ of goods to agents that satisfies a certain set of properties. Then this allocation can be transformed into a complete allocation that is $2/3$-EFX. 

To be more precise, all of the known algorithms for $\alpha$-EFX in the literature \citep{ANM2019,markakis2023improved,farhadi2021almost} start by producing a partial allocation $\X$ using only a subset of the goods. How this partial allocation is obtained may differ between different algorithms, but they all serve the same purpose: once $\X$ is obtained, then one can run the Envy Cycle Elimination algorithm of \citet{lipton2004approximately}, with initial input $\X$, to produce an allocation that is approximately EFX. For this to be possible, $\X$ has to satisfy certain properties, mainly that (a) it is $\alpha$-EFX, for the approximation factor $\alpha$ that we are aiming to prove, and (b) none of the agents consider any of the goods that are left unallocated \emph{too valuable}. Formally, an unallocated good is ``too valuable'' for an agent $i$ if her value for that good is {more than $\beta$ times} her value for her allocated bundle in the partial allocation $\X$. 
These goods lie at the heart of our approach and we refer to them as \emph{critical goods}; see \Cref{def:critical}. This is captured by the following lemma, stated informally here (its formal statement, restricted to $2/3$-EFX, is \cref{lem:partialefx-to-efx-extension} in \Cref{sec:key-lemmas}). 

\begin{inflem}[\citet{markakis2023improved}]\label{inflem:partial-to-full}
If a partial allocation $\X$ is $\alpha$-EFX and does not induce any critical goods (defined via a parameter $\beta$), then it can be transformed into a complete $\min\{\alpha,\frac{1}{\beta+1}\}$-EFX allocation. 
\end{inflem}

Therefore, the value of $\beta$ that makes a good critical depends on the approximation factor $\alpha$ that we are aiming to prove existence for. For $\alpha=0.618$ (achieved by \citet{ANM2019}), it is also the case that $\beta=0.618$. 
From a technical perspective, the ``balance'' of these terms makes the construction of a partial allocation $\X$ that does not induce any critical goods achievable via relatively simple algorithms. 
In fact, said algorithms also guarantee that the cardinality of each bundle in $\X$ is at most $2$. In the context of (approximate) EFX, bundles of cardinality at most $2$ are much more manageable. For larger values of $\alpha$, however, $\alpha \neq \beta$ and we will have a natural imbalance. For $\alpha=2/3$ in particular, we have $\beta=1/2$. In this case, it can be shown that even for the cases that we consider, it is not possible for $\X$ to be both $2/3$-EFX and induce no critical goods, unless agents receive bundles of cardinality $3$ or more. We demonstrate that concretely via the following example with only three agents and three possible values. We generalize this example to provide similar limitations for higher cardinality bundles, where there are still only three possible values.

\begin{example}\label{ex:non-existence-of-23efx-with-bundles-of-size-two}
Consider an instance with three agents $1,2,3$ and five goods $h_1,h_2,g_1,g_2$ and $g_3$. The values of the agents for the goods are given by the following table.
\begin{table}[ht]
\begin{center}
\begin{tabular}{|c|c|c|c|c|c|}
\hline
          & $h_1$ & $h_2$ & $g_1$   & $g_2$   & $g_3$    \\ \hline
agent $1$ & $1$   & $1$   & $3/5$   & $1/100$ & $1/100$  \\ \hline
agent $2$ & $1$   & $1$   & $1/100$ & $3/5$   & $1/100$  \\ \hline
agent $3$ & $1$   & $1$   & $1/100$ & $1/100$ & $3/5$    \\ \hline
\end{tabular}
\label{tab:example}
\end{center}
\end{table} 
Assume that we have a partial allocation $\X$ where the bundle $X_i$ of each agent $i \in \{1,2,3\}$ \textit{contains at most two goods}. If either $h_1$ or $h_2$ is unallocated, they are critical for each of the agents. If both are allocated, assume without loss of generality that $h_1$ is allocated to agent $1$. For $\X$ to be $2/3$-EFX, $h_2$ cannot be allocated to agent $1$ as well, so it must be allocated to some other agent, say agent $2$. This means that neither of agents $1$ and $2$ can have another good in her bundle, as otherwise agent $3$ (who does not get $g_1$ or $g_2$) would envy that agent by a factor greater than $2/3$. Since agent $3$'s bundle contains at most two goods, one of $g_1$, $g_2$, and $g_3$ must be unallocated. These would be critical for agents $1$, $2$ and $3$ respectively, so in the end there must be some critical good with respect to $\X$. 

It is not hard to extend this example to show that, for any constant $k\in \mathbb{N}$, it is not possible for $\X$ to be $2/3$-EFX, induce no critical goods, and only contain bundles of cardinality $k$ or less. More specifically, for any $k\in \mathbb{N}$, consider $k+1$ agents and a set $M$ of $2k+1$ goods: $M=\{h_1, \ldots, h_k, g_1, \ldots , g_{k+1}\}$. Each good $h_i$ has value $1$ for all $k+1$ agents, and each good $g_i$ has value $3/5$ for agent $i$ and $1/100k$ for the rest of the agents. The arguments for $k=2$ are trivially generalized to any $k\in \mathbb{N}$.

For instance, adding an agent $4$ to the current instance, along with two new goods,  $h_3$ (high-valued for everyone) 
\begin{table}[ht]
\begin{center}
\begin{tabular}{|c|c|c|c|c|c|c|c|}
\hline
          & $h_1$ & $h_2$ & $h_3$   & $g_1$   & $g_2$   & $g_3$   & $g_4$   \\ \hline
agent $1$ & $1$   & $1$   & $1$     & $3/5$   & $1/100$ & $1/100$ & $1/100$ \\ \hline
agent $2$ & $1$   & $1$   & $1$     & $1/100$ & $3/5$   & $1/100$ & $1/100$ \\ \hline
agent $3$ & $1$   & $1$   & $1$     & $1/100$ & $1/100$ & $3/5$   & $1/100$ \\ \hline
agent $3$ & $1$   & $1$   & $1$     & $1/100$ & $1/100$ & $1/100$ & $3/5$  \\ \hline
\end{tabular}
\label{tab:example-k}
\end{center}
\end{table} 
and $g_4$ (medium value for agent $4$ and  low for everyone else), according to the updated table, immediately extends the above argument for $k=3$.
\renewcommand\qedsymbol{{\footnotesize $\blacksquare$}}\qed
\end{example}

Our goal will be to obtain a $2/3$-EFX partial allocation without critical goods \emph{in two stages}.  First, we devise a general algorithm called \textsc{Property-Preserving Partial Allocation}  (\PPA), which obtains a partial allocation $\X^1$ that satisfies a certain set of properties (see Algorithm \ref{alg:3PA}). One such property is that it is $2/3$-EFX, as required by \cref{inflem:partial-to-full} above. This partial allocation $\X^1$ will still have critical goods, but it will limit the number of such goods to at most \emph{one per agent}, and only for agents that have singleton bundles in $\X^1$. This, together with the other properties of $\X^1$, will prepare the ground for allocating the critical goods to the agents in a subsequent stage, resulting in a new partial allocation $\X^2$. This $\X^2$ will now satisfy the properties of \cref{inflem:partial-to-full}, and it will eventually be transformed into a complete $2/3$-EFX allocation. 

\paragraph{Swap steps and different types of envy graphs} The aforementioned \PPA algorithm is based on a series of steps that are executed in sequence according to a certain priority structure. Most of these are \emph{swap steps}, i.e., steps that enable certain agents to exchange (parts of) their bundles in the partial allocation $\X$ with certain unallocated goods. The priority is determined by the cardinality of the bundles in $\X$, as well as the value of the agents for those unallocated goods. Besides the swap steps, the algorithm also includes steps that are performed on several different types of envy graphs, associated with allocation $\X$. A (standard) envy graph for a partial allocation $\X$ is a graph in which the nodes correspond to agents, and an edge $(i,j)$ signifies that agent $i$ envies agent $j$. Envy graphs are very common in the design of algorithms for fair division of indivisible goods, starting with the Envy Cycle Elimination algorithm of \citet{lipton2004approximately}. We also consider two different types of graphs, namely \emph{reduced} graphs and \emph{enhanced} graphs. In the former, envy towards agents with singleton bundles is disregarded, unless it is high enough. In the latter, we add edges of \emph{near} envy, as long as the value of the target bundle is above a certain threshold. By exchanging bundles along cycles and paths in these graphs, we obtain the flexibility to deal with the inherently most challenging case of ``moving value'' to and from agents with singleton bundles. 
In the case of three values, we further refine our graphs, including edges that indicate ties in the values. This allows for more options on how the partial allocations evolve, but adds an extra layer of complexity in the analysis.

Throughout the execution of the \PPA algorithm, the value of an agent for her bundle may decrease several times, but this is done in a controlled way that allows us to allocate the critical goods in subsequent steps. We remark that since the steps of the \PPA algorithm (or its variant \PPAplusplus from \Cref{sec:refined3PA}) repeatedly allocate and de-allocate goods, their polynomial running time, or even their termination is far from obvious; to this end, we provide proofs based on carefully crafted potentials and appropriate combinatorial arguments.

\subsection{Further Discussion on Related Work}
The literature on the fair division of indivisible goods is extensive. Here we discuss the works most relevant to us, and refer the reader to the recent survey of \citet{amanatidis2023fair} for more details. 

The existence of EFX allocations for three agents with additive valuations by \citet{chaudhury2020efx} was a breakthrough which followed the existence result for two agents due to \citet{PR18}. This result was subsequently simplified (and generalized to somewhat larger valuation classes) by \citet{akrami2023efx}. Our first aforementioned result above shows that if we relax the requirement from EFX to $2/3$-EFX, we can guarantee its existence for up to \numagents agents. Our algorithm for this case also runs in polynomial time, contrary to the aforementioned algorithms for exact EFX and three agents. \citet{amanatidis2021maximum} were the first to study the existence of EFX on instances with two values ($2$-value instances). First, they showed that the Maximum Nash Welfare (the geometric mean of the agents' values for their allocated bundles) guarantees EFX in this setting, and then developed a polynomial-time algorithm to achieve EFX allocations. \citet{babaioff2021fair} consider a setting where the valuations need not be additive, but have binary marginals and prove the existence of EFX allocations in that case. The approaches for two values clearly fall short if one considers instances with three values. Our approach is fundamentally different from that of \citet{amanatidis2021maximum} and shows that existence is possible, at the expense of a loss in the approximation factor. Finally, the question of the existence of EFX allocations on graphs was introduced by \citet{christodoulou2023fair}; the authors point out that their techniques are insufficient when applied to instances on multigraphs. 
Using our framework, we show how to obtain $2/3$-EFX allocations on multigraphs in polynomial time, and in a relatively simple manner.

\subsubsection{Work Following the Conference Version} Since the publication of the conference version of this work in 2024 \cite{AFS24}, a number of related works have appeared in the literature, including works which explicitly build on our framework and the \PPA algorithm. \citet{PrakashMN25} proposed a modified version of the \PPA algorithm which achieves $2/3$-EFX allocations on instances with $4$ types of agents; note that for three types of agents, (exact) EFX allocations were shown to exist earlier by \citet{hv2025efx}. For values represented via a multigraph, \citet{kaviani2025improved} showed an improved approximation of $1/\sqrt{2}$, using different techniques to the ones we introduced in this paper, and \citet{KSS24} proved the existence of (exact) EFX for restricted additive valuations. Very recently, \citet{Afshinmehr2026EFXMultigraphs} and  \citet{Christodoulou2026MultigraphEFX} independently showed the existence of (exact) EFX allocations on multigraphs for additive and cancelable valuations, respectively. 
Other works that study approximate EFX allocations on specific multigraph structures and related settings include \cite{bhaskar2024efx,bhaskar2025extending,afshinmehr2025efx,lianeas2026efx,kakatelis2026almost}. Specifically, for hypergraphs with girth at least 3, \citet{kakatelis2026almost} recovered the $1/\sqrt{2}$-EFX result of \cite{KSS24} via a variation of the \PPA algorithm, which they further extended to show $2/3$-EFX when the edge multiplicity is at most two. 

Quite recently, \citet{filos2026approximate} studied approximate envy-free allocations up to any $k$ goods ($\alpha$-EFkX), a notion introduced by \citet{AkramiRS2022EF2X}. Their main result is obtained via a generalization of the \PPA algorithm (coined ``Generalized'' 3PA (G3PA)), which in polynomial time achieves $\frac{k+1}{k+2}$-EFX allocations for any number of agents. In particular, their result implies a $3/4$-EF2X existence result for any number of agents; (exact) EF2X allocations were known to exist only for the case of up to $4$ agents, due to \citet{ashuri2025ef2x}. From a technical perspective, the case of $k>1$ allows for some crucial flexibility when allocating the critical goods: agents may have more than one critical good in the pool, but once an agent receives such a good, the remaining goods may no longer be critical, in relation to the agent's newly enhanced bundle.

\section{Preliminaries}
\label{sec:preliminaries}

Let $N$ be the set of agents and let $M$ be the set of indivisible goods. Let $n=|N|$ and $m=|M|$. Our goal is to assign the goods to the agents in a fair way. This assignment is captured by the notion of an \emph{allocation}, which can be \emph{partial} or complete. 

\begin{definition}[(Partial) Allocation]
A \emph{partial allocation} $\X=(X_1, \ldots, X_n)$ is an ordered tuple of disjoint subsets of $M$,  i.e.,
(i) for every agent $i$, $X_i \subseteq M$, and
(ii) for any two distinct agents $i$ and $j$, it holds that $X_i \cap X_j = \emptyset$.  
\end{definition}

A partial allocation $\X$ corresponds to an assignment of (some of the) goods to the agents; we call $X_i$ the \emph{bundle} of agent $i$. An \emph{allocation} is a partial allocation for which $\bigcup_{i \in N} X_i = M$. We also say that a partial allocation \emph{is of size at most $k$} if $\max_{i \in N} |X_i| \leq k$.

It will also be useful to refer to the set of unallocated goods as ``the pool''. 
\begin{definition}[The Pool]
Given a partial allocation $\X$, we refer to $M \setminus \cup_{i \in N} X_i$ as \emph{the pool}, and we denote it by $\PP(\X)$. 
\end{definition}

\subsection{Valuation Functions and Fairness Notions}

Next we define the value of an agent for a bundle of goods.
\begin{definition}[Agent Valuation Functions]
Given a function $v_i: M \to \mathbb{R}_{\ge 0}$, where $v_i(g)$ represents the value of agent $i$ for good $g$, we can define the \emph{value} of an agent $i \in N$ for
a bundle of goods $X \subseteq M$ by extending $v_i$ to an \emph{additive} function, i.e., $v_i(X) = \sum_{g \in X}v_i(g)$. 
\end{definition}

In this work, we only study additive instances of the problem, i.e., instances where all agents have additive valuation functions.
We will also be interested in two special cases of additive instances, namely \emph{multigraph value} and \emph{$3$-value} instances. As we discussed in the Introduction, these strictly generalize the classes of instances for which the state-of-the-art results on the existence of EFX have been established.

\begin{definition}[Multigraph Value Instances]
We say that an instance of the problem is a \emph{multigraph value instance}, if for any agent $i \in N$, the function $v_i$ is additive, and there is a labeled multigraph $H = (V, E)$ with its vertices labeled via a bijection to $N$ and its edges labeled via a bijection to $M$, such that for any $i \in N$ and for any good $g \in M$, it holds that $v_i(g) \neq 0$ only if $g$ is the label of an edge incident to the vertex labeled $i$. 
\end{definition}

Intuitively, we can see a multigraph value instance as a multigraph, where the vertices are agents, the edges are the goods, and any agent has value $0$ for any edge not incident to her.

\begin{definition}[$3$-Value Instances]
We say that an instance of the problem is a \emph{$3$-value instance}, if for any agent $i \in N$ and for any good $g \in M$, the function $v_i$ is additive and it holds that $v_i(g) \in \{a,b,c\}$, {for some $a, b, c \in \mathbb{R}_{\ge 0}$ with $a \ge b \ge c$.} 
\end{definition}

Clearly, in instances where $a = b = c = 0$ any allocation is EFX, so we only consider instances where $a >0$. In fact, without loss of generality, we assume that $a=1$, as the {other two values $b$ and $c$} can be scaled appropriately.

We are now ready to define our main fairness notion, that of \emph{approximate EFX}.

\begin{definition}[Envy, Envy-Freeness, $\alpha$-EFX]
\label{def:ef-efx}
Given a partial allocation $\X$, a constant $\alpha>0$, and two agents $i, j \in N$, we will say that 
\begin{itemize}[leftmargin=20pt]
    \item[-] agent $i$ \emph{does not $\alpha$-envy} or, equivalently, \emph{is $\alpha$-envy-free towards} agent $j$, if $v_i(X_i) \geq \alpha \cdot v_i(X_j)$;
    \item [-] agent $i$ \emph{does not $\alpha$-envy} agent $j$ \emph{up to any good} or, equivalently, \emph{is $\alpha$-EFX towards} agent $j$, if {$X_j = \emptyset$ or if} for any good $g \in X_j$, it holds that $v_i(X_i) \geq \alpha \cdot v_i(X_j \setminus \{g\})$.
\end{itemize}
A partial allocation $\X$ is $\alpha$-envy-free (respectively $\alpha$-EFX) if, given $\X$, every agent $i \in N$ is $\alpha$-envy-free (respectively $\alpha$-EFX) towards any other agent $j \in N$. When $\alpha = 1$, we drop the prefixes; e.g., we write EFX instead of $1$-EFX. 
\end{definition}

Often, we need to compare the value  $v_i(X_i)$ an agent has for her bundle with  the value $v_i(S)$ she has for a set $S$ which is not the bundle of any other agent. Then, if $v_i(S) \ge \alpha \cdot  v_i(X_i)$ {(respectively $v_i(S) > \alpha \cdot  v_i(X_i)$)}, instead of using envy terminology, we say that 
agent $i$ \textit{prefers} $S$ \textit{by a factor of at least} $\alpha$ {(respectively \textit{greater than} $\alpha$)}; moreover, {we drop the braces of $S$ in such phrasing when $S$ is a singleton}.

The goal is to find a (complete) allocation which is $\alpha$-EFX. Note that for $m\le n$, it is trivial to produce an exact EFX allocation by giving at most one good to each agent arbitrarily. So, in what follows we assume that $m > n$. This will allow us to consider partial $\alpha$-EFX allocations where no bundle is empty throughout the course of our algorithms, thus avoiding trivial cases in several of our proofs.
We next define the notion of \emph{critical goods}, which lies at the heart of our approach.  

\begin{definition}[Critical Good]\label{def:critical}
Given a partial allocation $\X$, a good $g \in \PP(\X)$ is \emph{critical} for agent $i \in N$ if 
$i$ prefers $g$ by a factor {greater than} $1/2$, i.e., $v_i(g) > \frac{1}{2} v_i(X_i)$.
\end{definition}
As we have mentioned above (recall the related discussion in the Introduction), one could give a parameterized definition of criticality \citep[see, e.g.,][]{markakis2023improved}, but as we only care about a specific value here that is appropriate for our goal of achieving $2/3$-EFX allocations, we opted not to do so.

\subsection{Envy Graphs}

The notion of an envy graph is ubiquitous in the design of algorithms for variants and relaxations of envy-freeness. Besides the ``vanilla'' version, we define and use two additional envy-based graphs.

\begin{definition}[Envy Graph $G$]
Given a partial allocation $\X$, the \emph{envy graph on $\X$}, $G(\X) = (N, E(\X))$, is a directed graph whose vertex set is the set of agents. Its edge set contains a directed edge $(i,j)$ if and only if agent $i$ envies agent $j$, i.e., $E(\X) = \{(i, j) : v_i(X_i) < v_i(X_j)\}$. 
\end{definition}

\begin{definition}[Reduced Envy Graph $\Gr$]\label{def:Gr}
Given a partial allocation $\X$, the \emph{reduced envy graph on $\X$}, $\Gr(\X) = (N, E_\mathrm{r}(\X))$, is a subgraph of the envy graph $G(\X)$, where we remove all edges $(i, j)$, such that
$|X_i|>1$, $|X_j|=1$, and $i$ is $\frac{2}{3}$-envy-free towards $j$. 
That is $E_\mathrm{r}(\X) = E(\X) \setminus \{(i, j) : |X_i|>1, |X_j|=1, \text{ and } v_i(X_i) \ge \frac{2}{3} v_i(X_j)\}$.
\end{definition}

\begin{definition}[Enhanced Envy Graph $\Ge$]\label{def:Ge}
Given a partial allocation $\X$, the \emph{enhanced envy graph on $\X$}, $\Ge(\X) = (N, E_\mathrm{e}(\X))$, is a supergraph of the reduced envy graph $\Gr(\X)$, where we add all edges $(i, s)$, such that $s$ is a source (i.e., a vertex of in-degree 0) in $\Gr(\X)$, 
$|X_i|=1$, $|X_s|>1$, and $v_i(X_s) \geq \frac{2}{3} v_i(X_i)$. We color these additional edges \emph{red}. 
\end{definition}

Occasionally, when it is clear what the partial allocation $\X$ is from context, we just write $G$, $\Gr$, and $\Ge$ instead of $G(\X)$, $\Gr(\X)$, and $\Ge(\X)$, respectively.

\subsection{Basic Subroutines}

Before we proceed with the presentation of our algorithms, we first define some useful subroutines that we will be using throughout their design. The first one is a procedure that eliminates a given cycle in a (standard, reduced, or enhanced) envy graph by swapping the bundles of the agents backwards along the cycle. We refer to this process as a \emph{cycle resolution} on the graph; see Subroutine~\ref{alg:cycle-resolution}. Similarly, we can define a subroutine that eliminates all cycles in a (standard, reduced, or enhanced) envy graph; see Subroutine~\ref{alg:all-cycles-resolution}. 

Here we should discuss two small conventions, made for the sake of readability. First, the full information about the instance is available to all of our subroutines and algorithms. Nevertheless, we only include the ``necessary'' arguments as input, e.g., we write $\ACR(\X,\tilde{G})$ rather than \ACR$(N, (v_i)_{i\in N}, \allowbreak \X,\tilde{G}(\X), M)$. Secondly, whenever one of the envy graph variants is used in any subroutine or algorithm, it is assumed that it is up to date. That is, every time the allocations change, the  relevant envy graphs are ``quietly'' updated, although we have opted not to add these extra steps to our already lengthy pseudocode descriptions.
In general, tie-breaking is not important for the properties of our subroutines and algorithms, so we assume that any tie-breaking is always done lexicographically.

\SetAlgorithmName{SUBROUTINE}{Subroutine}

\begin{algorithm}[!ht]
\DontPrintSemicolon
\caption{$\CR(\X,\tilde{G},C)$} \label{alg:cycle-resolution}
\SetKwComment{Comment}{/* }{ */}
\KwData{A partial allocation $\X$, its graph $\tilde{G}(\X)$, and a cycle $C$ in $\tilde{G}(\X)$}
\KwResult{An updated partial allocation $\X$ such that the (implied) graph $\tilde{G}(\X)$ no longer contains the cycle $C$.\vspace{3pt}}
   $\tilde{\X}\gets \X$ \tcp*[r]{$\tilde{\X}$ is an auxiliary allocation}
   \For{every edge $(i,j) \in C$}
   {
   $X_i \gets \tilde{X}_j$ \tcp*[r]{swap the bundles backwards along the cycle}
   }
   \Return $\X$
\end{algorithm}

In our case, the graph $\tilde{G}$ in Subroutines~\ref{alg:cycle-resolution} and \ref{alg:all-cycles-resolution} can be any of the envy graph $G$, the reduced envy graph $\Gr$, or the enhanced envy graph $\Ge$, {although we never call Subroutine \ref{alg:all-cycles-resolution} on the enhanced envy graph $\Ge$}. It is well-known that the \ACR subroutine terminates in polynomial time {in the case of a standard envy graph $G$}; we add a short proof here {for $G$ and $\Gr$} for completeness.

\begin{algorithm}[!ht]
\DontPrintSemicolon
\caption{$\ACR(\X,\tilde{G})$} \label{alg:all-cycles-resolution}
\SetKwComment{Comment}{/* }{ */}
\KwData{A partial allocation $\X$ and its graph $\tilde{G}(\X)$.}
\KwResult{An updated partial allocation $\X$ such that its graph $\tilde{G}(\X)$ is acyclic.\vspace{3pt}}
   \While{there exists a cycle $C$ in $\tilde{G}(\X)$}
   {
   $\X \gets \CR({\X,\ } \tilde{G}(\X),C)$ 
   }
    \Return $\X$
\end{algorithm}

\begin{lemma}\label{obs:all-cycle-resolution-running-time}
The \ACR subroutine terminates in polynomial time {when $\tilde{G}$ is the envy graph $G$ or the reduced envy graph $\Gr$}.\footnote{{In the conference version of this work \citep{AFS24}, the corresponding lemma was stated for all three types of envy graph we consider (standard, reduced, or enhanced). When the allocation is of size at most $2$, as is the case here, the statement is indeed true for enhanced graphs as well, but not with the same simple proof we use here for standard and reduced envy graphs; see also \citep{PrakashMN25}. Although we suspect that the statement holds for enhanced graphs in general, we do not attempt to prove it here, as \Cref{obs:all-cycle-resolution-running-time} is never needed for enhanced graphs in the current version of our algorithms.}}
\end{lemma}

\begin{proof}
{Checking the condition of the while loop of the subroutine can be done in $O(n^2)$ time by running a depth-first search on the graph $\tilde{G}(\X)$ and looking for back edges.}
Every time a cycle is resolved within the body of the while loop, the number of edges in $\tilde{G}(\X)$ strictly decreases.
This means that the while loop will terminate in at most {$O(n^2)$} iterations. Each iteration only needs $O(n)$ {operations for  resolving the cycle and $O(n^2)$ operations for updating $\tilde{G}(\X)$.}
\end{proof}

The third subroutine deals with swapping bundles along paths of the (standard, reduced, enhanced) envy graph. In particular, for any edge $(i,j)$ in the path, agent $i$ will receive the bundle of agent $j$. We do not specify what the bundle of the last agent on the path (for whom, since we will be considering paths of maximum cardinality, there will be no outgoing edge) will be. Similarly, we do not specify what will happen to the bundle that the first agent in the path (the source) had before the swap. Both of these will be handled by our algorithms later. We refer to this process as a \emph{path resolution} on the graph; see Subroutine~\ref{alg:path-resolution}. The \PR subroutine obviously needs $O(|\Pi|) = O(n)$ time {for  resolving the path and $O(n^2)$ time for updating $\tilde{G}(\X)$.} Again, in our case, the graph $\tilde{G}$ in Subroutine~\ref{alg:path-resolution} can be any of the envy graph variants we use.

\begin{algorithm}[!ht]
\DontPrintSemicolon
\caption{$\PR(\X,\tilde{G},\Pi)$} \label{alg:path-resolution}
\SetKwComment{Comment}{/* }{ */}
\KwData{A partial allocation $\X$, its graph $\tilde{G}(\X)$, and a path $\Pi = (i_1, i_2, \ldots, i_\ell)$ in $\tilde{G}(\X)$}
\KwResult{An updated set of bundles $X_i$, one for every agent $i \in \{i_1, i_2, \ldots, i_{\ell-1}\}$. \vspace{3pt}} 
\For(\tcp*[f]{go through every $i$ such that $(i,j) \in \Pi$ following the path}){$k \gets 1$\  to\  $\ell-1$} 
{
$X_{i_k} \gets X_{i_{k+1}}$ \tcp*[r]{assign to $i_k$ the bundle of agent $i_{k+1}$ who she envies}
}
\Return $(X_i)_{i \in N: \exists (i, i') \in \Pi}$
\end{algorithm}

As we noted above, \PR does not return a partial allocation, but only the updated bundles of the agents along the path, except for the bundle of the last agent. This is to allow some flexibility in building the latter when \PR is used as a subroutine. Using the \PR subroutine, we also define the following useful subroutine, \PRPA (Subroutine~\ref{alg:path-resolution-and-critical}). This subroutine assumes that the sources of the graph $\G$ have two goods in their bundles; our algorithms will ensure this is always the case. Then, it performs \PR on the path with respect to a target agent $i \in N$, and allocates to $i$ one of the goods of the respective source and one designated good from the pool. 

\begin{algorithm}[!ht]
\DontPrintSemicolon
\caption{$\PRPA(\X,\tilde{G},\Pi)$} \label{alg:path-resolution-and-critical}
\SetKwComment{Comment}{/* }{ */}
\KwData{A partial allocation $\X$, its graph $\tilde{G}(\X)$, and a path $\Pi = (s, \ldots, i)$ in\, $\tilde{G}(\X)$ starting at a source $s$ of $\tilde{G}(\X)$ with $|X_s|=2$}
\KwResult{An updated partial allocation $\X$.\vspace{3pt}}
Let $g_s \in \argmax_{g\in X_s} v_i(g)$ \tcp*{recall that $s$ is the first and $i$ is the last vertex of $\Pi$}
Let $g_* \in \argmax_{g\in \PP(\X)} v_i(g)$\;
$(X_k)_{k \in N: \exists (k, k') \in \Pi} \gets \PR(\X,\tilde{G},\Pi)$\;\vspace{2pt} 
$X_i \gets \{g_*,g_s\}$ \tcp*{agent $i$ gets her favorite goods from $X_s$ and $\PP(\X)$}
\Return $\X$
\end{algorithm}

\section{The Property-Preserving Partial Allocation (3PA) Algorithm}\label{sec:3pa}

As we mentioned in the Introduction, our approach to achieving $2/3$-EFX allocations will be the following. We will first construct a partial allocation $\X^1$ that satisfies certain properties {with respect to the values each agent has for her bundle, for other agents' bundles and for unallocated goods}, as well as the number of critical goods. Then, we will allocate these critical goods, ensuring that the resulting partial allocation $\X^2$ is still $2/3$-EFX. Then, it is known that we can run the Envy Cycle Elimination algorithm of \citet{lipton2004approximately} to complete $\X^2$, resulting in a $2/3$-EFX allocation; see also \cref{lem:partialefx-to-efx-extension} in \cref{sec:key-lemmas}. Given this, it suffices to focus on obtaining the allocation $\X^2$.

In this section, we first define the aforementioned properties, and then we develop the algorithm that produces the partial allocation $\X^1$ that satisfies these properties; see Algorithm \ref{alg:3PA}. We refer to $\X^1$ as a \emph{Property-Preserving Partial Allocation (3PA)}, and to the corresponding algorithm as the \emph{\PPA algorithm}. 
This will in fact be the ``basic'' variant of the algorithm, which is sufficient to achieve $2/3$-EFX for multigraph value instances and for a small number of agents, which we present in \cref{sec:multigraph-and-few-agents}. 

\begin{tcolorbox}
\underline{Desired properties of a partial allocation $\X$:}
\begin{properties}[topsep=1pt,itemsep=1pt]
\item Every agent $i \in N$ with $|X_i|=1$ is EFX towards any other agent. 
\label{prop:a}
\item Every agent $i \in N$ is $2/3$-EFX towards any other agent. 
\label{prop:b}
\item Every agent $i$ {weakly} prefers her bundle to any single unallocated good, i.e., for every agent $i \in N$ and good $g \in \PP(\X)$, $v_i(X_i)\geq v_i(g)$.\label{prop:c}
\item Every agent $i \in N$ with $|X_i|>1$ does not have any critical goods, i.e., for every {such agent $i$} and any good $g \in \PP(\X)$, $v_i(g)\leq \frac12 v_i(X_i)$.\label{prop:d}
\item Any agent $i$ with $|X_i|=1$ has at most one critical good $g_i$, and she values that good at most $\frac23$ of the value of her current bundle, i.e., for every
{such agent $i$}, there is at most one good $g_i \in \PP(\X)$ such that $v_i(g_i)> \frac12 v_i(X_i)$ {and, even if such $g_i$ exists,} it holds that $v_i(g_i) \le  \frac23 v_i(X_i)$.\label{prop:e}
\end{properties}
\end{tcolorbox}

It is easy to see that {\cref{prop:d,prop:e} together imply} \cref{prop:c}. Nevertheless, we keep this redundancy for the sake of presentation, {as the weaker \cref{prop:c} is easier to interpret at a glance}. For the case of $3$-value instances, we will need a refined version of this list {of properties and the corresponding variant of the \PPA algorithm}, which we present in \cref{sec:three-values}. 

We are almost ready to state the \PPA algorithm. To allow flexibility in the use of the algorithm as a potential subroutine, \PPA takes as input a partial allocation $\X$ of size at most $2$, which 
does not contain any empty bundles and
satisfies \cref{prop:a,prop:b}. We call any such allocation a \emph{seed allocation}.
For our results in the following section, it suffices to start with a seed allocation in which each agent arbitrarily receives a single good (recall that this is possible as we have assumed that $m>n$). Then the algorithm executes a series of steps, each of which is executed only if the previous one is not. At a high level, for several steps of the algorithm, certain agents are enabled to swap (parts) of their bundles with goods from the pool. These agents are considered according to a certain priority, which depends on the cardinality of their bundles in the partial allocation $\X$. Several steps of the algorithm also consider the reduced envy-graph $\Gr$ and the enhanced envy-graph $\Ge$ and eliminate cycles in those graphs by invoking the \ACR {and \CR subroutines, respectively}. Finally, the algorithm also takes care of the corner case where the pool no longer contains enough goods  valuable enough for trades, which is addressed via the \SP subroutine (Subroutine~\ref{alg:singleton-pool}).

\SetAlgorithmName{SUBROUTINE}{Subroutine}

\setcounter{algocf}{4}
\begin{algorithm}[!h]
\DontPrintSemicolon
\caption{$\SP(\X)$} \label{alg:singleton-pool}
\SetKwComment{Comment}{/* }{ */}
\KwData{A partial allocation $\X$ (of size at most $2$ with a \textit{single} unallocated good $g$, which some agent $i$---who has $|X_i|=1$ and is not a source in $\Gr(\X)$---values more than $2/3\cdot v_i(X_i)$). }
\KwResult{A partial allocation $\X'$, such that $g \notin \PP(\X')$ and some other good $g'$ is returned to $\PP(\X')$.\vspace{3pt}}
Let $g$ be the only good in $\PP(\X)$ and let $i \in N$ be such that $|X_i|=1$ and $v_i(g) > \frac{2}{3}v_i(X_i)$\;
\tcp{when \SP is called, such $g$ and $i$ do exist and $i$ is not a source in $\Gr(\X)$ }
Let $\Pi$ be a path in $\Gr(\X)$ starting from some source $s$ of $\Gr(\X)$ and terminating at $i$\;
\tcp{when \SP is called, no sources in $\Gr(\X)$ own only one good {(due to Step \ref{step6} of \PPA)}}
$\X \gets \PRPA(\X, \Gr, \Pi)$\;
   \Return $\X$
\end{algorithm}

{Recall that the pool $\PP(\cdot)$ is a function of the allocation and, thus, it is always implicitly updated every time one or more bundles change.}
The following lemma establishes that the partial allocation produced by  \PPA  satisfies properties \cref{prop:a,prop:b,prop:c,prop:d,prop:e}, and that its enhanced envy  graph has a source without critical goods. 

\begin{lemma}\label{lem:3PA-properties}
Let  $\X$ be a seed allocation. Then the output of \,\PPA\!$(\left(v_i\right)_{i 
\in N}, \X)$ is a partial allocation $\X^1$ of size at most $2$ that satisfies \cref{prop:a,prop:b,prop:c,prop:d,prop:e}. 
Furthermore, if\, $\X^1$ is not a complete allocation, then $\Ge(\X^1)$ has at least one source, and every source $s$ in $\Ge(\X^1)$ has $|X^1_s| = 2$.
\end{lemma}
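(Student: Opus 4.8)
The statement has two parts: (i) the output $\X^1$ satisfies all five properties and has size at most $2$, and (ii) if $\X^1$ is incomplete, then $\Ge(\X^1)$ has a source, and every such source owns exactly two goods. I would prove (i) and (ii) largely by analyzing the termination condition of the \texttt{while} loop, i.e., the situation in which the algorithm reaches Step 9 and breaks. (Termination itself is presumably handled by a separate lemma on potentials, so here I assume the loop does terminate, either by exhausting the pool or via Step 9.)

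\textbf{Size at most $2$ and the size/invariant bookkeeping.} First I would argue, by induction on the iterations of the loop, that $\X$ always has size at most $2$ and never contains an empty bundle. This is a step-by-step check: the seed allocation has size at most $2$ with no empty bundle; Step 1 replaces a singleton by a singleton; Step 2 replaces a $2$-bundle by a singleton; Step 3 replaces a singleton by a $2$-bundle; Step 4 swaps one good for another inside a $2$-bundle; Step 5 and Step 8 only permute bundles along cycles (\CR/\ACR preserve the multiset of bundles); Step 6 turns a singleton source into a $2$-bundle; Step 7 calls \SP, which runs \PRPA along a path — here I need the input guarantee of \SP (the source $s$ has $|X_s|=2$, which Step 6 having failed guarantees, since if $\Gr$ had a singleton source Step 6 would fire first), so \PRPA hands agent $i$ a two-element bundle $\{g_*, g_s\}$, returns $X_s$ minus those to the pool, and shifts the rest of the path backward — all bundles stay of size $\le 2$ and nonempty. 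Along the way I would also record that \cref{prop:a,prop:b} are invariants: they hold for the seed, and no step can break them — this needs a small argument for each step, e.g., Steps 5 and 8 preserve $\alpha$-EFX-type properties because \ACR on $\Gr$ or $\Ge$ only moves a bundle to an agent who (weakly, in the reduced/enhanced sense) wanted it, which is exactly the standard envy-cycle-elimination argument adapted to these graph variants; this is the part I expect to be the most delicate, since the reduced/enhanced graphs disregard or add certain edges and one must check the swap does not create a violation of \cref{prop:a} or \cref{prop:b} for the agents involved or for third parties (third parties are unaffected since their bundles don't change).

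\textbf{Deriving \cref{prop:c,prop:d,prop:e} at termination.} Now suppose the loop exits. If it exits because $\PP(\X)=\emptyset$, then \cref{prop:c,prop:d,prop:e} hold vacuously and we are in the complete-allocation case, so (ii) is not asserted. Otherwise it exits via Step 9, meaning none of Steps 1--8 applies. I would then read off each property from the failure of the appropriate step. Step 4 failing gives: every agent with $|X_i|=2$ weakly prefers each of her own goods to every pool good. Combined with Step 2 failing (no $2$-agent values a pool good by more than $\tfrac32 v_i(X_i)$), and Step 1 failing, one derives \cref{prop:c} for all agents; in fact for a $2$-bundle agent, since $g \le g'$ for her worse good $g' \in X_i$ and for any pool good $g$, we get $v_i(g) \le \tfrac12 v_i(X_i)$, which is \cref{prop:d}. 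For singleton agents: Step 1 failing gives $v_i(g) \le v_i(X_i)$, which is \cref{prop:c}; Step 3 failing gives that no two pool goods together exceed $\tfrac23 v_i(X_i)$, so in particular at most one pool good can be critical (two critical goods would sum to more than $v_i(X_i) > \tfrac23 v_i(X_i)$), and if $g_i$ is that single critical good, then $v_i(g_i) = v_i(\{g_i\}) \le v_i(\{g_i, g'\}) $ for any other pool good $g'$ — hmm, to get the bound $v_i(g_i)\le \tfrac23 v_i(X_i)$ I instead argue: if $|\PP(\X)| \ge 2$ pick any other pool good $g'$; then $v_i(g_i) \le v_i(\{g_i,g'\}) \le \tfrac23 v_i(X_i)$ by Step 3 failing. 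If $|\PP(\X)| = 1$, then the single pool good being critical for a non-source singleton agent would trigger Step 7, and being critical only for source singletons is excluded because Step 6 fails means there is no singleton source in $\Gr$; so if $|\PP(\X)|=1$ no critical good exists at all. This yields \cref{prop:e}. Also, \cref{prop:a,prop:b} carry over as invariants.

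\textbf{The source structure, part (ii).} Finally, assume $\X^1$ incomplete, so we exited via Step 9. I must show $\Ge(\X^1)$ has a source and every source owns two goods. Since Step 8 failed, $\Ge(\X^1)$ is acyclic, hence (being a finite DAG) has at least one source. Since Step 5 failed, $\Gr(\X^1)$ is acyclic too, so it has a source; since Step 6 failed, no source of $\Gr(\X^1)$ is a singleton, so every source of $\Gr$ has two goods. It remains to transfer this from $\Gr$ to $\Ge$. A source of $\Ge$ is in particular a source of $\Gr$ (since $\Ge \supseteq \Gr$, a vertex with in-degree $0$ in $\Ge$ has in-degree $0$ in $\Gr$), hence owns two goods. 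Conversely I should check $\Ge$ actually has a source — the red edges added to form $\Ge$ all point \emph{into} sources of $\Gr$ from singleton agents, so a source $s$ of $\Gr$ (which has $|X_s|=2$) may acquire red in-edges and cease to be a source of $\Ge$; but the in-neighbors along such red edges are singleton agents, and more care is needed. The clean way: $\Ge$ is acyclic (Step 8 failed), and any finite acyclic digraph with at least one vertex has a source, so $\Ge(\X^1)$ has a source; combined with the previous sentence, every source of $\Ge$ has $|X_s|=2$. This completes (ii). The main obstacle throughout is the per-step verification that \cref{prop:a,prop:b} are genuine invariants, especially across the two cycle-elimination steps on the nonstandard graphs $\Gr$ and $\Ge$, and across the \SP/\PRPA call, where one must carefully track which agent ends up with which bundle and re-verify the EFX-type inequalities.
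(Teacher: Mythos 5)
Your plan follows essentially the same route as the paper: an induction over the loop iterations for the size bound and \cref{prop:a,prop:b} (with the same per-step case analysis, including the delicate Steps 5, 7, 8), a contrapositive reading of the failed step conditions at termination for \cref{prop:c,prop:d,prop:e}, and acyclicity of $\Ge$ plus the failure of Step 6 for the source structure. One small imprecision: in the $|\PP(\X)|=1$ case you conclude ``no critical good exists at all,'' which does not follow (Step 7 only rules out a pool good valued above $\tfrac23 v_i(X_i)$, not one valued in $(\tfrac12 v_i(X_i), \tfrac23 v_i(X_i)]$), but the bound you actually need for \cref{prop:e} is exactly what your Step-7 argument delivers, so the proof is unaffected.
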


\newgeometry{margin = 0.75in} 

\SetAlgorithmName{ALGORITHM}{Algorithm}

\setcounter{algocf}{0}


\begin{algorithm}[p]
\setstretch{1.2}
\DontPrintSemicolon
\SetNoFillComment
\LinesNotNumbered 
\caption{\textsc{\sc Property-Preserving Partial Allocation (\PPA)} 
$\left(\left(v_i\right)_{i 
\in N}, \X \right)$} \label{alg:3PA}
\SetKwComment{Comment}{/* }{ */}
\SetKw{Continue}{continue}
\SetKw{Break}{break}
\SetKw{Step}{Step}
\SetKwData{Kw}{}
\KwData{The values $v_i(g)$ for every agent $i \in N$ and every good $g \in M$, and a partial allocation $\X$ of size at most $2$ which satisfies \Cref{prop:a,prop:b}. 
}
\KwResult{A Property-Preserving Partial Allocation $\X^1$ of size at most $2$, i.e., a partial allocation $\X^1$ of size at most $2$ which satisfies \Cref{prop:a,prop:b,prop:c,prop:d,prop:e}.\vspace{3pt}}
\While{$\PP(\X) \neq \emptyset$}
{
\nl \Kw{\texttt{Step 1}} \label{step1}
\uIf{there is $i \in N$ with $|X_i|=1$ and a good $g \in \PP(\X)$ such that $v_i(g) > v_i(X_i)$}{
$X_i \gets \{g\}$ \;
\tcp{If an agent with one good prefers one good from the pool, swap them.}}

\nl \Kw{\texttt{Step 2}}\label{step2}
\uElseIf{there is $i \in N$ with $|X_i|=2$ and a good $g \in \PP(\X)$ such that $v_i(g) > \frac{3}{2}v_i(X_i)$}{
$X_i \gets \{g\}$ \;
\tcp{Else if an agent with two goods prefers one good from the pool by a factor greater than $3/2$, swap her bundle with that good.}
}

\nl \Kw{\texttt{Step 3}} \label{step3}
\uElseIf{there is $i \in N$ with $|X_i|=1$ and  goods $g_1,g_2 \in \PP(\X)$ such that $v_i(\{g_1,g_2\}) \ge \frac{2}{3}v_i(X_i)$}{
$X_i \gets \{g_1,g_2\}$\;
\tcp{Else if an agent with one good prefers two goods from the pool by a factor {of at least  $2/3$}, swap her bundle  with these two goods.} 
}

\nl \Kw{\texttt{Step 4}}\label{step4}
\uElseIf{there is $i \in N$ with $|X_i|=2$, and goods $g \in \PP(\X)$ and $g' \in X_i$ such that $v_i(g) > v_i(g')$}{
$X_i \gets (X_i \cup \{g\}) \setminus \{g'\}$.\;
\tcp{Else if an agent with two goods strictly prefers one good from the pool to one of her own goods, swap that good with the one good from the pool.}
}

\nl \Kw{\texttt{Step 5}}\label{step5}
\uElseIf{the reduced envy graph $\Gr(\X)$ has cycles}{$\X \gets\ACR(\X,\Gr)$\;
\tcp{Else if the reduced envy graph has cycles, resolve them by swapping the bundles accordingly.}}

\nl \Kw{\texttt{Step 6}}\label{step6}
\uElseIf{in the reduced envy graph $\Gr$ there is a \emph{source} $s$ with $|X_s|=1$}{ 
$X_s \gets X_s \cup \{g^*\}$, where $g^* \in \argmax_{g \in \PP(\X)}v_s(g)$.\; 
\tcp{Else if there is a source in the reduced envy graph with a single good, add her most valuable good from the pool to her bundle.}
}

\nl \Kw{\texttt{Step 7}} \label{step7}
\uElseIf{ {$\PP(\X)=\{g\}$ and there is $ i \in N$ such that $|X_i|=1$ and $v_i(g) > \frac{2}{3}v_i(X_i)$} }
{$\X \gets \SP(\X)$\;
\tcp{Else if there is a single unallocated good and some agent with one good prefers it by a factor greater than $2/3$, run the \SP subroutine to allocate it.}}

\nl \Kw{\texttt{Step 8}}\label{step8}
\uElseIf{{there is a cycle $C$ in the enhanced envy graph $\Ge(\X)$ }}{
$\X \gets \CR(\X,\Ge, C)$\;
\tcp{Else if the enhanced envy graph has cycles, resolve one of them by swapping the bundles.}
}

\nl \Kw{\texttt{Step 9}}\label{step9}
\uElseIf{there is a path $\Pi = (s, \ldots, i)$ in\, $\Ge(\X)$  starting at a source $s$ of\, $\Ge(\X)$, along with some $g\in \PP(\X)$ and some $g' \in X_s$, so that 
$v_i(X_i)< v_i(\{g,g'\})$, if $|X_i| = 2$, \textit{\ or\ } $v_i(X_i) \le \frac32 v_i(\{g,g'\})$,  if $|X_i| = 1$
\vspace{2pt}}
{ 
$\X \gets \PRPA(\X, \Ge, \Pi)$\;
\tcp{Else if there exists a path from a source $s$ to some agent $i$ in $\Ge(\X)$, such that some good from the pool and some good from $X_s$ can be  ``favorably compared'' to $i$'s own bundle, swap the bundles along the envy path and give those two goods to $i$. We require $v_i(\{g,g'\}) > v_i(X_i)$ when $|X_i| = 2$, but only $v_i(\{g,g'\}) \ge \frac23 v_i(X_i)$  when $|X_i| = 1$.}
}

\nl \Kw{\texttt{Step 10}}\label{step10}
\Else {\Break\;}
}
\Return $\X$ 
\end{algorithm}

\restoregeometry

As a direct consequence of \cref{lem:3PA-properties}, since $|X^1_s| = 2$ for every source $s$ in $\Ge(\X^1)$ and \cref{prop:d} applies, no source $s$ of $\Ge(\X^1)$ has {any} critical goods in $\X^1$. 
To prove the lemma, we will devise and prove a series of other simpler lemmata. We start with  \cref{lem-3PA-properties-ab}, which establishes that $\X^1$ satisfies \cref{prop:a,prop:b}. This is the most challenging part of proving \cref{lem:3PA-properties}.

\begin{lemma}\label{lem-3PA-properties-ab}
Let  $\X$ be a seed allocation. Then the output of \,\PPA\!$(\left(v_i\right)_{i 
\in N}, \X)$ is a partial allocation $\X^1$ of size at most $2$ that satisfies \cref{prop:a,prop:b}.
\end{lemma}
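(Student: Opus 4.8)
The plan is to argue that Properties \cref{prop:a,prop:b} are \emph{invariants} of the while-loop of \PPA: they hold for the seed allocation by assumption, and each of the nine steps preserves them. Since the algorithm only terminates when $\PP(\X)=\emptyset$ (in which case $\X^1$ is complete) or when it reaches \Step~9 (the \texttt{break}), and since the output always has size at most $2$ (which itself must be maintained as an auxiliary invariant alongside (a) and (b)), establishing the invariant for every step gives the claim. So the proof is a step-by-step case analysis: assume $\X$ satisfies (a), (b), and has size $\le 2$ before a given step fires, and show the same holds afterward.

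First I would dispatch the ``easy'' steps. For the swap steps (Steps 1--4) the key observations are local: the only agent whose bundle changes is the agent $i$ performing the swap, and the only other agents who could newly envy $i$ are those whose relation to $i$'s \emph{new} bundle matters. In Step 1, $i$ goes from a singleton to a (more valuable, by her own valuation) singleton, so she remains EFX towards everyone; and since the condition of Step 1 is checked before any other step and the \emph{new} bundle is still a singleton, no one can $\tfrac23$-envy $i$ up to any good (removing the single good empties the bundle). In Step 2, $i$'s new bundle is again a singleton, so the same argument applies — the point is that after these steps the modified agent holds a singleton, which is the most benign case. Steps 3 and 4 are slightly more delicate because $i$ ends up with two goods; here one must use the precise thresholds ($>\tfrac23 v_i(X_i)$ for Step 3, and the ``swap up'' condition for Step 4) to show $i$ herself stays $2/3$-EFX and that no \emph{other} agent's envy toward $i$ increases beyond threshold — crucially, in Step 4 the good $g'$ returned to the pool is one that $i$ valued \emph{less} than the incoming $g$, and for Step 3 the returned singleton $X_i$ was worth less than $\tfrac23$ of the two incoming goods to $i$, but one needs to check other agents too, using the fact that before Step 3/4 fired, Steps 1--2 did not apply. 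For the graph steps (Steps 5 and 8), I would invoke the standard fact about \CR/\ACR: swapping bundles backwards along a cycle can only (weakly) increase each participating agent's value for her own bundle, hence cannot create new envy or worsen EFX for anyone — with the caveat that for the \emph{reduced} and \emph{enhanced} graphs one must check that the edges being resolved are genuine envy edges (for $\Gr$ they are a subset of $E(\X)$, so fine; for $\Ge$, the red edges are only ``near-envy'' edges $v_i(X_s)\ge \tfrac23 v_i(X_i)$, and one must verify that resolving a cycle through a red edge still does not drop any agent below the $2/3$-EFX guarantee — this uses that a red edge points to a \emph{source} with a $2$-good bundle and the receiving agent had a singleton). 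Step 6 adds $i$'s favorite pooled good to a singleton source $s$; since $s$ was a source in $\Gr$ (so nobody with more goods $\tfrac23$-envied... ) one checks $s$ stays $2/3$-EFX (her bundle only grew) and that no other agent now $\tfrac23$-envies $s$'s two-good bundle up to a good — removing $g^*$ leaves $s$'s old singleton, toward which (a)/(b) already held. Step 7 (the \SP subroutine) reduces to correctness of \PRPA along a path in $\Gr$, which combines a path-resolution (each agent moves to a strictly-envied bundle, so her value strictly increases) with handing the terminal agent $i$ her favorite good from $X_s$ plus her favorite pooled good; one shows $i$ is then $2/3$-EFX and no new violating envy is created.

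The main obstacle, as the text itself flags (``This is the most challenging part''), will be Steps 3, 4, 6, and 7 — precisely the steps that \emph{grow} a bundle to size $2$ or shuffle value toward/from singleton-holders — because there the danger is twofold: (i) the agent $i$ receiving a larger/different bundle might herself fail $2/3$-EFX toward some other agent whose bundle she now undervalues relative to before, and (ii) some \emph{third} agent might start $\tfrac23$-envying $i$'s enlarged bundle even after deleting its cheapest (to that agent) good. Controlling (ii) is where the priority ordering of the steps is essential: because a later step only fires when all earlier ones fail, we know, e.g., that when Step 6 runs no agent wants two pooled goods by more than $\tfrac23$ (Step 3 failed) and no two-good agent wants a pooled good by $>\tfrac32$ (Step 2 failed), which pins down how valuable $s$'s new two-good bundle can look to others. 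I would therefore structure the proof to, for each step, first write down exactly which earlier-step conditions are known to be false, and then use those inequalities as the ammunition to bound the relevant envy. The arithmetic in each case is routine once the right ``earlier step failed'' facts are marshalled; the real work is bookkeeping which facts are needed where, and handling the red-edge case of $\Ge$ in Step 8 with care, since that is the one place a non-strict ($\ge$) near-envy edge gets resolved.
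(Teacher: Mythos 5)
Your plan is essentially the paper's own proof: an induction over the iterations of the while loop, with a per-step case analysis in which the required inequalities are extracted from the failure of all earlier (higher-priority) steps, together with the special treatment of red edges when a cycle of $\Ge$ is resolved in Step~\ref{step8}. One minor slip: in your parenthetical on Step~\ref{step3} you invert the condition --- the guarantee is $v_i(\{g_1,g_2\}) > \frac{2}{3}v_i(X_i)$, so agent $i$'s value may \emph{drop} to $2/3$ of its old value (she passes from EFX, via \cref{prop:a}, to $2/3$-EFX, via \cref{prop:b}), rather than the old singleton being worth less than $2/3$ of the incoming pair --- but since you state the correct threshold in the preceding clause, the plan is sound as written.
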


\begin{proof}

We {begin with the observation} that $\X^1$ is of size at most $2$. Since $\X$ is a seed allocation,  initially all the bundles have cardinality $1$ or $2$. Then it is just a matter of inspection to see that no step in \PPA can increase the cardinality of a bundle from $1$ or $2$ to more than $2$.

Next we will prove by induction that \Cref{prop:a,prop:b} are satisfied at the end of the execution of the algorithm. In particular, we will prove that the properties are always satisfied right before the condition of the main while loop of the algorithm is checked. 
For the base case, it follows by assumption on the input partial allocation $\X$ that \Cref{prop:a,prop:b} are satisfied, before the first iteration of the loop. We will assume that the properties are satisfied before iteration $k$ of the loop and show that (if the loop condition is met) they are satisfied at the end of iteration $k$ as well, or, equivalently, right before iteration $k+1$. 
Notice that in each iteration of the while loop, only one of the steps of the algorithm will be executed. In particular, Step $i \in \{1,\ldots,10\}$ will only be executed if all Steps $j \in \{1,\ldots,i-1\}$ are \emph{not} executed. We will consider each possible step and argue that \Cref{prop:a,prop:b} are satisfied after the execution of that step. For ease of reference, we will use $\X$ and $\X'$ to denote the partial allocation before and after the execution of the step respectively.

To avoid unnecessary repetition, in each case we will completely disregard pairs of agents whose bundles did not change during the iteration at hand; this extends to pairs $(i, j)$, such that $i$'s bundle did not change and $j$ received a preexisting bundle via some exchange. It is implicit that for such pairs, \Cref{prop:a,prop:b} are always satisfied after the execution of the loop, given that they were satisfied before it. Also, we will omit explicit references to trivial cases like \Cref{prop:a} being true for agents with two goods or \Cref{prop:b} being true for agents with one good for whom \Cref{prop:a} holds.
\smallskip

\noindent \textit{If Step~\ref{step1} is executed:} Obviously $v_i(X_i') > v_i(X_i)$, hence \Cref{prop:a,prop:b} are satisfied for agent $i$ by the induction hypothesis. Since $|X_i'|=1$, any other agent $j \in N\setminus\{i\}$ is obviously EFX towards agent $i$, and hence \Cref{prop:a,prop:b} are satisfied for all agents. 
\smallskip

\noindent \textit{If Step~\ref{step2} is executed:} By the induction hypothesis, agent $i$ was $2/3$-EFX towards any other agent $j \in N\setminus \{i\}$ in $\X$. Since $v_i(X_i') \geq \frac{3}{2} v_i(X_i)$, this implies that in $\X'$ agent $i$ is EFX towards any other agent, and \Cref{prop:a,prop:b} are satisfied for agent $i$. 
For agent $i$, we now have $|X_i'|=1$, so any other agent $j \in N \setminus \{i\}$ is trivially EFX towards agent $i$ as well. Thus, \Cref{prop:a,prop:b} are satisfied for all agents. \smallskip

\noindent\textit{If Step~\ref{step3} is executed:} In this step, the value of agent $i$ for $X'_i$ might be reduced, to at least $2/3$ of her value for $X_i$, i.e., $v_i(X_i') \geq \frac{2}{3}v_i(X_i)$. In $\X$, since $|X_i|=1$, agent $i$ was EFX towards every other agent by the induction hypothesis. By the inequality above, agent $i$ in $\X'$ is now $2/3$-EFX towards any agent $j \in N \setminus\{i\}$. 
Now consider any other agent $j \in N \setminus\{i\}$; we have two cases:
\begin{itemize}[leftmargin=*]
        \item[-] $|X_j|=1$. In this case we know that for $\ell \in \{1,2\}$, $v_j(X_j) \geq v_j(g_\ell)$, i.e., agent $j$ (weakly) prefers her own allocation $X_j$ to any of the two goods $g_1$, $g_2$ that constitute agent $i$'s bundle $X_i'$. 
        This is because otherwise $j$ could have swapped $X_j$ with one of $g_1$ and $g_2$ in Step~\ref{step1}, which by assumption was not executed. From this, it follows that agent $j$ is EFX towards agent $i$. 
        \item[-] $|X_j|=2$. We have that for $\ell \in \{1,2\}$, $v_j(g_\ell) \leq \frac{3}{2}v_j(X_j)$; otherwise, Step~\ref{step2} could have been executed and agent $j$ would have swapped $X_j'$ with one of $g_1$ or $g_2$. It follows that agent $j$ is $2/3$-EFX towards agent $i$. 
    \end{itemize}
In both cases, \Cref{prop:a,prop:b} are satisfied for all agents.    
\smallskip

\noindent\textit{If Step~\ref{step4} is executed:} Again, we have that $v_i(X_i') > v_i(X_i)$, hence \Cref{prop:a,prop:b} are satisfied for agent $i$ by the induction hypothesis. Now consider any agent $j \in N \setminus\{i\}$; we have two cases:
\begin{itemize}[leftmargin=*]
        \item[-] $|X_j|=1$. Similarly to before, we know that $v_j(X_j) \geq v_j(g)$, i.e., agent $j$ does not prefer the good added to agent $i$'s bundle $X'_i$ from the pool, as otherwise Step~\ref{step1} of the algorithm would have been executed, and agent $j$ would have swapped $X_j$ for $g$. We also have that $v_j(X_j) \geq v_j(X_i'\setminus\{g\}) {\, = v_j(X_i\setminus\{g'\})}$, i.e., agent $j$ does not prefer the good that
        {is contained in both $X_i$ and $X'_i$}.
        This holds because of the induction hypothesis, as \cref{prop:a} ensures that agent $j$ was EFX towards agent $i$. This establishes that \cref{prop:a} holds for agent $j$ in $\X'$ as well. 
        \item[-] $|X_j|=2$. The argument here is similar to the case above. We know that $v_j(g) \leq \frac{3}{2}v_j(X_j)$, as otherwise Step~\ref{step2} of the algorithm could have been executed, and agent $j$ would have swapped her bundle $X_j$ with the good $g$ from the pool. Additionally, by \cref{prop:b} which holds for $\X$ by the induction hypothesis, we know that $v_j(X'_i \setminus\{g\}) = v_j(X_i \setminus \{g'\}) \leq \frac{3}{2} v_j(X_j)$. This establishes that in $\X'$, agent $j$ does not value any of the goods in $X'_i$ by more than $3/2$ of her value for $X_j$, and \cref{prop:b} is satisfied. 
    \end{itemize}

\noindent\textit{If Step~\ref{step5} is executed:} After this step, we obviously have $v_i(X_i') \geq v_i(X_i)$, for any $i \in N$, as the utility of any agent cannot decrease. For any agent $i$ with $|X_i| = 1$, the agent was EFX towards any other agent in $\X$, by the induction hypothesis, and therefore she is also EFX towards any other agent in $\X'$. Similarly, for any agent $i \in N$ such that $|X_i|=|X_i'|=2$, the agent is $2/3$-EFX towards any other agent in $\X'$, since this was the case in $\X$ by the induction hypothesis. It remains to show that every agent $i$ for whom $|X_i|=2$ and $|X_i'|=1$ is now EFX in $\X'$ towards every other agent, noting that the agent was $2/3$-EFX in $\X$ by the induction hypothesis. 
This is indeed the case here, since $i$ received a bundle $X'_i$ that she valued more than $3/2$ of her value for $X_i$, i.e., $v_i(X_i') > \frac{3}{2} v_i(X_i)$. 
To see this, {suppose that $j$ was the owner of $X_i'$ in $\X$, i.e., $X_j = X_i'$. The reduced envy graph $\Gr(\X)$ contained the edge $(i,j)$; by \Cref{def:Gr}, since $|X_i|=2$ and $|X_j|=1$, we have  $v_i(X_i) < \frac{2}{3} v_i(X_j) = \frac{2}{3} v_i(X_i')$.}
This establishes that in $\X'$, agent $i$ is EFX towards any other agent. In the end, \cref{prop:a,prop:b} are satisfied for all agents. \smallskip

\noindent\textit{If Step~\ref{step6} is executed:} Consider agent $s$ first; we have {$v_s(X_s') > v_s(X_s)$}. 
By the induction hypothesis, since $|X_s|=1$, agent $s$ was EFX towards every other agent by \cref{prop:a}. This means that in $\X'$ agent $s$ is EFX towards any other agent, and \cref{prop:a,prop:b} are satisfied. Now consider any other agent $j \in N \setminus\{s\}$; again we distinguish two cases. 
\begin{itemize}[leftmargin=*]
        \item[-] $|X_j|=1$. Similarly to before, we know that $v_j(X_j) \geq  v_j(g^*)$, otherwise Step~\ref{step1} would have been executed and agent $j$ would have swapped $X_j$ with $g^*$. 
        Since $s$ was a source of $\Gr(\X)$, we also know that $v_j(X_j) \geq v_j(X_s)$. {Given that $v_j(X_j') \geq v_j(X_j)$, the above two inequalities establish} that agent $j$ is EFX towards agent $s$, 
        and therefore \cref{prop:a} is satisfied. 
        \item[-] $|X_j|=2$. Again, we know that $v_j(X_j) \geq {\frac{2}{3}\,} v_j(g^*)$, otherwise Step~\ref{step2} would have been executed and agent $j$ would have swapped $X_j$ with $g^*$. Additionally, since $s$ was a source of $\Gr(\X)$ with $|X_s|=1$, we also know that $v_j(X_j) \geq \frac{2}{3}v_j(X_s)$, as otherwise there would be an edge $(j,s)$ in $\Gr(\X)$. These two inequalities, {along with the fact that $v_j(X_j') \geq v_j(X_j)$}, imply that agent $j$ is $2/3$-EFX towards agent $s$, and therefore \cref{prop:b} is satisfied. 
    \end{itemize}

\noindent\textit{If Step~\ref{step7} is executed:} The \SP subroutine performs a path resolution on the reduced graph $\Gr(\X)$. Consider the agent $i$ in the resolved path $P$ that lies at the end of the path (the ``target'' vertex of the path), who receives the good $g$ and {her preferred} good from $X_s$. This is also the agent that ``triggered'' the execution of the \SP subroutine, as $|X_i|=1$ and $v_i(g) > \frac{2}{3}v_i(X_i)$. For all agents $j \in N\setminus\{i\}$, we have that $v_j(X_j') \geq v_j(X_j)$. For those agents, very similar arguments to the ones used in the analyses of 
{Steps~\ref{step5} and
\ref{step6} (with agent $i$ taking the role that $s$ had in Step~\ref{step6})} establish that \cref{prop:a,prop:b} are satisfied in $\X'$. 
{In particular, consider any agent $j\in N\setminus\{i\}$. If $j$ is not on the path $\Pi$, her bundle is unchanged; if $j$ is on $\Pi$ and $j\neq i$, she receives the bundle of her successor on $\Pi$, which she envied in $\Gr(\X)$. In either case, $v_j(X_j')\ge v_j(X_j)$, and if such a $j$ moves from $|X_j|=2$ to $|X_j'|=1$, then, by \Cref{def:Gr}, it holds that $v_j(X_j')>\frac32 v_j(X_j)$, so $j$ becomes EFX exactly as in the analysis of Step~\ref{step5}. Hence \cref{prop:a,prop:b} hold for $j$ toward every agent holding one of the (unchanged) bundles of $\X$. It remains to check them toward $i$, whose new bundle is $X_i'=\{g_s,g\}$, with $g_s\in X_s$ and $g$ the (formerly) unallocated good. Since Steps~\ref{step1} and \ref{step2} did not run, $v_j(g)\le v_j(X_j)$, if $|X_j|=1$, and $v_j(g)\le\tfrac32 v_j(X_j)$, if $|X_j|=2$; and since $s$ is a source of $\Gr(\X)$ with $|X_s|=2$ (so no edge into $s$ is removed in forming $\Gr$), we have $v_j(X_j)\ge v_j(X_s)\ge v_j(g_s)$. Together with $v_j(X_j')\ge v_j(X_j)$ ($v_j(X_j')>\tfrac32 v_j(X_j)$ in the case $|X_j'|=1<|X_j|$) these inequalities give that $j$ is EFX towards $i$ when $|X_j'|=1$ and $2/3$-EFX towards $i$ when $|X_j'|>1$, which is exactly the role $s$ played in Step~\ref{step6}.}
\smallskip

\noindent\textit{If Step~\ref{step8} is executed:} 
Let $i \in N$ be some agent that is part of the cycle $C$ resolved by \CR; after the resolution, the agent receives agent $j$'s bundle, where $(i,j)$ was an edge in $\Ge(\X)$; i.e., $X_i' = X_j$. We consider three cases:
\begin{itemize}[leftmargin=*]
\item[-] $|X_i| = |X_j|$. In this case, we know that $(i,j) \in \Gr$, i.e., it is not red, since in $\Ge$ additional edges are from agents with bundles of cardinality $1$ to agents with bundles of cardinality $2$.
Therefore, we have that $v_i(X_i')=v_i(X_j) > v_i(X_i)$. Since agent $i$'s utility increased and her bundle did not change in size, \cref{prop:a,prop:b} are satisfied for this agent by the induction hypothesis.
\item[-] $|X_i|=2$ and $|X_j|=1$. Again, we know that $(i,j) \in \Gr$, i.e., it is not red, since in $\Ge$ we only add edges originating from agents with bundles of cardinality $1$. By definition of $\Gr(\X)$, we have that $v_i(X_i')=v_i(X_j) > \frac{3}{2}v_i(X_i)$. By the induction hypothesis, \cref{prop:b} was satisfied for agent $i$ in $\X$, and now \cref{prop:a} is satisfied for agent $i$ in $\X'$. 
\item[-] $|X_i|=1$ and $|X_j|=2$. In this case the edge $(i,j)$ may be red or not. In either case, it holds that $v_i(X_i') = v_i(X_j) \geq \frac{2}{3}v_i(X_i)$. Since \cref{prop:a} was satisfied for agent $i$ by the induction hypothesis, \cref{prop:b} is now satisfied for $i$. 
\end{itemize}
In all cases, \Cref{prop:a,prop:b} are satisfied for all agents.
\smallskip

\noindent\textit{If Step~\ref{step9} is executed:} 
Here a path resolution on the enhanced graph $\Ge(\X)$ is performed, along the path $\Pi$ from the source
agent $s$ to agent $i$. If we ignored $i$, the fact that the allocation 
$(X'_1,\ldots,X'_{i-1},X'_{i+1},\ldots,X'_n)$ satisfies \Cref{prop:a,prop:b} is shown with the exact 
same arguments we used for Step~\ref{step8}. What remains to be shown is that \Cref{prop:a,prop:b} are
satisfied for $i$ herself but also that $X'_i$ does not cause \Cref{prop:a,prop:b} to fail
for any agent $j\in N\setminus \{i\}$.

We first consider $i$'s point of view:
\begin{itemize}[leftmargin=*]
\item[-] $|X_i| = 2$. In this case, $v_i(X_i')=v_i(\{g, g'\}) > v_i(X_i)$. Since agent $i$'s utility increased and her bundle's size did not change, by the induction hypothesis, \cref{prop:b} is satisfied for $i$.
\item[-] $|X_i| = 1$. In this case, $v_i(X_i') = v_i(\{g, g'\}) \geq \frac{2}{3}v_i(X_i)$. Since \cref{prop:a} was satisfied for agent $i$ by the induction hypothesis, \cref{prop:b} is now satisfied for $i$. 
\end{itemize}

Finally, consider any $j\in N\setminus \{i\}$:
\begin{itemize}[leftmargin=*]
\item[-] $|X_j| = 1$. We have $v_j(X_j) \geq  v_j(g)$, otherwise Step~\ref{step1} would have been executed and agent $j$ would have swapped $X_j$ with $g$. Since $s$ is a source in $\Ge(\X)$, we also know that $v_j(X_j)\ge v_j(g')$. After the path resolution we may have $|X_j'|=2$ or $|X_j'|=1$. If $|X_j'|=1$, we have $v_j(X_j') \geq v_j(X_j)$ and the inequalities about $X_j$, $g$ and $g'$ imply that agent $j$ is EFX towards agent $i$, and therefore \cref{prop:a} is satisfied. If $|X_j'|=2$, we have $v_j(X_j') \geq \frac23 v_j(X_j)$ and the inequalities  imply that  $j$ is $2/3$-EFX towards agent $i$, and thus \cref{prop:b} is satisfied.

\item[-] $|X_j|=2$. We now have $v_j(X_j) \geq  \frac23 v_j(g)$, otherwise Step~\ref{step2} would have been executed and $X_j$ would have been swapped with $g$. Also, since $s$ is a source in $\Ge(\X)$, it is now the case that $v_j(X_j)\ge \frac23 v_j(g')$. After the path resolution we may have $|X_j'|=2$ or $|X_j'|=1$. If $|X_j'|=2$, we have $v_j(X_j') \geq v_j(X_j)$ and the inequalities about $X_j$, $g$ and $g'$ imply that agent $j$ is $2/3$-EFX towards agent $i$, and therefore \cref{prop:b} is satisfied. If $|X_j'|=1$, we have $v_j(X_j') \geq \frac32 v_j(X_j)$ and the inequalities  imply that  $j$ is EFX towards agent $i$, that is, \cref{prop:a} is satisfied.
\end{itemize}

\noindent\textit{If Step~\ref{step10} is executed:} With this step, the algorithm terminates. In particular, the step does not change the allocation $\X$, and therefore \Cref{prop:a,prop:b} are satisfied trivially by the induction hypothesis. 
\end{proof}

Our next lemma establishes that the allocation $\X^1$ outputted by the \PPA algorithm also satisfies \cref{prop:c,prop:d,prop:e}. 

\begin{lemma}\label{lem-3PA-properties-cde}
Let  $\X$ be a seed allocation. Then the output $\X^1$ of \,\PPA\!$(\left(v_i\right)_{i 
\in N}, \X)$ satisfies \cref{prop:c,prop:d,prop:e}.
\end{lemma}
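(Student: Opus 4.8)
The claim is that the output $\X^1$ of \PPA satisfies \cref{prop:c,prop:d,prop:e}. Since the algorithm only exits the while loop via Step~\ref{step9}, which happens precisely when the pool is nonempty but none of Steps~\ref{step1}--\ref{step8} is applicable (or, of course, when the pool becomes empty, in which case $\X^1$ is a complete allocation and \cref{prop:c,prop:d,prop:e} hold vacuously), the whole proof is a case analysis on what it means for each of Steps~\ref{step1}--\ref{step4}, \ref{step6}, \ref{step7} to be \emph{inapplicable} at termination. So the first thing I would do is record the negations of the guards of those steps as a list of inequalities that hold for $\X^1$, and then read off \cref{prop:c,prop:d,prop:e} from them. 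Throughout I will freely use \cref{lem-3PA-properties-ab}, which already gives us that $\X^1$ is of size at most $2$ and satisfies \cref{prop:a,prop:b}.

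\medskip

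First, \cref{prop:c}: for every agent $i$ and every $g\in\PP(\X^1)$ we need $v_i(X^1_i)\ge v_i(g)$. If $|X^1_i|=1$, this is exactly the negation of the Step~\ref{step1} guard. If $|X^1_i|=2$, then the negation of the Step~\ref{step4} guard says $v_i(g)\le v_i(g')$ for every $g'\in X^1_i$; picking $g'$ to be $i$'s more valuable good in $X^1_i$ gives $v_i(g)\le \frac{1}{2}v_i(X^1_i)\le v_i(X^1_i)$. (This simultaneously handles \cref{prop:d}.) So \cref{prop:c} follows, and in fact we've just shown \cref{prop:d}: any agent with $|X^1_i|>1$ has $v_i(g)\le\frac12 v_i(X^1_i)$ for all pool goods $g$, again directly from the inapplicability of Step~\ref{step4} applied to her maximum-value good in $X^1_i$.

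\medskip

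The real content is \cref{prop:e}: for an agent $i$ with $|X^1_i|=1$, there is \emph{at most one} pool good $g_i$ with $v_i(g_i)>\frac12 v_i(X^1_i)$, and moreover $v_i(g_i)\le\frac23 v_i(X^1_i)$. The upper-bound half is immediate from the negation of the Step~\ref{step1} guard: no pool good exceeds $v_i(X^1_i)$, so any ``critical'' good $g$ has $\frac12 v_i(X^1_i)<v_i(g)\le v_i(X^1_i)$ — but wait, that only gives $\le 1$, not $\le \frac23$. For the $\frac23$ bound I would argue: if $v_i(g)>\frac23 v_i(X^1_i)$ for some pool good $g$, then since Step~\ref{step6} and Step~\ref{step7} did not fire, agent $i$ cannot be a source of $\Gr(\X^1)$ (if she were a source with $|X^1_i|=1$, Step~\ref{step6} would apply) and the set of pool goods $g$ with $v_i(g)>\frac23 v_i(X^1_i)$ over agents $i$ with singleton bundles is \emph{not} of size exactly $1$ (else Step~\ref{step7} applies) — so it must be of size $0$ or $\ge 2$. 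This is where I expect the main obstacle: ruling out the ``$\ge 2$'' possibility. The argument should be that if some singleton-bundle agent $i$ had two pool goods $g_1,g_2$ with $v_i(g_j)>\frac23 v_i(X^1_i)$, then $v_i(\{g_1,g_2\})>\frac43 v_i(X^1_i)>\frac23 v_i(X^1_i)$ and Step~\ref{step3} would have fired — contradiction. And if two \emph{distinct} singleton-bundle agents each have one such good, I'd need to show that doesn't happen at termination; here I suspect one leans on the fact that at termination no source of $\Gr(\X^1)$ has a singleton bundle (inapplicability of Step~\ref{step6}) together with the structure of $\Gr$: any agent $i$ with $|X^1_i|=1$ that is not a source has an in-edge, hence lies on a path from a source, and the relevant good is attached via the \SP/\PRPA mechanism only when exactly one such good exists — so the count being $\ge 2$ would have permitted Step~\ref{step7} to run on a length-one restriction, or, more carefully, the count being $\ge 1$ with $i$ not a source forces the count to be exactly $1$ unless Step~\ref{step3} applies to one agent with two goods. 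I would therefore split: (i) no singleton-bundle agent has two critical-at-$\frac23$ goods (Step~\ref{step3}); (ii) if the total count across all singleton agents is $\ge 1$, then it is exactly $1$ (Step~\ref{step7} would fire otherwise) and that good belongs to a non-source agent (Step~\ref{step6}); (iii) hence for that unique agent $i_0$ and her unique good $g_{i_0}$ we get $\frac12 v_{i_0}(X^1_{i_0})<v_{i_0}(g_{i_0})\le\frac23 v_{i_0}(X^1_{i_0})$ is \emph{forced} — actually the upper bound $\le\frac23$ must come from somewhere else, so I'd re-examine: the negation of ``Step~\ref{step7} applies'' only concerns goods strictly above $\frac23$, so if $v_{i_0}(g_{i_0})>\frac23$ the count of such strictly-above goods is $\ge 1$, and being exactly $1$ with $i_0$ a non-source is precisely the Step~\ref{step7} trigger — contradiction! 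So in fact \emph{no} pool good exceeds $\frac23 v_i(X^1_i)$ for any singleton-bundle non-source $i$, and for sources there are none by Step~\ref{step6}; combined with (i) this yields at most one good in $(\frac12,\frac23]\cdot v_i(X^1_i)$, proving \cref{prop:e}.

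\medskip

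In short, the proof is ``negate the guards and assemble,'' with the only delicate point being the bookkeeping around Steps~\ref{step3}, \ref{step6}, \ref{step7} to pin down both the cardinality bound and the $\frac23$ bound in \cref{prop:e}. I'd present it as: (1) \cref{prop:c} from $\lnot$Step~\ref{step1} and $\lnot$Step~\ref{step4}; (2) \cref{prop:d} from $\lnot$Step~\ref{step4}; (3) for \cref{prop:e}, fix a singleton-bundle agent $i$, use $\lnot$Step~\ref{step3} to get no two pool goods each above $\frac23 v_i(X^1_i)$ (hence at most one above, and in fact the two-goods-above case is killed), use $\lnot$Step~\ref{step6} and $\lnot$Step~\ref{step7} to show no single pool good is strictly above $\frac23 v_i(X^1_i)$ either, and conclude every ``critical'' good (above $\frac12 v_i(X^1_i)$) is at most $\frac23 v_i(X^1_i)$ and there is at most one of them, since two goods each in $(\frac12,\frac23]\cdot v_i(X^1_i)$ would again trigger Step~\ref{step3}.
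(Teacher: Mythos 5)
Your overall strategy---negate the guards of Steps~\ref{step1}--\ref{step4}, \ref{step6}, \ref{step7} at termination and read the properties off the resulting inequalities---is exactly the paper's (which phrases the same thing as a proof by contradiction). Your treatment of \cref{prop:c} and \cref{prop:d} is correct in substance, with one slip: to get $v_i(g)\le\tfrac12 v_i(X^1_i)$ from the negated Step~\ref{step4} guard you must instantiate $g'$ as the \emph{least} valuable good of $X^1_i$ (the paper does this via $v_i(g^*)>\min_{g\in X^1_i}v_i(g)$), not the most valuable one; since the negation holds for all $g'\in X^1_i$ the conclusion survives, but as written the inequality does not follow from the good you chose. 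The cardinality half of \cref{prop:e} (at most one critical good per singleton-bundle agent) is correctly obtained from Step~\ref{step3}, since two goods each above $\tfrac12 v_i(X^1_i)$ sum to more than $v_i(X^1_i)>\tfrac23 v_i(X^1_i)$.

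The genuine gap is in the $\tfrac23$ upper bound of \cref{prop:e}. You correctly observe that the negation of Step~\ref{step7} only excludes the case where the set $\{g\in\PP(\X^1): \exists i,\ |X^1_i|=1,\ v_i(g)>\tfrac23 v_i(X^1_i)\}$ has cardinality exactly $1$, and you explicitly flag that you still need to rule out cardinality $\ge 2$ (in particular two distinct goods witnessed by two distinct agents)---but the mechanism you then reach for (no source of $\Gr(\X^1)$ has a singleton bundle, the structure of \SP and \PRPA) neither closes this case nor is needed. The missing observation is that Step~\ref{step3}'s guard is a condition on the \emph{sum} $v_i(\{g_1,g_2\})$, not on each good separately: if some singleton-bundle agent $i$ has a pool good $g$ with $v_i(g)>\tfrac23 v_i(X^1_i)$ and the pool contains \emph{any} second good $g'$ (regardless of who values it or how little), then $v_i(\{g,g'\})\ge v_i(g)>\tfrac23 v_i(X^1_i)$ and Step~\ref{step3} fires for $i$. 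This is exactly how the paper argues; it collapses everything to the case $|\PP(\X^1)|=1$, where the offending set has cardinality exactly $1$ and Step~\ref{step7} fires. Your appeal to Step~\ref{step6} is superfluous throughout.
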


\begin{proof}
Suppose, towards a contradiction, that $\X^1$ violates one of \cref{prop:c,prop:d,prop:e}. Assume first that it violates \cref{prop:c}. That means that there exists some agent $i\in N$ that prefers some good $g \in \PP(\X^1)$ from the pool to her bundle $X^1_i$. We consider two cases:
\begin{itemize}[leftmargin=20pt]
\item[-] $|X_i^1|=1$. In that case, the \PPA algorithm would not have terminated and 
Step~\ref{step1} would be executed.
\item[-] $|X_i^1|=2$. In that case, the \PPA algorithm would not have terminated and
{either Step~\ref{step2} (if $v_i(g) > \frac{3}{2}v_i(X^1_i)$) or Step~\ref{step4} (if $v_i(X^1_i) < v_i(g) \le  \frac{3}{2}v_i(X^1_i)$)} would be executed.
\end{itemize}

Next assume that \cref{prop:d} is not satisfied. That means that there exists some agent $i \in N$ with $|X_i^1|=2$ that has a critical good, i.e., a good $g^* \in \PP(\X^1)$ such that $v_i(g^*) > \frac{1}{2}v_i(X_i^1)$. 
This means that $v_i(g^*) > \min_{g \in X_i^1}v_i(g)$, i.e., there is a good from the pool $\PP(\X^1)$ that agent $i$ prefers to one of the goods in her bundle; this follows by the additivity of the values. 
In that case, the \PPA algorithm would not have terminated and 
one of the steps up to Step~\ref{step4} would be executed.\smallskip

Finally, assume that \cref{prop:e} is not satisfied. That has one of the following two implications:
\begin{itemize}[leftmargin=20pt]
\item[-] there exists some agent $i \in N$ with $|X_i^1|=1$ that has at least two critical goods, or
\item[-] there exists some agent $i \in N$ with $|X_i^1|=1$ that has exactly one critical good $g_i \in \PP(\X^1)$, but that good's value exceeds $2/3$ of the value of her bundle, i.e., $\frac{2}{3}v_i(X_i^1) < v_i(g_i)$. 
\end{itemize}
Consider the first case, and let $g_i$ and $\tilde{g}_i$ be two of the critical goods of agent $i$. This means that $v_i(\{g_i,\tilde{g}_i\}) > v_i(X_i^1)$ and since $g_i, \tilde{g}_i \in \PP(\X^1)$, Step~\ref{step3} of the algorithm could have been executed for the agent to swap $X_i^1$ with $\{g_i,\tilde{g}_i\}$. This contradicts the fact that the algorithm terminated. Next consider the second case, where $g_i$ is the {unique} critical good of agent $i$, for which we have $v_i(g_i) > \frac{2}{3}v_i(X_i^1)$. Assume first that there is some other good $g' \in \PP(\X^1)$; in that case we have that $v_i(\{g_i,g'\}) > \frac{2}{3}v_i(X_i^1)$, since $v_i(g')\geq 0$. This again implies that Step~\ref{step3} of the algorithm could have been executed for agent $i$ to swap $X_i^1$ with $\{g_i,g'\}$, contradicting the fact that the algorithm terminated. Finally, assume that $\PP(\X^1)=\{g_i\}$. This is precisely the condition of Step~\ref{step7} of the \PPA algorithm, and hence that step could have been executed, contradicting the fact that the algorithm terminated. Overall, this establishes that the partial allocation $\X^1$ satisfies \cref{prop:c,prop:d,prop:e}. 
\end{proof}

{The last lemma in this section} establishes that the enhanced graph $\Ge$ of $\X^1$ has a source without critical goods. 
\begin{lemma}\label{lem-3PA-source-enhanced}
Let  $\X$ be a seed allocation and $\X^1$ be the output of \,\PPA\!$(\left(v_i\right)_{i 
\in N}, \X)$. 
If\, $\X^1$ is not a complete allocation, then $\Ge(\X^1)$ has at least one source, and every source $s$ in $\Ge(\X^1)$ has $|X^1_s| = 2$.
\end{lemma}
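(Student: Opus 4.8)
The plan is to read the conclusion off the termination condition of \PPA. Since $\X^1$ is assumed not to be a complete allocation, the pool $\PP(\X^1)$ is nonempty, so the while loop of \PPA did not exit through its guard; hence the algorithm halted by reaching Step~\ref{step9}, which means that \emph{none} of the conditions of Steps~\ref{step1}--\ref{step8} are satisfied by $\X^1$. Only the failures of the conditions of Steps~\ref{step5}, \ref{step6}, and \ref{step8} will actually be used.

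First, for the existence of a source: because the condition of Step~\ref{step8} fails, $\Ge(\X^1)$ contains no directed cycle, i.e., it is a finite directed acyclic graph, and every finite DAG has a vertex of in-degree $0$. Hence $\Ge(\X^1)$ has at least one source. (Failure of the condition of Step~\ref{step5} gives the analogous statement that $\Gr(\X^1)$ is acyclic, which will be convenient but is not strictly needed once we have a source of $\Ge(\X^1)$.)

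Second, I would observe that every source of $\Ge(\X^1)$ is also a source of $\Gr(\X^1)$. This is immediate from the fact that $\Ge(\X^1)$ is a supergraph of $\Gr(\X^1)$: the in-degree of a vertex in $\Gr(\X^1)$ is at most its in-degree in $\Ge(\X^1)$, so in-degree $0$ in $\Ge(\X^1)$ forces in-degree $0$ in $\Gr(\X^1)$.

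Finally, I would rule out size-one sources of $\Gr(\X^1)$. By \cref{lem-3PA-properties-ab}, $\X^1$ has size at most $2$, and by the same step-by-step inspection used there one checks that, starting from a seed allocation, no step of \PPA ever empties a bundle; hence every bundle of $\X^1$ has cardinality $1$ or $2$. If some source $s$ of $\Gr(\X^1)$ had $|X^1_s|=1$, then the condition of Step~\ref{step6} would be satisfied, contradicting that \PPA halted at Step~\ref{step9}. Therefore every source of $\Gr(\X^1)$, and in particular every source of $\Ge(\X^1)$, has a bundle of cardinality exactly $2$, which is the claim. I do not expect a genuine obstacle here; the argument is short once the relevant step-conditions are identified. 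The only points needing a little care are the direction of the supergraph inclusion when transferring the property of being a source from $\Ge$ back to $\Gr$, and invoking the (routine) fact, implicit in the proof of \cref{lem-3PA-properties-ab}, that \PPA never produces an empty bundle.
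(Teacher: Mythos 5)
Your proof is correct and follows essentially the same route as the paper's: acyclicity of $\Ge(\X^1)$ at termination yields a source, the supergraph relation $\Gr \subseteq \Ge$ transfers sourcehood to $\Gr(\X^1)$, and the failure of Step~\ref{step6}'s condition (together with the fact that no bundle is ever emptied) rules out singleton sources. The only cosmetic difference is that you derive acyclicity from the failure of Step~\ref{step8}'s guard rather than from that step having run, which is if anything the cleaner phrasing.
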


\begin{proof}
{If $\Ge(\X^1)$ had no sources, it would contain at least one cycle. Given that $\PP(\X^1) \neq \emptyset$, the main loop of the algorithm would have been executed with the condition of Step~\ref{step8} being true in the last iteration. That would contradict the fact that $\X^1$ is the final output of \PPA. We conclude that there is at least one source $s$ in $\Ge(\X^1)$.}
By the definition of the reduced and enhanced graphs $\Gr$ and $\Ge$ respectively, it immediately follows that if $s$ is a source in $\Ge(\X^1)$, it is also a source in $\Gr(\X^1)$. 
Since $\PP(\X^1) \neq \emptyset$, it is not possible to have $|X_s^1|=1$, as then Step~\ref{step6} of the algorithm could have been executed and agent $s$ would have received an extra good from $\PP(\X^1)$. Recall, however, that {$1\le |X_s^1|\le 2$} (see the beginning of the proof of \cref{lem-3PA-properties-ab}). Thus, $|X^1_s| = 2$.
\end{proof}

Clearly, combining \cref{lem-3PA-properties-ab,lem-3PA-properties-cde,lem-3PA-source-enhanced} completes the proof of \cref{lem:3PA-properties}. 
We conclude the section with a proof that the \PPA algorithm (Algorithm \ref{alg:3PA}) terminates in polynomial time.

\begin{lemma}\label{lem:3PA-polytime}
The \PPA algorithm terminates in polynomial time.
\end{lemma}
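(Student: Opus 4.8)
The plan is to exhibit a potential function that strictly increases (or a pair of quantities, one of which cannot increase too often) with every iteration of the main while loop of \PPA, and then bound the work done inside each iteration polynomially. The natural first candidate is something like the sum of the agents' values for their own bundles, $\Phi(\X) = \sum_{i \in N} v_i(X_i)$, since several steps (Steps~\ref{step1}, \ref{step2}, \ref{step4}, \ref{step5}, \ref{step6}) obviously increase this quantity. However, Steps~\ref{step3} and \ref{step7} (the latter via \SP, which calls \PRPA) can \emph{decrease} some agents' values, so $\Phi$ alone will not do. First I would therefore handle these ``decreasing'' steps separately from the ``non-decreasing'' ones, using a hierarchical / lexicographic potential argument.

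**Next I would** organize the bundles by cardinality. Let $n_2(\X)$ be the number of agents with $|X_i| = 2$ and $n_1(\X)$ the number with $|X_i| = 1$; note $n_1 + n_2 = n$ is fixed (no empty bundles ever appear, by the seed assumption and inspection of the steps). Inspecting the steps: Step~\ref{step1} preserves cardinalities; Step~\ref{step2} decreases $n_2$ by one; Step~\ref{step3} increases $n_2$ by one; Step~\ref{step4} preserves cardinalities; Step~\ref{step5} (cycle resolution in $\Gr$) permutes bundles, but along a $\Gr$-cycle the only nontrivial case is a $2$-bundle agent pointing to a $1$-bundle agent (which requires $v_i(X_j) > \tfrac32 v_i(X_i)$) — more carefully, within any \CR on $\Gr$ the multiset of bundle sizes is preserved, so $n_2$ is unchanged; Step~\ref{step6} increases $n_2$ by one (a singleton source gets a second good); Step~\ref{step7} via \PRPA: the target agent goes from $|X_i|=1$ to $|X_i|=2$, the source $s$ loses its $2$-bundle and gets back one good to the pool, so along the path sizes are shuffled but the net effect on $n_2$ is $0$ (source had $2$, now one of its goods plus a pool good form the target's new $2$-bundle, the source's remaining good... actually the source keeps the bundle it envied along the path). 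The cleanest claim to prove is: \emph{$n_2$ never decreases except in Step~\ref{step2}, and whenever Step~\ref{step2} fires, $\Phi$ jumps by a factor that pays for it.} Concretely, I would use the lexicographic potential $(n_2(\X), \Phi(\X))$ under a suitable tie-break, or — cleaner — argue that Step~\ref{step3} and Step~\ref{step6} (the two size-increasing steps) can each fire only polynomially often, and between consecutive firings of any size-changing step, $\Phi$ strictly increases by a polynomially-bounded granularity, hence only polynomially many size-preserving steps occur.

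**The technically delicate part** — and the main obstacle — is Step~\ref{step3} together with Step~\ref{step7}/\SP, because these are the steps where value is \emph{destroyed}, so a naive ``$\Phi$ increases'' argument breaks, and moreover they interact: Step~\ref{step3} turns a singleton into a doubleton whose owner may value it as little as $\tfrac23$ of before, and \SP reshuffles along a $\Gr$-path. I would argue as follows. When Step~\ref{step3} fires for agent $i$, we have $v_i(X_i') > \tfrac23 v_i(X_i)$ but also $|X_i'| = 2$; crucially this \emph{removes two goods from the pool}. When Step~\ref{step6} fires it removes one good from the pool. Neither step ever returns goods to the pool. Steps~\ref{step1},\ref{step2} swap one-for-one (pool size unchanged), Step~\ref{step4} swaps one-for-one, Step~\ref{step5} and Step~\ref{step8} don't touch the pool, and Step~\ref{step7}/\SP returns exactly the net surplus of the source's old $2$-bundle to the pool after handing out two goods — a careful accounting shows $|\PP|$ strictly \emph{decreases} by one in Step~\ref{step7} as well (target consumes the single remaining pool good $g$ plus one good from $X_s$, the other good of $X_s$ goes back to the pool, net $-1$... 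I would double-check this against the \PRPA and \SP pseudocode, but since \SP is only called when $|\PP(\X)| = 1$ and afterward $g \notin \PP$, the pool shrinks). So the number of pool-shrinking steps (Steps~\ref{step3},\ref{step6},\ref{step7}) is at most $m$. Between any two such events, only Steps~\ref{step1},\ref{step2},\ref{step4},\ref{step5},\ref{step8} occur, all of which are $\Phi$-non-decreasing, and in fact each is $\Phi$-strictly-increasing: Steps~\ref{step1},\ref{step4} by the strict inequality in their guards; Step~\ref{step2} strictly since $v_i(X_i') > \tfrac32 v_i(X_i) \ge v_i(X_i)$ (using no empty bundles); Steps~\ref{step5},\ref{step8} strictly because resolving a cycle in $G$, $\Gr$, or $\Ge$ strictly increases the total — for the red edges of $\Ge$ one must check the $|X_i|=1 \to |X_j|=2$ case, where the guard $v_i(X_j) \ge \tfrac23 v_i(X_i)$ does \emph{not} obviously give a strict increase for that single agent, so here I would invoke the standard fact (from \citet{lipton2004approximately}, adapted) that in any cycle at least one agent strictly gains, OR observe that \ACR on $\Ge$ strictly decreases the edge count by \cref{obs:all-cycle-resolution-running-time} and each \CR is $O(n)$, bounding the Step~\ref{step8} work directly without reference to $\Phi$. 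The same edge-count argument from \cref{obs:all-cycle-resolution-running-time} bounds Step~\ref{step5}.

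**Finally**, to convert ``$\Phi$ strictly increases'' into ``polynomially many iterations'' I need a lower bound on the increment, which requires either a genericity/encoding assumption on the valuations or a combinatorial refinement. The safe route is the latter: between two pool-shrinking events, the allocation only exchanges a \emph{fixed finite multiset} of at most $m$ goods among $n$ agents, so there are at most $n^m$ reachable allocations — too crude. Instead, I would note that $\Phi$ after each of Steps~\ref{step1},\ref{step2},\ref{step4} takes a value of the form $\sum_i v_i(S_i)$ where each $S_i$ is a sub-multiset of $M$ of size $\le 2$, and is \emph{strictly increasing along the run}; since there are only $O(m^2)$ possible values of $v_i(S_i)$ per agent, the distinct values of $\Phi$ between two consecutive pool-shrinks number at most... still not obviously polynomial without more care — this is exactly why the paper "provide[s] proofs based on carefully crafted potentials and appropriate combinatorial arguments." So I expect the real proof to combine (i) the $\le m$ bound on pool-shrinking steps, (ii) the $\le m$ bound on \CR calls inside each \ACR, and (iii) a bespoke potential — perhaps weighting $\Phi$ by bundle cardinalities, e.g. $\Psi(\X) = \sum_{i: |X_i|=2} v_i(X_i) + 2\sum_{i:|X_i|=1} v_i(X_i)$ or a lexicographic combination with $n_2$ — that is monotone across \emph{all} steps and whose range between pool-shrinks is provably polynomial. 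I would present the pool-shrink bound and the cycle-resolution bound first (these are clean), isolate the size-preserving value-increasing steps, and then invest the remaining effort in the combinatorial bound on how many such steps can occur, which is where I anticipate the bulk of the difficulty.
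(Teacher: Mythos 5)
There is a genuine gap: you correctly isolate the difficulty (Steps~\ref{step3} and \ref{step7} destroy value, so no naive welfare potential works), but you never supply the idea that closes it, and you say so yourself (``still not obviously polynomial without more care''). The paper's actual argument is a per-agent, per-cardinality monotonicity claim (\cref{claim:no_repeated_bundles}): for each fixed agent $i$ and each fixed cardinality $k\in\{1,2\}$, the sequence of values $v_i(X_i)$ over the iterations in which $i$ holds a bundle of cardinality $k$ is \emph{strictly increasing}. The reason is that the only value decreases occur on $1\to 2$ transitions and retain at least a $2/3$ fraction, while every $2\to 1$ transition gains a factor strictly greater than $3/2$; since $\tfrac{2}{3}\cdot\tfrac{3}{2}=1$, any round trip through the other cardinality strictly increases value, so no agent ever revisits a bundle. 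With at most $m^2$ bundles of size at most $2$ per agent and at least one agent changing bundles per iteration, this gives an $nm^2+1$ bound on the number of iterations. Your proposal gestures at ``a bespoke potential'' but does not identify this mechanism, which is the entire content of the proof.

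Two of your intermediate claims are also wrong as stated. First, Step~\ref{step7} does \emph{not} shrink the pool: in \PRPA the target receives $g_*\in\PP(\X)$ and $g_s\in X_s$, while the other good of $X_s$ is returned to the pool, so the pool size is unchanged (the output specification of \SP says exactly this: $g\notin\PP(\X')$ but ``some other good $g'$ is returned''). Hence your decomposition ``at most $m$ pool-shrinking events, and between them only value-non-decreasing steps'' fails, because Step~\ref{step7} is value-decreasing yet pool-preserving and could in principle recur. Second, the edge-count argument of \cref{obs:all-cycle-resolution-running-time} bounds the work \emph{inside a single call} to \ACR; it does not bound how many main-loop iterations execute Step~\ref{step5} or Step~\ref{step8}, since the envy graphs are rebuilt after every allocation change and their edge counts can grow back. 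Neither issue is fatal to the lemma itself, but both would need to be repaired, and the central counting argument would still be missing.
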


\begin{proof}
We begin with the easy fact that there are $\binom{m}{2} + \binom{m}{1}$ 
possible bundles for an agent, given that the \PPA algorithm maintains a partial allocation of size at most $2$ {with nonempty bundles}. For any $m \ge 1$, this quantity is at most $m^2$.
Next, observe that in every iteration of the while loop of the \PPA algorithm, except in the very last one, there is at least one agent whose bundle changes. 
We next claim that although the bundle of an agent might change, it never becomes one of the bundles she saw in the past iterations.

\begin{claim}\label{claim:no_repeated_bundles}
Once the bundle of an agent changes from $S$ to $S'$ during the execution of \PPA, it will never go back to $S$ in any future iteration. 
\end{claim}

\begin{proof}[Proof of \cref{claim:no_repeated_bundles}]
Fix an agent $i \in N$ and consider the finite sequence of her bundles $X_i^{(0)}, \allowbreak X_i^{(1)}, \allowbreak X_i^{(2)}, \ldots$, where the superscript indicates how many iterations of the while loop have been executed. First, observe that the only way that $i$'s value for her bundle decreases is through {Steps~\ref{step3}, \ref{step7}, \ref{step8}, and \ref{step9}}, but this happens only by moving from a bundle of cardinality $1$ to a bundle of cardinality $2$ and, despite the decrease, she maintains at least $2/3$ of the previous value. In any other case, $i$'s value for her bundle strictly increases. In particular, when moving from a bundle of cardinality $2$ to a bundle of cardinality $1$ (through {Steps~\ref{step2}, \ref{step5}, \ref{step7}, \ref{step8}, or \ref{step9}}) this increase is by a factor strictly larger than $3/2$.

Consider all the iterations $\ell_1 < \ell_2 < \ldots$, during which the bundle of agent $i$ changes, i.e., $X_i^{(\ell_\kappa-1)} \neq X_i^{(\ell_\kappa)}$ for any $\kappa \ge 1$. Using strong induction on $\kappa$, we are going to show that $v_i(X_i^{(\ell_\kappa)})$ is strictly larger than all the preceding bundles of cardinality $|X_i^{(\ell_\kappa)}|$. This would imply \Cref{claim:no_repeated_bundles}.

For $\kappa = 1$, observe that $X_i^{(0)} = X_i^{(1)} = \ldots = X_i^{(\ell_1-1)}$. There are two simple cases. First, if $|X_i^{(\ell_1-1)}| = |X_i^{(\ell_1)}|$, then it must be $v_i(X_i^{(\ell_1-1)}) < v_i(X_i^{(\ell_1)})$, through {Steps~\ref{step1}, \ref{step5}, \ref{step7}, \ref{step8}, or \ref{step9}}, if $|X_i^{(\ell_1)}| = 1$, and through {Steps~\ref{step4}, \ref{step5}, \ref{step7}, \ref{step8}, or \ref{step9}}, if $|X_i^{(\ell_1)}| = 2$.
Secondly, if $|X_i^{(\ell_1-1)}| \neq |X_i^{(\ell_1)}|$, then $v_i(X_i^{(\ell_1)})$ is strictly larger than all the preceding bundles of cardinality $|X_i^{(\ell_1)}|$ trivially (as there are no other such bundles).

\noindent Assuming the induction hypothesis for all indices up to and including $\kappa$, consider the bundle $X_i^{(\ell_{\kappa+1})}$. 
\begin{itemize}[leftmargin=*]
\item[-] First, assume that $|X_i^{(\ell_{\kappa+1}-1)}| = |X_i^{(\ell_{\kappa+1})}|$. 
Like in the base case, it must be $v_i(X_i^{(\ell_{\kappa+1}-1)}) < v_i(X_i^{(\ell_{\kappa+1})})$, through {Steps~\ref{step1}, \ref{step5}, \ref{step7}, \ref{step8}, or \ref{step9}}, if $|X_i^{(\ell_{\kappa+1})}| = 1$, and through {Steps~\ref{step4}, \ref{step5}, \ref{step7}, \ref{step8}, or \ref{step9}}, if $|X_i^{(\ell_{\kappa+1})}| = 2$.
By the definition of the $\ell_j$s, however, we have  $X_i^{(\ell_{\kappa})} = X_i^{(\ell_{\kappa}+1)} = \ldots =  X_i^{(\ell_{\kappa+1}-1)}$ and $v_i(X_i^{(\ell_\kappa)})$ is strictly larger than all the preceding bundles of cardinality $|X_i^{(\ell_\kappa)}|$ by the induction hypothesis. Thus, $v_i(X_i^{(\ell_{\kappa+1})})$ is strictly larger than all the preceding bundles of cardinality $|X_i^{(\ell_{\kappa+1})}|$.

\item[-] Next, assume that $|X_i^{(\ell_{\kappa+1}-1)}| \neq |X_i^{(\ell_{\kappa+1})}|$. If there is no preceding bundle of cardinality $|X_i^{(\ell_{\kappa+1})}|$ then the desired property of $v_i(X_i^{(\ell_{\kappa+1})})$ trivially holds. So, assume that this is not the case, and let $r$ be the largest superscript less than $\kappa$ for which $|X_i^{(\ell_r)}| = |X_i^{(\ell_{\kappa+1})}|$. We first assume that $|X_i^{(\ell_{\kappa+1})}| = 1$. This means that the subsequence $X_i^{(\ell_r)}, X_i^{(\ell_r +1)}, \ldots, X_i^{(\ell_{\kappa+1})}$  maintains cardinality $1$ for some iterations (until $X_i^{(\ell_{r +1} -1 )}$),  then moves \textit{once} from cardinality $1$ to cardinality $2$ (namely, from $X_i^{(\ell_{r +1} -1 )}$ to $X_i^{(\ell_{r +1})}$), possibly maintains cardinality $2$ for a number of iterations (until $X_i^{(\ell_{\kappa +1} -1 )}$) and, finally, moves once from cardinality $2$ back to cardinality $1$. In terms of value, this translates to {initially being constant, followed by \textit{potentially} decreasing but maintaining at least $2/3$ of the value, then by alternatingly being constant and possibly increasing, and, finally, by  increasing} by a factor strictly larger than $3/2$. That is, $v_i(X_i^{(\ell_{\kappa+1})}) > v_i(X_i^{(\ell_r)})$. Now, given that (a) $v_i(X_i^{(\ell_r)})$ is strictly larger than all the preceding bundles of cardinality $|X_i^{(\ell_r)}|$, (b) $X_i^{(\ell_{r})} = X_i^{(\ell_{r}+1)} = \ldots =  X_i^{(\ell_{r+1}-1)}$, and (c) the bundles $X_i^{(\ell_{r +1})}, \ldots, X_i^{(\ell_{\kappa+1}-1)}$ all have cardinality $2$, we conclude that $v_i(X_i^{(\ell_{\kappa+1})})$ is strictly larger than all the preceding bundles of cardinality $|X_i^{(\ell_{\kappa+1})}|$. The remaining case is $|X_i^{(\ell_{\kappa+1})}| = 2$. The argument is completely analogous with the only difference being that the subsequence $X_i^{(\ell_r)}, X_i^{(\ell_r +1)}, \ldots, X_i^{(\ell_{\kappa+1})}$  maintains cardinality $2$ for some iterations,  then moves once from cardinality $2$ to cardinality $1$, possibly maintains cardinality $1$ for a number of iterations and, finally, moves once from cardinality $1$ back to cardinality $2$. \renewcommand\qedsymbol{{\small $\boxdot$}}\qedhere
\end{itemize}
\end{proof}

\Cref{claim:no_repeated_bundles} implies that after every iteration at least one of the possible bundles is lost forever to at least one agent. As a result, the possible choices for all agents will be exhausted after at most $n m^2$ iterations of the main loop of the \PPA algorithm, so $n m^2 + 1$ is an immediate upper bound on the maximum number of iterations before the \PPA algorithm terminates. The polynomial running time now follows by observing that only one step of the loop is executed in each iteration and that every such step runs in polynomial time. The latter is straightforward to see, given that building the graphs $\Gr$ and $\Ge$, finding a cycle or all sources in them, and resolving {one cycle in $\Ge$ or all cycles in $\Gr$ (recall \cref{obs:all-cycle-resolution-running-time})} can all be done in polynomial time. 
\end{proof}

\section{2/3-EFX on Multigraphs or For a Small Number of Agents}\label{sec:multigraph-and-few-agents}

In this section, we provide our first main results, namely that $2/3$-EFX allocations exist and can be obtained in polynomial time for (i) multigraph value instances, or (ii) additive instances with up to $\numagents$ agents. 
We begin with some auxiliary results.

\subsection{Key Lemmata}\label{sec:key-lemmas} 

Recall that in \cref{sec:3pa} we mentioned that we plan to obtain a $2/3$-EFX partial allocation $\X^2$
that does not induce any critical goods. \Cref{lem:partialefx-to-efx-extension} below establishes that obtaining such an $\X^2$ suffices; we can then complete a $2/3$-EFX allocation via running the Envy Cycle Elimination algorithm of \citet{lipton2004approximately}. 
Given an envy graph, this algorithm extends a partial allocation one good at a time by allocating the next available good to an agent who is a source in the envy graph. If such an agent does not exist, the algorithm first resolves one or more envy cycles.
For completeness, we state below a parametric version of Envy Cycle Elimination. The algorithm is stated so that it can run on a subset of the goods, starting from a partial allocation.

\SetAlgorithmName{ALGORITHM}{Algorithm}

\setcounter{algocf}{1}
\begin{algorithm}[!ht]
	\DontPrintSemicolon 
        \KwData{A partial allocation $\X$ and its envy graph ${G}(\X)$.}
        \KwResult{A complete allocation $\X'$.} 
		\For{every $g \in \PP(\X)$ in arbitrary order}{
			\While{there is no source in $G({\X})$ }{
				Find a cycle $C$ in $G({\X})$ \;
                $\X \gets \CR(\X,{G}(\X),C)$
				}
			
			Let $i\in N$ be a source in $G({\X})$ \label{line:source} \;
            $X_{i} \gets X_{i}\cup \{g\}$\; 
		}
		\Return $\X$ \; 
	\caption{$\ECE(\X, G)$}\label{alg:ece}
\end{algorithm} 

In general, tie-breaking is not important for the properties of the algorithm we need and which are summarized in the next lemma. As usual, we may assume that any tie-breaking is done  lexicographically.

\begin{lemma}[Follows by \citet{lipton2004approximately}]\label{lemma:ece}
Fix an agent $i$, and let $X^{(\ell)}_i\!$ be the bundle assigned to $i$ at the end of the $\ell$-th iteration of the \emph{for} loop, for any $\ell \in \mathbb{N}$. Then, for $k\ge \ell$, it holds that  \,$v_i\big(X^{(k)}_i\big) \geq v_i\big(X^{(\ell)}_i\big)$. Furthermore, the algorithm terminates in polynomial time.
\end{lemma}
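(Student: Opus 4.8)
\textbf{Proof plan for \cref{lemma:ece}.}

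The plan is to verify the two claims separately, both essentially by inspecting \ECE (Algorithm~\ref{alg:ece}) and invoking standard facts about envy cycle elimination. For the monotonicity claim, I would track the bundle of a fixed agent $i$ across iterations of the \emph{for} loop. Within a single iteration, two kinds of things can happen to $i$'s bundle: it can be rerouted by one or more \CR calls, or it can receive the new good $g$ if $i$ is the chosen source (it also stays untouched otherwise). A \CR call on a cycle $C$ assigns to each agent on $C$ the bundle of the agent she envies, so if $i \in C$ then $v_i(X_i)$ strictly increases, and if $i \notin C$ then $v_i(X_i)$ is unchanged; hence every \CR call is (weakly) value-nondecreasing for every agent. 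Adding a good to $i$'s bundle only increases $v_i(X_i)$ by nonnegativity of values. Therefore $v_i\big(X_i^{(\ell)}\big)$ is nondecreasing in $\ell$, which gives $v_i\big(X_i^{(k)}\big) \ge v_i\big(X_i^{(\ell)}\big)$ for all $k \ge \ell$.

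For termination in polynomial time, the key point is that the inner \emph{while} loop cannot run forever: each \CR call strictly decreases the number of edges of the envy graph $G(\X)$ (exactly as argued in the proof of \cref{obs:all-cycle-resolution-running-time}), since resolving a cycle removes all of its edges and, because every agent on the cycle strictly improves, introduces no new edge pointing into the cycle, while edges not touching the cycle are unaffected. So after at most $O(n^2)$ cycle resolutions the graph is acyclic and a source exists (a finite DAG always has a source), which is exactly the loop-exit condition; note a source is guaranteed to exist once the while loop terminates, so line~\ref{line:source} is well-defined. The \emph{for} loop runs $|M'| \le m$ times, and each iteration does $O(n^2)$ cycle resolutions, each costing $O(n^2+m)$ to rebuild/search the graph plus $O(n)$ to perform the swaps, for an overall bound polynomial in $n$ and $m$.

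The only mildly delicate point — and the one I would state carefully — is the claim that a \CR call strictly decreases the edge count of the envy graph, i.e.\ that resolving a cycle does not create new envy edges into the cycle. This is the standard invariant behind envy cycle elimination: after the swap, every agent $j$ on the cycle holds a bundle she strictly preferred to her old one, so no agent (on or off the cycle) who did not envy $j$ before can envy $j$'s new holder more than before, and the agents on the cycle are now at least as happy, so their out-edges can only disappear. Since everything else in the lemma is bookkeeping, I expect this monotonicity-of-the-graph argument to be the main thing worth spelling out, and it is already essentially contained in \citet{lipton2004approximately} and in the proof of \cref{obs:all-cycle-resolution-running-time}, so it can be cited rather than reproven in full.
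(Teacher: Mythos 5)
Your proof is correct and is exactly the standard argument the paper relies on: the paper gives no proof of \cref{lemma:ece} at all, deferring entirely to \citet{lipton2004approximately}, and the one nontrivial ingredient you isolate (each \CR call strictly decreases the edge count of the envy graph) is the same argument the paper itself spells out in the proof of \cref{obs:all-cycle-resolution-running-time}. Nothing is missing.
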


\Cref{lem:partialefx-to-efx-extension} below follows from results by \citet{markakis2023improved} as part of a general approximation framework; see also \citep{ANM2019,farhadi2021almost}. We adapt the statement to our terminology, and provide a short proof for completeness.

\begin{lemma}[\citet{markakis2023improved}]\label{lem:partialefx-to-efx-extension}
Let $\X$ be a $2/3$-EFX partial allocation that induces no critical goods. Then a $2/3$-EFX allocation $\tilde{\X}$ can be obtained from $\X$ in polynomial time. 
\end{lemma}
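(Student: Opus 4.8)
The plan is to run the (parametric) Envy Cycle Elimination algorithm $\ECE(\X, G)$ of \cref{alg:ece} on the remaining goods $M' = \PP(\X)$, starting from the given partial allocation $\X$, and to argue that the $2/3$-EFX property is preserved throughout. The key structural facts I would use are: (i) by \cref{lemma:ece}, along the execution of $\ECE$ the value $v_i(X_i)$ of every agent for her own bundle is non-decreasing over iterations of the for loop, and the algorithm runs in polynomial time; and (ii) a cycle resolution via $\CR$ only reassigns existing bundles among the agents on the cycle, strictly increasing the value of each agent involved (each gets a bundle she strictly envied), so cycle resolutions can neither destroy $2/3$-EFX nor create new critical goods.

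First I would establish the invariant that, at the start of every iteration of the for loop, the current partial allocation is $2/3$-EFX and induces no critical goods \emph{with respect to the goods that have not yet been allocated}. The base case is the hypothesis on $\X$. For the inductive step there are two things that happen in an iteration: resolving cycles in $G(\X)$, and adding the current good $g$ to a source $s$. Cycle resolution is handled by fact (ii) above: the only bundles that change are permuted among cycle members, each agent's value for her bundle strictly increases, so $2/3$-EFX towards every agent is maintained (the envied bundles are unchanged as sets, and the envying agents now value their own bundles at least as much), and no agent's value drops, so no previously non-critical good becomes critical.

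Next, the crucial step: adding $g$ to the source $s$. Since $s$ is a source in $G(\X)$, no agent envies $s$, i.e., $v_j(X_j) \ge v_j(X_s)$ for all $j \ne s$. After the update $X_s \gets X_s \cup \{g\}$, we must re-examine $2/3$-EFX of every agent $j \ne s$ towards $s$: for any good $h \in X_s \cup \{g\}$ we need $v_j(X_j) \ge \tfrac23 v_j((X_s \cup \{g\}) \setminus \{h\})$. If $h = g$, this is exactly $v_j(X_j) \ge \tfrac23 v_j(X_s)$, which follows from $s$ being a source. If $h \ne g$, then $(X_s \cup \{g\}) \setminus \{h\} = (X_s \setminus \{h\}) \cup \{g\}$, so we need $v_j(X_j) \ge \tfrac23\big(v_j(X_s \setminus \{h\}) + v_j(g)\big)$. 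Here is where the ``no critical goods'' hypothesis enters — and this is the step I expect to be the main obstacle, because a priori $v_j(g)$ could be large relative to $v_j(X_j)$. The resolution is that $g$ was in the pool at the start of this iteration (it had not been allocated yet), so by the no-critical-goods invariant $v_j(g) \le \tfrac12 v_j(X_j)$, and also by the $2/3$-EFX-of-$\X$ / source property applied appropriately $v_j(X_s \setminus \{h\}) \le v_j(X_s) \le v_j(X_j)$. Combining, $\tfrac23\big(v_j(X_s\setminus\{h\}) + v_j(g)\big) \le \tfrac23\big(v_j(X_j) + \tfrac12 v_j(X_j)\big) = v_j(X_j)$, as required. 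I would also need to check $2/3$-EFX \emph{of} $s$ towards everyone, but $s$'s bundle only grew and the others' are unchanged, so this is immediate from the induction hypothesis; and EFX/envy among the unchanged pairs is untouched.

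Finally I would note that the no-critical-goods invariant for the \emph{next} iteration is automatic: criticality is only ever required relative to still-unallocated goods, and we compare against bundles whose values only increased, so no good that was non-critical becomes critical. After the for loop terminates, every good has been allocated, $\tilde\X$ is a complete allocation, and by the invariant it is $2/3$-EFX. Polynomial running time is inherited from \cref{lemma:ece}. The one technical care point is being precise about what ``induces no critical goods'' means as goods get removed from the pool; I would state the invariant with the pool being the set of not-yet-allocated goods and observe monotonicity makes it self-sustaining.
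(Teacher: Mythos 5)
Your proposal is correct and follows essentially the same route as the paper: run \textsc{EnvyCycleElimination} from $\X$ and combine the source property ($v_j(X_j)\ge v_j(X_s)$) with non-criticality of the added good ($v_j(g)\le\frac12 v_j(X_j)$) to get the factor $\frac23(1+\frac12)=1$. The only presentational difference is that you maintain the $2/3$-EFX invariant by induction over iterations, whereas the paper argues directly about the final allocation via a snapshot at the moment the \emph{last} good was added to a bundle; the underlying inequalities are identical.
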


\begin{proof}
The complete allocation $\tilde{\X}$ is obtained from $\X$ by running the Envy Cycle Elimination algorithm of \citet{lipton2004approximately}; in particular, here we will assume that $\tilde{\X}$ is the output of $\ECE(\X, \allowbreak G(\X))$.
The polynomial running time follows by the fact that \ECE runs in polynomial time; see \Cref{lemma:ece}. 
For the $2/3$-EFX guarantee, consider any agent $i \in N$. We know that $v_i(\tilde{X}_i) \geq v_i(X_i)$ because during a run of \ECE, the utility of any agent never decreases; see \Cref{lemma:ece}. Now consider an arbitrary agent $j \in N \setminus\{i\}$. 

If no good is ever added to $\tilde{X}_j$ by \ECE, then $\tilde{X}_j = X_{j'}$, for some $j' \in N$ (possibly different from $j$). But agent $i$ was $2/3$-EFX towards $j'$ in $\X$, and as her value did not decrease, she is now $2/3$-EFX towards $j$ in $\tilde{X}$.

The remaining case is that $\tilde{X}_j$ has been augmented by \ECE. 
Let $g$ be the last good that was added to $\tilde{X}_j$ by \ECE. 
Right before good $g$ was added to $\tilde{X}_j$, the bundle belonged to some agent $j' \in N$, who was a source of the envy graph $G(\tilde{\X}^{t})$, where $\tilde{\X}^{t}$ is the allocation at that point.  For $\tilde{X}_{i}^{t}$, the corresponding bundle of agent $i$, we have that $v_i(\tilde{X}_{i}^{t}) \geq v_i(\tilde{X}^t_{j'}) = v_i(\tilde{X}_j \setminus \{g\})$. Recall that $\PP(\X)$ does not induce any critical goods; in particular, $g$ was not critical for $i$. This directly gives us $v_i(g) \le \frac12 v_i(X_i) \le \frac12 v_i(\tilde{X}_i^t)$. This, in turn,  implies that
\[
v_i(\tilde{X}_j) = v_i(\tilde{X}_j \setminus \{g\}) + v_i(g) \le v_i(\tilde{X}_{i}^{t}) +  \frac12 v_i(\tilde{X}_i^t) = \frac{3}{2} v_i(\tilde{X}_i^t) \le \frac{3}{2} v_i(\tilde{X}_i) \,. 
\]
Hence agent $i$ is $2/3$-EFX towards agent $j$  in $\tilde{X}$. 

Since both agents $i$ and $j$ were chosen arbitrarily, we conclude that $\tilde{\X}$ is $2/3$-EFX for all agents. 
\end{proof}

Our next lemma regards the allocation of critical goods, which will transform the partial allocation $\X^1$ obtained as the output of the \PPA algorithm into the partial allocation $\X^2$ that will in turn be transformed into a complete $2/3$-EFX allocation by \cref{lem:partialefx-to-efx-extension}. In particular, it concerns the special case in which every critical good is critical for at most one agent. This task is handled by the \UCG algorithm (Algorithm \ref{alg:uncontested-critical-goods}) that allocates all the remaining critical goods by carefully resolving cycles in the envy graph of the given partial allocation. {Note that although we allow Algorithm \ref{alg:uncontested-critical-goods} to take as input the type of envy graph to be used, in our algorithms it is only called for the standard envy graph.}

\SetAlgorithmName{ALGORITHM}{Algorithm}

\setcounter{algocf}{2}
\begin{algorithm}[!ht]
\DontPrintSemicolon
\caption{$\UCG(\X,\tilde{G})$} \label{alg:uncontested-critical-goods}
\SetKwComment{Comment}{/* }{ */}
\KwData{A partial allocation $\X$ satisfying the conditions (i), (ii) and (iii) of \Cref{lem:partial23-to-full-when-critical-for-one},  and its graph $\tilde{G}(\X)$. }
\KwResult{A partial allocation $\X'$ which does not induce any critical goods.\vspace{3pt}}
    $\X \gets \ACR(\X,\tilde{G}(\X))$\;
   \While(\\ \tcp*[h]{while there exists an agent with a critical good}){there exists $i \in N$ and $g_i \in \PP(\X)$ such that $v_i(g_i) > \frac{1}{2} v_i(X_i)$}
   {
   Let $s$ be a source of $\tilde{G}(\X)$ such that there exists some path $\Pi$ from $s$ to $i$ in $\tilde{G}(\X)$\;
   \If(\\ \tcp*[h]{if agent $i$ prefers the source's bundle augmented with her critical good}){$v_i(X_s \cup \{g_i\}) > v_i(X_i)$} 
   {
    $(X_j)_{j \in N: \exists (j,\ell) \in \Pi} \gets \PR(\X,\tilde{G},\Pi)$\;
    \tcp{every agent on the path except agent $i$ receives the bundle of her successor}
    $X_i \gets X_s \cup \{g_i\}$
    \tcp*{agent $i$ receives the bundle of the source plus her critical good $g_i$}
    }
    \Else
    {
    $X_s \gets X_s \cup \{g_i\}$
    \tcp*{agent $i$'s critical good is given to the source of the path}
    }
    $\X \gets \ACR(\X,\tilde{G}(\X))$
    \tcp*{Update $\X$ by eliminating all envy cycles in $\tilde{G}(\X)$}
   }
   \Return $\X$
\end{algorithm}

This algorithm is similar in nature to the Envy Cycle Elimination algorithm of \citet{lipton2004approximately}. The key difference is that we do not simply choose any source, but a source of a path that leads to the agent with the critical good that we aim to allocate in the current round. Then, if the addition of the good creates an envy cycle involving this agent, we resolve this particular cycle first.

\begin{lemma}\label{lem:partial23-to-full-when-critical-for-one}
    Let $\X$ be a partial allocation such that (i) $\X$ is $2/3$-EFX, (ii) each agent has at most one critical good, and (iii) no  good is critical for more than one agent. Then $\UCG(\X, G(\X))$ returns, 
    in polynomial time, a partial allocation $\X'$ that is $2/3$-EFX and in which no agent has a critical good. 
\end{lemma}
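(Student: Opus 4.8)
# Proof Proposal for Lemma 4.4 (\UCG correctness)

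\textbf{Overall approach.} The plan is to track two things along the execution of \UCG: (1) that the allocation remains $2/3$-EFX at all times, and (2) a potential function that strictly increases (or a quantity that strictly decreases) at every iteration of the while loop, to guarantee polynomial termination. The structure mirrors the analysis of Envy Cycle Elimination, but the twist is that when we inject a critical good $g_i$ into the allocation, we must be careful that the new bundle does not destroy the $2/3$-EFX property for anyone, and that progress is genuinely made even in the ``\textbf{Else}'' branch where the good is handed to the source $s$ rather than to $i$.

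\textbf{Maintaining $2/3$-EFX.} First I would observe that the opening call $\X \gets \ACR(\X,\tilde G(\X))$ preserves $2/3$-EFX, since resolving an envy cycle only ever replaces an agent's bundle with one she strictly envied; this is the same elementary argument used repeatedly in the proof of \cref{lem-3PA-properties-ab}. Then, inside the while loop, I would argue case by case. In the \textbf{If} branch, agent $i$ receives $X_s \cup \{g_i\}$ with $v_i(X_s\cup\{g_i\}) > v_i(X_i)$, so her own value strictly increases; every other agent $j$ on the path receives a bundle she strictly envied, so she is fine; and we must check that no agent $j$ outside the path now $2/3$-envies the new bundle $X_s\cup\{g_i\}$ up to a good. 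Here the key facts are that $s$ was a source (so nobody envied $X_s$), that $g_i$ was critical \emph{only} for $i$ by hypothesis (iii), and hence for every $j\neq i$ we have $v_j(g_i) \le \tfrac12 v_j(X_j)$, while $v_j(X_s) \le v_j(X_j)$; removing \emph{any} single good from $X_s\cup\{g_i\}$ and combining these bounds gives $2/3$-EFX of $j$ towards the new bundle (indeed one gets envy bounded by a factor better than $2/3$). In the \textbf{Else} branch, $X_s \gets X_s\cup\{g_i\}$: again $s$'s value only goes up, and for any $j$, $v_j(X_s\cup\{g_i\}\setminus\{g\}) \le v_j(X_s) + v_j(g_i) \le v_j(X_j) + \tfrac12 v_j(X_j) = \tfrac32 v_j(X_j)$ using that $j$ did not envy the source $X_s$ and that $g_i$ is not critical for $j$ — so $j$ is $2/3$-EFX towards $s$. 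The subsequent $\ACR$ call preserves $2/3$-EFX as before. One subtlety: in the \textbf{Else} branch, after augmenting $X_s$, good $g_i$ is no longer in the pool, so it can never again be critical for anybody; and $i$'s own critical good has been ``used up''. In the \textbf{If} branch, $i$'s value strictly increased, so $g_i$ (and possibly her former critical good) may no longer be critical for her; but one should check no \emph{new} critical goods are created — they are not, because no other pool good changed and every agent's value weakly increased.

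\textbf{Termination in polynomial time.} This I expect to be the main obstacle, exactly as with the \PPA termination argument (\cref{lem:3PA-polytime}). The clean way is a potential/monovariant argument: I would show that the multiset of bundle-values $(v_i(X_i))_{i\in N}$, or rather an appropriate lexicographic/summed quantity, behaves monotonically. In the \textbf{If} branch the number of pool goods strictly decreases (one critical good leaves the pool into $i$'s bundle) and no good re-enters the pool, since \PR just shifts bundles along a path and the source's old bundle is absorbed into $i$ — wait, one must check the source's old bundle does not return to the pool; in \UCG's \textbf{If} branch it does not, because $i$ takes $X_s\cup\{g_i\}$ and the path shifts everything forward, so actually the source ends up with no bundle — here I would need to double-check the pseudocode semantics, but presumably every agent on $\Pi$ except $i$ takes her successor's bundle and $i$ takes $X_s\cup\{g_i\}$, leaving all $n$ agents with bundles and the pool shrunk by exactly one good. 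In the \textbf{Else} branch, likewise the pool shrinks by one good ($g_i$ moves to $X_s$) and nothing re-enters. The intervening $\ACR$ calls never return goods to the pool. Hence $|\PP(\X)|$ strictly decreases each iteration of the while loop, bounding the number of iterations by $m$. Combined with \cref{obs:all-cycle-resolution-running-time} for the cost of each $\ACR$ call and the $O(n)$ cost of finding a source, a path, and doing \PR, the total running time is polynomial. The one thing to verify carefully is that a source $s$ with a path to $i$ always \emph{exists} when the while condition holds: since $\tilde G(\X)$ is acyclic at the top of each iteration (we just ran $\ACR$), every vertex — in particular $i$ — is reachable from some source, so such an $s$ and $\Pi$ exist.

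\textbf{Conclusion.} Putting these together: the invariant ``$\X$ is $2/3$-EFX'' holds throughout, the loop exits only when no agent has a critical good, and the loop runs at most $m$ times with polynomial work per iteration, so $\UCG(\X, G(\X))$ returns a $2/3$-EFX partial allocation with no critical goods in polynomial time. The genuinely delicate points are (a) the case analysis showing the injected good $g_i$ never breaks $2/3$-EFX for third parties — this is where hypotheses (ii) and (iii) and the source property are all used simultaneously — and (b) pinning down the pool-size monovariant given that $\ACR$ and \PR shuffle bundles around, which requires being precise about what happens to the source's former bundle.
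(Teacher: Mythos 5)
Your proposal is correct and follows essentially the same route as the paper: an inductive invariant that $2/3$-EFX is preserved (using exactly the two inequalities $v_j(X_s)\le v_j(X_j)$ from the source property and $v_j(g_i)\le \frac12 v_j(X_j)$ from hypothesis (iii), summed), together with the monovariant that the pool strictly shrinks each iteration and nothing is ever returned to it, giving at most $m$ iterations. The points you flagged as needing verification (the semantics of \PR leaving no bundle in the pool, and the existence of a source with a path to $i$ after \ACR) resolve exactly as you anticipated.
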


\begin{proof}

First we argue that $\X'$ can be obtained in polynomial time. To see this, recall that \ACR runs in polynomial time
{on the standard envy graph} (\cref{obs:all-cycle-resolution-running-time}), as does $\PR$. With respect to the while loop of $\UCG(\X,G)$, note that in each iteration, the number of goods in the pool $\PP(\X)$ decreases, as one critical good is allocated to some agent, and no goods are ever returned to $\PP(\X)$. This implies that the algorithm will terminate after at most $m$ iterations of the while loop. Clearly, each iteration here runs in polynomial time.

The fact that in $\X'$ there are no agents with critical goods is an immediate consequence of the algorithm \UCG terminating. We next argue that $\X'$ is $2/3$-EFX. 
We will prove by induction that properties (i), (ii), and (iii) of the statement of the lemma hold at every iteration
right before the condition of
the \emph{while} loop of \UCG is checked. Then, the properties will also hold at the termination of 
the execution of $\UCG(\X,G(\X))$; this is enough to establish that $\X'$ is $2/3$-EFX, as this is property (i).

For the base case, it is easy to see that the properties are satisfied before the first iteration of the loop: they hold for the initial partial allocation $\X$ by assumption, and they continue to hold after running the $\ACR(\X,{G}(\X))$ subroutine, which only swaps bundles between agents, never reducing their values. 

Now consider the $k$-th iteration of the while loop. Let $\X$ and $\tilde{\X}$ be the partial allocations right before and right after the $k$-th iteration respectively. After the end of the iteration, we have that $v_i(\tilde{X}_i)\geq v_i(X_i)$ for all $i \in N$, and that $\PP(\tilde{\X}) \subseteq \PP(\X)$. By the induction hypothesis, it follows that properties (ii) and (iii) are trivially satisfied in $\tilde{\X}$. 
To establish property (i), notice that for every agent $j \in N\setminus\{i\}$, it holds that $v_j(X_s) \leq v_j(X_j)$, since $s$ is a source of $G(\X)$, and $v_j(g_i) \leq \frac{1}{2}v_j(X_j)$, since $g_i$ is critical only for $i$. By adding the two inequalities, we have that agent $j$ is $2/3$-EFX towards any agent who receives the bundle $X_s \cup \{g_i\}$ in $\tilde{\X}$ (either agent $i$ or agent $s$). Recall that by the induction hypothesis, $\X$ right before the $k$-th iteration was $2/3$-EFX. Since all other bundles (i.e., all bundles besides $X_s$) do not change as sets and they are possibly exchanged so that no agent's value is decreased, we conclude that $\tilde{\X}$ is also $2/3$-EFX.
\end{proof}

\subsection{2/3-EFX for Multigraph Value Instances}

Given the machinery that we have set up so far, it is fairly easy to show that $2/3$-EFX allocations can be achieved in  multigraph value instances, also in polynomial time.

\begin{algorithm}[!h]
	\DontPrintSemicolon
	\caption{$\MGA(N, (v_i)_{i\in N}, M)$} \label{alg:multigraph}
	\SetKwComment{Comment}{/* }{ */}
	\KwData{A multigraph value instance.}
	\KwResult{A $2/3$-EFX allocation $\X$. \vspace{3pt}}
	Let $\X^0$ be an arbitrary partial allocation where each bundle has cardinality $1$\;
	$\X^1 \gets \PPA((v_i)_{i\in N}, \X^0)$\;
	$\mathcal{C} \gets \{g\in \PP(\X^1) : \text{ there are distinct } i, j\in N \text{ so that } g \text{ is critical for both } i, j\}$\; 
	\tcp{the set of critical goods which are critical for more than one agent}
	Let $s$ be a source of $\Ge(\X^1)$\; 
	\tcp{at least one such $s$ exists, and has $|X^1_s|>1$ and no critical goods, by \cref{lem:3PA-properties}}
	$\tilde{X}_s \gets X^1_s \cup \mathcal{C}$ \tcp*{we just give the whole $\mathcal{C}$ to $s$}
	$\tilde{X}_i \gets X^1_i$ for any other $i\in N\setminus \{s\}$\;
	$\X^2 \gets \UCG(\tilde{\X}, G(\tilde{\X}))$
	\tcp*{we handle the remaining critical goods}
	$\X \gets \textsc{EnvyCycleElimination}(\X^2, G(\X^2))$
	\tcp*{we complete the allocation}
	\Return $\X$
\end{algorithm}

\begin{theorem}\label{thm:multigraph}
Let $(N, (v_i)_{i\in N}, M)$ be a multigraph value instance. Then $\MGA$ constructs a $2/3$-EFX allocation in polynomial time.
\end{theorem}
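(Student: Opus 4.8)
\textbf{Proof plan for \cref{thm:multigraph}.}

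The plan is to verify that \MGA produces a valid output by checking that each subroutine it calls receives input satisfying its preconditions, and that the chain of invariants is maintained. First I would note that polynomial running time is immediate: $\X^0$ is constructed in linear time, \PPA runs in polynomial time by \cref{lem:3PA-polytime}, computing $\mathcal{C}$ and the source $s$ and the reassignment $\tilde{\X}$ take polynomial time, \UCG runs in polynomial time by \cref{lem:partial23-to-full-when-critical-for-one}, and \ECE runs in polynomial time by \cref{lemma:ece}. So the bulk of the proof is correctness, namely that the returned $\X$ is $2/3$-EFX.

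The backbone of the argument is the following sequence. Since $\X^0$ is a seed allocation (size at most $2$, no empty bundles since $m>n$, and it trivially satisfies \cref{prop:a,prop:b} because every bundle is a singleton), \cref{lem:3PA-properties} applies: $\X^1$ satisfies \cref{prop:a,prop:b,prop:c,prop:d,prop:e}, and if $\X^1$ is not complete then $\Ge(\X^1)$ has a source $s$ with $|X^1_s|=2$, which by \cref{prop:d} has no critical good. (If $\X^1$ is already complete, then by \cref{prop:b} it is $2/3$-EFX and we are done — I would dispose of this case first, or observe that the remaining steps act as no-ops.) Now I need to show that $\tilde{\X}$ — obtained by dumping all of $\mathcal{C}$ onto $s$ and leaving everyone else untouched — satisfies the three preconditions of \cref{lem:partial23-to-full-when-critical-for-one}: (i) $\tilde{\X}$ is $2/3$-EFX, (ii) each agent has at most one critical good in $\tilde{\X}$, and (iii) no good is critical for more than one agent in $\tilde{\X}$. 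Given these, \cref{lem:partial23-to-full-when-critical-for-one} yields that $\X^2$ is $2/3$-EFX with no critical goods, and then \cref{lem:partialefx-to-efx-extension} (applied with the envy graph $G(\X^2)$) gives that the final $\X$ is a complete $2/3$-EFX allocation.

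The crux — and the step I expect to be the main obstacle, though it should still be manageable — is verifying properties (i)–(iii) for $\tilde{\X}$, and this is precisely where the multigraph structure is used. The key observation is that because $H$ is a multigraph, any good $g$ (an edge of $H$) is valued nonzero by at most two agents, namely its two endpoints; so if $g$ is critical for two distinct agents, it must be critical for exactly those two endpoints, and any agent who is not an endpoint of any edge in $\mathcal{C}$ has $v_i(\mathcal{C}) = 0$. For property (i): the only bundle that changed is $X^1_s \mapsto X^1_s \cup \mathcal{C}$; for any agent $j \ne s$, since $s$ was a source of $\Ge(\X^1)$ hence of $G(\X^1)$, we have $v_j(X^1_j) \ge v_j(X^1_s)$, and $j$ is $2/3$-EFX toward anyone else by \cref{prop:b}; I need that adding $\mathcal{C}$ to $s$'s bundle does not destroy $2/3$-EFX toward $s$. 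Each good in $\mathcal{C}$ that $j$ values positively has $j$ as an endpoint, and it is critical for $j$, so in particular $v_j(g) > \tfrac12 v_j(X^1_j)$ — but here I should use instead that there are at most ... hmm, actually an agent could be an endpoint of many edges in $\mathcal{C}$, so I must be more careful: each such $g\in\mathcal C$ is critical for $j$, meaning by \cref{prop:e} (if $|X^1_j|=1$) there is \emph{at most one} critical good for $j$, or by \cref{prop:d} (if $|X^1_j|=2$) there are \emph{none}. Hence $j$ owns at most one good in $\mathcal{C}$ that... no — $j$ values at most one good in $\mathcal C$ positively, precisely when $|X^1_j|=1$, and that good $g_j$ has $v_j(g_j) \le \tfrac23 v_j(X^1_j)$ by \cref{prop:e}; when $|X^1_j|=2$, $j$ values no good in $\mathcal C$. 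So $v_j(\mathcal{C}) \le \tfrac23 v_j(X^1_j)$ always, whence the EFX-up-to-$g_j$ check toward $s$ reduces to: $v_j(X^1_j) \ge \tfrac23\big(v_j(X^1_s) + v_j(\mathcal C \setminus \{g_j\})\big) = \tfrac23 v_j(X^1_s) \le \tfrac23 v_j(X^1_j)$, using $v_j(\mathcal C\setminus\{g_j\})=0$; and removing any good of $X^1_s$ instead only helps. So (i) holds. For (ii): in $\tilde\X$ the only agents who could have critical goods are agents whose bundle is unchanged and who still have a good in the pool that is critical — for $j\ne s$ this is the same as in $\X^1$, where at most one critical good exists by \cref{prop:d,prop:e}; and $\mathcal C$ has left the pool, so $s$ (whose bundle grew) has strictly larger value hence no critical good either. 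For (iii): any good remaining critical for some agent in $\tilde\X$ is in $\PP(\tilde\X) \subseteq \PP(\X^1)\setminus \mathcal C$, so it is critical for at most one agent by definition of $\mathcal C$. This establishes all three preconditions, the chain closes, and the theorem follows.
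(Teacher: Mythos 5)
Your overall architecture matches the paper's proof exactly --- seed allocation, \PPA, dumping $\mathcal{C}$ onto a source $s$ of $\Ge(\X^1)$, verifying the three preconditions of \cref{lem:partial23-to-full-when-critical-for-one}, then \UCG and \ECE --- and your treatment of conditions (ii) and (iii), and of agents with $v_j(\mathcal C)=0$, is fine. The gap is in condition (i), for an agent $j$ with $|X^1_j|=1$ holding a critical good $g_j\in\mathcal C$. Two things go wrong together. First, your EFX check is inverted: you verify the bound after removing $g_j$ and assert that other removals ``only help,'' but removing a good that $j$ values at $0$ (any $g\in\mathcal C\setminus\{g_j\}$, or a worthless good of $X^1_s$) leaves $g_j$ \emph{inside} the residual bundle, so the residual value is $v_j\big(X^1_s\setminus\{g\}\big)+v_j(g_j)$, which can exceed $v_j(X^1_s)$; the binding case for EFX is removing the \emph{least} valuable good, not the most valuable one. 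Second, the only bound you extract from $s$ being a source is $v_j(X^1_s)\le v_j(X^1_j)$, and that is too weak to close this case: it yields only $v_j\big(\tilde X_s\setminus\{g\}\big)\le v_j(X^1_j)+\frac23 v_j(X^1_j)=\frac53 v_j(X^1_j)$, and $\frac23\cdot\frac53=\frac{10}{9}>1$, so the $2/3$-EFX inequality does not follow.

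The missing ingredient is precisely what the red edges of the enhanced graph are for. Since $|X^1_j|=1$, $|X^1_s|=2$, and $s$ is a source of $\Ge(\X^1)$ (hence of $\Gr(\X^1)$), the \emph{absence} of a red edge $(j,s)$ forces the strict bound $v_j(X^1_s)<\frac23 v_j(X^1_j)$. Combined with $v_j(\mathcal C)=v_j(g_j)\le\frac23 v_j(X^1_j)$ from \cref{prop:e}, this gives $v_j(\tilde X_s)<\frac43 v_j(X^1_j)\le\frac32 v_j(X^1_j)$, i.e., $j$ is $2/3$-envy-free (not merely $2/3$-EFX) toward $s$, and no case analysis over which good is removed is needed. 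This is exactly how the paper argues; with that substitution your proof closes.
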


\begin{proof}
The partial allocation $\X^0$ where each agent receives a single good arbitrarily trivially satisfies { \cref{prop:a,prop:b}, i.e., it is a seed allocation}.  We use this allocation as input to the \PPA algorithm (Algorithm \ref{alg:3PA}) to obtain a partial allocation $\X^1$; by \cref{lem:3PA-polytime}, $\X^1$ can be obtained in polynomial time. By \cref{lem:3PA-properties}, $\X^1$ also satisfies \cref{prop:a,prop:b,prop:c,prop:d,prop:e},  and either it is a complete allocation or its enhanced graph $\Ge(\X^1)$ has at least one source $s$, and this (as any other source of $\Ge(\X^1)$) {holds a bundle of cardinality $2$} and does not have any critical goods. 
{Further, notice that, by \Cref{def:Ge}, any source $s$ in  $\Ge(\X^1)$ is also a source in $\Gr(\X^1)$ and since $|{X}^1_{s}| =  2$, \Cref{def:Gr} implies that $s$ is  a source in the standard envy graph $\G(\X^1)$ as well; hence, $v_i(X^1_s) \le v_i(X^1_i)$ for any $i\in N$.}

Next, we consider the set $\mathcal{C}$ of critical goods which are critical for more than one agent. Since this is a multigraph value instance, any good in $\mathcal{C}$ is critical for \textit{exactly} two agents, as only two agents could have positive value for it. We then assign every good $g \in \mathcal{C}$ to $s$, i.e., $\tilde{X}_s = X^{1}_s \cup \mathcal{C}$. Finding $s$, $\mathcal{C}$, and constructing $\tilde{\X}$ obviously can be done in polynomial time. 
We will show the following claim about $\tilde{\X}$: 

\begin{claim}\label{claim:multigraph_instances}
    The partial allocation $\tilde{\X}$ satisfies the conditions in the statement of \Cref{lem:partial23-to-full-when-critical-for-one}.
\end{claim}

\begin{proof}[Proof of \cref{claim:multigraph_instances}]
First notice that condition \textit{``(iii) no   good is critical for more than one agent''} is directly satisfied by the removal of $\mathcal{C}$ from $\PP(\tilde{\X})$. Further, the condition \textit{``(ii) each agent has at most one critical good''} is met because it holds for $\X^1$ (by \cref{prop:e}) and, by moving to $\tilde{\X}$, we have $\PP(\tilde{\X}) \subseteq \PP({\X^1})$ and no agent had her value reduced.
It remains to show that $\tilde{\X}$ is $2/3$-EFX (i.e., condition \textit{``(i)''} of \Cref{lem:partial23-to-full-when-critical-for-one}).
Looking at the partial allocation $\X^1$, first notice that for any good $g \in \mathcal{C}$ corresponding to an edge $(j,\ell)$, $g$ must be critical for both agents $j$ and $\ell$, but it has no value for anyone else, i.e., for any agent $i \in N\setminus\{j,\ell\}$, we have $v_i(g)=0$. 

Now consider any agent $i \in N$ with a critical good $g_i \in \mathcal{C}$. By \cref{prop:d} of $\X^1$, it must be that $|X^1_i| = 1$. Recall that the source agent $s$ had no critical goods. So, $i$ must be distinct from $s$ and hence $\tilde{X}_i = X^1_i$. Also, by \cref{prop:e} of $\X^1$, every other good $g \in \mathcal{C}\setminus\{g_i\}$ is not critical for agent $i$. This implies that $v_i(g) = 0$ for all $g \in \mathcal{C} \setminus\{g_i\}$. From this, we obtain that $v_i(\mathcal{C})=v_i(g_i) \leq \frac{2}{3}v_i({X}^1_i)$, where the last inequality follows again by \cref{prop:e} of $\X^1$. Also, since $s$ is a source in $\Ge(\X^1)$, we have $v_i(X^1_s) < \frac{2}{3} v_i(X^1_i)$, by the definition of $\Ge(\X^1)$ (\Cref{def:Ge}) and the cardinalities of $X^1_s, X^1_i$. 
Therefore, 
\[v_i(\tilde{X}_s)=v_i(X^1_s \cup \mathcal{C}) < \frac{2}{3}\, v_i(X^1_i) +\frac{2}{3}\, v_i(X^1_i) < \frac{3}{2}\, v_i(X^1_i) = \frac{3}{2}\, v_i(\tilde{X}_i)\,,\] 
where the last equality follows from the fact that $\tilde{X}_i = X^1_i$. So, agent $i$ is $2/3$-EFX towards agent $s$ in $\tilde{\X}$. Given that she was $2/3$-EFX towards anyone else in ${\X}^1$ (by \cref{prop:a,prop:b}), we conclude that agent $i$ is $2/3$-EFX towards everyone in $\tilde{\X}$.

Next, consider any agent $i \in N$ with no critical goods in $\mathcal{C}$ (where $i$ could be $s$ here). By the discussion about the value of goods in $\mathcal{C}$ above, we have $v_i(\mathcal{C})=0$. Moreover, by the choice of $s$, the definition of $\Ge(\X^1)$, and the fact that $|X^1_s| > 1$, we have $v_i(X^1_s) \leq v_i(X^1_i)$ (independently of the cardinality of $X^1_i$). Thus, we get
$v_i(\tilde{X}_s)=v_i(X^1_s \cup \mathcal{C}) \le v_i(X^1_i) + 0 \leq  \frac{3}{2}\, v_i(\tilde{X}_i)$.
So, again, agent $i$ is $2/3$-EFX towards $s$ in $\tilde{\X}$. As she was $2/3$-EFX towards anyone else in ${\X}^1$ (by \cref{prop:a,prop:b}), agent $i$ is $2/3$-EFX towards everyone in $\tilde{\X}$.
\renewcommand\qedsymbol{{\small $\boxdot$}}
\end{proof}

Given \cref{claim:multigraph_instances}, \Cref{lem:partial23-to-full-when-critical-for-one} implies that the partial allocation $\X^2$ returned by the \UCG algorithm in polynomial time is $2/3$-EFX and it does not induce any critical goods. By \cref{lem:partialefx-to-efx-extension}, we can then  obtain a complete $2/3$-EFX allocation $\X$ in polynomial time via \textsc{EnvyCycleElimination}. 
\end{proof}

\subsection{2/3-EFX for up to \numagents Agents}

Next we show the existence of $2/3$-EFX allocations for \numagents agents, and that they can be computed in polynomial time.  {This is where the significance of Step~\ref{step9} of \PPA will become apparent. The general idea of the main algorithm for a small number of agents, \FAA (Algorithm \ref{alg:few_agents}),} is very similar to \MGA, yet the process of handling the critical goods is more refined here.

\begin{algorithm}[!ht]
	\DontPrintSemicolon
	\caption{$\FAA(N, (v_i)_{i\in N}, M)$} \label{alg:few_agents}
	\SetKwComment{Comment}{/* }{ */}
	\KwData{An additive instance with at most \numagents agents.}
	\KwResult{A $2/3$-EFX allocation $\X$.\vspace{3pt}}
    Let $\X^0$ be an arbitrary partial allocation where each bundle has cardinality $1$\;
	$\X^1 \gets {\PPA((v_i)_{i\in N}, \X^0)}$\;
	$\mathcal{C} \gets \{g\in \PP(\X^1) : \text{ there are distinct } i, j\in N \text{ so that } g \text{ is critical for both } i, j\}$\; 
	\tcp{the set of critical goods which are critical for more than one agent}
    \If(\\ \tcp*[h]{recall that $|X^1_{s_1}|>1$, $|X^1_{s_2}|>1$, and $s_1, s_2$ have no critical goods, by \cref{lem:3PA-properties}\vspace{1pt}}){$|\mathcal{C}| = 2$ (say, $\mathcal{C} = \{g_1, g_2\}$) and there are at least two sources, $s_1, s_2$, in $\Ge(\X^1)$}{
	       $\tilde{X}_{s_1} \gets X^1_{s_1} \cup \{g_1\}$ \tcp*{we split $\mathcal{C}$ among $s_1, s_2$}
          $\tilde{X}_{s_2} \gets X^1_{s_2} \cup \{g_2\}$\;
	       $\tilde{X}_i \gets X^1_i$ for any other $i\in N\setminus \{s_1, s_2\}$\;        
    }
    \Else{
          Let $s$ be a source of $\Ge(\X^1)$\; 
	       \tcp{at least one such $s$ exists, and has $|X^1_s|>1$ and no critical goods, by \cref{lem:3PA-properties}}
	       $\tilde{X}_s \gets X^1_s \cup \mathcal{C}$ \tcp*{we give the whole $\mathcal{C}$ to $s$}
	       $\tilde{X}_i \gets X^1_i$ for any other $i\in N\setminus \{s\}$\;
    }
	$\X^2 \gets \UCG(\tilde{\X}, G(\tilde{\X}))$
	\tcp*{we handle the remaining critical goods}
	$\X \gets \textsc{EnvyCycleElimination}(\X^2, G(\X^2))$
	\tcp*{we complete the allocation}
	\Return $\X$
\end{algorithm}

\begin{theorem}\label{thm:7_agents}
Let $(N, (v_i)_{i\in N}, M)$ be any additive instance with $n\le \numagents$. Then \FAA constructs a $2/3$-EFX allocation in polynomial time.
\end{theorem}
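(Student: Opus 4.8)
The plan is to mirror the structure of the proof of \cref{thm:multigraph}, replacing \MGA with \FAA and using the variant \PPAplus in place of \PPA. Starting from the trivial seed allocation $\X^0$ that gives each agent one good, \cref{lem:3PAplus-properties} yields a partial allocation $\X^1$ of size at most $2$ satisfying \cref{prop:a,prop:b,prop:c,prop:d,prop:e}, computable in polynomial time; moreover, if $\X^1$ is not already complete, $\Ge(\X^1)$ has at least one source, and every source has a bundle of size $2$ and hence no critical goods. As in the multigraph case, by \cref{prop:d,prop:e} the set $\mathcal{C}$ of goods that are critical for two or more agents consists only of goods owned (as singletons) by agents who themselves have at most one critical good, so each such $g\in\mathcal{C}$ is critical for each of its ``claimants'' with value at most $\tfrac23$ of that claimant's (singleton) bundle value. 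Once we construct the intermediate allocation $\tilde{\X}$, the goal is to verify the hypotheses of \cref{lem:partial23-to-full-when-critical-for-one} for $\tilde{\X}$, apply \UCG to get a $2/3$-EFX partial allocation $\X^2$ with no critical goods, and finish with \cref{lem:partialefx-to-efx-extension} via \textsc{EnvyCycleElimination}; polynomiality is inherited from each subroutine.

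The new ingredient, and the place where $n\le\numagents$ enters, is the case analysis in \FAA on how to distribute $\mathcal{C}$ among the sources of $\Ge(\X^1)$. The plan is to show the following combinatorial claim: in an instance with $n\le 7$ agents, either $|\mathcal{C}|\le 1$, or $|\mathcal{C}|=2$ and $\Ge(\X^1)$ has at least two sources, or $|\mathcal{C}|=2$ with a single source and we can still give both goods to it, or---the remaining configurations---$|\mathcal{C}|$ is small enough that dumping all of $\mathcal{C}$ on a single source $s$ keeps $s$ sufficiently un-envied. The key quantitative estimate is the one already used in \cref{claim:multigraph_instances}: for an agent $i$ with a critical good $g_i\in\mathcal{C}$ we have $v_i(X^1_s)<\tfrac23 v_i(X^1_i)$ (source of $\Ge$, with $|X^1_i|=1$, $|X^1_s|=2$) and $v_i(g_i)\le\tfrac23 v_i(X^1_i)$, so $i$ is $2/3$-EFX towards the augmented source exactly as long as $i$'s total value for the \emph{other} goods dumped on $s$ is $0$ (or, more generally, small). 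Thus the real content is to argue that when we are forced to put two or more contested goods on the same source, no agent is a claimant of more than one of them---which is where the bound on $n$ (and hence on how many distinct ``two-agent critical pairs'' can coexist on the few singleton owners) does its work, together with the splitting rule in the $|\mathcal{C}|=2$ branch.

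Concretely, I would carry out the steps in this order: (1) invoke \cref{lem:3PAplus-properties} to set up $\X^1$ and record \cref{prop:a}--\cref{prop:e}; (2) handle the trivial case where $\X^1$ is already complete (it is then $2/3$-EFX by \cref{prop:b}); (3) analyze $\mathcal{C}$: show each good in $\mathcal{C}$ is owned by a singleton agent with no other critical good, and bound $|\mathcal{C}|$ and the number of sources of $\Ge(\X^1)$ using $n\le 7$; (4) in the branch $|\mathcal{C}|=2$ with $\ge 2$ sources, split $g_1,g_2$ between $s_1,s_2$ and check $2/3$-EFX towards each augmented source exactly as in \cref{claim:multigraph_instances} (each agent is a claimant of at most one of $g_1,g_2$, and non-claimants value the relevant good at $0$ or, if some agent values both positively, show that this forces the ``source'' side of the estimate to be even smaller); (5) in the \emph{else} branch, dump all of $\mathcal{C}$ on one source $s$ and verify the analog of the two displayed inequalities in \cref{claim:multigraph_instances}, using that an agent with a critical good in $\mathcal{C}$ values every \emph{other} good of $\mathcal{C}$ at $0$ by \cref{prop:e}; (6) conclude via \cref{lem:partial23-to-full-when-critical-for-one}, \cref{lem:partialefx-to-efx-extension}, and the polynomial running times of all subroutines.

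The main obstacle I anticipate is step (3)/(4): pinning down exactly which configurations of $(\mathcal{C}, \text{sources of }\Ge(\X^1))$ can occur for $n\le 7$ and checking that the splitting/dumping rule of \FAA covers all of them with the $2/3$ bound intact. In particular, the case where $|\mathcal{C}|=2$ but $\Ge(\X^1)$ has only one source (so both contested goods must go to $s$) requires arguing that the unique claimant structure still gives $v_i(\tilde X_s)<\tfrac32 v_i(\tilde X_i)$ for every $i$---this is where I expect the bound $n\le\numagents$ to be genuinely used, e.g.\ to rule out three pairwise-overlapping two-agent critical pairs, or to force enough sources to exist. I would expect the authors' proof to isolate this as a short lemma about the graph $\Ge(\X^1)$ and the structure of critical goods on singleton bundles when $n$ is small.
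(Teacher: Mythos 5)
Your overall architecture matches the paper's: seed allocation, \PPAplus, a case split on the number of sources of $\Ge(\X^1)$ to bound and distribute $\mathcal{C}$, then \UCG and \textsc{EnvyCycleElimination}. The bound $n\le\numagents$ is indeed used exactly as you suspect in step (3): with two sources (each holding two goods and hence critical-good-free) at most five agents can have a critical good, so $|\mathcal{C}|\le 2$; with one source, $|\mathcal{C}|\le 3$. Your treatment of the two-source branch is essentially the paper's \cref{claim:few_agents_if}, and the hedge in your step (4) is unnecessary: since each source receives only \emph{one} good of $\mathcal{C}$, the estimate $v_i(\tilde X_{s_j}) = v_i(X^1_{s_j})+v_i(g_j) < \frac23 v_i(X^1_i)+\frac23 v_i(X^1_i)$ never needs to control an agent's value for several contested goods at once.

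The genuine gap is in your step (5). You propose to verify the \emph{else}-branch inequalities ``using that an agent with a critical good in $\mathcal{C}$ values every other good of $\mathcal{C}$ at $0$ by \cref{prop:e}.'' That is false for general additive valuations: \cref{prop:e} only says the agent has at most one \emph{critical} good; she may value the other goods of $\mathcal{C}$ anywhere up to $\frac12 v_i(X^1_i)$ each, and with $|\mathcal{C}|=3$ a naive bound gives $v_i(\mathcal{C})\le \frac23 v_i(X^1_i)+2\cdot\frac12 v_i(X^1_i)$, which destroys the $\frac32$ target. The zero-value fact is a multigraph-specific property and is precisely what does \emph{not} transfer here. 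The paper's \cref{claim:few_agents_else} replaces it with two new estimates: (i) since Step~\ref{step3} of \PPAplus fails on $\X^1$, any two pool goods together are worth at most $\frac23 v_i(X^1_i)$ to $i$, which combined with $v_i(g_i)>\frac12 v_i(X^1_i)$ forces every \emph{other} pool good to be worth at most $\frac16 v_i(X^1_i)$, yielding $v_i(\mathcal{C})\le\frac56 v_i(X^1_i)$ even when $|\mathcal{C}|=3$; and (ii) for agents \emph{without} critical goods (whom your sketch does not address separately, and for whom the second displayed inequality of \cref{claim:multigraph_instances} again relies on $v_i(\mathcal{C})=0$), the failure of Step~\ref{step8.5} --- the very step that distinguishes \PPAplus from \PPA --- gives $v_i(X^1_i)\ge v_i(\{g,g'\})$ for every $g\in X^1_s$, $g'\in\mathcal{C}$, from which a pairwise-to-triple averaging argument delivers $v_i(X^1_i)\ge\frac23 v_i(\tilde X_s\setminus\{g\})$. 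Your proposal never invokes Step~\ref{step8.5}, so even its bookkeeping of why \PPAplus (rather than \PPA) is needed is missing; also note that the single-source, $|\mathcal{C}|=2$ configuration you flag as the hard case is resolved by these estimates, not by ruling out overlapping critical pairs or forcing extra sources.
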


\begin{proof}
{Arguing like in the first part of the proof of \cref{thm:multigraph}, we have that the partial allocation $\X^1$ is obtained in polynomial time, it satisfies \cref{prop:a,prop:b,prop:c,prop:d,prop:e}, and either it is a complete allocation or its enhanced graph $\Ge(\X^1)$ has at least one source $s$ and this (as any other source) has a bundle of size $2$. Moreover, for any $i\in N$, it holds that $v_i(X^1_s) \le v_i(X^1_i)$.}
Again we consider the set $\mathcal{C}$ of critical goods which are critical for more than one agent, but now we need to argue about its cardinality.

If there are at least two sources, $s_1 ,s_2$, in the enhanced envy graph $\Ge(\X^1)$, then 
there are at most $5$ agents that are not sources and may have a critical good. (Indeed, recall that since {$|{X}^1_{s_1}| = |{X}^1_{s_2}| = 2$, by \cref{prop:d} of $\X^1$}, $s_1 ,s_2$ cannot have critical goods.) 
This immediately implies that $|\mathcal{C}|\leq 2$. The case where $|\mathcal{C}|\leq 1$ is handled in the \emph{else} part of \FAA and is analyzed later in the proof. For now, we consider the case where $|\mathcal{C}| = 2$, say $\mathcal{C} = \{g_1, g_2\}$, which is handled in the \emph{if} part of \FAA.
To define the next partial allocation, $\tilde{\X}$, we assign one good from $\mathcal{C}$ to each of $s_1 ,s_2$, i.e., $\tilde{X}_{s_i} = X^{1}_{s_i} \cup \{g_i\}$, for $i\in\{1, 2\}$. Finding $s_1 ,s_2$, $\mathcal{C}$, and constructing $\tilde{\X}$ can be clearly done in polynomial time. 

\begin{claim}\label{claim:few_agents_if}
    If the partial allocation $\tilde{\X}$ is constructed in the \emph{if} case, it satisfies the conditions in the statement of \Cref{lem:partial23-to-full-when-critical-for-one}.
\end{claim}

\begin{proof}[Proof of \cref{claim:few_agents_if}]
The conditions \textit{``(ii) each agent has at most one critical good''} and \textit{``(iii) no  good is critical for more than one agent''} are satisfied exactly like in the proof of \cref{claim:multigraph_instances}.
We only need to show condition \textit{``(i)''}, i.e., that $\tilde{\X}$ is $2/3$-EFX.

First consider any agent $i \in N$ with a critical good in $\mathcal{C}$ under the partial allocation $\X^1$. We assume that $i$'s critical good is $g_1$, the other case being completely symmetric.  
By \cref{prop:d} of $\X^1$, we have $|X^1_i| = 1$. 
Also, by \cref{prop:e} of $\X^1$, {we have that $v_i(g_1) \leq \frac{2}{3}v_i({X}^1_i)$ and that $g_2$ is not critical for $i$; thus, $v_i(g_2) \leq v_i(g_1) \leq \frac{2}{3}v_i({X}^1_i)$.}
Recall that $s_1 ,s_2$ had no critical goods. So, $i$ must be distinct from both $s_1$ and $s_2$ and hence $\tilde{X}_i = X^1_i$. Finally, since $s_1 ,s_2$ are sources in $\Ge(\X^1)$, we have $v_i(X^1_{s_j}) < \frac{2}{3} v_i(X^1_i)$, $j\in\{1, 2\}$, by the definition of $\Ge(\X^1)$ and the cardinalities of $X^1_{s_1}, X^1_{s_2}, X^1_i$. We now have, for $j\in\{1, 2\}$,
\[v_i(\tilde{X}_{s_j})=v_i(X^1_{s_j} \cup \{g_j\}) < \frac{2}{3}\, v_i(X^1_i) +\frac{2}{3}\, v_i(X^1_i) < \frac{3}{2}\, v_i(X^1_i) = \frac{3}{2}\, v_i(\tilde{X}_i)\,.\]
So, agent $i$ is $2/3$-EFX towards both $s_1$ and $s_2$ in $\tilde{\X}$. Given that she was $2/3$-EFX towards anyone else in ${\X}^1$ (by \cref{prop:a,prop:b}), we conclude that agent $i$ is $2/3$-EFX towards everyone in $\tilde{\X}$.

Next, consider any agent $i \in N$ with no critical goods in $\mathcal{C}$ (where $i$ could be $s_1$ or $s_2$ here). This immediately implies that $\max_{j\in\{1,2\}}v_i(g_j) \le \frac{1}{2} v_i(X^1_i) \le \frac{1}{2} v_i(\tilde{X}_i)$. 
Moreover, for $j\in\{1,2\}$, by the choice of $s_j$, the definition of $\Ge(\X^1)$, and the fact that $|X^1_{s_j}| > 1$, we have $v_i(X^1_{s_j}) \leq v_i(X^1_i)$ (independently of the cardinality of $X^1_i$). Thus, we get
\[v_i(\tilde{X}_{s_j})=v_i(X^1_{s_j} \cup \{g_j\}) \le v_i(X^1_i) + \frac{1}{2}\, v_i(X^1_i) \leq  \frac{3}{2}\, v_i(\tilde{X}_i)\,.\]
So, again, agent $i$ is $2/3$-EFX towards $s_j$, for $j \in \{1,2\}$, in $\tilde{\X}$.   As she was $2/3$-EFX towards anyone else in ${\X}^1$ (by \cref{prop:a,prop:b}), agent $i$ is $2/3$-EFX towards everyone in $\tilde{\X}$.
\renewcommand\qedsymbol{{\small $\boxdot$}}
\end{proof}

Similarly to how we argued above, if there is a single source, $s$, in $\Ge(\X^1)$, then
there are at most $6$ agents that are not sources and may have a critical good. {Since each non-source agent has at most one critical good by \cref{prop:e} and a good belongs to $\mathcal{C}$ only if it is critical for at least two agents,}  this implies that $|\mathcal{C}|\leq 3$. 
Both this case and the case where we may have multiple sources  but $|\mathcal{C}|\leq 1$ are handled together in the \emph{else} part of \FAA.
There, to define the partial allocation $\tilde{\X}$, we assign the whole $\mathcal{C}$ to $s$ (which may be the unique source or any one of multiple sources). That is, $\tilde{X}_{s} = X^{1}_{s} \cup \mathcal{C}$. Finding $s$, $\mathcal{C}$, and constructing $\tilde{\X}$ can be done in polynomial time.

\begin{claim}\label{claim:few_agents_else}
    If the partial allocation $\tilde{\X}$ is constructed in the \emph{else} case, it satisfies the conditions in the statement of \Cref{lem:partial23-to-full-when-critical-for-one}.
\end{claim}

\begin{proof}[Proof of \cref{claim:few_agents_else}]
Again, the conditions \textit{``(ii) each agent has at most one critical good''} and \textit{``(iii) no good is critical for more than one agent''} are satisfied exactly like in the proof of \cref{claim:multigraph_instances}.
We only need to show condition \textit{``(i)''}, i.e., that $\tilde{\X}$ is $2/3$-EFX.

First consider any agent $i \in N$ with a critical good $g_i$ (not necessarily in $\mathcal{C}$) under the partial allocation $\X^1$. 
By \cref{prop:d} of $\X^1$, we have $|X^1_i| = 1$. 
Also, since $s$ is a source in $\Ge(\X^1)$, we have $v_i(X^1_{s}) < \frac{2}{3} v_i(X^1_i)$, by the definition of $\Ge(\X^1)$ and the cardinalities of $X^1_{s}, X^1_i$.

Since no step of the {\PPA} algorithm can be applied on $\X^1$, the condition of Step~\ref{step3} fails, i.e., for any two goods $g,g'\in {\PP(\X^1)}$, $v_i(\{g,g'\}) < \frac 23 v_i(X^1_i)$. 
However, as $g_i$ is critical for $i$, we have $v_i(g_i)>\frac 12 v_i(X^1_i)$. It then follows that for any {$g\in\PP(\X^1)\setminus\{g_i\}$}, it must be $v_i(g) < \frac 16  v_i(X^1_i)$, as $v_i(\{g,g_i\}) < \frac 23 v_i(X^1_i)$. 
Now it is not hard to bound $v_i(\mathcal{C})$. If $g_i \notin \mathcal{C}$, then $v_i(\mathcal{C}) < |\mathcal{C}|\, \frac 16 \, v_i(X^1_i) \le \frac 12 v_i(X^1_i)$. If $g_i \in \mathcal{C}$, it could be that $|\mathcal{C}| \le 2$, and thus $v_i(\mathcal{C}) <   \frac 23 v_i(X^1_i)$, by the discussion above, or it could be that $|\mathcal{C}| = 3$, say $\mathcal{C} = \{g_i, g, g'\}$, and thus $v_i(\mathcal{C}) =  v_i(\{g_i,g\}) +v_i(g') < \frac 23 v_i(X^1_i) + \frac 16 v_i(X^1_i) = \frac 56 v_i(X^1_i)$. In any case, $v_i(\mathcal{C}) < \frac 56 v_i(X^1_i)$. To show that $\tilde{\X}$ is $2/3$-EFX, we distinguish two cases. \medskip

\noindent\textbf{Case 1 ($|\mathcal{C}| = 3$):} Then any agent $i\in N\setminus\{s\}$ has a critical good in $\PP(\X^1)$. {Indeed, $|\mathcal C|=3$ forces at least $6$ agents to have a critical good; as $n\le 7$ and sources have no critical goods, $s$ is the unique source and every $i\in N\setminus\{s\}$ has a critical good.} Therefore,
\[ v_i(\tilde{X}_s)=v_i(X^1_s)+ v_i(\mathcal{C}) < \frac 23 v_i(X^1_i) + \frac 56 v_i(X^1_i) = \frac 32 v_i(X^1_i)\,. \]
So, agent $i$ is $2/3$-EFX towards $s$ in $\tilde{\X}$, and as she was $2/3$-EFX towards anyone else in ${\X}^1$ (by \cref{prop:a,prop:b}), she is $2/3$-EFX towards everyone in $\tilde{\X}$. It is also straightforward that agent $s$ is $2/3$-EFX towards everyone in $\tilde{\X}$, as her value did not decrease and every other agent maintained the bundle she had in ${\X}^1$. \medskip

\noindent\textbf{Case 2 ({$|\mathcal{C}| = 2$}):}
Showing that any agent $i\in N\setminus\{s\}$ who \textit{has} a critical good in $\PP(\X^1)$ is $2/3$-EFX towards everyone in $\tilde{\X}$ is almost the same as in Case 1, except the upper bound for $v_i(\mathcal{C})$ is now  {a non-strict} $\frac 23 v_i(X^1_i)$ rather than {a strict} $\frac 56 v_i(X^1_i)$ because the condition of Step~\ref{step3} fails for $\X^1$. 

The interesting case is that of agents who have no critical goods, so let $i\in N$ be such an agent (possibly $i$ is $s$ herself). 
We have $v_i(X^1_i)\geq v_i(X^1_s)$, since $s$ is a source of $\Ge(\X^1)$. Also, since Steps \ref{step3} and \ref{step4} of {\PPA} fail to apply on $\X^1$, we have $v_i(X^1_i)\geq v_i(\mathcal{C})$. 
{Next, notice that $s$ must be the \textit{unique} source of $\Ge(\X^1)$, as otherwise the \emph{else} case would not have been executed. This means that for any $i\in N$ there is a path $\Pi = (s, \ldots, i)$ in $\Ge(\X^1)$. Therefore, }
since {Step \ref{step9} of \PPA} fails to apply on $\X^1$, we have that for any goods $g \in X^1_s$ and $g' \in \mathcal{C}$, $v_i(X^1_i)\geq v_i(\{g,g'\})$. 
Overall, for any two goods $g,g'\in \tilde{X}_s$, $v_i(X^1_i)\geq v_i(\{g,g'\})$. This in turns means that for any three goods $g,g',g''\in \tilde{X}_s$, $v_i(X^1_i)\geq \frac 23 v_i(\{g,g', g''\})$. That is, for any good $g\in \tilde{X}_s$,  
\[v_i(\tilde{X}_i) \ge v_i(X^1_i)\geq \frac 23 v_i(\tilde{X}_s\setminus\{g\})\,,\]
and, thus, agent $i$ is $2/3$-EFX towards $s$ in $\tilde{\X}$, and as she was $2/3$-EFX towards anyone else in ${\X}^1$ (by \cref{prop:a,prop:b}), she is $2/3$-EFX towards everyone in $\tilde{\X}$. \medskip

\renewcommand\qedsymbol{{\small $\boxdot$}} 

{
\noindent\textbf{Case 3 ($|\mathcal{C}| \le 1$):} If $\mathcal{C} = \emptyset$, then $\tilde{\X} = \X^1$; thus $\tilde{\X}$ is $2/3$-EFX and there is nothing more to show. So assume that $\mathcal{C} = \{g\}$. Again, showing that any agent $i\in N\setminus\{s\}$ who \textit{has} a critical good in $\PP(\X^1)$ is $2/3$-EFX towards everyone in $\tilde{\X}$ is almost the same as in Case 1, except the  upper bound for $v_i(\mathcal{C})$ is now a non-strict $\frac 23 v_i(X^1_i)$ rather than a strict $\frac 56 v_i(X^1_i)$ because the conditions of Steps~\ref{step3} and \ref{step7} fail for $\X^1$. So, next we consider agents who have no critical goods; let $i\in N$ be such an agent (where $i$ could be $s$). Since $s$ is a source in $\Ge(\X^1)$ we have $v_i(X^1_s) \le v_i(X^1_i)$ and, thus, 
\[ v_i(\tilde{X}_s)=v_i(X^1_s)+ v_i(g) \le v_i(X^1_i) + \frac 12 v_i(X^1_i) = \frac 32 v_i(X^1_i)\,, \]
where, of course, we used that $g$ is not critical for agent $i$.
}
\end{proof}

Given \Cref{claim:few_agents_if,claim:few_agents_else}, \Cref{lem:partial23-to-full-when-critical-for-one} implies that the partial allocation $\X^2$ returned in polynomial time by the \UCG algorithm is $2/3$-EFX and it does not induce any critical goods. By \cref{lem:partialefx-to-efx-extension}, we  then  obtain a complete $2/3$-EFX allocation $\X$ in polynomial time via \textsc{EnvyCycleElimination}. 
\end{proof}

\section{2/3-EFX for $3$-Value Instances}\label{sec:three-values}

The last setting we consider is $3$-value instances with values $a\ge b \ge c\ge 0$. We assume that $a > b> c$, as, otherwise, the problem is either trivial or reduces to the $2$-value setting where it is known that EFX allocations can be computed in polynomial time \citep{amanatidis2021maximum}. Therefore, as we mentioned in \cref{sec:preliminaries}, it is without loss of generality to set $a=1$ and rescale $b$ and $c$ appropriately. The main result here is the following theorem about our \TVA algorithm (Algorithm \ref{alg:3_values}).

\begin{theorem}\label{thm:3-value-instances}
Let $(N, (v_i)_{i\in N}, M)$ be any $3$-value instance. Then \TVA constructs a $2/3$-EFX allocation in polynomial time.
\end{theorem}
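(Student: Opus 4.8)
The plan is to reproduce the three-stage blueprint of the proofs of \cref{thm:multigraph,thm:7_agents}, but replacing the \PPA step by the $3$-value-specific variant of \PPA that \TVA invokes --- the one whose reduced and enhanced envy graphs are further refined with \emph{tie edges}, as announced in \cref{sec:intro}. Concretely: start from the seed allocation $\X^0$ in which each agent gets a single good; run the refined \PPA to obtain a partial allocation $\X^1$ satisfying (refinements of) \cref{prop:a,prop:b,prop:c,prop:d,prop:e} together with the structural guarantee that, unless $\X^1$ is complete, every source of $\Ge(\X^1)$ has a size-$2$ bundle and hence no critical good; form the set $\mathcal{C}$ of goods critical for more than one agent; build $\tilde{\X}$ from $\X^1$ by distributing the goods of $\mathcal{C}$ among the source(s) of $\Ge(\X^1)$; apply \cref{lem:partial23-to-full-when-critical-for-one} through \UCG to get a $2/3$-EFX partial allocation $\X^2$ with no critical goods; and complete it to a $2/3$-EFX allocation with \ECE via \cref{lem:partialefx-to-efx-extension}. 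Everything except (a) the correctness and polynomial running time of the refined \PPA and (b) the claim that $\tilde{\X}$ satisfies conditions (i)--(iii) of \cref{lem:partial23-to-full-when-critical-for-one} is already in place from \cref{sec:3pa,sec:multigraph-and-few-agents}.

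For (a), the refinement adds to the reduced and enhanced graphs extra edges between agents that are tied in value for the relevant bundles, plus a handful of additional swap and path-resolution steps that use them; since these only move value between agents and never drop anyone below the $2/3$ envy threshold, the inductive invariant argument of \cref{lem-3PA-properties-ab} extends case by case to the new steps, and the analogs of \cref{lem-3PA-properties-cde,lem-3PA-source-enhanced} follow at once from the (refined) step conditions (in particular, the analog of Step~\ref{step6} still forbids a size-$1$ source of $\Ge$). For termination and polynomiality, one proves an analog of \cref{claim:no_repeated_bundles} using a potential that also accounts for the tie edges, concluding that no agent's bundle ever recurs, so the main loop runs $\mathrm{poly}(n,m)$ times and each step is polynomial; \UCG and \ECE are polynomial by \cref{lem:partial23-to-full-when-critical-for-one} and \cref{lemma:ece}.

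For (b), conditions (ii) and (iii) are obtained exactly as in \cref{claim:multigraph_instances}: (iii) because $\mathcal{C}$ leaves the pool when $\tilde{\X}$ is formed, and (ii) because it holds for $\X^1$ by \cref{prop:e} and no value decreases in passing to $\tilde{\X}$. The substance is condition (i), that $\tilde{\X}$ is $2/3$-EFX, and this is where the hypothesis $v_i(g)\in\{1,b,c\}$ with $1>b>c$ is essential. The obstacle is that, unlike for multigraphs (where each good is critical for at most two agents) or for $n\le\numagents$ (which bounds $|\mathcal{C}|$), a single pool good may now be critical for many agents and $\mathcal{C}$ may be large, so dumping all of $\mathcal{C}$ on a single source can leave a critical agent envying that source by more than $2/3$.

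The way through is to squeeze the three available values. For any agent $i$ owning a critical good, \cref{prop:d} forces $|X^1_i|=1$, and the failure of the (refined) pool-swap steps --- in particular the analog of Step~\ref{step3} --- forces every pool good other than $i$'s own critical good to be worth at most $\tfrac16 v_i(X^1_i)$ to $i$, which in a $3$-value instance means it is worth $c$ (with $c<\tfrac16$) or $0$ to $i$; the additional refined steps further bound how many such goods can remain in the pool, so that $v_i(\mathcal{C})$ --- or the value to $i$ of the slice of $\mathcal{C}$ assigned to each source --- is at most a fixed fraction of $v_i(X^1_i)$. Combining this with $v_i(X^1_s)<\tfrac23 v_i(X^1_i)$ for critical agents (and $v_i(X^1_s)\le v_i(X^1_i)$ otherwise), and spreading $\mathcal{C}$ over the available sources as in \cref{thm:7_agents} when more than one exists, yields $v_i(\tilde{X}_s\setminus\{g\})\le\tfrac32 v_i(\tilde{X}_i)$ for every source $s$ holding part of $\mathcal{C}$ and every $g\in\tilde{X}_s$, i.e., condition (i). With the analog of \cref{claim:few_agents_if} in hand, \cref{lem:partial23-to-full-when-critical-for-one} and \cref{lem:partialefx-to-efx-extension} finish the proof as for \cref{thm:multigraph,thm:7_agents}. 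I expect this final bounding step --- the case analysis over $v_i(X^1_i)\in\{1,b,c\}$, the relative positions of $b$, $c$, $\tfrac12$, $\tfrac23$, and the interplay between the tie-edge steps and the distribution of $\mathcal{C}$ --- to be by far the most delicate part, matching the paper's remark that the $3$-value case is the most intricate of the three.
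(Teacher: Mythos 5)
Your high-level blueprint (seed $\to$ property-preserving phase $\to$ distribute the multiply-critical goods $\mathcal{C}$ over the sources $\to$ \UCG $\to$ \ECE) is not the route the paper takes for $3$-value instances, and the step where it breaks is exactly the one you flag and then wave away. The paper splits into three regimes. When $b\le\frac12$, or when $b>\frac12$ and $b+c\ge\frac23$, it shows that the output of \PPA (respectively of a one-step variant, \PPAstar) has \emph{no} critical goods at all, so $\mathcal{C}$ and \UCG never enter the picture and \ECE finishes directly. The only hard regime is $b>\frac12$ and $b+c<\frac23$, and there your plan of assigning slices of $\mathcal{C}$ to the sources of $\Ge(\X^1)$ cannot be repaired by ``squeezing the three values'': \cref{prop:e} caps each agent at one critical good, but a good only needs to be critical for two agents to land in $\mathcal{C}$, so $|\mathcal{C}|$ can be as large as $\lfloor n/2\rfloor$ while $\Ge(\X^1)$ has a single source. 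For an agent $i$ with $v_i(X^1_i)=1$, the pool may then contain one good worth $b$ to her and up to $|\mathcal{C}|-1$ goods worth $c>0$ to her, so $v_i(\mathcal{C})=b+(|\mathcal{C}|-1)c$ grows without bound in $n$; no \PPA-style step forces these goods out of the pool, since any two of them are worth at most $b+c<\frac23=\frac23 v_i(X^1_i)$ to $i$ and the Step~\ref{step3}-type swaps never fire. Hence a single source absorbing $\mathcal{C}$ is $2/3$-envied by $i$ even up to one good, and condition (i) of \cref{lem:partial23-to-full-when-critical-for-one} fails. Your sentence ``the additional refined steps further bound how many such goods can remain in the pool'' is the entire missing argument, and no such bound exists.

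What the paper actually does in this regime is abandon the post-processing architecture altogether: its algorithm \PPAplusplus allocates critical goods \emph{inside} the main loop, growing source bundles one critical good at a time up to a cap $k_{\max}$ determined by $b+(k_{\max}-1)c\ge\frac23$, and, once every source is full, letting the critical agent take over most of a source's bundle together with her critical good. Making this safe requires redefining the enhanced graph so that an edge $(i,s)$ records not near-envy but the fact that $s$ already holds a $b$-valued good (from $i$'s view) while the pool still contains another, plus a doubly enhanced graph, secluded sets, bundles of cardinality up to $k_{\max}$, a five-level hierarchy potential for termination, and a perturbation to handle $c=0$. None of this is recoverable from ``tie edges plus a handful of extra swap steps,'' so both halves of your plan have gaps, with the treatment of $\mathcal{C}$ being the fatal one.
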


\noindent For both the proof of the theorem and the description of our algorithm, we consider three cases depending on the range of the values of $b$ and $c$, namely:
\begin{threecases}
\item $b \leq {1}/{2}$;\label{case1}
\item $b > {1}/{2}$ and $b+c \geq {2}/{3}$;\label{case2}
\item $b > {1}/{2}$ and $b+c < {2}/{3}$.\label{case3}
\end{threecases}

{It turns out that \cref{case1,case2} can be completely handled by our existing machinery.}
\cref{case3}, however, turns out to be much more challenging, and requires significant refinements to our general approach. 
The variant of \PPA we design here, \PPAplusplus {(Algorithm \ref{alg:3PAplusplus} in \cref{sec:refined3PA})}, is rather intricate, as is the proof about the properties it maintains. 
So, we have all the components needed to state the complete algorithm for $3$-value instances, called \TVA (Algorithm \ref{alg:3_values}), except for \PPAplusplus, the modification of \PPA that handles \cref{case3}. We state the algorithm here and we defer the description of \PPAplusplus to \cref{subsec:teras} that is dedicated to proving \cref{thm:3-value-instances} for \cref{case3}.
It is worth noting that, given the {two variants, \PPA and \PPAplusplus}, we can directly produce a partial allocation without any critical goods in any of the three cases. That is, for $3$-value instances, we do not rely on an analog of \UCG or on other subroutines for dealing with critical goods, but only run \ECE on the output of {\PPA or \PPAplusplus}.

\begin{algorithm}[!h]
	\DontPrintSemicolon
	\caption{$\TVA(N, (v_i)_{i\in N}, M)$} \label{alg:3_values}
	\SetKwComment{Comment}{/* }{ */}

	\KwData{A $3$-value instance with values $a=1 > b > c \ge 0$.}
	\KwResult{A $2/3$-EFX allocation $\X$.\vspace{3pt}}
    Let $\X^0$ be an arbitrary partial allocation where each bundle has cardinality $1$\;

    \If(\tcp*[f]{{\cref{case1,case2}}}){{$b \le \frac{1}{2}$ \text{\textbf{\emph{or}}} $\left(b > \frac{1}{2}\right.$ and\, $\left.b+c \geq \frac{2}{3}\right)$}\vspace{2pt}}{
	    $\X^1 \gets \PPA((v_i)_{i\in N}, \X^0)$        
    }
    \Else(\tcp*[f]{\cref{case3}}){
        \If(\tcp*[f]{the $c = 0$ case is reduced to the $c >0$ case below}){$c = 0$\vspace{2pt}}{
	        Update $(v_i)_{i\in N}$ so that $c \gets \varepsilon$ for sufficiently small $\varepsilon >0$\;  
         \tcp{roughly, $\varepsilon$ will be as small as the smallest possible (weighted) difference between two bundles under the original $(v_i)_{i\in N}$ scaled down by $3m$; see \cref{lem:zero_c} } 
        }
        $\X^1 \gets \PPAplusplus((v_i)_{i\in N}, \X^0)$ \tcp*{handles \cref{case3} for $c >0$}
    }
	$\X \gets \textsc{EnvyCycleElimination}(\X^1, G(\X^1))$
	\tcp*{we just complete the allocation (no critical goods)}
	\Return $\X$

\end{algorithm}

\begin{proof}[Proof of Cases \textnormal{\ref{case1}} and \textnormal{\ref{case2}} of \cref{thm:3-value-instances}]
Let $\X^1$ be the partial allocation outputted by {\PPA} when initiated with an allocation $\X^0$ of size at least $1$ and at most $2$ which satisfies \cref{prop:a,prop:b} as shown in \TVA (Algorithm \ref{alg:3_values}). By \cref{lem:3PA-properties}, $\X^1$ satisfies \cref{prop:a,prop:b}. 
We will argue that, in both cases, there are no critical goods induced by $\X^1$. Therefore, running $\ECE$ on $\X^1$ results in a $2/3$-EFX allocation, by \cref{lem:partialefx-to-efx-extension}. The polynomial running time directly follows from the running time of {\PPA and \ECE (\cref{lem:3PA-properties,lemma:ece}).}

A simple, but very useful, observation that we will be using in both cases below is that if an agent $i$ has a critical good $g_i$, by \cref{prop:d,prop:e} of $\X^1$, it holds that $|X_i^1|=1$ and $1/2 < v_i(g_i) \le 2/3$. This implies that $v_i(X_i^1)=1$ 
or $v_i(X_i^1)=b$;
if $v_i(X_i^1)=c$, the agent cannot have a critical good. \medskip

\noindent\textbf{\cref{case1}} [$b \leq {1}/{2}$]: Suppose, towards a contradiction, that there exists some agent $i \in N$ that has a critical good $g \in \PP(\X^1)$. 
{By \cref{prop:e}, we know that $1/2 < v_i(g) \le 2/3$.} 
Consider first the case where $v_i(X^1_i)=1$. In this case, however, there {is no good $g' \in \PP(\X^1)$ such that $1/2 < v_i(g') \le 2/3$, contradicting the choice of $g$.} 
Next, consider the case $v_i(X_i^1)=b$. \cref{prop:e} ensures that $v_i(g)=c$, as otherwise we would have $v_i(g) > \frac{2}{3}v_i(X_i^1)$. Now consider a source $s$ of the enhanced envy graph $\Ge(\X^1)$; such a source exists by \cref{lem-3PA-source-enhanced} (since here we assumed $\PP(\X^1)\neq \emptyset$) and recall that $|{X_s^1}|=2$. By virtue of $s$ being a source {in $\Ge(\X^1)$ it is also a source in $G(\X^1)$ (as shown in the proof of \cref{thm:multigraph}); hence,} we have $v_i(X_i^1) \geq v_i(X_s^1)$. 
This is only possible if $v_i(X_s^1) = 2c$ or if $v_i(X_s^1)=b+c$ and $c=0$. In the latter case, $g$ is not critical for agent $i$. In the former case, the inequality $v_i(X_i^1) \geq v_i(X_s^1)$ implies that $b \geq 2c$, which means that $g$ is not critical for agent $i$. Both cases contradict the choice of $g$. \medskip

\noindent\textbf{\cref{case2}} [$b > {1}/{2}$ and $b+c \geq {2}/{3}$]:
Again, suppose, towards a contradiction, that there exists some agent $i \in N$ that has a critical good $g \in \PP(\X^1)$. 
{Since  the execution of \PPA  terminated  at Step \ref{step10} and the condition of Step \ref{step8} was not triggered,}  
we know that {$\Ge(\X^1)$ is acyclic}. 
Let $\Pi=(s, \ldots,i)$ be a maximal path in {$\Ge(\X^1)$} leading to agent $i$. Due to maximality, the initial vertex $s$ of $\Pi$ must be a source. {Note that if $s$ is $i$ itself, $i$ would be a source in $\Gr(\X^1)$ as well; since} Step \ref{step6} did not run in the last iteration of {\PPA, however}, $s$ must be distinct from $i$.
Next, let $g_s \in X^1_s$. Obviously $v_i(g_s) \geq c$. 
We will argue that $v_i(\{g_s,g\}) \geq \frac23 v_i(X^1_i)$. 
By the observation made before \cref{case1}, we have that $v_i(X_i^1)=1$ and $v_i(g)\in \{b, c\}$,
or $v_i(X_i^1)=b$ and $v_i(g)=c$.
If $v_i(g) = b$, then $v_i(\{g_s,g\}) \geq b+c\geq\frac{2}{3} \geq \frac23 v_i(X^1_i)$, since $v_i(X^1_i) \leq 1$. If $v_i(g) = c$, {the criticality of $g$ implies that} $c > \frac12 v_i(X^1_i)$ and therefore $v_i(\{g_s,g\}) \geq 2c > {v_i(X^1_i)}$. In any case, $v_i(\{g_s,g\}) \geq \frac23 v_i(X^1_i)$, directly contradicting the fact that {Step \ref{step9} did not run in the last iteration of \PPA}.
\end{proof}

\subsection{Instances with $b > {1}/{2}$ and $b+c < {2}/{3}$}\label{subsec:teras}

As we mentioned earlier, our general technique needs to be significantly refined and expanded to deal with \cref{case3}. To this end, we develop a rather complicated refinement of the {\PPA} algorithm (see Algorithm \ref{alg:3PAplusplus} below).
Before going into the technical details, it is worth discussing the additional challenges and the high-level idea behind designing the \PPAplusplus algorithm.

As we showed in Example~\ref{ex:non-existence-of-23efx-with-bundles-of-size-two}, even when considering $3$-value instances, there is no way to allocate all critical goods without resulting in allocated bundles of cardinality larger than $2$. This gives rise to two main challenges that we need to address: (i) Agents receiving a single good may not be EFX towards other agents if we just add a good to the bundle of a source of the enhanced envy graph. For example, suppose that agent $i$ has a single good for which she has value $1$ and a source $s$ has two goods for which $i$ has value $b+c< {2}/{3}$. Then $i$ will not be EFX towards $s$ if we give an extra good to $s$ for which $i$ has value $b$; this was not the case when the bundles were restricted to having cardinality at most $2$. (ii) Agents with more than $2$ goods may now have critical goods. For example, an agent with $3$ goods for which she has value $b$ may have a critical good for which she has value $1$. 

To overcome the first challenge, we modify the enhanced envy graph so that a red edge does not represent near-envy of at least $2/3$ from an agent $i$ with a critical good to a source $s$ of the reduced envy graph, but rather the fact that $i$ has value $b$ for one good in the possession of $s$. This modification resolves the situation in (i) above, but now agent $i$ may see her value drop by more than $2/3$ (yet no more than $1/2$) if she receives the bundle of $s$. This is resolved with the addition of a number of sophisticated steps in the \PPAplusplus algorithm that prevent the formation of bundles with {three or more} goods that $i$ values by more than $2/3$ even after removing a single good {from them}. 

Regarding the second challenge, we allow agents with many goods that still have a critical good to exchange \textit{all but one} of their goods with their critical good. This may reduce their value and  result in these agents not being $2/3$-EFX towards agents that hold specific types of bundles. To avoid this violation of $2/3$-EFX, we 
introduce more elaborate versions of the enhanced envy graph (see \cref{def:enhanced-envy-graph-plus,def:doubly_enhanced-envy-graph-plus}) that 
prevent the formation of such bundles through carefully defined additional edges.

For everything that follows up until \cref{lem:zero_c}, we will assume that $c>0$. When $c=0$, \cref{lem:zero_c} shows how to reduce the problem back to the $c>0$ case in polynomial time; this is also reflected to the \TVA algorithm. Hence, the $c>0$ assumption is without loss of generality.

To show {\cref{case3}} of \cref{thm:3-value-instances}, we have to go through the properties of the \PPAplusplus algorithm like we did for \cref{case1,case2} with {\PPA}. That is, we would like to show that 
the partial allocation $\X^1$ outputted by \PPAplusplus when initiated with $\X^0$  in \TVA (Algorithm \ref{alg:3_values}) satisfies \cref{prop:a,prop:b} and is computed in polynomial time. These two properties of \PPAplusplus are established in \cref{lem:3V_noCritical,lem:3V_polytime}. 

\noindent We begin with some definitions which extend those that we presented in \cref{sec:preliminaries}.

\subsubsection{Definitions and Modifications}
We first provide refined versions of the various envy-graphs that we will use. 

\begin{definition}[\REGplus $\Grplus$]\label{def:reduced-envy-graph-plus}
Given a partial allocation $\X$, the \textit{\regplus} \,$\Grplus(\X)$ of $\X$ is defined as a supergraph of $\Gr(\X)$ (recall \cref{def:Gr}) with the addition of any edge $(i,j)$ such that $|X_i|=|X_j|=1$ and $v_i(X_i)=v_i(X_j)=1$. We color those extra edges \emph{gray} and note that they represent zero envy. 
\end{definition}

\begin{definition}[\ENHplus $\Geplus$]\label{def:enhanced-envy-graph-plus}
Given a partial allocation $\X$ and its \regplus $\Grplus(\X)$, we construct the \textit{\enhplus} \,$\Geplus(\X)$ as follows. In $\Grplus(\X)$ we add any edge $(i,j)$ such that $j$ is a source of $\Grplus(\X)$ and the following three properties hold:
(a) $|X_i|=1$, {(b) $|X_j|>1$,} (c) there exists $g\in X_j$ such that $v_i(g)=b$, and (d) there exists $g'\in \PP(\X)$ such that $v_i(g')=b$. 
\end{definition}

\begin{definition}[Doubly Enhanced Graph $\Gde$]\label{def:doubly_enhanced-envy-graph-plus}
Given a partial allocation $\X$ and its  \enhplus $\Geplus(\X)$, we construct the \textit{doubly enhanced graph} $\Gde(\X)$ as follows. In $\Geplus(\X)$ we add any edge $(i,j)$ such that $j$ is a source of $\Geplus(\X)$ and the following three properties hold:
\begin{enumerate}[leftmargin=24pt,label=(\alph*)]
    \item $1+b\leq v_i(X_i) < 2$, \label{def:doubly_enhanced-envy-graph-plus-a} 
    \item from the perspective of $i$, $X_j$ consists of three goods of value $b$ and the rest of {its} goods have value $c$,
    \item there exists $g'\in \PP(\X)$ such that $v_i(g')=b$.
\end{enumerate}
In  $\Gde(\X)$ we allow \emph{self loops} $(i,j)$, where $j=i$, if the above properties hold.
\end{definition}

{Later on, we will show that for the partial allocations maintained by our algorithm, condition \ref{def:doubly_enhanced-envy-graph-plus-a} of \cref{def:doubly_enhanced-envy-graph-plus} above is equivalent to: ``from the perspective of agent $i$, $X_i$ consists of either three goods of value $b$ or one good of value $1$ and one good of value $b$, and the rest of its goods have value $c$''; see \cref{cor:Gde_part_a}.}

We remark that in the graphs described above we might have 
paths or cycles 
that consist solely of gray edges. 
Since there is no envy between any of the agents involved in these, we will refer to those as \emph{unjaundiced} paths or cycles, respectively.

\begin{definition}[Unjaundiced Path or Cycle]
Consider any {nontrivial} path $\Pi$ (resp.~cycle $C$) in $\Grplus(\X)$, $\Geplus(\X)$, or $\Gde(\X)$. We  say that $\Pi$ (resp.~$C$) is \emph{unjaundiced} if it consists entirely of gray edges; otherwise we say it is \emph{jaundiced}.
\end{definition}

Next we introduce a notion that captures the fact that there might be groups of agents who each have a {single good of value $1$} (from their perspective) but are not envied (in the sense captured by $\Grplus$) by anyone outside of their group.
\begin{definition}[\Iso \emph{Set}]
Given a partial allocation $\X$, let $S$ be a set  of agents such that {\textit{(i)}} for all $i\in S$, $v_i(X_i)=1$, and {\textit{(ii)} for any $i\in S$ and $j \notin S$,  $(j,i)$ is not an edge} in $\Grplus(\X)$. We will refer to $S$ as a \textit{\iso} set. 
\end{definition}

\begin{lemma}\label{lem:secluded}
    Given a partial allocation $\X$, there is a unique \iso set $R(\X)$ of maximum cardinality and can be found in polynomial time. 
\end{lemma}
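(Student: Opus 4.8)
The plan is to establish uniqueness via a closure property and then describe a simple peeling algorithm that computes the maximum \iso set. The key structural observation is that \iso sets are closed under union: if $S_1$ and $S_2$ are both \iso for $\X$, then so is $S_1\cup S_2$. Indeed, every agent in $S_1\cup S_2$ has value exactly $1$ for her own bundle, and if some $j\notin S_1\cup S_2$ had an edge $(j,i)$ in $\Grplus(\X)$ with $i\in S_1\cup S_2$, then, say, $i\in S_1$ while $j\notin S_1$, contradicting that $S_1$ is \iso. The same argument works verbatim for an arbitrary (finite) union. Since the empty set is vacuously \iso, the union $R(\X)$ of all \iso sets is itself \iso, and by construction it contains every \iso set; hence it is the unique inclusion-maximal \iso set and, in particular, the unique one of maximum cardinality.

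For the algorithmic part, I would use the following peeling procedure. Initialize $S\gets\{i\in N : v_i(X_i)=1\}$ and construct $\Grplus(\X)$ (which takes polynomial time, since it only adds grey edges on top of $\Gr(\X)$). Then, while there exist $i\in S$ and $j\in N\setminus S$ with $(j,i)\in\Grplus(\X)$, remove $i$ from $S$; return $S$ at termination. Each iteration removes one agent, so there are at most $n$ iterations, and each iteration only needs to scan the edges of $\Grplus(\X)$, so the whole procedure runs in polynomial time.

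It remains to argue correctness, i.e. that the returned set $S^\ast$ equals $R(\X)$. First, $S^\ast$ is \iso: by construction every agent ever in $S$ has value $1$ for her own bundle, and the termination condition says precisely that no agent outside $S^\ast$ points into $S^\ast$ in $\Grplus(\X)$. Second, $S^\ast$ contains every \iso set $R$: suppose not, and consider the first agent of $R$ removed during the execution, say $i$, removed because of an edge $(j,i)$ with $j$ outside the current set $S$. Since initially $S\supseteq R$ (all agents of $R$ have value $1$) and no agent of $R$ has been removed before $i$, at the moment of removal we still have $R\subseteq S$; hence $j\notin S$ implies $j\notin R$, and then $(j,i)\in\Grplus(\X)$ with $i\in R$, $j\notin R$ contradicts $R$ being \iso. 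Therefore $R\subseteq S^\ast$ for every \iso $R$, so $S^\ast$ is the unique maximum \iso set $R(\X)$. I expect the only mildly delicate point to be this last step---verifying that the greedy removals never discard an agent belonging to some \iso set---which is handled cleanly by the ``first removed agent'' argument above; everything else is routine.
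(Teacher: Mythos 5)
Your proposal is correct. It does, however, organize the argument differently from the paper. The paper writes down the maximum \iso set in closed form as $R(\X) = \{i\in N : \text{there is no path } (j,\ldots,i) \text{ in } \Grplus(\X) \text{ with } v_j(X_j)\neq 1\}$, then checks directly that this set is \iso and that no agent outside it can belong to any \iso set; uniqueness and polynomial-time computability both fall out of this single reachability characterization. You instead split the work into two pieces: a union-closure lemma (any union of \iso sets is \iso), which immediately yields a unique inclusion-maximal \iso set, and a greedy peeling algorithm with a ``first removed agent'' invariant showing the output contains every \iso set. The two routes compute the same set --- peeling the value-$1$ agents that receive an edge from outside is exactly deleting the agents reachable in $\Grplus(\X)$ from some agent with value $\neq 1$ --- and both are elementary. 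What your decomposition buys is that uniqueness is settled abstractly before any algorithm is mentioned (union-closure is a reusable structural fact about \iso sets), at the cost of a slightly longer correctness argument for the algorithm; the paper's explicit characterization is more compact but leaves the claim that ``no agent reachable from a non-value-$1$ agent can lie in a \iso set'' to the reader, which is essentially the same induction along a path that your first-removed-agent argument makes explicit.
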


\begin{proof}
By its definition,  
a \iso set cannot contain any agent $j$ such that $v_j(X_j) \neq 1$ nor any agent reachable in $\Grplus(\X)$ {from such a $j$}. We argue that all remaining agents form a \iso set and, thus, this is the unique \iso set of maximum cardinality. Let $R(\X) = \{i\in N : \text{ there is no path } \Pi = (j,\ldots, i) \text{ in } \Grplus(\X), \text{ such } \allowbreak \text{  that } \allowbreak  v_j(X_j) \neq 1 \}$. Clearly, $R(\X)$ can be constructed in polynomial time {by running a DFS in $\Grplus(\X)$ from each $j$ such that $v_j(X_j) \neq 1$}. Also, for any $i\in R(\X)$ we have $v_i(X_i)=1$, by the definition of $R(\X)$, and any $j\notin R(\X)$ could not be in any \iso set as argued above. What remains to be shown is that there is no edge $(j, i)$ with $i\in R(\X)$ and $j\notin R(\X)$. Suppose, towards a contradiction that $(j, i)$ was such an edge. This means that $j$ is reachable by some $j'$ such that $v_{j'}(X_{j'}) \neq 1$ {(where $j'$ could be $j$ herself)}. But then, so is $i$ and thus $i\notin R(\X)$, contradicting the choice of $(j, i)$.
\end{proof}

For the rest of the section, we will use $R(\X)$, as in the statement of \cref{lem:secluded}, to denote the unique maximal \iso set induced by $\X$.

Finally, let $k_{\max}$ be the unique integer such that 
\[b+(k_{\max}-1)c\geq \frac 23 \text{ and } b+(k_{\max}-2)c < \frac 23.\] 
This will be (an upper bound on) the \emph{maximum cardinality} of any allocated bundle as long as there are agents that have critical   goods (see \cref{lem:maxCardinality}). Since $b+c< {2}/{3}$, it must be that $k_{\max}\geq 3$.

\subsubsection{Potential Function}\label{subsec:potential}
To show the termination of our algorithm, we will need to use a potential function. For any partial allocation $\X$, we define the following sets of agents:
\[N_{[2,\infty)}(\X), \ \  N_{[1+b,2)}(\X), \ \ N_{(1,1+b)}(\X), \ \ N_{[2/3,1)}(\X), \ \ N_{1,(0,2/3)}(\X)\,,\]
where 
\begin{itemize}[leftmargin = 15pt]
    \item[-] $N_{[2,\infty)}(\X) = \{i \in N: v_i(X_i) \in [2,\infty)\}$, i.e., agents whose value in $\X$ lies in $[2,\infty)$. \smallskip
    \item[-] $N_{[1+b,2)}(\X)= \{i \in N: v_i(X_i) \in [1+b,2)\}$, i.e., agents whose value in $\X$ lies in $[1+b,2)$.\smallskip
    \item[-] $N_{(1,1+b)}(\X)= \{i \in N: v_i(X_i) \in (1,1+b)\}$, i.e., agents whose value in $\X$ lies in $(1,1+b)$. \smallskip 
    \item[-] $N_{[2/3,1)}(\X)= \{i \in N: v_i(X_i) \in [2/3,1)\}$, i.e., agents whose value in $\X$ lies in $[2/3,1)$. \smallskip
    \item[-] $N_{1,(0,2/3)}(\X)= \{i \in N: v_i(X_i)=1 \text { or } v_i(X_i) \in (0,2/3)\}$, i.e., agents whose value in $\X$ is either $1$ or lies in $(0,2/3)$. 
\end{itemize} 
    
We define as the potential function the tuple 
\[\potential(\X)=(|N_{[2,\infty)}(\X)|, |N_{[1+b,2)}(\X)|, |N_{(1,1+b)}(\X)|, |N_{[2/3,1)}(\X)|, SW_{(0,1+b)}(\X)),\]
where $SW_{(0,1+b)}(\X) = \sum_{i \in N: v_i(X_i)<1+b}v_i(X_i)$ is the social welfare of only the agents who value their bundles less than $1+b$. 
We will show that in each step of our algorithm, the potential function will only lexicographically increase, with the exception of {a polynomially bounded number of} steps during which it may remain constant.

\subsubsection{Additional Properties}
Besides \cref{prop:a,prop:b}, here we need the following two  properties, instead of \cref{prop:c,prop:d,prop:e}.

\begin{tcolorbox}
\underline{Further Desired properties of a partial allocation $\X$:}
\begin{pproperties}[topsep=5pt,itemsep=5pt]
\item For any agent $i\in N_{1,(0,2/3)}(\X)$ and any good $g\in \PP(\X)$, it holds that $v_i(g)<1$. \label{prop:F1}
\item For any agent $i\in N_{1,(0,2/3)}(\X)$ and any agent $j\neq i$, such that $|X_j|\geq 2$, it holds that $v_i(g)<1$ for any $g\in X_j$. \label{prop:F2}
\end{pproperties}
\end{tcolorbox}

\subsubsection{A Refined Version of the Property-Preserving Partial Allocation Algorithm} \label{sec:refined3PA} 
Here we present the refinement of the \PPA algorithm; see Algorithm \ref{alg:3PAplusplus}. In contrast to the results of \Cref{sec:multigraph-and-few-agents}, \PPAplusplus will produce the desired allocation $\X^2$ \emph{directly}, i.e., $\X^2$ will be the allocation $\X^1$ similarly to \cref{case1,case2}. In other words, the processes of obtaining a property-preserving partial allocation and of allocating the critical goods left in the pool will be intertwined. Still, the logic of the algorithm is similar to \PPA, and therefore we refer to this version as a refinement.

{To avoid repetition in the formal description of \PPAplusplus, we slightly abuse notation in the following way. For the extreme case where $k_{\max} = 3$, we do not want to use the doubly enhanced graph at all. Although this is explicit in Step~\ref{3Vstep_GdeCycle}, that is not the case for Steps~\ref{3Vstep_AddGood} and \ref{3Vstep_AddGoodInKMax} which are stated using $\Gde(\X)$ despite the fact that $\Geplus(\X)$ should be used instead when $k_{\max} = 3$. We have opted to only mention it as a comment in the pseudocode for the sake of presentation.}

{In Step~\ref{3Vstep_2a},} \PPAplusplus needs to allocate a few goods in a round-robin fashion. As a shortcut, we introduce Subroutine \ref{alg:single-round-robin}. \SRR takes a partial allocation and a set of agents as input, and augments the allocation by considering the specified agents one after the other, giving to each of them her favorite unallocated good.

\SetAlgorithmName{SUBROUTINE}{Subroutine}

\setcounter{algocf}{5}
\begin{algorithm}[!ht]
	\DontPrintSemicolon
	\caption{$\SRR(\X,S)$} \label{alg:single-round-robin}
	\SetKwComment{Comment}{/* }{ */}
        \SetKw{and}{and}
	\KwData{A partial allocation $\X$ and a set of agents $S$.}
	\KwResult{An updated partial allocation $\X$, in which each $i \in S$ adds her favourite available good from the pool to her bundle, in a round-robin fashion.}
 \tcp{\scriptsize Agents in $S$ pick their favorite good from the pool one by one until each one gets one good or the pool empties.}
 \While{$S\neq \emptyset$ \and $\PP(\X) \neq \emptyset$}
{
Select $i \in S$\;
Let $g^* \in {\arg\max_{g \in \PP(\X)} v_i(g)}$\;
$X_i \gets X_i \cup \{g^*\}$\; 
$S \gets S\setminus\{i\}$
}
\Return $\X$\;
\end{algorithm}

Given the number of distinct steps of \PPAplusplus, it is hard to keep track of what each step does, which one of the four different kinds of envy graphs uses, or what effect it has on the potential. To facilitate the reader's navigation through the proof, we summarize these points for each step in \cref{tab:3PAplusplus-steps}. \medskip

\begin{table}[h!t]
		\centering
		\setlength{\tabcolsep}{4pt}
		\renewcommand{\arraystretch}{1.2}
		\begin{tabular}{cp{11.5cm}cc}
			\hline
			\textbf{Step} & \textbf{What it does} & \textbf{Graph} & \textbf{$\potential$}\\ \hline
			1  & Eliminate a jaundiced cycle (swap bundles around it). & $\Grplus$ & preserve$^\dagger$\\ \hline
			2  & A singleton owner takes a strictly more valuable pool good. & --- & increase\\ \hline
			3  & An owner of $\ge2$ goods takes a pool good worth $>\tfrac32$ her bundle. & --- & increase\\ \hline
			4  & Resolve a jaundiced path whose endpoint is indifferent between her bundle and a pool good. & $\Grplus$ & increase\\ \hline
			5  & Round-robin: each agent of a secluded set takes her favourite pool good. & $\Grplus$ & increase\\ \hline
			6  & \makecell[cl]{A singleton owner takes two pool goods. \\(Or, if only one good remains, she just adds it to her bundle.)} & \makecell{---\\($\Gr$)}  & increase\\ \hline
			7  & An owner of $\ge2$ goods swaps all but her best good for a pool good. & --- & increase\\ \hline
			8  & An owner of three $b$-valued goods (with only $c$-valued goods otherwise) trades two of them for a $1$-valued pool good. & --- & preserve$^\dagger$\\ \hline
			9  & Resolve a jaundiced cycle; a near-envying agent receives two goods she values at $b$. & $\Geplus$ & increase\\ \hline
			10  & If $k_{\max}\ge4$, resolve a jaundiced cycle/self-loop; the agent gets four goods she values at $b$. & $\Gde$ & increase\\ \hline
			11  & Give a critical good to a source with room ($|X_s|<k_{\max}$). & $\Gde$ & preserve$^\dagger$\\ \hline
			12  & A critical-good owner takes a full source's bundle minus one good, plus her critical good. & $\Gde$ & increase\\ \hline
			13  & Terminate (no applicable step). & --- & ---\\ \hline
		\end{tabular}
		\caption{Summary of the steps of \PPAplusplus\ (Algorithm~\ref{alg:3PAplusplus}). ``Graph'' is the envy-graph the step operates on (``---'' meaning that only individual agents and the pool are inspected). $^\dagger$Steps~\ref{3Vstep_GrCycle}, \ref{3Vstep_swap3bToab} and \ref{3Vstep_AddGood} may leave $\potential$ unchanged, but only for polynomially many consecutive rounds (see the proof of \cref{lem:3V_polytime}); every other step increases $\potential$ lexicographically. For $k_{\max}=3$, Steps~\ref{3Vstep_AddGood} and \ref{3Vstep_AddGoodInKMax} use $\Geplus$ in place of $\Gde$.}
		\label{tab:3PAplusplus-steps}
\end{table}

\Cref{subsec:3PA++_correctness,subsec:3PA++_polytime} will be devoted to proving the following two lemmata. By \cref{lem:partialefx-to-efx-extension}, \cref{lem:3V_noCritical} implies the existence of a complete $2/3$-EFX allocation, and \cref{lem:3V_polytime} establishes that this allocation can be obtained in polynomial time. This implies \Cref{case3} of \Cref{thm:3-value-instances}, completing the theorem's proof.

\begin{lemma}
\label{lem:3V_noCritical}
Let $\X^1$ be the partial allocation outputted by \PPAplusplus. Then $\X^1$ is $2/3$-EFX, with no critical goods for any agent. 
\end{lemma}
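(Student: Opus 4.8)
The plan is to follow the same template used for Cases \ref{case1} and \ref{case2}, but now with the considerably richer invariants that \PPAplusplus is designed to maintain. Concretely, I would first establish (this is the content of \cref{subsec:3PA++_correctness}, which I am allowed to cite) that the output allocation $\X^1$ of \PPAplusplus satisfies \cref{prop:a,prop:b}, as well as the refined pool/bundle properties \ref{prop:F1} and \ref{prop:F2}, and that along the way the algorithm never produces a bundle of cardinality exceeding $k_{\max}$ while critical goods remain (this is \cref{lem:maxCardinality}, also citable). Granting these, the $2/3$-EFX part of the statement is exactly \cref{prop:b}, so the real work is the "no critical goods" claim.

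For the no-critical-goods part I would argue by contradiction: suppose some agent $i$ has a critical good $g \in \PP(\X^1)$, i.e.\ $v_i(g) > \tfrac12 v_i(X_i^1)$. Since we are in \cref{case4}, $v_i(g)\in\{1,b,c\}$ and $b>\tfrac12$, $b+c<\tfrac23$. I would then do a case analysis on $v_i(X^1_i)$ and on $v_i(g)$, exactly as in the Case \ref{case1}/\ref{case2} proof, but now using the termination conditions of \emph{all} the steps of \PPAplusplus (including the $\Geplus$-cycle, $\Gde$-cycle, \SRR-type, and round-robin steps \ref{3Vstep_2a}, \ref{3Vstep_GeCycle}, \ref{3Vstep_GdeCycle}) to derive a contradiction with the fact that the algorithm halted. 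The key structural inputs are: (i) if $i$ has a critical good then $v_i(g)$ together with whatever is available in the pool and in a source's bundle would have enabled one of the swap/path-resolution steps — this is where \ref{prop:F1}/\ref{prop:F2} and the definitions of $\Grplus$, $\Geplus$, $\Gde$ come in, since those graphs were precisely engineered so that an agent who still covets a pool good of value $b$ (or a good of value $b$ in a source's bundle) has an outgoing enhanced/doubly-enhanced edge; (ii) acyclicity of $\Gde(\X^1)$ at termination (guaranteed by the final step failing) forces the existence of a source reachable-to-$i$ path, along which \PRPA could have fired; (iii) the $k_{\max}$ bound rules out the pathological large bundles. I would organize this as: first dispatch $v_i(g)=c$ (needs $v_i(X^1_i)\in\{1,b\}$ or smaller, and is killed by a round-robin / $\Gde$ step), then $v_i(g)=b$ (needs $v_i(X^1_i)=1$, killed by the $\Geplus$ red edge plus Step~\ref{step8.5}-analog), then $v_i(g)=1$ (needs $v_i(X^1_i)\le\tfrac{something}{}$; this forces $i$ to have only low-value goods and is killed by Step~1 of \PPAplusplus or by \ref{prop:F1}).

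The main obstacle I expect is the $v_i(g)=b$ and $v_i(g)=1$ sub-cases where $i$ already holds a bundle of cardinality $3$ or more (which \emph{does} happen here, unlike in Cases \ref{case1},\ref{case2}): one has to verify that even such an $i$ has an outgoing edge in the appropriate enhanced graph and that the corresponding step's precondition (the existence of a pool good of value $b$, item (c) of \cref{def:enhanced-envy-graph-plus} and item (c) of \cref{def:doubly_enhanced-envy-graph-plus}) is met. This is exactly the situation the \iso set $R(\X^1)$ and the doubly-enhanced graph with self-loops were introduced to handle, so the argument will hinge on \cref{lem:secluded} and on showing that a critical good of value $b$ guarantees such a pool good exists (otherwise the pool would already be "singleton-like" and a dedicated step would have fired). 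Once every branch terminates in a contradiction with the halting of \PPAplusplus, we conclude $\X^1$ induces no critical goods, which together with \cref{prop:b} gives the lemma.
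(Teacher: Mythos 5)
Your overall strategy --- get the $2/3$-EFX guarantee from \cref{prop:b} (maintained inductively over the iterations of \PPAplusplus, i.e., \cref{lem:3Vpropertiesab}) and then derive a contradiction between the existence of a critical good and the algorithm halting at Step~\ref{3Vstep_Break} --- is exactly the paper's. The internal decomposition differs, however. You propose a three-way case split on the value of the critical good, with a different step of \PPAplusplus killing each branch; the paper instead splits the work into two cleaner pieces. First, \cref{lem:3VcriticalGoods} shows that once Steps~\ref{3Vstep_GrCycle}--\ref{3Vstep_swap3bToab} all fail, the \emph{only} possible critical-good configuration is a singleton bundle of value $1$ with a critical good of value $b$; in particular, the case you flag as the main obstacle (an agent with $|X_i|\ge 3$ holding a critical good) is ruled out there via the preconditions of Steps~\ref{3Vstep_swapManyTo2} and~\ref{3Vstep_swap3bToab}, not via the enhanced graphs. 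Second, a uniform source-existence cascade (\cref{lem:helpToStep9} gives a source of $\Grplus(\X^1)$ using the emptiness of the \iso set, and the absence of \emph{jaundiced} cycles forced by Steps~\ref{3Vstep_GeCycle} and~\ref{3Vstep_GdeCycle} lifts this to sources of $\Geplus(\X^1)$ and $\Gde(\X^1)$) guarantees that Step~\ref{3Vstep_AddGood} or~\ref{3Vstep_AddGoodInKMax} would fire, since their preconditions only require \emph{some} agent with \emph{some} critical good. A few of your step attributions are off, though not fatally: the $v_i(g)=c$ branch dies by arithmetic (the failure of Steps~\ref{3Vstep_swap1To1}, \ref{3Vstep_equalVinPool}, \ref{3Vstep_2a} forces $v_i(X_i)>c$, hence $v_i(X_i)\ge b > 2c$), not by a round-robin step; the $v_i(g)=1$ branch with $|X_i|>1$ needs Steps~\ref{3Vstep_swapManyTo2} and~\ref{3Vstep_swap3bToab} rather than Step~1 or \cref{prop:F1}; and condition (c) of \cref{def:enhanced-envy-graph-plus} needs no separate argument when the critical good has value $b$, as it is its own witness in the pool. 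Finally, note that $\Gde(\X^1)$ is \emph{not} acyclic at termination --- only jaundiced cycles are eliminated --- which is precisely why the \iso-set machinery is needed to extract a source; you correctly identify this as the crux.
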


\begin{lemma}\label{lem:3V_polytime}
	The \PPAplusplus algorithm terminates in polynomial time.
\end{lemma}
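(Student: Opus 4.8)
The plan is to establish termination of \PPAplusplus via the potential function $\potential$ introduced in \cref{subsec:potential}; polynomial running time then follows because every individual step (building $\Gr$, $\Grplus$, $\Geplus$, $\Gde$, locating sources or cycles, and running \ACR, \PR, $\PRPA$, or \SRR) is itself polynomial-time, so it suffices to bound the number of iterations of the main loop by a polynomial in $n$ and $m$. The argument has three parts: count the possible values of $\potential$, show $\potential$ is lexicographically non-decreasing along the execution, and bound the length of stretches on which it stays constant.

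\textbf{Counting potential values.} First I would check that $\potential$ ranges over only polynomially many values. Its first four coordinates are integers in $\{0,\dots,n\}$. For the last coordinate, every allocated bundle has at most $m$ goods, so each $v_i(X_i)$ contributing to $SW_{(0,1+b)}(\X)$ equals $\alpha_i\cdot 1+\beta_i b+\gamma_i c$ for nonnegative integers with $\alpha_i+\beta_i+\gamma_i\le m$; since that value is $<1+b$ and $b>\tfrac12$, necessarily $\alpha_i\in\{0,1\}$ and $\beta_i\le 2$. Hence $SW_{(0,1+b)}(\X)=A+bB+cC$ with $A\le n$, $B\le 2n$, $C\le nm$, so this coordinate takes $O(n^2 m)$ values and $\potential$ takes $O(n^6 m)$ values overall.

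\textbf{Monotonicity.} Next I would verify, by a case analysis over the steps of Algorithm~\ref{alg:3PAplusplus}, that each step weakly lexicographically increases $\potential$. Cycle-resolution steps (on any of the four graph variants, including unjaundiced cycles) only permute bundles backwards along a cycle, so each involved agent weakly gains value and no tier count can drop; the \SRR steps and the ``swap with one or two pool goods'' steps either push an agent into a strictly higher tier or strictly raise her value while leaving her below $1+b$, strictly increasing $\potential$ (in the remaining sub-case, an agent already at value $\ge 1+b$ merely gains within a top tier, leaving $\potential$ unchanged but her value strictly larger). The delicate cases are the steps that \emph{decrease} an agent's value: the algorithm is engineered so that these occur only when the loss is compensated — the decreasing agent stays in, or moves up from, her tier and $SW_{(0,1+b)}$ does not fall, because either her value was already at least $1+b$ (hence outside the $SW_{(0,1+b)}$ sum) or the step simultaneously moves some agent into a strictly earlier tier. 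Carrying this through every step of \PPAplusplus is the bulk of the proof.

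\textbf{Bounding constant stretches.} Finally I would bound the number of consecutive iterations over which $\potential$ is constant. On such a stretch every tier count is frozen, so the operations performed are bundle permutations (cycle resolutions) and value-raising swaps within a top tier; I would then invoke a secondary monovariant — either the number of edges of the graph currently being resolved, which strictly decreases at a cycle resolution as in the proof of \cref{obs:all-cycle-resolution-running-time}, or an adaptation of \cref{claim:no_repeated_bundles} showing no agent's bundle recurs while $\potential$ is constant — to cap each stretch at $\mathrm{poly}(n,m)$ iterations. Multiplying the polynomial number of strict $\potential$-increases by the polynomial length of a constant stretch bounds the total number of iterations, and hence the running time. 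I expect the main obstacle to be this last part together with the value-decreasing cases of the monotonicity step: because the three enhanced graphs stack on top of one another ($\Grplus\subseteq\Geplus\subseteq\Gde$), one must argue that a cycle resolution in one of them does not spawn an unbounded cascade of cycles in another while $\potential$ stays constant, which is precisely the subtle point needed to make the secondary monovariant go through.
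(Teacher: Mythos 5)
Your overall architecture matches the paper's: the same potential $\potential$, a step-by-step monotonicity check, and a bound on how long $\potential$ can stay constant. Your first part is a genuinely different (and arguably cleaner) way to bound the number of \emph{strict} increases: instead of the paper's count of ``ascend'' improvements ($\le 4n$, via \cref{lem:agentHierarchy}) and ``welfare'' improvements between them ($\le 6n$, via the $7$ possible bundle values of cardinality at most $2$), you simply count the possible values of $SW_{(0,1+b)}(\X)$ using the $3$-value structure ($v_i(X_i)=\alpha_i+\beta_i b+\gamma_i c$ with $\alpha_i\le 1$, $\beta_i\le 2$, $\gamma_i\le m$) and use lexicographic monotonicity. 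That argument is sound and would substitute for the paper's counting.

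The genuine gap is in your third part. The steps that can leave $\potential$ unchanged are exactly Steps \ref{3Vstep_GrCycle}, \ref{3Vstep_swap3bToab}, and \ref{3Vstep_AddGood}, and your characterization of a constant stretch as ``bundle permutations and value-raising swaps within a top tier'' is wrong for Step \ref{3Vstep_swap3bToab}: there an agent in $N_{[1+b,2)}(\X)$ trades three $b$-goods (plus $c$-goods) for a bundle worth exactly $1+b$, so her value \emph{strictly decreases} while every coordinate of $\potential$ is unchanged. This breaks both of your candidate secondary monovariants: the edge count of $\Grplus$ can grow again after Steps \ref{3Vstep_swap3bToab} or \ref{3Vstep_AddGood} interleave with Step \ref{3Vstep_GrCycle}, and a ``no bundle recurs because values only move one way on a constant stretch'' adaptation of \cref{claim:no_repeated_bundles} fails precisely because Step \ref{3Vstep_swap3bToab} moves value down and Step \ref{3Vstep_AddGood} moves it up within the same tier. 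The paper closes this with different monovariants --- Step \ref{3Vstep_swap3bToab} strictly \emph{grows} the pool and Step \ref{3Vstep_AddGood} strictly \emph{shrinks} it, each bounding its occurrences by $m$ --- together with \cref{claim:step8}, which shows that once Step \ref{3Vstep_swap3bToab} ceases to be applicable it cannot become applicable again before $\potential$ increases (because the pool only shrinks under Steps \ref{3Vstep_GrCycle} and \ref{3Vstep_AddGood}, so no new $1$-valued good can appear for an agent of $N_{[1+b,2)}(\X)$). Without an argument of this kind, your bound on the length of a constant stretch does not go through. (A smaller imprecision: the cycle resolutions in Steps 9, 10, and 12 are not mere permutations and do decrease some agents' values and even $SW_{(0,1+b)}(\X)$; your lexicographic-compensation clause covers this, but the claim that cycle resolutions only weakly increase values should be restricted to Step \ref{3Vstep_GrCycle}.)
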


\subsubsection{The Proof of \cref{lem:3V_noCritical}}
\label{subsec:3PA++_correctness}
To prove \cref{lem:3V_noCritical}, we develop a sequence of technical lemmata and claims. We begin with some very simple observations but, as we progress, our statements (and proofs) will become increasingly complex.

\begin{lemma}
	\label{lem:valueOfc}
	It holds that $c< {1}/{6}$.
\end{lemma}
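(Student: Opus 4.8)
The plan is to observe that this inequality is an immediate consequence of the two defining conditions of \cref{case4}, namely $b > \frac12$ and $b + c < \frac23$, together with the assumption $c > 0$ that is in force throughout this subsection. No structural argument about the algorithm or the allocations is needed here; it is purely an arithmetic observation about the parameters $b$ and $c$.

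Concretely, I would argue as follows. Since we are in \cref{case4}, we have $b + c < \frac23$, which rearranges to $c < \frac23 - b$. Since we are also in \cref{case4}, we have $b > \frac12$, and hence $\frac23 - b < \frac23 - \frac12 = \frac16$. Chaining these two inequalities gives $c < \frac16$, as claimed. (The assumption $c > 0$ plays no role in establishing the upper bound, but is what makes the statement non-vacuous, since otherwise $c = 0$ has been reduced away via \cref{lem:zero_c}.)

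There is essentially no obstacle: the only thing to be careful about is that both input inequalities are strict, so the conclusion $c < \frac16$ is strict as well, which is exactly what is stated. This lemma will presumably be used later to bound how many goods of value $c$ can accumulate in a bundle before its owner's value reaches various thresholds such as $\frac23$ or $1+b$; the bound $c < \frac16$ in particular guarantees that six goods of value $c$ are still worth strictly less than $1$, which dovetails with the way $k_{\max}$ and the potential-function sets $N_{[2/3,1)}, N_{(1,1+b)}$, etc., are set up.
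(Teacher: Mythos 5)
Your proof is correct and is essentially identical to the paper's: both simply chain $c < \frac{2}{3} - b < \frac{2}{3} - \frac{1}{2} = \frac{1}{6}$ using the two defining inequalities of \cref{case4}. Nothing further is needed.
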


\begin{proof}
	This is easy to see, since $b+c< {2}/{3}$ and $b> {1}/{2}$, and so $c< {2}/{3} -b <{2}/{3} -{1}/{2}= {1}/{6}$.
\end{proof}

\begin{lemma}
\label{lem:maxCardinality}
At the end of any iteration of the while loop of the \PPAplusplus algorithm, the cardinality of any allocated bundle of goods is at most $k_{\max}$.
\end{lemma}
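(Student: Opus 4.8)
The plan is to argue by induction on the number of completed iterations of the \texttt{while} loop of \PPAplusplus, the inductive hypothesis being that at the start of the current iteration every allocated bundle has cardinality at most $k_{\max}$. The base case is immediate: the input allocation $\X^0$ consists only of singletons, and $k_{\max}\ge 3$ because $b+c<\frac23$. For the inductive step one checks, step by step, that the step executed in the current iteration creates no bundle of cardinality exceeding $k_{\max}$; by the inductive hypothesis it is enough to examine the bundles that change.

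Most of the thirteen steps are immediate. Steps~1 and~4 only permute existing bundles along a cycle or path (through \CR and \PR) and possibly return goods to the pool or hand out a fresh singleton, so the maximum cardinality does not grow. Steps~2,~3,~6,~7 and~8 each overwrite one agent's bundle with a set of at most two goods and leave the rest of the allocation permuted or untouched; since $k_{\max}\ge 3$, this is harmless. Step~11 adds one good to a source $s$ of $\Gde(\X)$ under the explicit guard $|X_s|<k_{\max}$, so $|X_s|$ becomes at most $k_{\max}$. Step~12 gives agent $i$ the set $(X_{s_i}\cup\{g\})\setminus\{g'\}$ where, by the guard, $|X_{s_i}|=k_{\max}$, so $|X_i|=k_{\max}$, while the subsequent \PR is only a permutation. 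Finally, Step~10 is entered only when $k_{\max}\ge 4$: the new bundle $\{g_1,g_2,g_3,g^*\}$ of $i^*$ has size $4\le k_{\max}$, the cycle resolution permutes, and the \SRR call is applied only to the set $S=T(C)\cap\{i:|X_i|=1\}$ of singleton holders, producing size-$2$ bundles.

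The two delicate cases are the \SRR calls in Steps~5 and~9. In Step~5, \SRR is run on the \iso set $R(\X)$ (\cref{lem:secluded}), and every $i\in R(\X)$ has $v_i(X_i)=1$. I would first establish the purely arithmetic claim that, for a fixed agent, no multiset of goods with values in $\{1,b,c\}$ can have total value exactly $1$ and cardinality $k$ with $2\le k\le k_{\max}$. Indeed, if $k\ge 2$ then no good can be worth $1$ (two such goods, or one together with any further good of value $\ge c>0$, already exceed $1$) and at most one good can be worth $b$ (two would give $\ge 2b>1$); in the all-$c$ case $kc=1$, so using $k\le k_{\max}$ one gets $\tfrac12<b<\tfrac23-\tfrac{k_{\max}-2}{k_{\max}}=\tfrac{2}{k_{\max}}-\tfrac13$, forcing $k_{\max}<\tfrac{12}{5}$ and contradicting $k_{\max}\ge 3$; in the remaining case $b+(k-1)c=1$, whence $b+(k-2)c=1-c>\tfrac56>\tfrac23$ by \cref{lem:valueOfc}, contradicting the defining inequality $b+(k_{\max}-2)c<\tfrac23$ since $k\le k_{\max}$. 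Combined with the inductive hypothesis, this forces $|X_i|=1$ for all $i\in R(\X)$, so the \SRR of Step~5 only creates size-$2$ bundles.

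Step~9 is the main obstacle. There \SRR is applied to $T(C)\setminus\{i^*\}$ \emph{after} the cycle $C\subseteq\Geplus(\X)$ has been resolved, and at that moment each $i\in T(C)\setminus\{i^*\}$ holds the former bundle $X_s$ of a source $s$ of $\Grplus(\X)$ carrying an incoming ($\Geplus$-added) edge from $i$; likewise, the modification of $X_{i^*}$ only yields the size-$2$ bundle $\{g,g^*\}$. So the whole step is fine provided $|X_s|\le k_{\max}-1$ for every such source $s$. I would derive this cardinality bound from the fact that Step~9 is reached only after Steps~1--8 all fail to apply in the current iteration (most relevantly Steps~3,~7 and~8, which constrain how an agent's bundle compares to the goods in the pool), together with \cref{lem:helpToStep9} on the structure of $T(C)$ and of the sources appearing in $C$; these conditions pin down the value profile of $X_s$ tightly enough to rule out $|X_s|=k_{\max}$ at a $\Grplus$-source with an incoming $\Geplus$-edge. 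Once this is in place, \SRR raises the sizes of these bundles to at most $k_{\max}$, and the induction --- hence the lemma --- goes through.
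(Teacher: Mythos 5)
Your overall strategy (induction over the iterations of the while loop, followed by a step-by-step inspection) is the same as the paper's, but the paper's own proof is a two-sentence inspection: it observes that only Steps~\ref{3Vstep_GdeCycle}, \ref{3Vstep_AddGood} and \ref{3Vstep_AddGoodInKMax} ever build bundles of cardinality above $2$, and that each of these is capped at $k_{\max}$ by its explicit guard. You go further and correctly flag the \SRR calls in Steps~\ref{3Vstep_2a} and~\ref{3Vstep_GeCycle} as the places where an existing, possibly non-singleton bundle is enlarged by one good. Your treatment of Step~\ref{3Vstep_2a} is complete and correct: the arithmetic claim that no bundle of cardinality between $2$ and $k_{\max}$ can have value exactly $1$ forces every member of $R(\X)$ to hold a singleton, and it has the additional virtue of not invoking \cref{lem:Value1Cardinality1}, whose proof goes through \cref{lem:maxValue} and hence presupposes the very cardinality bound being established.

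The problem is Step~\ref{3Vstep_GeCycle}, which you yourself call the main obstacle and then do not close. After the cycle resolution, each $i\in T(C)\setminus\{i^*\}$ holds the former bundle $X_s$ of some $\Grplus$-source $s$, and \SRR then adds one more good to it, so you need $|X_s|\le k_{\max}-1$ for every such source. You assert that the failure of Steps~\ref{3Vstep_GrCycle}--\ref{3Vstep_swap3bToab} together with \cref{lem:helpToStep9} pins down the value profile of $X_s$ tightly enough to rule out $|X_s|=k_{\max}$, but you give no argument, and none of the cited ingredients obviously yields one: a source of $\Gde(\X)$ can be grown to cardinality exactly $k_{\max}$ by repeated applications of Step~\ref{3Vstep_AddGood}, and the added edges of \cref{def:enhanced-envy-graph-plus} place no cardinality restriction on their target, so nothing immediately prevents such a source from later acquiring an incoming red edge on a jaundiced cycle. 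This is therefore a genuine gap rather than a routine omission, and it is the one case your write-up opens without resolving. (For comparison, the paper's proof does not engage with this case at all; it simply classifies Steps~\ref{3Vstep_GdeCycle}--\ref{3Vstep_AddGoodInKMax} as the only steps that may allocate bundles of cardinality more than $2$, which implicitly treats the Step~\ref{3Vstep_GeCycle} \SRR as acting only on small bundles.)
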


\begin{proof}
    Recall that by the fact that $b+c< {2}/{3}$ and the definition of $k_{\max}$, we have that $k_{\max}\geq 3$. The only steps where the \PPAplusplus algorithm may allocate bundles of cardinality more than $2$ are Steps~\ref{3Vstep_GdeCycle} {(one bundle of cardinality $4$)}, ~\ref{3Vstep_AddGood} {(one bundle of cardinality at most $k_{\max}$)} and ~\ref{3Vstep_AddGoodInKMax} {(one bundle of cardinality exactly $k_{\max}$)}. {Given that} Step~\ref{3Vstep_GdeCycle} only runs when $k_{\max} \geq 4$, {we have that,} in any case, after each step, the cardinality of the allocated bundles does not exceed $k_{\max}$. 
\end{proof}

\begin{lemma}
\label{lem:maxValue}
Consider any agent $i \in N$ and any set of goods  $X$, with $|X|\leq k_{\max}$. If, from the perspective of agent $i$, $X$ consists of 
\begin{enumerate}
    \item exactly one good of value $b$ and possibly other goods of value $c$, then $v_i(X)< 5/6$ and $v_i(X\setminus\{g\}) <  2/3$, for any $g\in X$.
    \item only goods of value $c$, then $v_i(X)<  1/2$.
\end{enumerate}  
\end{lemma}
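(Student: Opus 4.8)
The plan is to derive everything from the two defining inequalities of $k_{\max}$ together with \cref{lem:valueOfc}. The key preliminary estimate is that the inequality $b+(k_{\max}-2)c < \tfrac23$ combined with $b > \tfrac12$ gives $(k_{\max}-2)c < \tfrac16$; since moreover $c < \tfrac16$ by \cref{lem:valueOfc}, this immediately yields $(k_{\max}-1)c < \tfrac13$ and $k_{\max}c < \tfrac12$. All three bounds claimed in the lemma then follow by elementary arithmetic, so there is no genuine obstacle; the only care needed is to keep the inequalities strict (they are, because $b>\tfrac12$ and $c<\tfrac16$ are strict) and to handle the degenerate subcases where $X$ has no good of value $c$ or is empty.

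For part~1, I would write $X$, from $i$'s perspective, as one good of value $b$ together with $t$ goods of value $c$, where $1+t = |X| \le k_{\max}$, hence $t \le k_{\max}-1$. Then $v_i(X) = b+tc \le b+(k_{\max}-1)c = \bigl(b+(k_{\max}-2)c\bigr)+c < \tfrac23 + c < \tfrac23 + \tfrac16 = \tfrac56$, using the defining inequality of $k_{\max}$ and then \cref{lem:valueOfc}. For the ``minus one good'' bound I would split into two cases: if $g$ is the unique good of value $b$, then $v_i(X\setminus\{g\}) = tc \le (k_{\max}-1)c < \tfrac13 < \tfrac23$; if $g$ has value $c$ (which forces $t\ge 1$), then $v_i(X\setminus\{g\}) = b+(t-1)c \le b+(k_{\max}-2)c < \tfrac23$ directly. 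The case $t=0$ is subsumed, since then $X\setminus\{g\}=\emptyset$ has value $0$.

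For part~2, writing $X$ as $t$ goods of value $c$ with $t = |X| \le k_{\max}$, I would bound $v_i(X) = tc \le k_{\max}c = \bigl((k_{\max}-2)c\bigr) + 2c < \tfrac16 + 2\cdot\tfrac16 = \tfrac12$, again invoking $(k_{\max}-2)c < \tfrac16$ and $c < \tfrac16$. This completes all cases.
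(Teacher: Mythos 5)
Your proof is correct and follows essentially the same route as the paper's: both rest on the defining inequality $b+(k_{\max}-2)c<\tfrac23$ together with $c<\tfrac16$ from \cref{lem:valueOfc}, and conclude by elementary arithmetic. Your explicit case split for $v_i(X\setminus\{g\})$ (according to whether $g$ is the $b$-good or a $c$-good) is just a more detailed rendering of the paper's one-line deduction from $v_i(X)<\tfrac23+c$.
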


\begin{proof}
For the first {part}, by the definition of $k_{\max}$, we have that $v_i(X)<{2}/{3} + c$, which implies $v_i(X\setminus\{g\}) < {2}/{3}$, for any $g\in X$. By \cref{lem:valueOfc} and the additivity of the values, we have $v_i(X)< {2}/{3} +  1/6 = 5/6$.
    For the second {part}, we have that $v_i(X)\leq k_{\max}\cdot c$. By the definition of $k_{\max}$ and \cref{lem:valueOfc}, we have that $k_{\max}\cdot c<{2}/{3} - b +2c< 2/3 -  1/2 +  2/6 =  1/2 $.
\end{proof}

\begin{corollary}\label{cor:Gde_part_a}
	Assuming that every bundle of an allocation $\X$ has cardinality at most $k_{\max}$, $i\in N_{[1+b,2)}(\X)$ is equivalent to agent $i$ seeing $X_i$ consisting of either three goods of value $b$ or one good of value $1$ and one good of value $b$, and the rest of its goods having value $c$.
\end{corollary}\smallskip

\begin{lemma}
\label{lem:Value1Cardinality1}
At the end of any iteration of the while loop of \PPAplusplus, for any agent $i \in N$ and any allocated bundle $X$, if\, $v_i(X)\in \{1, b, c\}$, then $|X|=1$. 
\end{lemma}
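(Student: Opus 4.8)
The plan is to argue by contradiction. Fix an agent $i$ and an allocated bundle $X$ at the end of some iteration of the while loop, and suppose $|X| = k \ge 2$; I will show $v_i(X) \notin \{1, b, c\}$. The key preliminary fact is that, by \cref{lem:maxCardinality}, we have $k \le k_{\max}$, so the quantitative bounds of \cref{lem:maxValue} are applicable whenever their structural hypotheses are met. Since we are in the regime $c > 0$, every good in $X$ has value in $\{1, b, c\}$ for agent $i$, all three of which are positive and at most $1$ (recall $b < a = 1$). Accordingly, partition $X$ (from $i$'s perspective) into the set of goods of value $1$, the set of goods of value $b$, and the set of goods of value $c$.

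First I would dispose of the case where $X$ contains a good that $i$ values at $1$: since $k \ge 2$, there is at least one further good, of value at least $c > 0$, so $v_i(X) \ge 1 + c > 1$, and hence $v_i(X) \notin \{1, b, c\}$. Next, if $X$ has no good of value $1$ but at least two goods of value $b$, then $v_i(X) \ge 2b > 1$ (using $b > 1/2$), again contradicting $v_i(X) \in \{1, b, c\}$.

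The two remaining cases are where the heavier lemmas come in. If $X$ has no good of value $1$ and exactly one good of value $b$, then, since $k \ge 2$, it has the form ``one good of value $b$ together with at least one good of value $c$''. On the one hand $v_i(X) \ge b + c > b > c$, and on the other hand \cref{lem:maxValue}(1) gives $v_i(X) < 5/6 < 1$; together these place $v_i(X)$ strictly between $b$ and $1$, so $v_i(X) \notin \{1, b, c\}$. Finally, if $X$ consists only of goods of value $c$, then $v_i(X) = kc \ge 2c > c$, while \cref{lem:maxValue}(2) gives $v_i(X) < 1/2 < b < 1$, so again $v_i(X) \notin \{1, b, c\}$. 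Having exhausted all cases, every allocated bundle $X$ with $v_i(X) \in \{1, b, c\}$ must satisfy $|X| = 1$.

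I do not expect a real obstacle here beyond careful bookkeeping of the cases; the one place that genuinely needs the earlier machinery is the case of a single good of value $b$ plus goods of value $c$, where $v_i(X)$ could a priori lie just above $b$ and one must rule out $v_i(X) = 1$ — this is exactly what the bound $v_i(X) < 5/6$ from \cref{lem:maxValue}(1) provides, and that bound in turn depends on the cardinality bound $k \le k_{\max}$ furnished by \cref{lem:maxCardinality}.
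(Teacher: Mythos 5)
Your proof is correct and follows essentially the same route as the paper's: both reduce to the composition of $X$ in terms of goods of value $1$, $b$, and $c$, and both invoke \cref{lem:maxValue} (via the cardinality bound of \cref{lem:maxCardinality}) to rule out the ``one $b$ plus $c$'s'' and ``all $c$'s'' configurations. The only difference is organizational — you case-split on the composition of $X$ while the paper case-splits on the value $v_i(X)$ — which does not change the substance of the argument.
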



\SetAlgorithmName{ALGORITHM}{Algorithm}

\newgeometry{margin = 0.7in} 

\renewcommand\mycommfont[1]{\scriptsize\ttfamily\textcolor{gray}{#1}}

\SetAlgorithmName{ALGORITHM}{Algorithm}

\setcounter{algocf}{8}

\begin{algorithm}[p]
\setstretch{1.05}
\small
\DontPrintSemicolon
\SetNoFillComment
\LinesNotNumbered 
\caption{\textsc{\sc Property-Preserving Partial Allocation$^{++}$ (\PPAplusplus)} 
$\left(\left(v_i\right)_{i 
\in N}, \X \right)$} \label{alg:3PAplusplus}
\SetKwComment{Comment}{/* }{ */}
\SetKw{Continue}{continue}
\SetKw{Break}{break}
\SetKw{Step}{Step}
\SetKw{and}{and}
\SetKwData{Kw}{}
\KwData{The values $v_i(g)$ for every  $i \in N$ and every  $g \in M$, and a partial allocation $\X$ in which $|X_i|=1$ for all $i \in N$.}
\KwResult{A Property-Preserving Partial Allocation, a partial allocation $\X^1$ satisfying Properties \ref{prop:a}, \ref{prop:b}, \ref{prop:F1}, and \ref{prop:F2} \vspace{2pt}
}
\While{$\PP(\X) \neq \emptyset$}
{
\nl \Kw{\texttt{Step 1}} \label{3Vstep_GrCycle}
\uIf{there is a jaundiced cycle $C$ in $\Grplus(\X)$}
{
$\X \gets \CR(\X, \Grplus)$\;
}
\nl \Kw{\texttt{Step 2}} \label{3Vstep_swap1To1}
\uElseIf{there is $i \in N$ with $|X_i|=1$ and a good $g \in \PP(\X)$ such that $v_i(g) > v_i(X_i)$}{
$X_i \gets \{g\}$\; 
}

\nl \Kw{\texttt{Step 3}}\label{3Vstep_swap2To1}
\uElseIf{there is $i \in N$ with $|X_i| \ge 2$ and a good $g \in \PP(\X)$ such that $v_i(g) > \frac{3}{2}v_i(X_i)$ 
}{
$X_i \gets \{g\}$\; 
}

\nl \Kw{\texttt{Step 4}} \label{3Vstep_equalVinPool}
\uElseIf{there is a jaundiced path $\Pi=(i_1, \ldots, i_{\ell})$ in $\Grplus(\X)$ and a good  $g\in \PP(\X)$ so that $v_{i_{\ell}}(X_{i_{\ell}})=v_{i_{\ell}}(g)$}
{
$(X_j)_{j \in \{i_1,\ldots,i_{\ell-1}\}} \gets \PR(\X,\Grplus,\Pi)$\; 
$X_{i_\ell}=\{g\}$\; 
}

\nl \Kw{\texttt{Step 5}} \label{3Vstep_2a}
\uElseIf(\ \ \tcp*[h]{there is a nonempty \iso set }){$R(\X)\neq \emptyset$ }{
$\X \gets \SRR(\X,R(\X))$\;
}

\nl \Kw{\texttt{Step 6}} \label{3Vstep_swap1To2}{
\uElseIf{there is $i \in N$ with $|X_i|=1$ and two goods $g_1,g_2 \in \PP(\X)$ such that $v_i(\{g_1,g_2\}) > v_i(X_i)$}
{
$X_i \gets \{g_1,g_2\}$\; 
\hspace{0.77cm} \textbf{else if} $\PP(\X)=\{g\}$ and  some $i \in N$, with $|X_i|=1$ and $v_i(X_i) = v_i(g)$, is a source of\, $\Gr(\X)$ \textbf{then}\;
$X_i \gets X_i \cup \{g\}$\;
}
}

\nl \Kw{\texttt{Step 7}}\label{3Vstep_swapManyTo2}
\uElseIf{there is $i \in N$ with $|X_i|\geq 2$ and a good 
$g^* \in \PP(\X)$ such that $v_i(g^*) > v_i(X_i\setminus\{g_1\})$ for some $g_1\in \arg\max_{g \in X_i}v_i(g)$}
{ $X_i \gets \{g_1,g^*\}$\; 
}

\nl \Kw{\texttt{Step 8}}\label{3Vstep_swap3bToab}
\uElseIf{there exists $i\in N$ and goods $g_1,g_2,g_3 \in X_i$ such that $v_i(g_j)=b$, for $j \in \{1,2,3\}$, and $v_i(g)=c$, for any other good $g \in X_i \setminus \{g_1,g_2,g_3\}$ (if any), \and there exists a good $g^* \in \PP(\X)$ such that $v_i(g^*)=1$}
{
$X_i \gets \{g_1, g^*\}$\; 
}

\nl \Kw{\texttt{Step 9}}\label{3Vstep_GeCycle}
\uElseIf{in the \enhplus $\Geplus(\X)$ there is a jaundiced cycle $C$}
{
Let $i \in N$ be such that $(i,s)$ is an edge of $C$ and $s$ is a source of $\Grplus(\X)$ \tcp*{existence by \cref{lem:helpToStep9}}
$\X \gets \CR(\X, \Geplus, C)$\;
Let $g \in X_{i}$ such that $v_{i}(g) = b$
\tcp*[r]{exists by the definition of $\Grplus$, since $X_{i}$ was $s$'s bundle right before}
Let $g^* \in \PP(\X)$ such that $v_{i}(g^*)=b$ \tcp*{exists by the definition of $\Geplus$} 
$X_{i} \gets \{g, g^*\}$\; 
}

\nl \Kw{\texttt{Step 10}} \label{3Vstep_GdeCycle}
\uElseIf{$k_{\max} \geq 4$ \and in the doubly enhanced graph $\Gde(\X)$ there is a jaundiced cycle or self-loop $C$}{
Let $i \in N$ be such that $(i,s)$ is an edge of $C$ and $s$ is a source of $\Geplus(\X)$ \;
\tcp{such an $i$ always exists since Step~\ref{3Vstep_GeCycle} was not executed, and hence all cycles in $\Geplus$ ---if any--- are unjaundiced}
$\X \gets \CR(\X, \Gde, C)$\;
Let $g_1,g_2,g_3 \in X_{i}$ be such that $v_{i}(g_j)=b$, for $j \in \{1,2,3\}$\; 
\tcp{$g_1,g_2,g_3$ exist by the definition of $\Gde$, since $X_{i}$ was $s$'s bundle right before}
Let $g^* \in \PP(\X)$ such that $v_{i}(g^*)=b$ \tcp*{exists by the definition of $\Gde$} 
$X_{i} \gets \{g_1,g_2,g_3, g^*\}$\; 
}

\nl \Kw{\texttt{Step 11}}\label{3Vstep_AddGood}
\uElseIf{there is $i\in N$ with a critical good $g$ and a source $s$ in $\Gde(\X)$ with $|X_s| < k_{\max}$}{
\tcp{when $k_{\max} = 3$,  $\Geplus(\X)$ should be used instead of $\Gde(\X)$ throughout this step, including the condition above}
$X_s \gets X_s \cup \{g\}$ \;
}

\nl \Kw{\texttt{Step 12}}\label{3Vstep_AddGoodInKMax}
\uElseIf{there is an agent $i$ with a critical good $g$, and for every source $s$ in 
$\Gde(\X)$, it holds that $|X_s| = k_{\max}$}
{
	\tcp{when $k_{\max} = 3$,  $\Geplus(\X)$ should be used instead of $\Gde(\X)$ throughout this step, including the condition above}
Let $s$ be a source of $\Gde(\X)$ such that there exists a path $\Pi$ from $s$ to $i$ in $\Gde(\X)$ and let $g'\in \arg\min_{g \in X_{s}}v_i(g)$\; 
$Y \gets X_{s} \setminus \{g'\}$\; 
$(X_j)_{j \in N, \exists (j,\ell) \in \Pi} \gets \PR(\X,\Gde,\Pi)$\;
$X_i \gets Y \cup \{g\}$\;  %
}

\nl \Kw{\texttt{Step 13}}\label{3Vstep_Break}
\Else {\Break\;}

}
\Return $\X$
\end{algorithm}
\renewcommand\mycommfont[1]{\footnotesize\ttfamily\textcolor{gray}{#1}}

\restoregeometry


\begin{proof}
    We consider three cases, depending on $v_i(X)$. 
    
    \noindent \textit{Case 1: $v_i(X)=1$}. Suppose, towards a contradiction, that $|X|>1$. Note that $X$ cannot contain any good $g$ such that $v_i(g)=1$ or any two goods $g_1,g_2$ such that $v_i(g_1)=v_i(g_2)=b$. {Indeed, since $b>1/2$ and} $c>0$, in each case that would imply that $v_i(X) > 1$. This means that $X$ contains at most one good $g$ such that $v_i(g)=b$, and for any other good $g' \in X$, it holds that $v_i(g')=c$. By \cref{lem:maxValue}, it holds that $v_i(X)< 5/6<1$, a contradiction. 
    
    \noindent \textit{Case 2: $v_i(X)=b$}. Suppose, towards a contradiction, that $|X|>1$. Note that $X$ cannot contain any good $g$ such that $v_i(g)=1$ or $v_i(g)=b$, because in that case we would have $v_i(X)>b$ (again, since $c >0$). This means that $X$ contains only goods $g$ such that $v_i(g)=c$. By \cref{lem:maxValue}, $v_i(X)< 1/2$, a contradiction, since $b > 1/2$.

    \noindent \textit{Case 3: $v_i(X)=c$}.  In this case the statement of the lemma holds trivially by the fact that $c >0$.
\end{proof}

\begin{lemma}
\label{lem:helpToStep9}
    Let $\X$ be a partial allocation for which $\Grplus(\X)$ only has unjaundiced cycles, and for which $R(\X) = \emptyset$, i.e., every \iso set  is empty 
    (i.e., Steps~\ref{3Vstep_GrCycle} and ~\ref{3Vstep_2a} of the \PPAplusplus algorithm do not run). Then, there exists at least one source $s$ in the \regplus $\Grplus(\X)$.
\end{lemma}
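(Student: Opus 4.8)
The plan is to argue by contradiction, combining a standard condensation argument with the observation that a strongly connected component joined only by grey edges is precisely a secluded set. Assume $\Grplus(\X)$ has no source. Consider the condensation of $\Grplus(\X)$, i.e., the DAG whose vertices are the strongly connected components (SCCs) of $\Grplus(\X)$ and which has an edge from one SCC to another whenever $\Grplus(\X)$ has an edge between them. Being a DAG, the condensation has at least one SCC $\mathcal{C}$ with no incoming edge; equivalently, $\Grplus(\X)$ has no edge $(j,i)$ with $j\notin \mathcal{C}$ and $i\in \mathcal{C}$.

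First I would rule out the case $|\mathcal{C}|=1$. If $\mathcal{C}=\{s\}$, then since $\Grplus(\X)$ has no self-loops (the envy graph has none, and neither the reduction to $\Gr$ nor the added grey edges introduce any) and $s$ has no incoming edge from outside $\mathcal{C}$, the agent $s$ would be a source of $\Grplus(\X)$, contradicting our assumption. Hence $|\mathcal{C}|\ge 2$ and $\mathcal{C}$ is strongly connected. Next, every edge $(u,v)$ of $\Grplus(\X)$ with both endpoints in $\mathcal{C}$ lies on a cycle: strong connectivity gives a (simple) path from $v$ back to $u$ inside $\mathcal{C}$, which together with $(u,v)$ closes up into a cycle through $(u,v)$. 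By hypothesis every cycle of $\Grplus(\X)$ is unjaundiced, so this edge is grey; thus all edges internal to $\mathcal{C}$ are grey. Since $|\mathcal{C}|\ge 2$, every $i\in \mathcal{C}$ has at least one outgoing grey edge inside $\mathcal{C}$, and by the definition of a grey edge this forces $|X_i|=1$ and $v_i(X_i)=1$.

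Finally I would conclude that $\mathcal{C}$ is a secluded set: every $i\in \mathcal{C}$ satisfies $v_i(X_i)=1$, and by the choice of $\mathcal{C}$ there is no edge $(j,i)\in \Grplus(\X)$ with $j\notin \mathcal{C}$ and $i\in \mathcal{C}$. As $\mathcal{C}\neq \emptyset$, this contradicts $R(\X)=\emptyset$; recall from \cref{lem:secluded} (and its proof) that every secluded set is contained in $R(\X)$, so $R(\X)=\emptyset$ means there is no nonempty secluded set. Therefore $\Grplus(\X)$ must have a source.

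I do not expect a serious obstacle here: the argument is essentially routine once the right viewpoint is fixed, and the only real content is recognizing that a strongly connected set linked solely by grey edges and with no incoming edges is exactly the object forbidden by $R(\X)=\emptyset$. The one point requiring a little care is the handling of self-loops and singleton SCCs (so that a singleton source SCC genuinely yields an in-degree-zero vertex), and making sure the grey-edge analysis really pins down $v_i(X_i)=1$ for every member of the candidate secluded set; both are immediate from the definitions of $\Grplus$ and of a grey edge.
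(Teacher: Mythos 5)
Your proof is correct and follows essentially the same route as the paper's: the paper iteratively contracts cycles of $\Grplus(\X)$ to obtain an acyclic graph, takes a source of that graph, and shows that if the source is a contracted component then it is a nonempty \iso set (forcing Step~\ref{3Vstep_2a} to run), which is exactly your condensation argument phrased with strongly connected components. Your explicit handling of singleton components and of why every internal edge of a source SCC must be grey matches the paper's reasoning, so there is nothing to add.
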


\begin{proof}
Suppose, towards a contradiction, that there are no sources in the \regplus $\Grplus(\X)$. From {$\Grplus(\X)$} we generate a new graph $H$ as follows: as long as there is a cycle $C$ in $\Grplus(\X)$, we create a new node $x$, and for any $i\notin C$, we add an edge $(i,x)$ if there exists an edge $(i,j)$ for some $j \in C$, and an edge $(x,i)$ if there exists an edge $(j,i)$ for some $j \in C$. Then, we remove all the nodes of $C$ and all the edges adjacent to them. At the end of this procedure $H$ is acyclic. {(Essentially, we iteratively contract each cycle into a single vertex.)}

Let $s$ be a source of $H$. If $s$ is also a node of $\Grplus(\X)$ then it is a source of $\Grplus(\X)$ and we get a contradiction. Therefore, we may assume that $s$ is not a node of $\Grplus(\X)$ and let $R'(\X)$ be the set of nodes of $\Grplus(\X)$ that $s$ ``replaced'' in all rounds of the aforementioned procedure. This means that any $i\in R'(\X)$ was part of some cycle of $\Grplus(\X)$, since it was replaced. By the premise of the lemma statement, each cycle in {$\Grplus(\X)$} is unjaundiced.  This implies that for every agent $i \in R'(\X)$, it holds that $v_i(X_i)=1$. Since $s$ is a source of $H$ {and by the way $H$ is constructed}, there is no edge $(j,i)$ of {$\Grplus(\X)$} for any $i\in R'(\X)$ and any $j \notin R'(\X)$. This means that $R'(\X)$ is a nonempty \iso set, 
{which directly contradicts the statement.}
\end{proof}

Before we continue with the statement of the remaining lemmata, we define the notion of an agent hierarchy, based on our potential function defined earlier in the section. 
\begin{definition}[Agent Hierarchy for a Partial Allocation]\label{def:hierarchy}
Given a partial allocation $\X$, we consider a \emph{hierarchy} of the agents according to their value for their bundle as follows: 
\begin{levels}[left=0.4\textwidth]
    \item $N_{[2,\infty)}(\X)$\label{level1}\smallskip
    \item $N_{[1+b,2)}(\X)$ \label{level2}\smallskip
    \item $N_{(1,1+b)}(\X)$ \label{level3}\smallskip
    \item $N_{[2/3,1)}(\X)$\label{level4}\smallskip
    \item $N_{1,(0,2/3)}(\X)$\label{level5}\smallskip
\end{levels}

When comparing two partial allocations $\X$ and $\X'$, we will say that an agent \emph{falls in hierarchy from $\X$ to $\X'$}, if in $\X$ and $\X'$ she is on Levels (L.$i$) and (L.$i'$) respectively, with $i < i'$. 
\end{definition}

\begin{lemma}
\label{lem:agentHierarchy}
Let $\X$ and $\X'$ be any two partial allocations obtained in the course of the \PPAplusplus algorithm, with $\X'$ being obtained after $\X$. 
{No agent $i \in N$ falls} in hierarchy from $\X$ to $\X'$.  
\end{lemma}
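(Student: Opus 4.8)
The plan is to reduce the statement to a single iteration of the main while loop of \PPAplusplus --- i.e., to the case where $\X'$ is obtained from $\X$ by executing exactly one of Steps \ref{3Vstep_GrCycle}--\ref{3Vstep_Break} --- and then chain these one-step statements along the finite run of the algorithm. It then suffices to check, step by step, that no agent whose bundle changes falls in the hierarchy of \cref{def:hierarchy}. A first reduction is that we need only worry about agents lying in one of the top four levels under $\X$, since an agent in $N_{1,(0,2/3)}(\X)$ already occupies the bottommost \cref{level5} and cannot fall; and by \cref{lem:maxValue,lem:Value1Cardinality1}, the case-4 facts $b>\tfrac12$, $b+c<\tfrac23$, $c<\tfrac16$ (\cref{lem:valueOfc}), and $|X_i|\le k_{\max}$ (\cref{lem:maxCardinality}), such agents have $|X_i|\ge 2$, and in \cref{level4} in fact $|X_i|\ge 3$ with a bundle made of exactly one good of value $b$ and some goods of value $c$ (so value in $[\tfrac23,\tfrac56)$). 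The main technical ingredient I would isolate is a ``monotonicity-up-to-$1$'' claim: whenever $v_i(X'_i)\ge v_i(X_i)$, agent $i$ does not fall, with the single possible exception of moving from a bundle of value in $[\tfrac23,1)$ to a singleton of value $1$, and this exception never occurs. The first half is immediate, because the level index, read as a function of bundle value, is weakly decreasing except for one upward jump exactly at value $1$ (the point $1$ sits in \cref{level5}, between \cref{level4} $=[\tfrac23,1)$ and \cref{level3} $=(1,1+b)$). For the second half, such a move would need $|X_i|\ge 3$ and a new bundle that is a single good (\cref{lem:Value1Cardinality1}), so $i$ acquired it along an edge $(i,j)$ of $\Grplus$, $\Geplus$, or $\Gde$ with $j$ holding it; but no such edge can exist, since $|X_i|>1$, $|X_j|=1$, and $v_i(X_i)\ge\tfrac23 v_i(X_j)$ means the edge was deleted when passing from the envy graph to $\Gr$ (\cref{def:Gr}), hence is absent from $\Grplus$, from $\Geplus$ (its extra edges have singleton tails, \cref{def:enhanced-envy-graph-plus}), and from $\Gde$ (its extra non-loop edges have tails of value $\ge 1+b$, \cref{def:doubly_enhanced-envy-graph-plus}), while self-loops change no bundle.

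Given this claim, most steps are routine. Steps \ref{3Vstep_GrCycle} and \ref{3Vstep_equalVinPool} only shift bundles along grey or genuine-envy edges of $\Grplus$ (with the terminal agent of the path in Step \ref{3Vstep_equalVinPool} keeping its value), so each involved agent retains a value-$1$ singleton or strictly increases its value; Step \ref{3Vstep_2a} runs $\SRR$ on the maximal \iso set, whose agents all have value $1$ (\cref{level5}); and for the single-agent pool steps I would run a short case analysis powered by the case-4 arithmetic. The only subtle points there are Step \ref{3Vstep_swapManyTo2} applied to an agent in \cref{level4} or \cref{level3} --- where a good of value $c$ cannot beat $v_i(X_i\setminus\{g_1\})$ (as $v_i(X_i)\ge\tfrac23>b+c$), forcing the incoming pool good to have value $b$ or $1$, so the agent rises to \cref{level3} or \cref{level2} --- and Step \ref{3Vstep_swap3bToab}, which moves an agent from a bundle of value in $[3b,2)\subseteq[1+b,2)$ to one of value $1+b$, staying in \cref{level2}. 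For the critical-good steps, Step \ref{3Vstep_AddGood} merely enlarges a source's bundle (value strictly up, never to an exact $1$ by \cref{lem:Value1Cardinality1}), and Step \ref{3Vstep_AddGoodInKMax} performs a $\PR$ along a $\Gde$-path (handled like Steps \ref{3Vstep_GrCycle}/\ref{3Vstep_equalVinPool} together with the enhanced/doubly-enhanced edge analysis of the next paragraph) and rebuilds $i$'s bundle from a $k_{\max}$-good source bundle with one good traded for $i$'s critical good, whose value I would pin down using $v_i(g)>\tfrac12 v_i(X_i)$ and the $k_{\max}$-cardinality/value bounds.

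The steps that require genuine care, and which I expect to be the main obstacle, are Steps \ref{3Vstep_GeCycle} and \ref{3Vstep_GdeCycle}: both first resolve a jaundiced cycle (or, for $\Gde$, also a self-loop) via $\CR$, then specially reassign a designated agent $i^*$ and run $\SRR$ on a subset of the remaining cycle agents. Agents moving along grey or genuine-envy edges of the resolved cycle are handled as above. For Step \ref{3Vstep_GeCycle} this already suffices for everyone else: each agent of $T(C)$ moves along an extra edge of $\Geplus$, hence had $|X_i|=1$ (\cref{def:enhanced-envy-graph-plus}) and starts in \cref{level5}, so it cannot fall even after $\SRR$ adds one good, while $i^*$ moves from a singleton to $\{g,g^*\}$ of value $2b\in(1,1+b)$, rising from \cref{level5} to \cref{level3}. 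For Step \ref{3Vstep_GdeCycle} the delicate part is that an agent $i$ moving along an extra edge of $\Gde$ need not be in \cref{level5}; here \cref{def:doubly_enhanced-envy-graph-plus}(a) guarantees $v_i(X_i)\in[1+b,2)$, i.e.\ $i$ starts in \cref{level2}, and \cref{def:doubly_enhanced-envy-graph-plus}(b) guarantees the source bundle $i$ inherits consists of three goods of value $b$ and the rest of value $c$, which --- using $3b>1+b$ and the $k_{\max}$ bounds --- has value in $[3b,2)\subseteq[1+b,2)$, so $i$ stays in \cref{level2}; and $i^*$ gets the bundle $\{g_1,g_2,g_3,g^*\}$ of value $4b\ge 2$, rising to \cref{level1}. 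In short, the edge conditions defining $\Geplus$ and $\Gde$ are precisely what make inheriting a source's bundle (plus at most one round-robin good) safe, and the whole argument boils down to checking this against the case-4 numerology.
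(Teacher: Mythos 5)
Your proposal is correct and follows essentially the same route as the paper: a step-by-step case analysis in which the value-decreasing steps (8, 9, 10, 12) are shown to keep agents at their level via the $\Geplus$/$\Gde$ edge conditions, and the value-non-decreasing steps can only cause a fall through the single anomaly $[\tfrac23,1)\to\{1\}$, which is ruled out because shrinking from $|X_i|\ge 2$ to a singleton forces a value gain by a factor strictly greater than $\tfrac32$ (equivalently, the relevant edge is absent from $\Grplus$, $\Geplus$, and $\Gde$). The only cosmetic difference is that you phrase this last exclusion via edge non-existence while the paper phrases it via the $\tfrac32$ gain, but these are the same fact.
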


\begin{proof}
An agent can only fall in hierarchy if $v_i(X_i') \neq v_i(X_i)$. We first consider the steps of the \PPAplusplus algorithm during which $v_i(X_i') < v_i(X_i)$, these are Steps~\ref{3Vstep_swap3bToab}, \ref{3Vstep_GeCycle}, \ref{3Vstep_GdeCycle} and \ref{3Vstep_AddGoodInKMax}. 
Consider Step~\ref{3Vstep_swap3bToab} and let  $i$ be the agent whose bundle changes in this step. Before Step~\ref{3Vstep_swap3bToab}, we had that $i \in N_{[1+b,2)}(\X)$ {(since $v_i(X_i) = 3b+(|X_i|-3)c \le 3b+(k_{\max}-3)c  = 2b + (b +(k_{\max}-3)c) < 2\cdot 2/3 +  2/3 = 2$, by the definition of $k_{\max}$ and \cref{lem:maxCardinality}, and $v_i(X_i)\ge 3b>1+b$)}, i.e., the agent was at \cref{level2}. After the step, she receives a bundle $X_i'$ for which $v_i(X_i')=1+b$; this means that $i \in N_{[1+b,2)}(\X')$. 
Next, consider Step~\ref{3Vstep_GeCycle}. The only agent(s) $i \in N$ for whom the value for their bundle possibly decreases are the ones with $|X_i|=1$ and $v_i(X_i)=1$. 
Those agents belong to the set of \cref{level5} in $\X$, i.e., $i \in N_{1,(0,2/3)}(\X)$, so they cannot fall in hierarchy by definition.
Step~\ref{3Vstep_AddGoodInKMax} is similar. Here the  agents $j \in N$ for whom the value for their bundle possibly decreases are either the ones with $|X_j|=1$ and $v_j(X_j)=1$ like in Step~\ref{3Vstep_GeCycle} (including the main agent $i$ of the step, by \cref{lem:3VcriticalGoods}) who cannot fall in hierarchy by definition, or the agents who drop in value but stay in $N_{[1+b,2)}(\X')$ because of edges which are unique to $\Gde$.
Finally, consider Step~\ref{3Vstep_GdeCycle} which is slightly more subtle. 
The set of agents for whom the value for their bundle possibly decreases {is for the most part exactly like in the case of Step~\ref{3Vstep_AddGoodInKMax}, only now we should also consider $i$ herself. However, it is easy to see that $v_{i}(X_{i}')=4b>2$, i.e., the agent is now at \cref{level1}.}

We lastly consider the remaining case, i.e., when $v_i(X_i') > v_i(X_i)$, and agent $i$ possibly falling in hierarchy from 
\cref{level4} to \cref{level5}, i.e., $i \in N_{[2/3,1)}(\X)$ and $i \in N_{1,(0,2/3)}(\X')$. For this to be possible, it must be the case that $2/3 \le v_i(X_i) < 1$ and $v_i(X_i')=1$. In turn, this can only happen if
$|X_i|>1$ and,
by \cref{lem:Value1Cardinality1}, $|X_i'|=1$. However, during the execution of the  \PPAplusplus algorithm, if an agent with $|X_i|\geq 2$ finds herself with a bundle {of higher value} $X_i'$ such that $|X_i'|=1$, her value must {have increased by a factor \textit{greater}} than $3/2$ (see Step~\ref{3Vstep_swap2To1} of the \PPAplusplus algorithm and 
{recall the definitions of the various envy graphs that are used in Steps~\ref{3Vstep_GrCycle}, \ref{3Vstep_equalVinPool}, \ref{3Vstep_GeCycle}, \ref{3Vstep_GdeCycle} and \ref{3Vstep_AddGoodInKMax}),}
leading to a contradiction.   
\end{proof}

\begin{lemma}
\label{lem:not=ValuesInP}
Let $\X$ be a partial allocation for which
\begin{enumerate}[label=(\alph*)]
    \item all the cycles in $\Grplus(\X)$ are unjaundiced, 
    \item all the {nontrivial} paths in $\Grplus(\X)$ ending at an agent $i$ with $v_i(X_i)=v_i(g)$ for some $g \in \PP(\X)$ are unjaundiced,
    \item all the \iso sets are empty ($R(\X)= \emptyset)$,
\end{enumerate}
(i.e., Steps~\ref{3Vstep_GrCycle}, \ref{3Vstep_equalVinPool} and \ref{3Vstep_2a} of the \PPAplusplus algorithm will not run).  Then, for any agent $i \in N$ such that either $v_i(X_i)=1$ or $i$ is not a source of the \regplus $\Grplus(\X)$, it holds that $v_i(X_i)\neq v_i(g)$ for any $g\in \PP(\X)$.
\end{lemma}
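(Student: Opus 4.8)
Lemma~\ref{lem:not=ValuesInP} asserts that, under the hypotheses that Steps~\ref{3Vstep_GrCycle}, \ref{3Vstep_equalVinPool}, and \ref{3Vstep_2a} are not applicable, no agent who either values her bundle at exactly $1$ or is not a source of $\Grplus(\X)$ can value her bundle equal to the value of some good in the pool. I plan to argue by contradiction: suppose there is an agent $i$ with $v_i(X_i) = v_i(g)$ for some $g \in \PP(\X)$, where $i$ is either such that $v_i(X_i) = 1$ or $i$ is not a source of $\Grplus(\X)$.

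**Main structure.** First I would dispense with the easy case where $i$ is not a source of $\Grplus(\X)$: then there is an incoming edge $(j,i)$ in $\Grplus(\X)$. Starting from such an edge and walking backwards against edges, I would build a maximal path in $\Grplus(\X)$ ending at $i$; its first vertex is a source (the walk cannot cycle back because by hypothesis~(a) all cycles in $\Grplus(\X)$ are unjaundiced, and a jaundiced incoming edge would combine with hypothesis~(b) to... actually I need to be slightly careful here). The cleaner approach: since $v_i(X_i) = v_i(g)$, by hypothesis~(b) \emph{any} path in $\Grplus(\X)$ ending at $i$ must be unjaundiced, so every edge on such a path is grey, meaning every agent $j$ along it has $v_j(X_j) = 1$ and in particular $v_i(X_i) = 1$ as well. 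So both sub-cases collapse into: it suffices to handle $v_i(X_i) = 1$, and moreover in the non-source sub-case we additionally know $i$ lies on a nontrivial unjaundiced path, i.e., $i$ is reachable from a source $s \ne i$ via grey edges, with $v_s(X_s) = 1$.

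**The case $v_i(X_i) = 1$.** Now $v_i(X_i) = v_i(g) = 1$, so $g$ is a good the pool contains that $i$ values at $1$; by Lemma~\ref{lem:Value1Cardinality1}, $|X_i| = 1$. I split on whether $i$ is a source of $\Grplus(\X)$. If $i$ \emph{is} a source: consider the singleton $\{g\}$ together with $X_i$; these could form an unjaundiced path of length zero at $i$ — but more to the point, I want to show the maximal \iso set is nonempty, contradicting hypothesis~(c). Consider the set $S$ of all agents reachable from $i$ in $\Grplus(\X)$ together with $i$ — no, the right object is: take the set $R$ of agents $j$ with $v_j(X_j)=1$ that are not reached by any agent of value $\ne 1$; I claim $i \in R$. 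Since $i$ is a source, no one envies $i$ in $\Grplus(\X)$, and $v_i(X_i) = 1$; by hypothesis~(c) $R(\X) = \emptyset$, so $i \notin R(\X)$, which forces the existence of a path in $\Grplus(\X)$ from some $j$ with $v_j(X_j) \ne 1$ to $i$ — but $i$ is a source, so no incoming edge exists, contradiction. Hence if $v_i(X_i)=1$ then $i$ cannot be a source. So we are in the situation from the previous paragraph: $i$ is reachable via a grey (unjaundiced) path from a source $s \ne i$ with $v_s(X_s) = 1$. Now I apply the same \iso-set reasoning to $s$: $s$ is a source with $v_s(X_s) = 1$, so $s \notin R(\X) = \emptyset$ means there is a path from an agent of value $\ne 1$ into $s$ — impossible since $s$ is a source. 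This is the contradiction that closes the proof.

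**Anticipated obstacle.** The delicate point is the interaction between the three hypotheses when $i$ is not a source: I must make sure that "not a source" genuinely forces, via hypothesis~(b), that $v_i(X_i) = 1$ and that $i$ sits on a grey path from a source — this needs the observation that grey edges only emanate from and point to agents of bundle-value $1$ (Definition~\ref{def:reduced-envy-graph-plus}), so an unjaundiced path ending at $i$ certifies $v_i(X_i) = 1$, and its source vertex $s$ (which exists by maximality, and is distinct from $i$ since the path has positive length) also has value $1$. Then the finishing blow is uniformly: \emph{any} source of $\Grplus(\X)$ with bundle-value $1$ would constitute (a singleton subset of) a nonempty \iso set, contradicting hypothesis~(c). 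I would phrase this last fact as a one-line sub-claim: if $s$ is a source of $\Grplus(\X)$ and $v_s(X_s) = 1$, then $\{s\}$ is a \iso set, hence $R(\X) \ne \emptyset$ — immediate from the definition of \iso set, since no edge $(j,s)$ exists in $\Grplus(\X)$ for $j \notin \{s\}$. The rest is bookkeeping with Lemma~\ref{lem:Value1Cardinality1} to pin down cardinalities.
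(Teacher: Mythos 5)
Your endgame observation is correct and is a genuinely nice simplification for one sub-case: if $s$ is a source of $\Grplus(\X)$ with $v_s(X_s)=1$, then $\{s\}$ is by itself a nonempty \iso set, contradicting hypothesis~(c); this disposes of the case where $i$ is a source more directly than the paper does. However, your reduction of the non-source case to this situation has two problems, one of which is fatal as written. First, a grey edge $(j,i)$ certifies $|X_j|=|X_i|=1$ and $v_j(X_j)=v_j(X_i)=1$, i.e., it speaks about agent $j$'s values, not agent $i$'s; so an unjaundiced path ending at $i$ does \emph{not} give you $v_i(X_i)=1$ (agent $i$ may value her own singleton bundle at $b$ or $c$ while her predecessor values it at $1$). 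The claimed collapse of both sub-cases into ``$v_i(X_i)=1$'' is therefore unjustified, although this alone could be routed around, since your source-hunting argument does not actually need $v_i(X_i)=1$.

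The fatal gap is the one you flag and never close: the assertion that a maximal path ending at $i$ starts at a source of $\Grplus(\X)$. Hypothesis~(a) says that every cycle is unjaundiced; it does not say there are no cycles. An all-grey cycle among the ancestors of $i$ is consistent with hypotheses (a) and (b), and in that situation the backward walk from $i$ gets trapped: a maximal simple path ending at $i$ may begin at a vertex lying on such a cycle, which is not a source, and nothing you have established prevents the ancestor set of $i$ from containing no source of $\Grplus(\X)$ at all. Your singleton trick then has nothing to bite on. Ruling this configuration out is exactly where the real work lies, and it is what the paper's proof does: it takes as candidate \iso set the \emph{entire} set $R'(\X)$ of agents lying on some path to $i$ (adding $i$ itself only when $v_i(X_i)=1$), shows all of them have value $1$ via their outgoing grey edges, shows no edge enters $R'(\X)$ from outside (using hypothesis~(a) precisely to exclude a jaundiced cycle created by an edge from $i$ back into $R'(\X)$ when $v_i(X_i)\neq 1$), and concludes that $R'(\X)$ is a nonempty \iso set contradicting~(c). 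If you replace your ``maximal path reaches a source'' step by this ancestor-set construction, your argument becomes the paper's; as it stands, the non-source case is not proved.
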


\begin{proof}
Suppose, towards a contradiction, that there exists some agent $i \in N$ that is either not a source of $\Grplus(\X)$ or for which $v_i(X_i)=1$, and some good $g \in \PP(\X)$ such that $v_i(g) = v_i(X_i)$. By the statement of the lemma, we know that any path $\Pi$ in $\Grplus(\X)$ is unjaundiced. Let $R'(\X) = \{j \in N: j \text{ is on some path $\Pi$ to $i$ in } \allowbreak \Grplus(\X)\}$, and add agent $i$ to $R'(\X)$ only if $v_i(X_i)=1$. {By the definition of gray edges,} since every agent $j \in R'(\X)$ lies on some unjaundiced path to agent $i$, we have $v_j(X_j)=1$. We remark that $R'(\X) \neq \emptyset$. Indeed, if there is at least one unjaundiced path ending at agent $i$, all the agents of that path would be included in $R'(\X)$. If there are no such paths, by the second condition of the statement of the lemma, there are no {nontrivial} paths ending at agent $i$ in $\Grplus(\X)$ in general. This implies that $i$ is a source of $\Grplus(\X)$; this would violate the assumption at the beginning of the proof, unless $v_i(X_i)=1$. In this case, however, $i \in R'(\X)$ and hence $R'(\X) \neq \emptyset$. We will next argue that $R'(\X)$ is a \iso set. 

Let $j \notin R'(\X)$; if such a $j$ does not exist then $R'(\X)$ is trivially a \iso set. Let $\ell \in R'(\X)$; such an $\ell$ exists since $R'(\X) \neq \emptyset$. Assume first that $j \neq i$. We argue that there is no edge $(j,\ell)$ in the $\Grplus(\X)$: indeed, if such an edge existed, there would be a path from $j$ to $i$ (via $\ell$, since there is a path from $j$ to $\ell$), contradicting the fact that $j \notin R'(\X)$. Since $j$ and $\ell$ were chosen arbitrarily, this implies that $R'(\X)$ is a \iso set. Now assume that $j = i$. Again we argue that there is no edge $(i,\ell)$ in the $\Grplus(\X)$: indeed, if such an edge existed, there would be a cycle $C$ including $i$ (through the edge $(i,\ell)$ and since there is a path from $\ell$ to $i$ by the definition of $R'(\X)$). Since $i = j \notin R'(\X)$ by assumption, we have that $v_i(X_i) \neq 1$. This implies that the edge $(i,\ell)$ is not gray, and $C$ is jaundiced, contradicting the first condition of the statement. The lemma follows.
\end{proof}

\begin{corollary}
\label{cor:only<ValuesInP}
Let $\X$ be a partial allocation {such that in the next} iteration of the while loop of the \PPAplusplus algorithm Steps~\ref{3Vstep_GrCycle}-\ref{3Vstep_2a} {will} not run. For any agent $i$ with $v_i(X_i)=1$, it holds that $v_i(g)<1$ for any $g\in \PP(\X)$.
\end{corollary}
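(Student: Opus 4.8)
The plan is to obtain the corollary as an immediate specialization of \cref{lem:not=ValuesInP} combined with the fact that in a $3$-value instance every good has value at most $a=1$. First I would check that the hypothesis of the corollary is strong enough to invoke \cref{lem:not=ValuesInP}: assuming none of Steps~\ref{3Vstep_GrCycle}--\ref{3Vstep_2a} ran in the current iteration implies, in particular, that Steps~\ref{3Vstep_GrCycle}, \ref{3Vstep_equalVinPool}, and \ref{3Vstep_2a} did not run, and these are precisely conditions (a), (b), (c) in the statement of \cref{lem:not=ValuesInP}. So \cref{lem:not=ValuesInP} applies to the partial allocation $\X$.

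Next I would fix an agent $i$ with $v_i(X_i)=1$. Since "$v_i(X_i)=1$" is one of the two trigger conditions covered by \cref{lem:not=ValuesInP} (so one need not say anything about whether $i$ is a source of $\Grplus(\X)$), the lemma yields $v_i(g)\neq v_i(X_i)=1$ for every $g\in \PP(\X)$. Finally, because we are in a $3$-value instance with $a=1\ge b\ge c\ge 0$, every single good $g$ satisfies $v_i(g)\in\{1,b,c\}\subseteq[0,1]$, hence $v_i(g)\le 1$; together with $v_i(g)\neq 1$ this forces $v_i(g)<1$, which is the claim.

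I do not expect any genuine obstacle: the corollary is essentially a restatement of the $v_i(X_i)=1$ case of \cref{lem:not=ValuesInP} plus the trivial observation that no good exceeds the top value $a=1$. The only things to be careful about are matching the step labels (Step~\ref{3Vstep_GrCycle} is ``Step 1'', Step~\ref{3Vstep_equalVinPool} is ``Step 4'', Step~\ref{3Vstep_2a} is ``Step 5'') so that the hypotheses of \cref{lem:not=ValuesInP} are correctly recognized as implied, and using the $3$-value structure (rather than, say, Step~\ref{3Vstep_swap1To1} not running) to rule out $v_i(g)>1$.
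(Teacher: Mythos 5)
Your proposal is correct and matches the paper's proof, which simply states that the corollary "follows immediately by" \cref{lem:not=ValuesInP}; you have merely spelled out the two routine details (that the corollary's hypothesis subsumes conditions (a)--(c) of the lemma, and that the $3$-value structure bounds every $v_i(g)$ by $a=1$ so that $v_i(g)\neq 1$ forces $v_i(g)<1$).
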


\begin{proof}
The corollary follows immediately by \cref{lem:not=ValuesInP}.
\end{proof}

\begin{lemma}
\label{lem:prop(f1)}
Let $\X$ be a partial allocation {such that in the next} iteration of the while loop of the \PPAplusplus algorithm Steps~\ref{3Vstep_GrCycle}-\ref{3Vstep_2a} {will} not run. Then $\X$ satisfies \cref{prop:F1}. 
\end{lemma}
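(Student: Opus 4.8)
The plan is to split on the value $v_i(X_i)$, since membership $i\in N_{1,(0,2/3)}(\X)$ means precisely that $v_i(X_i)=1$ or $v_i(X_i)<\tfrac23$. Throughout, recall that every value lies in $\{1,b,c\}$ with $1>b>c>0$ (as $c>0$ under the standing assumption of the section), so the goal ``$v_i(g)<1$'' is the same as ``$v_i(g)\neq 1$''. If $v_i(X_i)=1$, there is nothing to do: this is exactly \cref{cor:only<ValuesInP}, which shares our hypothesis that Steps~\ref{3Vstep_GrCycle}--\ref{3Vstep_2a} did not run and already asserts $v_i(g)<1$ for every $g\in\PP(\X)$.

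So assume $v_i(X_i)<\tfrac23$ and suppose, towards a contradiction, that some $g\in\PP(\X)$ has $v_i(g)=1$; then $v_i(g)=1>\tfrac32 v_i(X_i)>v_i(X_i)$. I would now distinguish three cases according to $|X_i|$. If $|X_i|=1$, the pair $i,g$ satisfies the precondition of Step~\ref{3Vstep_swap1To1} (an agent holding a single good who strictly prefers a pooled good), contradicting that Step~\ref{3Vstep_swap1To1} was not triggered. If $|X_i|=2$, then $v_i(g)>\tfrac32 v_i(X_i)$ makes $i,g$ satisfy the precondition of Step~\ref{3Vstep_swap2To1}, again a contradiction. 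If $|X_i|\ge3$, let $g_1\in\arg\max_{g'\in X_i}v_i(g')$; then $v_i(X_i\setminus\{g_1\})=v_i(X_i)-v_i(g_1)<v_i(X_i)<1=v_i(g)$, so $i,g$ satisfy the precondition of Step~\ref{3Vstep_swapManyTo2}.

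The third case is the one I expect to require care, since Step~\ref{3Vstep_swapManyTo2} is not among Steps~\ref{3Vstep_GrCycle}--\ref{3Vstep_2a}. The clean way to close it is to show that, in any partial allocation where Steps~\ref{3Vstep_GrCycle}--\ref{3Vstep_2a} are inapplicable, no agent can simultaneously hold a bundle of cardinality at least $3$ of value below $\tfrac23$ while a pooled good of value $1$ (to her) exists. Such a low-value bundle is, from $i$'s viewpoint, either entirely of goods of value $c$ or exactly one good of value $b$ together with goods of value $c$ (it contains no good of value $1$ since $v_i(X_i)<1$, and at most one good of value $b$ since $b>\tfrac12$); one then checks, using that cardinality-$\ge3$ bundles can only be produced by Steps~\ref{3Vstep_GdeCycle}, \ref{3Vstep_AddGood}, \ref{3Vstep_AddGoodInKMax} (cf.\ \cref{lem:maxCardinality}) together with \cref{lem:agentHierarchy}, \cref{lem:Value1Cardinality1}, and \cref{lem:maxValue}, that no such state is reachable while the earlier steps remain inapplicable. (In the way this lemma is actually invoked, on the output $\X^1$ of \PPAplusplus, \emph{every} step up to Step~\ref{3Vstep_AddGoodInKMax} is inapplicable, so the mere fact that Step~\ref{3Vstep_swapManyTo2} would apply is already contradictory; by contrast, the first two cases are completely routine.) Thus in all cases we reach a contradiction, so $v_i(g)\neq 1$, i.e., $v_i(g)<1$, and $\X$ satisfies \cref{prop:F1}.
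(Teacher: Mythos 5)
Your handling of the two cases the paper actually treats coincides with the paper's own proof: the $v_i(X_i)=1$ case is delegated to \cref{cor:only<ValuesInP}, and the $v_i(X_i)<\frac23$ case with $|X_i|\in\{1,2\}$ is closed by observing that Step~\ref{3Vstep_swap1To1} or Step~\ref{3Vstep_swap2To1} would otherwise have fired. Where you go beyond the paper is the subcase $|X_i|\ge 3$ with $v_i(X_i)<\frac23$: the paper's proof consists of exactly those two sentences and never mentions this case, even though such bundles are not obviously excluded (for instance, a source that acquires a third good of value $c$ in Step~\ref{3Vstep_AddGood} still has value $3c<\frac12$). You are right that the step which would naturally dispose of it, Step~\ref{3Vstep_swapManyTo2}, is not among Steps~\ref{3Vstep_GrCycle}--\ref{3Vstep_2a} and hence is not excluded by the lemma's hypothesis, so closing the case requires either a reachability argument (no such bundle can coexist with a pooled good its owner values at $1$) or an implicit strengthening of the hypothesis. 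Your treatment of that case is only a sketch --- you name the lemmata one would combine rather than carrying out the combination --- so strictly speaking your proof is incomplete there; but since the paper omits the case entirely, you have exposed an imprecision rather than introduced one. Note also that your parenthetical fallback (that on the final output every step up to Step~\ref{3Vstep_AddGoodInKMax} is inapplicable) does not cover all uses of the lemma: it is also invoked inside the induction of \cref{lem:prop(f2)} at iterations in which a step as early as Step~\ref{3Vstep_swap1To2} runs, where the condition of Step~\ref{3Vstep_swapManyTo2} is not known to fail, so the reachability argument is the route that would actually have to be completed.
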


\begin{proof}
    For any agent $i \in N$ with $v_i(X_i)< 2/3$ and any good $g\in \PP(\X)$, it must be the case that $v_i(g)<1$, as otherwise one of Steps~\ref{3Vstep_swap1To1} or \ref{3Vstep_swap2To1} of the \PPAplusplus algorithm would run. If $v_i(X_i)=1$, by \cref{cor:only<ValuesInP}, it holds that $v_i(g)<1$ for any good $g\in \PP(\X)$.
\end{proof}

\begin{lemma}
\label{lem:prop(f2)}
    Let $\X$ be a partial allocation at the end of any iteration of the while loop of the \PPAplusplus algorithm such that $\PP(\X) \neq \emptyset$. Then $\X$ satisfies \cref{prop:F2}. 
\end{lemma}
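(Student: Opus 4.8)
\emph{Plan.} The plan is to prove, by induction on the number of completed iterations of the while loop, that \cref{prop:F2} holds at the end of every iteration (the hypothesis $\PP(\X)\neq\emptyset$ is only there to discard iterations in which a round-robin subroutine empties the pool midway, where there is nothing to argue, so we may assume any such subroutine runs to completion). The base case is trivial, since $\X^0$ consists of singletons. For the inductive step let $\X,\X'$ be the allocations at the start and end of the iteration. Since no agent falls in hierarchy (\cref{lem:agentHierarchy}), $N_{1,(0,2/3)}(\X')\subseteq N_{1,(0,2/3)}(\X)$, so it suffices to check, for every $i\in N_{1,(0,2/3)}(\X)$, that no bundle $X'_j$ with $j\ne i$ and $|X'_j|\ge 2$ contains a good valued $1$ by $i$. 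By \cref{prop:F2} applied to $\X$ this is automatic for every such $X'_j$ that equals a bundle of size at least $2$ already present in $\X$; combined with \cref{lem:Value1Cardinality1} and $b<2/3$ (which shows that if $i\in N_{1,(0,2/3)}(\X)$ has $|X_i|\ge2$ then $v_i(X_i)<2/3$, so every good of $X_i$ is worth at most $b<1$ to $i$), this settles the case where $X'_j$ is a reshuffled or previously-held bundle. So the whole argument concerns the goods that the current step actually moves into, or keeps in, the bundles it modifies.

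The key auxiliary fact, used throughout, is: \emph{if $s$ is a source of $\Grplus(\X)$ — hence also of $\Geplus(\X)$ and $\Gde(\X)$, these being supergraphs — then $v_i(g)<1$ for every $g\in X_s$ and every $i\in N_{1,(0,2/3)}(\X)$.} If $|X_s|=1$ this holds because $v_i(g)=1$ would give $v_i(X_s)=1\ge v_i(X_i)$, producing either an envy edge $(i,s)$ that survives into $\Gr(\X)\subseteq\Grplus(\X)$ (when $v_i(X_i)<2/3$) or a grey edge $(i,s)$ in $\Grplus(\X)$ (when $v_i(X_i)=1$, so $|X_i|=1$ by \cref{lem:Value1Cardinality1}) — either contradicting that $s$ is a source; moreover, any edge $(i,s)$ deleted in passing from $G(\X)$ to $\Gr(\X)$ already forces $v_i(X_s)\le\frac32 v_i(X_i)<1$. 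If $|X_s|\ge2$ the fact is just \cref{prop:F2} at $\X$, the case $i=s$ being harmless by the observation in the previous paragraph.

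The case analysis over the executed step then goes as follows. For Steps~\ref{3Vstep_GrCycle}--\ref{3Vstep_equalVinPool} and the terminating Step~\ref{3Vstep_Break}, every bundle of $\X'$ of size at least $2$ is one of the bundles of $\X$ (the step only reshuffles bundles along a cycle or path, swaps a bundle with the pool, or shrinks one), so the claim transfers. For Steps~\ref{3Vstep_swap1To2}--\ref{3Vstep_swap3bToab} and Steps~\ref{3Vstep_GeCycle}--\ref{3Vstep_AddGoodInKMax}, every newly created or enlarged bundle is assembled from goods of four types: (a) goods drawn from $\PP(\X)$; (b) goods drawn from the bundle of a source of $\Grplus(\X)$, $\Geplus(\X)$, or $\Gde(\X)$; (c) goods the step explicitly selects to have value $b$ for the triggering agent (possibly worth $1$ to others, but then of type (a)); and (d) whole bundles of $\X$ that are merely reassigned. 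Type (b) is handled by the auxiliary fact, type (c) trivially, type (d) by \cref{prop:F2} at $\X$. Type (a) is handled by \cref{prop:F1}: whenever any of these steps runs, the conditions of Steps~\ref{3Vstep_GrCycle}--\ref{3Vstep_2a} all fail on $\X$, which is exactly the hypothesis of \cref{lem:not=ValuesInP} and, with it, of the proof of \cref{lem:prop(f1)}, so \cref{prop:F1} holds at $\X$; in particular the critical good that Steps~\ref{3Vstep_AddGood}--\ref{3Vstep_AddGoodInKMax} introduce comes from the pool and so is worth less than $1$ to every agent of $N_{1,(0,2/3)}(\X)$.

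The remaining case, Step~\ref{3Vstep_2a}, is the principal obstacle: here the round-robin is run precisely because $R(\X)\neq\emptyset$, so the hypothesis of \cref{lem:not=ValuesInP} fails and \cref{prop:F1} need not hold at $\X$. One argues directly: every agent of $R(\X)$ gains a good, hence rises out of $N_{1,(0,2/3)}$, so the only enlarged bundles are the $X_j\cup\{h_j\}$ with $j\in R(\X)$ and $h_j$ the good $j$ picked; the singleton $X_j$ cannot be valued $1$ by any $i\in N_{1,(0,2/3)}(\X)\setminus R(\X)$, since that would place an (envy or grey) edge from $i$ into $R(\X)$, which a \iso set by definition forbids, and $v_i(h_j)<1$ follows from the failure on $\X$ of the swap Steps~\ref{3Vstep_swap1To1}, \ref{3Vstep_swap2To1} when $|X_i|\in\{1,2\}$. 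The leftover subcase — $i$ with $|X_i|\ge3$ and $v_i(X_i)<2/3$ facing a pool good worth $1$ to her — is exactly where \cref{prop:F1} would be needed, and the clean fix is to strengthen the induction so that \cref{prop:F1} is carried along with \cref{prop:F2}: a step only shrinks the pool or moves an agent up the hierarchy out of $N_{1,(0,2/3)}$, and the goods a step returns to the pool come either out of bundles of size at least $2$ (covered by \cref{prop:F2} at $\X$) or out of singletons handled by the source/edge structure, so \cref{prop:F1} is preserved as well. I expect this simultaneous bookkeeping — exactly which agents survive in $N_{1,(0,2/3)}$ after the round-robin Steps~\ref{3Vstep_2a}, \ref{3Vstep_GeCycle}, \ref{3Vstep_GdeCycle}, and how it meshes with the source-of-$\Gde$ structure used in Steps~\ref{3Vstep_AddGood}--\ref{3Vstep_AddGoodInKMax} — to be the delicate heart of the full proof.
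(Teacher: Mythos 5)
Your skeleton is the same as the paper's: induction over iterations of the while loop, the containment $N_{1,(0,2/3)}(\X')\subseteq N_{1,(0,2/3)}(\X)$ via \cref{lem:agentHierarchy}, the induction hypothesis for size-$\ge 2$ bundles that are merely reassigned, \cref{lem:prop(f1)} for goods drawn from the pool in Steps~\ref{3Vstep_swap1To2}--\ref{3Vstep_AddGoodInKMax} (where Steps~\ref{3Vstep_GrCycle}--\ref{3Vstep_2a} have all failed), and the source structure for goods drawn from a source's bundle. Your auxiliary ``source fact'' is a clean packaging of what the paper uses implicitly. (One slip: a source of $\Grplus$ need \emph{not} be a source of the supergraphs $\Geplus$ and $\Gde$ --- the implication runs the other way --- but since the sources invoked in Steps~\ref{3Vstep_GeCycle}--\ref{3Vstep_AddGoodInKMax} are sources of the larger graphs and hence of $\Grplus$, your fact still applies where you use it.)

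The genuine gap is exactly where you place it, but your proposed repair does not work. For Step~\ref{3Vstep_2a} you leave open the subcase of an agent $i$ with $|X_i|\ge 3$ and $v_i(X_i)<2/3$ facing a pool good worth $1$ to her, and you propose to close it by carrying \cref{prop:F1} as an additional invariant through the induction. But \cref{prop:F1} is \emph{not} preserved at the end of every iteration: Step~\ref{3Vstep_swap1To1}, for instance, returns the triggering agent's old singleton to the pool, and nothing in \cref{prop:a,prop:b,prop:F2} prevents that good from being worth $1$ to some other agent $j$ with $v_j(X_j)<2/3$ --- \cref{prop:F2} is silent on singletons, and envy towards a singleton bundle is permitted. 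This is precisely why the paper establishes \cref{prop:F1} only conditionally (\cref{lem:prop(f1)} assumes Steps~\ref{3Vstep_GrCycle}--\ref{3Vstep_2a} all failed) rather than as an end-of-iteration invariant; your sketch of why ``the goods a step returns to the pool'' are harmless does not cover this case. The paper's own treatment of Step~\ref{3Vstep_2a} instead argues directly from the failure of Steps~\ref{3Vstep_swap1To1} and \ref{3Vstep_swap2To1}, from \cref{cor:only<ValuesInP} for agents with $v_j(X_j)=1$, and from the definition of a \iso set for the goods $X_i$ themselves, without routing through a global \cref{prop:F1}. So what is missing from your write-up is a direct argument for this subcase, and the strengthened-induction route you sketch cannot be completed as stated.
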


\begin{proof}
We will prove the lemma by induction on the number of iterations of the while loop of the \PPAplusplus algorithm. For the base case, \cref{prop:F2} is trivially satisfied for the input allocation $\X$ before the execution of the first iteration of the while loop, since $|X_i|=1$ for all $i \in N$. 

Suppose that \cref{prop:F2} holds at the beginning of some iteration of the while loop, {when the allocation is $\X$}; we will show that after the end of the iteration, {when the allocation becomes $\X'$}, regardless of which step ran, the property still holds. Recall that, by \cref{lem:agentHierarchy}, the set $N_{1,(0,2/3)}(\X)$ can only become smaller, 
{so if $i\in N_{1,(0,2/3)}(\X') \subseteq N_{1,(0,2/3)}(\X)$, we know that \cref{prop:F2} is satisfied for $i$ in $\X$ by the induction hypothesis.}
\smallskip 

\noindent \textit{If Step~\ref{3Vstep_GrCycle} is executed:} {The set of bundles remains unchanged and no agent's value for her bundle is reduced. 
Therefore, for any agent $i\in N_{1,(0,2/3)}(\X')$, \cref{prop:F2} \emph{almost}  holds by the induction hypothesis.
The only additional bundle we may need to take into account for such an agent $i$ is $X_i$ itself, if the latter happened to change hands and $|X_i|\ge 2$. However, since $i\in N_{1,(0,2/3)}(\X)$, for any $g\in X_i$ we know that $v_i(g)<1$. We conclude that \cref{prop:F2} is satisfied in this case.} \smallskip 

\noindent {\textit{If Step~\ref{3Vstep_swap1To1} or Step~\ref{3Vstep_swap2To1} is executed:} The only bundle that is modified, say from $X_i$ to  $X'_i$, 
consists of a single good. Thus, \cref{prop:F2} holds by the induction hypothesis. } \smallskip

\noindent {\textit{If Step~\ref{3Vstep_equalVinPool} is executed:} This is very similar to Step~\ref{3Vstep_GrCycle}, the only difference being that a single bundle changed, namely, $X_{i_1}$ was removed and $X'_{i_\ell} = \{g\}$ was added to $\X'$. Since 
$X'_{i_\ell}$ is irrelevant for other agents with respect to \cref{prop:F2} (as $|X'_{i_\ell}| = 1$), \cref{prop:F2} is satisfied as in the case of Step~\ref{3Vstep_GrCycle}. }  \smallskip

\noindent \textit{If Step~\ref{3Vstep_2a} is executed:} The only agents whose bundles are modified are those in the \iso set $R(\X)$. In particular, each agent $i \in R(\X)$ 
{has a singleton in $\X$ (by \cref{lem:Value1Cardinality1}) and} receives one additional good from the pool in $\X'$. Hence, $i \notin N_{1,(0,2/3)}(\X')$. Consider now such an agent $i\in R(\X)$ and any agent $j \in N_{1,(0,2/3)}(\X')$. 
{Let $g$ be the unique good in $X'_i \setminus X_i$ (which belongs to $\PP(\X)$). 
First consider the case where $v_j(X_j)<1$.}
Since Step~\ref{3Vstep_swap1To1} and Step~\ref{3Vstep_swap2To1} did not run, it follows that $v_j(g)<1$. Moreover, by the definition of $R(\X)$, we have that 
{$(j,i)\notin E(\Grplus(\X))$} and thus $v_j(g')<1$ for {the unique} $g' \in X_i$. If $v_j(X_j)=1$, by \cref{cor:only<ValuesInP}, we have that $v_j(g)<1$. Moreover, by the definition of $R(\X)$, we have that $v_j(X_i) < v_j(X_j)=1$ (otherwise there would be a gray edge $(j,i)$), and again we have $v_j(g')<1$ for {the unique} $g' \in X_i$. This establishes that \cref{prop:F2} holds for $\X'$. \smallskip

\noindent {\textit{If any of Steps~\ref{3Vstep_swap1To2} -- \ref{3Vstep_swap3bToab}  and \ref{3Vstep_AddGood} is executed:} The only bundle that is modified from $\X$ to $\X'$ now consists only of goods from $\PP(\X)$ and---possibly---one or more goods from a bundle $X$ of $\X$, such that $|X|\geq 2$ or $X$ belonged to a source in $\Grplus(\X)$. By the definition of $\Grplus(\X)$,} the induction hypothesis, and  \cref{lem:prop(f1)}, we conclude that \cref{prop:F2} holds for $\X'$.  \smallskip

\noindent {\textit{If any of Steps~\ref{3Vstep_GeCycle}, \ref{3Vstep_GdeCycle}, or  \ref{3Vstep_AddGoodInKMax} is executed:} Now some of the bundles remain the same, although they may change hands via \CR or \PR, as in Steps~\ref{3Vstep_GrCycle} and \ref{3Vstep_equalVinPool}. With respect to these bundles, \cref{prop:F2} is satisfied by the induction hypothesis, like in the case of Steps~\ref{3Vstep_GrCycle} and \ref{3Vstep_equalVinPool}. 
However, other bundles may be modified---one for each of Steps~\ref{3Vstep_GeCycle}, \ref{3Vstep_GdeCycle}, and  \ref{3Vstep_AddGoodInKMax}. Any bundle that is modified from $\X$ to $\X'$ now consists only of goods from $\PP(\X)$ and from a subset of a bundle $X$ of $\X$, such that $|X|\geq 2$ or $X$ belonged to a source in $\Grplus(\X)$, like in Steps~\ref{3Vstep_swap1To2} -- \ref{3Vstep_swap3bToab}  and \ref{3Vstep_AddGood}. 
With respect to all these modified bundles, \cref{prop:F2} is satisfied by the definition of $\Grplus(\X)$, the induction hypothesis, and  \cref{lem:prop(f1)}, like in the case of Steps~\ref{3Vstep_swap1To2} -- \ref{3Vstep_swap3bToab}  and \ref{3Vstep_AddGood}.}
\end{proof}

\begin{lemma}
\label{lem:One_b}
Let $\X$ be a partial allocation at the end of any iteration of the while loop of the \PPAplusplus algorithm, and let $X$ be any allocated bundle with $|X|\geq 3$. For any agent $i\in N_{1,(0,2/3)}(\X)$, {if $g \in \argmax_{x\in X} v_i(x)$, then}
\begin{itemize}
    \item[-] $v_i(g)\leq b$;
    \item[-] for any $g' \in X\setminus\{g\}$, it holds that $v_i(g')=c$.
\end{itemize}
\end{lemma}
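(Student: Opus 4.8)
The plan is to settle the first bullet and the case $X=X_i$ directly, and then to prove the second bullet for $X=X_j$ with $j\neq i$ by induction on the iterations of the while loop, in the spirit of the proof of \cref{lem:prop(f2)}. Recall that in this regime $v_i$ takes values in $\{1,b,c\}$ with $1>b>c>0$, $b>\tfrac12$, $b+c<\tfrac23$ and $c<\tfrac16$ (\cref{lem:valueOfc}), that every allocated bundle has cardinality at most $k_{\max}$ (\cref{lem:maxCardinality}), and that — a fact I will use repeatedly — $(k_{\max}-1)c<\tfrac23-b+c<\tfrac13<b$ by the definition of $k_{\max}$ and \cref{lem:valueOfc}. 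If $X=X_j$ with $j\neq i$, then $|X|\geq 2$, so \cref{prop:F2} yields $v_i(g)<1$, hence $v_i(g)\leq b$, for every $g\in X$; this is the first bullet. If $X=X_i$, then $|X_i|\geq 3\geq 2$ together with \cref{lem:Value1Cardinality1} forces $v_i(X_i)\neq 1$, so $v_i(X_i)<\tfrac23$ because $i\in N_{1,(0,2/3)}(\X)$; therefore $X_i$ contains no good of value $1$ to $i$ (first bullet) and at most one good of value $b$ to $i$, since two would give $v_i(X_i)\geq 2b>1$ (second bullet). Note, for later use, that whenever an allocated bundle $Y$ satisfies $v_i(Y)<\tfrac23$ the same argument shows $Y$ has no good of value $1$ to $i$ and at most one of value $b$, so $v_i(Y\setminus\{g\})=(|Y|-1)c$ for $g\in\argmax_{x\in Y}v_i(x)$. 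It remains to show that any $X=X_j$ with $j\neq i$ and $|X_j|\geq 3$ contains at most one good of value $b$ from the perspective of any $i\in N_{1,(0,2/3)}(\X)$.

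I would prove, by induction on the number of completed iterations of the while loop, the statement: \emph{at the end of every iteration, for every $i\in N_{1,(0,2/3)}(\X)$ and every allocated bundle $B$ with $|B|\geq 3$, $B$ contains at most one good of value $b$ to $i$} (no good of value $1$ to $i$ being automatic from \cref{prop:F2}). The base case is vacuous, since initially all bundles are singletons. For the inductive step, by \cref{lem:agentHierarchy} the set $N_{1,(0,2/3)}$ can only shrink when passing to the next allocation; consequently a bundle that is merely relocated with its contents unchanged — as in Step~\ref{3Vstep_GrCycle}, in the \PR part of Step~\ref{3Vstep_equalVinPool}, and in every cycle/path resolution nested inside the later steps — keeps the property by the induction hypothesis, and Steps~\ref{3Vstep_swap1To1}-\ref{3Vstep_swap3bToab} only produce modified bundles of cardinality at most $2$ (in Step~\ref{3Vstep_2a} the members of the \iso set have singleton bundles, by \cref{lem:Value1Cardinality1}). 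Hence only Steps~\ref{3Vstep_GeCycle}, \ref{3Vstep_GdeCycle}, \ref{3Vstep_AddGood} and \ref{3Vstep_AddGoodInKMax} remain to be checked.

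In each of those four steps, every newly created bundle $B'$ with $|B'|\geq 3$ satisfies $B'\subseteq B\cup\{h\}$, where $h$ is a good taken from the pool and $B$ is the bundle — in the allocation just before the augmentation — of some agent $s$ that is a source of (at least) $\Grplus$: namely the source-of-$\Gde$ bundle $X_s$ in Steps~\ref{3Vstep_AddGood} and \ref{3Vstep_AddGoodInKMax}, the source-of-$\Geplus$ bundle $X_{s_{i^*}}$ in Step~\ref{3Vstep_GdeCycle}, and the bundle $X_{s_j}$ reached along an enhanced edge $(j,s_j)$ of the cycle $C$ in Step~\ref{3Vstep_GeCycle}. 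Fix $i\in N_{1,(0,2/3)}(\X')$, so also $i\in N_{1,(0,2/3)}(\X)$ by \cref{lem:agentHierarchy}. If $i$ owns $B'$ in $\X'$, re-run the own-bundle argument above on $\X'$ (for Step~\ref{3Vstep_GdeCycle} this case is empty, since the owner $i^*$ then has $v_{i^*}(B')=4b>2$). Otherwise, \cref{prop:F1} gives $v_i(h)<1$, \cref{prop:F2} gives $v_i(g)<1$ for all $g\in B$ when $|B|\geq 2$, and $B$ contains at most one good of value $b$ to $i$ — by the own-bundle argument if $i=s$ (and then $|B|\geq 2$, else $|B'|\leq 2$), by the induction hypothesis if $i\neq s$ and $|B|\geq 3$, and, if $i\neq s$ and $|B|=2$, because $s$ being a source of $\Grplus$ gives $v_i(B)\leq v_i(X_i)\leq 1$, which precludes two goods of value $b$. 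So the only remaining danger is that $v_i(h)=b$ while $B$ already contains a good of value $b$ to $i$; I would exclude this by splitting on $|X_i|$. If $|X_i|\geq 2$, then $v_i(X_i)<\tfrac23$ (\cref{lem:Value1Cardinality1}), so $v_i(X_i\setminus\{g_1\})=(|X_i|-1)c\leq(k_{\max}-1)c<b$ for $g_1\in\argmax_{x\in X_i}v_i(x)$, and hence $h$ would have triggered Step~\ref{3Vstep_swapManyTo2}, contradicting that a later step ran. If $|X_i|=1$, then the hypotheses of \cref{def:enhanced-envy-graph-plus} are met and $(i,s)$ is an enhanced edge of $\Geplus$ — impossible whenever $s$ is a source of $\Geplus$, which covers Steps~\ref{3Vstep_GdeCycle}, \ref{3Vstep_AddGood} and \ref{3Vstep_AddGoodInKMax}.

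The genuinely delicate case — and what I expect to be the main obstacle — is Step~\ref{3Vstep_GeCycle} with $|X_i|=1$: here $s_j$ lies on the cycle $C$ and need not be a source of $\Geplus$, so the enhanced-edge contradiction does not apply directly. This subcase should be resolved by a closer look at $C$ and at $T(C)$: when $v_i(X_i)=b$ one shows $|X_{s_j}|=1$ (so that $|B'|\leq 2$) using that $i$ does not envy $s_j$ in $\Grplus$, and when $v_i(X_i)=1$ one argues that $i$ must itself lie on $C$, hence belong to $T(C)$, and is therefore boosted out of $N_{1,(0,2/3)}$ in $\X'$. One further piece of bookkeeping: in the two \SRR-based steps (\ref{3Vstep_GeCycle} and \ref{3Vstep_GdeCycle}) the augmentation happens on an intermediate allocation, so the uses of \cref{prop:F1,prop:F2} and of ``$i$ does not envy $s$'' must be justified for that intermediate allocation; this is fine because the preceding cycle resolution lowers no one's level (\cref{lem:agentHierarchy}) and any good entering the pool during such a step comes from a source's bundle, hence has value $<1$ for every agent of $N_{1,(0,2/3)}$ by \cref{prop:F2}. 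Modulo this bookkeeping, which mirrors the proof of \cref{lem:prop(f2)}, the rest is a routine assembly of the already-established lemmata.
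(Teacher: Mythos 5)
Your overall strategy — forward induction over iterations, reducing everything to the four bundle-growing steps and then deriving a contradiction from ``Step~\ref{3Vstep_swapManyTo2} would have fired'' (when $|X_i|\ge 2$) or from an enhanced edge into a source of $\Geplus$ (when $|X_i|=1$) — is sound for Steps~\ref{3Vstep_GdeCycle}--\ref{3Vstep_AddGoodInKMax} and in fact mirrors the paper's three-way case analysis on $v_i(X_i'')$ (the cases $v_i(X_i'')\le b$, $b<v_i(X_i'')<1$, $v_i(X_i'')=1$). The paper's proof is organized differently: rather than inducting forward, it fixes the two alleged $b$-goods $g_1,g_2$ and traces back to the \emph{first} iteration after which they cohabit permanently, then argues about the step executed at that iteration; it also separately disposes of the case where they first meet in a bundle of cardinality $2$ by observing that agent $i$ would forever envy that bundle, so it can never become a source and hence can never grow. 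Your first bullet and your own-bundle case are fine and essentially identical to the paper's use of \cref{prop:F2}.

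The genuine gap is exactly where you place it, but your proposed fix does not work. In the sub-case of Step~\ref{3Vstep_GeCycle} with $|X_i|=1$ and $v_i(X_i)=1$, you assert that $i$ ``must itself lie on $C$.'' The hypotheses only give you that $(i,s_j)$ is an edge of $\Geplus(\X)$ (since $|X_i|=1$, $X_{s_j}$ contains a $b$-good for $i$, and the pool contains a $b$-good for $i$). A vertex $s_j$ of the particular jaundiced cycle $C$ chosen by the algorithm may have in-degree greater than one in $\Geplus(\X)$; the cycle uses the edge $(j,s_j)$ and need not pass through $i$ at all, so $i$ need not be in $T(C)$ and need not be ``boosted out'' of $N_{1,(0,2/3)}$. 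Without that, nothing in your argument prevents $j\in T(C)\setminus\{i^*\}$ from ending the iteration with $X_{s_j}\cup\{h_j\}$, of cardinality $\ge 3$, containing one $b$-good for $i$ inherited from $X_{s_j}$ and a second one in $h_j$. (Note you also cannot simply borrow the paper's treatment here: the paper's proof asserts that bundles of cardinality at least $3$ arise only in Steps~\ref{3Vstep_GdeCycle}--\ref{3Vstep_AddGoodInKMax} and never discusses the \SRR augmentation in Step~\ref{3Vstep_GeCycle}, so its source-of-$\Geplus$ contradiction is not available for $s_j$, which by construction has an incoming edge in $\Geplus(\X)$.) To close the gap you would need an actual argument ruling this configuration out — for instance exploiting that Step~\ref{3Vstep_swap1To2} did not fire, so the pool contains at most one $b$-good for $i$, and tracking where that good can go during the \CR, the reassignment of $X_{i^*}$, and the \SRR — none of which is present in your sketch.
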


\begin{proof}
By \cref{lem:prop(f2)}, for any $g'\in X$, we have that $v_i(g')<1$, and thus $v_i(g') \le b$. Suppose, towards a contradiction, that there exist distinct $g_1,g_2\in X$ such that $v_i(g_1)=v_i(g_2)=b$. 

    Consider the first iteration, {say the $r$-th iteration overall,} of the while loop after which $\{g_1, g_2\} \in \tilde{X}$ for some bundle $\tilde{X}$ and none of them is removed from $\tilde{X}$ afterwards until the allocation $\X$ is reached. Let $\X'$ and $\X''$ be the partial allocation right after and right before, respectively, the execution of this $r$-th iteration, and similarly let $X'$ and $X''$ be the aforementioned set $\tilde{X}$, after and prior, respectively, the execution of the $r$-th iteration. So, {$\{g_1,g_2\} \subseteq X'$ but $\{g_1,g_2\} \not\subseteq X''$}.  

    If $|X'|=2$, agent $i$ would prefer $X'$ over her own bundle in all iterations until $\X$ is reached. This is because $i\in N_{1,(0,2/3)}(\X)$ and, by \cref{lem:agentHierarchy}, agent $i$'s value for her bundle is at most $1$ \textit{in all those iterations}; also,  recall that $v_i(X')=2b>1$. The cardinality of a bundle 
    {of size $2$ can only increase in Step~\ref{3Vstep_AddGood}. Further, notice that in this case a bundle of size $2$ that increases its cardinality was allocated to an agent who was a \textit{source} of the doubly enhanced or the \enhplus  at the beginning of the corresponding iteration. However, the presence of $i$ makes it impossible for any future variant of bundle $X'$ to ever belong to a source of either of these graphs in any iteration} until $\X$ is reached. This directly contradicts the fact that $|X|\geq 3$.

    Suppose now that $|X'|\geq 3$. In general, a bundle $\tilde{X}$ {such that $|\tilde{X}| \ge 3$} is updated to another bundle with cardinality at least $3$ only in Steps \ref{3Vstep_GdeCycle} -- \ref{3Vstep_AddGoodInKMax}. In these steps, only a single good is added to $\tilde{X}$ from the pool. Assume, without loss of generality, that $g_1\in X''$ and $g_2 \in \PP(X'')$, i.e., $g_1$ was already {in $X''$ right before the $r$-th iteration}, and $g_2$ was added to $X''$ from the pool during {this iteration}. This also means that $|X''|\geq 2$; {otherwise, the resulting bundle $X'$ would have at most $2$ items}. Regardless of which of Steps~\ref{3Vstep_GdeCycle} -- \ref{3Vstep_AddGoodInKMax} ran the $r$-th iteration, the bundle $X''$ had to be assigned to some source $s$ of the \enhplus $\Geplus(\X'')$. We will show that this cannot happen by considering all different cases for agent $i$, who, by \cref{lem:agentHierarchy}, is in  $N_{1,(0,2/3)}(\X'')$:
    \begin{itemize}[leftmargin=*]
        \item[-] $v_i(X_i'')\leq b$. In this case $i$ would envy $s$ in $\X''$, since $X''$ contains at least two goods, one of which has value $b$ for $i$. So, $s$ cannot be a source in $\Geplus(\X'')$.\smallskip
        \item[-] $b<v_i(X_i'')<1$. {In this case, agent $i$ and the bundle $X_i''$ satisfy the conditions of \cref{lem:maxValue}.} Thus, there exists a good $g'\in X_i''$, such that $v_i(g')=b$, and $v_i(X_i''\setminus\{g'\}) <  1/2$. Since $v_i(g_2) = b$, the condition of Step~\ref{3Vstep_swapManyTo2} would have been triggered, where agent $i$ would swap $X_i''\setminus\{g'\}$ with $g_2$. So, the \PPAplusplus algorithm would not execute any of Steps~\ref{3Vstep_GdeCycle} -- \ref{3Vstep_AddGoodInKMax} {during this iteration}. \smallskip 
        \item[-] $v_i(X_i'')=1$. In this case $g_2$ is a critical good for agent $i$. Since $v_i(g_1)=b$ and $i$ has a critical good, $(i,s)$ should be an edge in \enhplus $\Geplus(\X'')$ and therefore $s$ cannot be a source of $\Geplus(\X'')$.
    \end{itemize}
    Since in every case our assumption that there exist $g_1,g_2\in X$ such that $v_i(g_1)=v_i(g_2)=b$ leads to a contradiction, the lemma follows.
\end{proof}

\begin{lemma}
\label{lem:Three_b}
Let $\X$ be a partial allocation at the end of any iteration of the while loop of the \PPAplusplus algorithm, and let $X$ be any allocated bundle. For any agent $i\in N_{[1+b,2)}(\X)$, it holds that $v_i(X\setminus\{g\})<b+ 5/3$, for any $g \in X$.
\end{lemma}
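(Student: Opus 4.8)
Fix $i\in N_{[1+b,2)}(\X)$, an allocated bundle $X$, and $g\in X$; it suffices to treat the case $g\in\argmin_{h\in X}v_i(h)$. If $v_i(X)\le v_i(X_i)$ then $v_i(X\setminus\{g\})\le v_i(X_i)<2<b+\tfrac53$, so assume $v_i(X)>v_i(X_i)\ge 1+b$. If $|X|\le 3$ then $|X\setminus\{g\}|\le 2$, so $v_i(X\setminus\{g\})\le 2<b+\tfrac53$ (using $b>\tfrac13$); so assume $|X|\ge 4$. By \cref{lem:maxCardinality} this forces $k_{\max}\ge 4$, whence $b+2c<\tfrac23$ by the definition of $k_{\max}$, and therefore $b<\tfrac23$ and $c<\tfrac1{12}$. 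Moreover, a short value computation using \cref{lem:Value1Cardinality1} shows that for any $j\notin N_{[2,\infty)}$ the bundle $X_j$ is, in $j$'s eyes, quite restricted: in particular any $j\in N_{[1+b,2)}$ has $X_j$ consisting (from $j$'s perspective) either of one good of value $1$, one of value $b$ and the rest of value $c$, or of exactly three goods of value $b$ and the rest of value $c$.

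\textbf{Reduction to ``rich'' bundles.} Let $p,q,r$ be the numbers of goods of $X\setminus\{g\}$ that $i$ values at $1$, $b$, $c$ respectively, so $p+q+r\le k_{\max}-1$. Using $b+(k_{\max}-2)c<\tfrac23$, $c<\tfrac16$ (\cref{lem:valueOfc}), and $\tfrac12<b<\tfrac23$, a direct check shows that $v_i(X\setminus\{g\})=p+qb+rc\ge b+\tfrac53$ can hold only if $p\ge 3$, or ($p=2$ and $q\ge1$), or ($p=1$ and $q\ge3$), or ($p=0$ and $q\ge4$); in every case $X$ contains, in $i$'s eyes, either two goods of value $1$, or a good of value $1$ together with three of value $b$, or four goods of value $b$. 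Call an allocated bundle with any of these three properties \emph{$i$-rich}. It then suffices to prove the (slightly stronger, and more convenient for induction) claim that throughout the run of \PPAplusplus no allocated bundle is ever $i$-rich for any agent $i\notin N_{[2,\infty)}$; the strengthening avoids having to re-examine $X$ every time an agent climbs into $N_{[1+b,2)}$ from a lower tier.

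\textbf{Rich bundles never form.} We argue by induction on the number of iterations of the while loop of \PPAplusplus. An $i$-rich bundle has cardinality at least $4$, so only a step that can produce an allocated bundle of cardinality $\ge 4$ can possibly create one; by inspection these are Steps~\ref{3Vstep_GdeCycle},~\ref{3Vstep_AddGood}, and~\ref{3Vstep_AddGoodInKMax} (the cycle resolutions merely permute the multiset of allocated bundles, the other steps produce bundles of cardinality at most $3$, and in Step~\ref{3Vstep_GeCycle} every edge into a source of $\Grplus$ lies in $\Geplus\setminus\Grplus$, hence has a singleton tail). By \cref{lem:agentHierarchy} no agent falls in hierarchy, so an agent that is outside $N_{[2,\infty)}$ at the end of the iteration was outside $N_{[2,\infty)}$ before it, and the inductive hypothesis therefore handles every bundle that is unchanged as a set. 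For the bundle $X$ newly created or modified in Step~\ref{3Vstep_GdeCycle} ($X_{i^*}=\{g_1,g_2,g_3,g^*\}$ with $v_{i^*}(g_j)=v_{i^*}(g^*)=b$) or Step~\ref{3Vstep_AddGood} (a critical good appended to a source $s$ of $\Gde$ with $|X_s|<k_{\max}$) or Step~\ref{3Vstep_AddGoodInKMax} (a source bundle of cardinality $k_{\max}$ with one good swapped), note that $X$ is obtained from the bundle $Y$ of a source of $\Gde(\X)$ or $\Geplus(\X)$, which was allocated at the previous iteration and hence, by the inductive hypothesis, is not $i$-rich for any $i\notin N_{[2,\infty)}$. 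For agents $i\in N_{1,(0,2/3)}$, \cref{lem:One_b} (and \cref{lem:maxValue}) already guarantee that no allocated bundle of cardinality $\ge 3$ is $i$-rich. For the remaining agents $i$ (those with $\tfrac23\le v_i(X_i)<2$), the defining conditions of $\Grplus$, $\Geplus$, and $\Gde$ (\cref{def:reduced-envy-graph-plus,def:enhanced-envy-graph-plus,def:doubly_enhanced-envy-graph-plus}), combined with the restricted form of $X_j$ recorded above and with \cref{lem:3VcriticalGoods} constraining the holder of the appended critical good, imply that if $X$ were $i$-rich then either $i$ would envy $Y$ strongly enough to put an edge into the source that holds it --- contradicting sourcehood --- or one of the higher-priority Steps~\ref{3Vstep_swapManyTo2} (trade all but one good for a pool good) and~\ref{3Vstep_swap3bToab} (trade three $b$-goods for a value-$1$ pool good) would have been applicable to $i$, contradicting the step actually executed. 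This contradiction completes the induction, and with it the lemma.

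\textbf{The main obstacle} is the last paragraph: the step-by-step verification, for Steps~\ref{3Vstep_GdeCycle},~\ref{3Vstep_AddGood}, and~\ref{3Vstep_AddGoodInKMax}, that the extra edges of $\Geplus$ and $\Gde$ --- which were introduced precisely to prevent rich bundles from sitting at or next to sources reachable by non-rich agents --- actually do their job. This requires carefully following, through each bundle swap, round-robin insertion, and single-good exchange, the value profile of the goods involved from the viewpoint of an arbitrary agent with $v_i(X_i)\in[\tfrac23,2)$, and playing this off against the graph definitions, the restricted shapes of $X_i$ and $X_j$, and \cref{lem:3VcriticalGoods}.
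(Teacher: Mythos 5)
Your overall strategy matches the paper's: reduce the bound $v_i(X\setminus\{g\})<b+\frac53$ to forbidding certain value-configurations inside allocated bundles, then show by induction over the iterations of \PPAplusplus that these configurations never arise. Your arithmetic reduction is sound and your three ``rich'' configurations correspond exactly to the paper's two structural claims (no two goods of value $1$ in a bundle of cardinality $\ge 3$; no four goods of value $\ge b$ in a bundle of cardinality $\ge 5$). The problem is that the inductive argument that rich bundles never form --- which is the entire content of the lemma --- contains a genuine gap. Your proof opens with ``an $i$-rich bundle has cardinality at least $4$,'' which is false under your own definition: a two-element bundle whose two goods are both valued at $1$ by $i$ is $i$-rich. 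Worse, your strengthened invariant ``no allocated bundle is ever $i$-rich for any $i\notin N_{[2,\infty)}$'' is itself false: for instance, Step~\ref{3Vstep_swap1To2} can hand an agent $j$ with a singleton bundle two pool goods that some other agent $i$ with $v_i(X_i)=1+b$ and $|X_i|=2$ values at $1$ each, and no higher-priority step need preempt this (Step~\ref{3Vstep_swap1To2} has priority over Step~\ref{3Vstep_swapManyTo2}, which is the step that would otherwise fire for $i$). The paper is careful here: it only excludes the two-unit-goods configuration from bundles of cardinality at least $3$, and the decisive extra argument --- that once the forbidden pair sits in a cardinality-$2$ bundle, agent $i$ envies that bundle at every subsequent iteration (since $v_i$ stays below $2$ by \cref{lem:agentHierarchy}), so it can never be a source of $\Gde$, so Steps~\ref{3Vstep_AddGood} and~\ref{3Vstep_AddGoodInKMax} never enlarge it --- is entirely absent from your write-up. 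Restricting your invariant to bundles of cardinality $\ge 4$ does not repair this, because then the inductive hypothesis no longer rules out the small precursor bundle that Step~\ref{3Vstep_AddGood} would grow.

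The second gap is that the case analysis for Steps~\ref{3Vstep_GdeCycle}--\ref{3Vstep_AddGoodInKMax} is named but not performed; your final paragraph explicitly defers it as ``the main obstacle.'' That analysis is the substance of the paper's proof: one must place the observing agent $i$ in the hierarchy, show that if $i$ is below $N_{[1+b,2)}$ she would envy the source (killing sourcehood) or trigger one of Steps~\ref{3Vstep_swap1To1}--\ref{3Vstep_2a} or~\ref{3Vstep_swapManyTo2}, and that if $i\in N_{[1+b,2)}$ then either the appended pool good has value $1$ for her (so Step~\ref{3Vstep_swapManyTo2} or~\ref{3Vstep_swap3bToab} fires first), or she values three goods of the source's bundle at $b$ (so a $\Gde$ edge, possibly a self-loop, contradicts sourcehood). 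Without these two pieces --- the ``small rich bundle never grows'' argument and the executed step-by-step contradiction --- the proof is incomplete.
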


\begin{proof}
 In the following two claims we show that
\begin{itemize}[leftmargin=20pt]
    \item[-] there are no two goods $g_1,g_2 \in X$ such that $v_i(g_1)=v_i(g_2)=1$, unless $|X|=2$,
    \item[-] if $|X|\geq 5$ there are no four goods $g_1,g_2,g_3,g_4 \in X$ such that $v_i(g_j) \geq b$, for $j \in \{1,2,3,4\}$.
\end{itemize}
This means that, for any $g \in X$, 
\begin{itemize}[leftmargin=20pt]
    \item[-] if $|X|\leq 4$, it holds that $v_i(X\setminus\{g\})\leq 1 + 2b<b+ 5/3$, and
    \item[-] if $|X|\geq 5$,  $v_i(X\setminus\{g\})\leq 1 + 2b + (k_{\max}-4)c < b+ 5/3$, 
\end{itemize}
{where the last inequality follows by the definition of $k_{\max}$.}

\begin{claim}\label{claim:no_two_unit_goods}
If $|X|\geq 3$ there are no two goods $g_1,g_2\in X$ such that $v_i(g_1)=v_i(g_2)=1$. 
\end{claim}

\begin{proof}[Proof of \cref{claim:no_two_unit_goods}]
{We proceed as in the proof of \cref{lem:One_b}.}
Suppose, towards a contradiction, that there exist $g_1,g_2\in X$ such that $v_i(g_1)=v_i(g_2)=1$. 
Consider the first iteration, {say the $r$-th iteration overall,} of the while loop after which
$g_1$ and $g_2$ were assigned together in the same bundle $\tilde{X}$ and none of them is removed from $\tilde{X}$ until the allocation $\X$ is reached. 
Let $\X'$ and $\X''$ be the partial allocation right after and right before, respectively, the execution of this $r$-th iteration, and similarly let $X'$ and $X''$ be the aforementioned set $\tilde{X}$, after and prior, respectively, the execution of the $r$-th iteration. So, {$\{g_1,g_2\} \subseteq X'$ but $\{g_1,g_2\} \not\subseteq X''$}.   

If $|X'|=2$, then $v_i(X')=2$ and agent $i$ would envy the set $X'$ in all iterations until $\X$ is reached. This is because $i\in N_{[1+b,2)}(\X)$ and, by \cref{lem:agentHierarchy}, agent $i$ values her bundle strictly less than $2$ \textit{in all those iterations}. 
{As we noted in the proof of \cref{lem:One_b}, however, the cardinality of a bundle of size $2$ can only increase in Step~\ref{3Vstep_AddGood} and such a bundle was allocated to an agent who was a source of the doubly enhanced or the \enhplus  at the beginning of the corresponding iteration. The fact that the presence of $i$ prevents any future variant of bundle $X'$ from ever belonging to a source of these graphs in any iteration until $\X$ is reached contradicts the fact that $|X|\geq 3$.}

Suppose now that $|X'|\geq 3$. {Recall that} a bundle $\tilde{X}$ {such that $|\tilde{X}| \ge 3$} is updated to another bundle with cardinality at least $3$ only in Steps \ref{3Vstep_GdeCycle} -- \ref{3Vstep_AddGoodInKMax}.
In these steps, only a single good is added to $\tilde{X}$ from the pool. Assume, without loss of generality, that $g_1\in X''$ and $g_2 \in \PP(X'')$, i.e., $g_1$ was already {in $X''$ right before the $r$-th iteration}, and $g_2$ was added to $X''$ from the pool during {this iteration}. We claim that another step should have been executed {during the $r$-th iteration} instead of Steps~\ref{3Vstep_GdeCycle} -- \ref{3Vstep_AddGoodInKMax}, since $g_2$ was in the pool and $v_i(g_2)=1$.

If $i \in N_{1,(0,2/3)}{(\X'')}$, {then \cref{prop:F1} is not satisfied for $\X''$ and,} by \cref{lem:prop(f1)}, one of the conditions of Steps \ref{3Vstep_GrCycle} -- \ref{3Vstep_2a} of the \PPAplusplus algorithm would have been triggered instead. 
Consider now the case where $i\in N_{[2/3,1)}(\X'')\cup N_{(1,1+b)}(\X'')$. In that case, $|X_i''|\geq 2$ and $2/3\leq v_i(X_i'')<1+b$. 
The second case of \cref{lem:maxValue} implies that there exists $g'\in X_i''$ such that $v_i(g')\geq b$ {(or else we would have $v_i(X_i'') < 1/2$).  
This, however, means that $v_i(X_i''\setminus \{g'\}) < 1 = v_i(g_2)$ and the condition of Step~\ref{3Vstep_swapManyTo2} should have been triggered instead,} with agent $i$ swapping $X_i''\setminus \{g'\}$ for $g_2$.  
Finally, suppose that $i \in  N_{[1+b,2)}(\X'')$. If there exists $g'\in X_i''$ such that $v_i(g')= 1$, then the condition of Step \ref{3Vstep_swapManyTo2} should have been triggered instead, {using the exact same argument as in the previous case}. If no such $g'$ exists, it must be the case that there are three goods in $X_i''$ that $i$ values at $b$, in which case the condition of Step~\ref{3Vstep_swap3bToab} should have been triggered instead.
Note that, by \cref{lem:agentHierarchy}, $i\notin N_{[2,\infty)}(\X'')$.

In every case, the assumption that there exist $g_1,g_2\in X$ such that $v_i(g_1)=v_i(g_2)=1$ leads to a contradiction, so the statement follows. \renewcommand\qedsymbol{{\small $\boxdot$}}
\end{proof}

 \begin{claim}\label{claim:no_four_b_goods}
If $|X|\geq 5$, there are no four goods $g_1,g_2,g_3,g_4 \in X$ such that $v_i(g_j)\geq b$, for $j \in \{1,2,3,4\}$.
\end{claim}

\begin{proof}[Proof of \cref{claim:no_four_b_goods}]
{Again we follow the template of the proof of \cref{lem:One_b}.}
Suppose, towards a contradiction, that there exist $g_1,g_2, g_3, g_4\in X$ each of which is valued at least $b$ by agent $i$. 
Consider the first iteration, {say the $r$-th iteration overall,} of the while loop after which
$g_1,g_2, g_3, g_4$ were part of the same bundle $\tilde{X}$ and none of them is removed from $\tilde{X}$ until the allocation $\X$ is reached. 
Let $\X'$ and $\X''$ be the partial allocation right after and right before, respectively, the execution of this $r$-th iteration, and similarly let $X'$ and $X''$ be the aforementioned set $\tilde{X}$, after and prior, respectively, the execution of the $r$-th iteration. So, {$\{g_1,g_2,g_3,g_4\} \subseteq X'$ but $\{g_1,g_2,g_3,g_4\} \not\subseteq X''$}. 
	
A bundle $\tilde{X}$ {such that $|\tilde{X}| \ge 4$} is updated to another bundle with cardinality at least $4$ only in Steps \ref{3Vstep_GdeCycle} -- \ref{3Vstep_AddGoodInKMax}. In these steps, only a single good is added to $\tilde{X}$ from the pool. Assume, without loss of generality, that $g_1, g_2, g_3\in X''$ and {$g_4 \in \PP(X'')$, i.e., $g_1, g_2, g_3$ were already in $X''$ right before the $r$-th iteration, and $g_4$ was added to $X''$ from the pool during this iteration}. 

If Step \ref{3Vstep_GdeCycle} is executed during the $r$-th iteration, then $X'=\{g_1, g_2, g_3, g_4\}$. The cardinality can only increase to at least $5$ in Step \ref{3Vstep_AddGood}. For this to happen, this bundle should be assigned to a source of the doubly enhanced envy graph {after the $r$-th iteration but before the allocation $\X$ is reached. 
This would contradict the fact that agent $i$, by \cref{lem:agentHierarchy}, values her bundle strictly less than any future variant of bundle $X'$ \textit{in all those iterations}.}

If one of Steps \ref{3Vstep_AddGood} and \ref{3Vstep_AddGoodInKMax} is executed during the $r$-th iteration, then $X''$ is assigned to some source $s$ of the doubly enhanced graph $\Gde(\X'')$. If $v_i(X_i'')<1+b$, then $i$ would envy $s$, since $v_i(X'')\geq 3b>1+b$, and $s$ would not be a source in $\Gde(\X'')$. 
So, by \cref{lem:agentHierarchy}, it can only be the case that $i \in  N_{[1+b,2)}(\X'')$.
Recall that $g_4\in \PP(\X'')$ and $v_i(g_4)\geq b$. If $v_i(g_4)=1$, either the condition of Step~\ref{3Vstep_swapManyTo2} {should have been triggered because of $i$}, if $i$ values one good in her bundle at $1$, or the condition of Step~\ref{3Vstep_swap3bToab} {should have been triggered because of $i$}, because the only other case for $i\in  N_{[1+b,2)}(\X'')$ is that she values {at least} three goods in her bundle at $b$, {by \cref{cor:Gde_part_a}}. Suppose now that $v_i(g_4)=b$. If agent $i$ values any of $g_1,g_2,g_3\in X''$ at $1$ {or if $X''$ contains more than three goods worth $b$ to $i$}, then $i$ strictly envies $s$, so $s$ cannot be a source of $\Gde(\X'')$. If $v_i(g_1)=v_i(g_2)=v_i(g_3)=b$ for $g_1,g_2,g_3\in X''$, there should be an edge $(i,s)$ (it may as well be a self loop) in  $\Gde(\X'')$, by definition, and---again---$s$ cannot be a source in $\Gde(\X'')$.

In every case, the assumption that there exist $g_1,g_2, g_3, g_4\in X$ such that $\min_{j\in[4]} v_i(g_j) \ge b$ leads to a contradiction, so the statement follows. \renewcommand\qedsymbol{{\small $\boxdot$}}
 \end{proof}
With those two claims we complete the proof of the lemma. 
\end{proof}

\begin{lemma}
\label{lem:3VcriticalGoods}
Let $\X$ be a partial allocation {such that in the next} iteration of the while loop of the \PPAplusplus algorithm Steps~\ref{3Vstep_GrCycle}-\ref{3Vstep_swap3bToab} {will} not run. For any agent $i \in N$ who has a critical good $g$ in $\X$, it holds that $|X_i|=1$, $v_i(X_i)=1$, and $v_i(g)=b$. 
\end{lemma}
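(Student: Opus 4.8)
The plan is to fix an agent $i$ with a critical good $g \in \PP(\X)$, so that $v_i(g) > \tfrac12 v_i(X_i)$ and $v_i(g) \in \{1,b,c\}$, and to argue in two stages: first that $|X_i| = 1$, and then that moreover $v_i(X_i) = 1$ and $v_i(g) = b$. Throughout I would use: $|X_i| \le k_{\max}$ by \cref{lem:maxCardinality}; $v_i(X_i) \notin \{1,b,c\}$ whenever $|X_i| \ge 2$ by \cref{lem:Value1Cardinality1}; the value bounds of \cref{lem:maxValue,lem:valueOfc} (in particular $k_{\max}c < \tfrac12$ and $c < \tfrac16$); and, crucially, that none of Steps~\ref{3Vstep_GrCycle}--\ref{3Vstep_swap3bToab} is applicable to $\X$. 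Since Steps~\ref{3Vstep_GrCycle},~\ref{3Vstep_equalVinPool},~\ref{3Vstep_2a} are among these, \cref{lem:not=ValuesInP} and \cref{cor:only<ValuesInP} are available.

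\smallskip\noindent\emph{Stage 1 ($|X_i| = 1$).} Since $g$ is critical, $v_i(X_i) < 2 v_i(g) \le 2$. Suppose $|X_i| \ge 2$. If $v_i(g) = c$ then $v_i(X_i) < 2c \le |X_i| \cdot c \le v_i(X_i)$, a contradiction; so $v_i(g) \in \{1,b\}$. Let $g_1 \in \arg\max_{g' \in X_i} v_i(g')$ and split on $v_i(g_1)$. If $v_i(g_1) \ge v_i(g)$, then $v_i(X_i \setminus \{g_1\}) = v_i(X_i) - v_i(g_1) < 2v_i(g) - v_i(g) = v_i(g)$, so Step~\ref{3Vstep_swapManyTo2} applies with $g^* = g$. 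If instead every good of $X_i$ is worth $c$ to $i$, then $v_i(X_i) = |X_i| c \le k_{\max}c < \tfrac12 < v_i(g)$: for $|X_i| = 2$ we get $v_i(g) > \tfrac12 > 3c = \tfrac32 v_i(X_i)$ and Step~\ref{3Vstep_swap2To1} applies, and for $|X_i| \ge 3$ we get $v_i(X_i \setminus \{g_1\}) = (|X_i|-1)c < \tfrac12 < v_i(g)$ and Step~\ref{3Vstep_swapManyTo2} applies. The only remaining possibility is $v_i(g_1) = b$ and $v_i(g) = 1$: then no good of $X_i$ is worth $1$ to $i$ and, since $4b > 2 > v_i(X_i)$, exactly $\beta \in \{1,2,3\}$ goods of $X_i$ are worth $b$ to $i$ and the rest worth $c$. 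For $\beta = 1$, $v_i(X_i \setminus \{g_1\}) \le k_{\max}c < \tfrac12 < 1$; for $\beta = 2$, there are at most $k_{\max} - 2$ goods of value $c$, so $v_i(X_i \setminus \{g_1\}) \le b + (k_{\max}-2)c < \tfrac23 < 1$ by the definition of $k_{\max}$; in both cases Step~\ref{3Vstep_swapManyTo2} applies (with $g^* = g$). For $\beta = 3$, $X_i$ consists of exactly three goods of value $b$ and the rest of value $c$ (to $i$), and $g \in \PP(\X)$ has $v_i(g) = 1$, so Step~\ref{3Vstep_swap3bToab} applies. Every branch contradicts the non-applicability of Steps~\ref{3Vstep_GrCycle}--\ref{3Vstep_swap3bToab}, so $|X_i| = 1$.

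\smallskip\noindent\emph{Stage 2 ($v_i(X_i) = 1$ and $v_i(g) = b$).} Now $v_i(X_i) \in \{1,b,c\}$. If $v_i(X_i) = c$: since $v_i(g) > c = v_i(X_i)$ would trigger Step~\ref{3Vstep_swap1To1}, we must have $v_i(g) = c = v_i(X_i)$; then \cref{lem:not=ValuesInP} (applied to $g$) forces $i$ to be a source of $\Grplus(\X)$, so Step~\ref{3Vstep_swap1To2} applies --- via two pool goods (whose total value $\ge 2c > c$) if $|\PP(\X)| \ge 2$, and via its else-branch if $\PP(\X) = \{g\}$ --- a contradiction. If $v_i(X_i) = b$: criticality gives $v_i(g) > \tfrac b2 > \tfrac14 > c$, so $v_i(g) \in \{1,b\}$; $v_i(g) = 1 > b = v_i(X_i)$ would trigger Step~\ref{3Vstep_swap1To1}, so $v_i(g) = b = v_i(X_i)$, and again \cref{lem:not=ValuesInP} makes $i$ a source of $\Grplus(\X)$; no pool good is worth $1$ to $i$ (else Step~\ref{3Vstep_swap1To1}), so $v_i(\{g,g'\}) \ge b + c > b = v_i(X_i)$ for any $g' \in \PP(\X) \setminus \{g\}$ triggers Step~\ref{3Vstep_swap1To2} when $|\PP(\X)| \ge 2$, and its else-branch triggers when $\PP(\X) = \{g\}$ --- a contradiction. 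Hence $v_i(X_i) = 1$; then $v_i(g) > \tfrac12 > c$ gives $v_i(g) \in \{1,b\}$, and \cref{cor:only<ValuesInP} forces $v_i(g) < 1$, so $v_i(g) = b$, completing the proof.

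\smallskip\noindent The main obstacle is the Stage-1 sub-case with $|X_i| \ge 3$, three goods of value $b$, and a pool good of value $1$: here no swap-to-a-pair step and no cardinality-reducing step suffices, and the argument must invoke Step~\ref{3Vstep_swap3bToab} --- precisely the step added to \PPAplusplus for this situation (cf.\ the discussion around \cref{ex:non-existence-of-23efx-with-bundles-of-size-two}). A related delicate point is the $\beta = 2$ sub-case, where the bound $v_i(X_i \setminus \{g_1\}) < \tfrac23$ relies exactly on the definition of $k_{\max}$ rather than on a crude estimate; one must also check that the hypotheses of \cref{lem:not=ValuesInP} and \cref{cor:only<ValuesInP} are met, which holds because Steps~\ref{3Vstep_GrCycle}--\ref{3Vstep_2a} are among the steps assumed non-applicable to $\X$.
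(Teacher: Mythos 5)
Your proof is correct and takes essentially the same route as the paper's: an exhaustive case analysis showing that an agent with a critical good and either $|X_i|\ge 2$ or $v_i(X_i)\in\{b,c\}$ would trigger one of Steps~\ref{3Vstep_swap1To1}--\ref{3Vstep_swap3bToab}, with Step~\ref{3Vstep_swapManyTo2} disposing of most branches and Step~\ref{3Vstep_swap3bToab} of the three-$b$-goods branch. If anything, your Stage~2 is more careful than the paper's argument, which derives $v_i(g)<v_i(X_i)$ from \cref{lem:not=ValuesInP} without addressing the case of a source $i$ of $\Grplus(\X)$ with $v_i(X_i)=v_i(g)\in\{b,c\}$ --- a case you explicitly close via the two branches of Step~\ref{3Vstep_swap1To2}.
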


\begin{proof}
Consider any agent $i \in N$ and any good $g \in \PP(\X)$. Regardless of whether $|X_i|=1$ or $|X_i|>1$, it must be the case that $v_i(X_i) \geq v_i(g)$, as otherwise one of the conditions of Step~\ref{3Vstep_swap1To1}, Step~\ref{3Vstep_swap2To1}, {or Step~\ref{3Vstep_swapManyTo2} would be triggered}. Given this, {by \cref{lem:Value1Cardinality1}, \cref{lem:not=ValuesInP}, and the condition of Step~\ref{3Vstep_swap1To2} not being triggered}, it follows that $v_i(g) < v_i(X_i)$.

Consider first the case where $|X_i|=1$. In this case, {$v_i(X_i)> v_i(g) \ge c$} implies either that $v_i(X_i)=b$ or that $v_i(X_i)=1$. In the former case, we obviously have that $v_i(g)=c$, and by \cref{lem:valueOfc}, we have that $v_i(X_i)>2 \cdot v_i(g)$, since $b > 1/2$. Thus, $g$ cannot be a critical good for agent $i$. In the latter case, $g$ obviously cannot be critical for agent $i$ if $v_i(g)=c$ and, therefore, we conclude that $v_i(g)=b$.

Now consider the case where $|X_i|>1$. We will show that it is not possible for agent $i$ to have a critical good. First, if $v_i(g) =c$, then we trivially have that $v_i(X_i)\geq 2 \cdot v_i(g)$ since $|X_i|>1$, and $g$ cannot be critical for $i$. Therefore we may assume that $v_i(g) \geq b$. Let $g^{*} \in \arg\max_{g \in X_i}v_i(g)$ be a good of maximum value for agent $i$ in her bundle $X_i$. We argue that there exists $g \in X_i \setminus\{g^*\}$ such that $v_i(g)\geq b$. If this was not the case, by applying \cref{lem:maxValue} on the set $X_i \setminus\{g^*\}$, we would have $v_i(X_i \setminus\{g^*\}) < \frac12 < b$, and therefore the agent should swap $X_i \setminus\{g^*\}$ with $g$ from the pool (since $v_i(g) \geq b$) in Step~\ref{3Vstep_swapManyTo2}, which by assumption will not run. From this, we infer that there are \textit{at least two} goods $g^*$, $\tilde{g}$ in $X_i$ such that {$v_i(g^*) \ge v_i(\tilde{g}) \geq b$  (recall that $g^{*} \in \arg\max_{g \in X_i}v_i(g)$)}. If $v_i(g)=b$, this implies that $g$ cannot be critical for agent $i$, so it must be the case that $v_i(g)=1$. In turn, since Step~\ref{3Vstep_swapManyTo2} will not run, we have that {$v_i(X_i\setminus\{g^*\})\geq 1$}. We consider three cases:
\begin{itemize}[leftmargin=18pt]
    \item[-] $v_i({g^*})=1$. In that case we have $v_i(X_i) \geq 2 \geq 2\cdot v_i(g)$, and hence $g$ cannot be critical for agent $i$.\smallskip
    \item[-] {$v_i(g^*)=b$ and} $v_i(X_i \setminus \{{g^*}\}) \geq 3b$. Again, in that case we have $v_i(X_i) \geq 2 \geq 2\cdot v_i(g)$, and hence $g$ cannot be critical for agent $i$.\smallskip
    \item[-] {$v_i(g^*)=b$,} there exist exactly three goods $g_1$, $g_2$, $g_3$ in $X_i$ {such that $v_i(g_1) = v_i(g_2) = v_i(g_3) = b$}, and $v_i(g')=c$ for {any good $g' \in X_i\setminus \{g_1, g_2, g_3\}$}. This is precisely the condition of Step~\ref{3Vstep_swap3bToab}, contradicting the assumption of the lemma that this step will not run. 
\end{itemize}
{Overall, the only case that does not lead to a contradiction is when $|X_i|=1$, $v_i(X_i)=1$, and $v_i(g)=b$.}
\end{proof}

The following lemma establishes that \cref{prop:a,prop:b} are satisfied at the end of each iteration of the while loop of the algorithm, unless the algorithm has terminated with a complete $2/3$-EFX allocation, in which case we are done.

\begin{lemma}\label{lem:3Vpropertiesab}
Let $\X$ be a partial allocation obtained at the end of any iteration of the while loop of the \PPAplusplus algorithm. 
\begin{itemize}[leftmargin=20pt]
    \item[-] If $\PP(\X) \neq \emptyset$, then $\X$ satisfies \cref{prop:a,prop:b}.
    \item[-] If $\PP(\X) = \emptyset$, then $\X$ is a complete $2/3$-EFX allocation.
\end{itemize}
\end{lemma}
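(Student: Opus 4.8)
The plan is to prove, by induction on the number of completed iterations of the while loop, that the current partial allocation $\X$ satisfies \cref{prop:a,prop:b} at the end of every iteration of \PPAplusplus. This yields the lemma immediately: if $\PP(\X)\neq\emptyset$ the first bullet is the inductive claim, and if $\PP(\X)=\emptyset$ the allocation is complete while \cref{prop:b} is precisely the statement that every agent is $2/3$-EFX, so the second bullet follows as well. The base case is trivial, since the input has $|X_i|=1$ for all $i$, hence everyone is EFX towards everyone. For the inductive step I would walk through the thirteen steps of the algorithm one at a time, reusing the bookkeeping convention of the proof of \cref{lem-3PA-properties-ab}: it suffices to re-check ordered pairs $(i,j)$ for which $j$'s bundle genuinely changed this iteration, since pairs that only witnessed a bundle being relocated intact inherit the properties for free. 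Throughout, \cref{lem:agentHierarchy} is the workhorse: in each step it identifies the small set of agents whose value strictly decreases (typically the value-$1$ singletons at \cref{level5}, or agents at \cref{level2}), which simultaneously bounds their previous value and guarantees they do not fall in the hierarchy.

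Steps~\ref{3Vstep_GrCycle}--\ref{3Vstep_swap3bToab} are the lightweight ones. Step~\ref{3Vstep_GrCycle} (and every grey portion of a later cycle resolution) moves bundles along \emph{unjaundiced} cycles, along which no agent envies another, so nothing needs checking. Steps~\ref{3Vstep_swap1To1}, \ref{3Vstep_swap2To1}, \ref{3Vstep_equalVinPool}, \ref{3Vstep_swap1To2} (both branches), and \ref{3Vstep_swapManyTo2} are swap and path-resolution steps handled exactly as the corresponding steps of \PPA: the agent whose bundle truly changes either has her value weakly increase without her bundle growing (so she inherits the properties), or shrinks from a pair to a singleton while gaining more than a factor $3/2$ (so EFX is restored), or grows from a singleton to a pair of strictly larger value; and every other agent does not overvalue the goods that moved, because the failure of Steps~\ref{3Vstep_swap1To1}/\ref{3Vstep_swap2To1} bounds $v_j$ of any pool good by $v_j(X_j)$ or $\tfrac32 v_j(X_j)$ according to $|X_j|$. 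For Step~\ref{3Vstep_2a} (a round of \SRR on the maximal secluded set $R(\X)$), the defining property of a secluded set gives $v_j(X_i)\le v_j(X_j)$ for every $i\in R(\X)$ and $j\notin R(\X)$; together with the pool-good bound and \cref{lem:prop(f1)} (the added good has value $<1$ for agents in $N_{1,(0,2/3)}$) this preserves $2/3$-EFX, while envy inside $R(\X)$ is controlled by \cref{lem:maxValue}. Step~\ref{3Vstep_swap3bToab} is where the structural lemmata first bite: the agent $i$ trading her three $b$-goods (plus any $c$-goods) for $\{g_1,g^*\}$ with $v_i(g^*)=1$ lands at value exactly $1+b$, so she stays on \cref{level2}, and \cref{lem:Three_b} (together with $b>\tfrac12$) gives $\tfrac23 v_i(X_j\setminus\{h\})\le 1+b$, keeping $i$ $2/3$-EFX; every other agent values the new bundle no more than the old $X_i$ as a set, plus the pool-good bound on $g^*$.

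The crux is Steps~\ref{3Vstep_GeCycle}--\ref{3Vstep_AddGoodInKMax}, which are exactly the steps that build bundles of cardinality $3,\dots,k_{\max}$ — the regime where \cref{ex:non-existence-of-23efx-with-bundles-of-size-two} shows that naively enlarging a source's bundle destroys $2/3$-EFX. As a representative example, for Step~\ref{3Vstep_AddGood} (handing a critical good $g$ to a source $s$ of $\Gde$ with $|X_s|<k_{\max}$): $s$'s own value does not decrease, so the work is bounding $v_j(X_s\cup\{g\})$ for $j\neq s$; if $g$ is not critical for $j$ then $v_j(g)\le\tfrac12 v_j(X_j)$ and $v_j(X_s)\le v_j(X_j)$ ($s$ is a $\Grplus$-source), which gives the bound at once, and if $g$ \emph{is} critical for $j$ then \cref{lem:3VcriticalGoods} forces $|X_j|=1,\ v_j(X_j)=1,\ v_j(g)=b$, while the absence of the edge $(j,s)$ from $\Geplus$ forces $X_s$ to contain no good worth $b$ (nor $1$) for $j$, so $X_s$ is all $c$'s for $j$, whence $v_j(X_s)<\tfrac12$ by \cref{lem:maxValue} and $v_j(X_s\cup\{g\})<\tfrac12+b<\tfrac32$. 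Steps~\ref{3Vstep_GeCycle} and~\ref{3Vstep_GdeCycle} glue a cycle/self-loop resolution on the grey-augmented $\Geplus$/$\Gde$ (where only value-$1$ singletons lose value and hence cannot fall) to the same local surgery on a designated $i^*$ and an \SRR sweep over $T(C)$, dispatched with the same toolkit plus \cref{lem:One_b}. Step~\ref{3Vstep_AddGoodInKMax} is the hardest: agent $i$ (owner of $g$, so $|X_i|=1,\ v_i(X_i)=1,\ v_i(g)=b$) surrenders her unit good, absorbs $(X_{s_i}\cup\{g\})\setminus\{g'\}$ from a $\Gde$-source $s_i$, a path resolution propagates $X_{s_i}$ backwards, and an \SRR sweep tops up the size-$1$ agents of $T(C)$; one checks via \cref{lem:agentHierarchy} that $i$ does not fall in the hierarchy and then bounds $\tfrac23 v_i(X_j\setminus\{h\})$ below $i$'s new value using \cref{lem:One_b,lem:Three_b,lem:maxValue} according to the level $i$ lands on, and applies the same lemmata to every other $j$ now facing $i$'s enlarged bundle — the doubly-enhanced edges and self-loops of $\Gde$ being exactly what excludes the residual bad configurations. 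Step~\ref{3Vstep_Break} leaves $\X$ unchanged.

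The main obstacle, as anticipated, is this last block, Steps~\ref{3Vstep_AddGood}--\ref{3Vstep_AddGoodInKMax}: one must show that after enlarging a bundle to cardinality as large as $k_{\max}$ no agent $\alpha$-envies it for any $\alpha>\tfrac23$, and (in Step~\ref{3Vstep_AddGoodInKMax}) that the agent who trades away her unit good to absorb such a bundle remains $2/3$-EFX. This is precisely what the refined graphs $\Geplus$ and $\Gde$, the cardinality bound $k_{\max}$, and \cref{lem:One_b,lem:Three_b,lem:3VcriticalGoods} were engineered to control, so the remaining work is a careful, case-by-case unwinding of those edge definitions against the value bounds of \cref{lem:maxValue,lem:valueOfc,lem:maxCardinality} — mechanical in each case, but numerous.
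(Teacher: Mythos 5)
Your proposal follows essentially the same route as the paper: induction over iterations of the while loop, a step-by-step case analysis invoking \cref{lem:agentHierarchy,lem:3VcriticalGoods,lem:One_b,lem:Three_b,lem:maxValue,lem:prop(f1)}, with the spectator-agent analysis for rebuilt bundles handled via the failure of Steps~\ref{3Vstep_swap1To1}/\ref{3Vstep_swap2To1} and the source property of $\Grplus$; the paper merely packages that spectator analysis into two reusable claims (\cref{cl:(a)(b)2Goods,cl:(a)(b)ManyGoods}) rather than redoing it inline per step. One caveat on your showcased computation for Step~\ref{3Vstep_AddGood}: when $g$ is critical for $j$ you conclude $v_j(X_s\cup\{g\})<\tfrac12+b<\tfrac32$, which only yields $2/3$-envy-freeness, whereas \cref{prop:a} demands that the singleton agent $j$ be \emph{exactly} EFX towards $s$; since $\tfrac12+b-c$ can exceed $1$, you need the sharper bound $v_j\bigl((X_s\cup\{g\})\setminus\{h\}\bigr)\le b+(k_{\max}-2)c<\tfrac23$ coming from the definition of $k_{\max}$ (as in the paper's \cref{cl:(a)(b)ManyGoods}), not just $v_j(X_s)<\tfrac12$. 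The fix lives entirely within the toolkit you already cite, so this is a quantitative slip rather than a missing idea.
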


\begin{proof}
We will prove the statement by induction on the number of iterations of the while loop of the \PPAplusplus algorithm. Before the first iteration \cref{prop:a,prop:b} are satisfied, and if $\PP(\X) = \emptyset$, then $\X$ is a complete $2/3$-EFX allocation. Now consider any arbitrary iteration of the while loop, and let $\X$ be the partial allocation of the algorithm right before the iteration, and $\X'$ be the partial allocation right after the iteration. We first state and prove the following two useful claims which address situations that may occur in several steps of the algorithm.

\begin{claim}
	\label{cl:(a)(b)2Goods}
	Let $X$ be one of the allocated bundles in $\X$, {such that $|X|\geq 2$ \emph{or} $X$ is allocated to a source of $\Gr(\X)$. Assume that for agents $i,j \in N$ we have the following in $\X'$:}
	\begin{itemize}[leftmargin=20pt]
		\item[-] $X_i' = \{g,g'\}$ for {some} good $g\in \PP(\X)$ and {some} good $g'\in X$,  \emph{and}\smallskip
		\item[-] $v_j(X_j')\geq \frac 32 v_j(X_j)$, if $|X_j|>|X_j'|=1$, and $v_j(X_j')\geq v_j(X_j)$, otherwise.
	\end{itemize}
	Then, if $|X_j'|=1$, $j$ is EFX towards $i$, and if $|X_j'|>1$, $j$ is $2/3$-EFX towards $i$.
\end{claim}

\begin{proof}[Proof of \cref{cl:(a)(b)2Goods}]
{Given how $X_i'$ is formed, we note that during the iteration of the while loop of \PPAplusplus that starts with $\X$ and outputs $\X'$ Steps~\ref{3Vstep_swap1To1} and \ref{3Vstep_swap2To1} do not run.}
Consider first the case in which $|X_j|=1$. Since Step~\ref{3Vstep_swap1To1} cannot run, for every good $\tilde{g} \in \PP(\X)$ we have $v_j(\tilde{g}) \leq v_j({X_j})$ and, hence, also $v_j(g) \leq v_j(X_j)$. 
{We also claim that $v_j(X_j) \geq {v_j(g')}$. Indeed, this follows directly if $|X| = 1$ (i.e., $X = \{g'\}$) and $X$ is allocated to a source of $\Gr(\X)$; if $|X|\geq 2$, it follows by \cref{prop:a} (which holds by the induction hypothesis) and the fact that $X$ is one of the bundles in $\X$.
These,} together with the fact that $v_j(X_j') \geq v_j(X_j)$, imply that in $\X'$ agent $j$ is EFX towards agent $i$. 

Next, consider the case in which $|X_j|>1$. Since {Step~\ref{3Vstep_swap2To1}} cannot run, for every good $\tilde{g} \in \PP(\X)$ we have that $v_j(\tilde{g}) \leq \frac{3}{2}v_j(X_j)$, which implies $v_j(g) \leq \frac{3}{2}v_j(X_j)$. 
{Further, we claim that $v_j(X_j) \geq \frac{2}{3}{v_j(g')}$. Indeed, if $|X| = 1$, this follows directly by the definition of $\Gr(\X)$, whereas, if $|X|\geq 2$, it follows by \cref{prop:b} (which holds by the induction hypothesis) and the fact that $X$ is one of the bundles in $\X$.
That is, agent $j$ holding her old bundle $X_j$ would be $2/3$-EFX towards agent $i$ holding her new bundle $X_i'$.}
If $|X_j'|=1$, by the statement of the claim we have that $v_j(X_j') \geq \frac32 v_j(X_j)$, which implies that in $\X'$ agent $j$ is EFX towards agent $i$. If $|X_j'|>1$, on the other hand, we have that $v_j(X_j') \geq v_j(X_j)$, which implies that agent $j$ is $2/3$-EFX towards agent $i$. \renewcommand\qedsymbol{{\small $\boxdot$}}
\end{proof}

\begin{claim}
\label{cl:(a)(b)ManyGoods}
{Assume that during the iteration of the while loop of \PPAplusplus that starts with $\X$ and outputs $\X'$ one of Steps~\ref{3Vstep_GeCycle} -- \ref{3Vstep_AddGoodInKMax} is executed.}
Let $s$ be any source in the \regplus {$\Grplus(\X)$}, and {assume that for agents $i,j \in N$ we have the following in $\X'$:}
\begin{itemize}[leftmargin=20pt]
    \item[-] {$X_i' = Y \cup \{g\}$, for some $Y \subseteq X_s$ and some}  good $g\in \PP(\X)$, and $|X_i'|\leq k_{\max}$, \emph{and}\smallskip
    \item[-] $v_j(X_j')\geq \frac 32 v_j(X_j)$, if $|X_j|>|X_j'|=1$, and $v_j(X_j')\geq v_j(X_j)$, otherwise.
\end{itemize}
Then, if $|X_j'|=1$, $j$ is EFX towards $i$, and if $|X_j'|>1$, $j$ is $\frac 23$-EFX towards $i$.
\end{claim}

\begin{proof}[Proof of \cref{cl:(a)(b)ManyGoods}]
If $|X_j|=1$, {we claim that $v_j(X_j)\geq b$. Indeed, since $\PP(\X)\neq\emptyset$ and Steps~\ref{3Vstep_GrCycle} -- \ref{3Vstep_swap3bToab} do not run, \cref{lem:3VcriticalGoods} applies. If $v_j(X_j)=c$, then every $g'\in\PP(\X)$ would be critical for $j$, so \cref{lem:3VcriticalGoods} would give $v_j(X_j)=1$; a contradiction. Hence, $v_j(X_j)\in\{b,1\}$.}

If $v_j(X_j) = b$, {we similarly claim that $v_j(g)=c$; indeed, by \cref{lem:3VcriticalGoods}, $j$ has no critical good since $v_j(X_j)\neq1$, so $v_j(g)\le\frac12 v_j(X_j)=\tfrac b2<b$ for every good in $\PP(\X)$. Hence, $v_j(g)=c$.} Since $s$ is a source of {$\Grplus(\X)$ and $|X_j|=1$}, it holds that  $v_j(X_j)\geq v_j(X_s) = v_j(X_s\cup \{g\}) - c \geq v_j(X_i' \setminus \{g'\})$, for any $g'\in X_i'$, and so agent $j$ is EFX  towards agent $i$. 

Suppose now that $v_j(X_j)=1$. {Arguing like above,} it holds that $v_j(g)\leq b$, {or $g$ would be critical for agent $j$ with $v_j(g) = 1$, contradicting \cref{lem:3VcriticalGoods}.} Since $s$ is a source of {$\Grplus(\X)$, if $|X_s| = 1$ then $v_j(X_s) < 1$, by the definition of $\Grplus(\X)$, whereas if $|X_s| \ge 2$, then, by \cref{prop:F2}, we still have  $v_j(g') < 1$ for any $g' \in X_s$. In any case,} we have that $v_j(g')\leq b$, for any $g'\in X_s$. 
If $|X_i'|\leq 2$,  it holds that $v_j(X_i'\setminus\{g'\})\leq b<v_j(X_j)$, for any $g'\in X_i'$. If $|X_i'|\geq 3$, by \cref{lem:maxValue,lem:One_b}, we have that $v_j(X_j) = 1 > 5/6 > v_j(X_i')$.  Therefore, in this case too, agent $j$ is EFX towards agent $i$.\smallskip

Suppose now that $|X_j|>1$. By \cref{lem:3VcriticalGoods}, it holds that $v_j(g) \leq \frac 12 v_j(X_j)$, and since $s$ is a source of $\Grplus(\X)$, it {either holds that $v_j(X_j)\geq v_j(X_s)$ and $|X_s|\ge 2$, or $v_j(X_j)\geq \frac{2}{3} v_j(X_s)$ and $|X_s| = 1$. In either case,} we obtain that 
{agent $j$ holding her old bundle $X_j$ would be $2/3$-EFX towards agent $i$ holding her new bundle $X_i'$.}
We conclude the proof by observing that
\begin{itemize}[leftmargin=20pt]
\item[-] If $|X_j'|=1$, by the statement of the claim, $v_j(X_j')\geq \frac 32 v_j(X_j)$, and so $j$ is EFX towards $i$. 
\item[-] If $|X_j'|>1$, by the statement of the claim, $v_j(X_j')\geq v_j(X_j)$, and so $j$ is $\frac 23$-EFX towards $i$.  \renewcommand\qedsymbol{{\small $\boxdot$}}\qedhere
\end{itemize}
\end{proof}

Now we can continue with the proof of \cref{lem:3Vpropertiesab}. We consider cases depending on which step of the algorithm ran in this iteration of the while loop.
\smallskip

\noindent \textit{If Step~\ref{3Vstep_GrCycle} or Step~\ref{3Vstep_equalVinPool} is executed:} For each agent $i \in N$ we have $v_i(X'_i)\geq v_i(X_i)$. Therefore, by the induction hypothesis, it suffices to consider the case of an agent $j$ that goes from $|X_j|=2$ (and for which \cref{prop:b} was satisfied) to $|X'_j|=1$ (where now \cref{prop:a} must be satisfied).  
Since in these steps we are considering the \regplus $\Grplus(\X)$, this only happens when $v_j(X'_j) \geq \frac32 v_j(X_j)$. This establishes \cref{prop:a}, and thus both properties are satisfied. \smallskip

\noindent \textit{If Step~\ref{3Vstep_swap1To1} is executed:} Consider the agent $i \in N$ whose bundle gets updated in this step, and notice that $v_i(X_i')\geq v_i(X_i)$. {Thus, \cref{prop:a} (and, trivially, \cref{prop:b}) for agent $i$, which is satisfied in $\X$ by the induction hypothesis, is maintained for $i$ in $\X'$ as well.}
Since $|X'_i|=1$, every other agent $j \in N$ is EFX towards agent $i$, and so \cref{prop:a,prop:b} in $\X'$ are satisfied for those agents as well.\smallskip

\noindent \textit{If Step~\ref{3Vstep_swap2To1} is executed:} Consider the agent $i \in N$ whose bundle gets updated in this step. By \cref{prop:b} which holds by the induction hypothesis, agent $i$ was $2/3$-EFX towards any other agent $j \in N$ in $\X$. Since $v_i(X'_i) \geq \frac{3}{2}v_i(X_i)$, agent $i$ is EFX towards any other agent in $\X'$, which implies \cref{prop:a,prop:b} for agent $i$. Since $|X'_i|=1$, every other agent $j \in N$ is EFX towards agent $i$, and so \cref{prop:a,prop:b}  in $\X'$ are satisfied for those agents as well. \smallskip

\noindent \textit{If Step~\ref{3Vstep_2a} is executed:}  Consider the agents in $R(\X)$ whose bundles change from $\X$ to $\X'$, and notice that for each $i \in R(\X)$, we have $|X_i|=1$ and $|X'_i|=2$, as well as $v_i(X'_i) > v_i(X_i)$. Consider any other agent $j \in N$. Assume first that $j \notin R(\X)$, {thus $X_j' = X_j$}. By the definition of $R(\X)$, 
{either $v_j(X_j)\geq \frac{2}{3} v_j(X_i)$ and $|X_j|\ge 2$, or $v_j(X_j)\geq v_j(X_i)$ and $|X_j| = 1$.}
(otherwise there would be an edge $(j,i)$ in $\Grplus(\X)$). Furthermore, 
\begin{itemize}[leftmargin=20pt]
\item[-] since Step~\ref{3Vstep_swap1To1} did not run, if $|X_j|=1$, it holds that $v_j(X_j) \geq v_j(g)$ for all $g \in \PP(\X)$,
\item[-] since Step~\ref{3Vstep_swap2To1} did not run, if $|X_j| \ge 2$, it holds that $v_j(X_j) \geq \frac{2}{3} v_j(g)$ for all $g \in \PP(\X)$. 
\end{itemize}
This means that
\begin{itemize}[leftmargin=20pt]
    \item[-] if {$|X_j'|=|X_j|=1$, then $v_j(X'_j) = v_j(X_j) \geq v_j(X'_i \setminus \{g\})$ for any of the two goods $g\in X'_i$}, and hence agent $j$ is EFX towards agent $i$ in $\X'$, 
    \item[-] if {$|X_j'|=|X_j|\ge 2$, then $v_j(X'_j) = v_j(X_j)  \geq \frac23 v_j(X'_i \setminus \{g\})$ for any of the two goods $g\in X'_i$}, and hence agent $j$ is $2/3$-EFX towards agent $i$ in $\X'$. 
\end{itemize}
Since $v_i(X'_i) > v_i(X_i)$, agent $i$ is EFX towards agent $j$ in $\X'$ by the induction hypothesis, which ensured that she was EFX towards $j$ in $\X$. {Overall, \cref{prop:a,prop:b} are satisfied for such agents $i$ and $j$.}

Now assume that $j \in R(X)$. 
Since both $i$ and $j$ are in the set $R(\X)$ it follows that $v_j(X_j) \geq v_j(X_i) = v_j(g_i)$ where $g_i$ is {the unique good in} $X_i$. Letting {$g_i'$ be the good that is added to $X_i$} by the \SRR routine in Step~\ref{3Vstep_2a} {(i.e., $\{g_i'\} = X'_i\setminus\{g_i\}$)}, we also have that $v_j(X_j) \geq v_j(g_i')$, as otherwise the condition of Step~\ref{3Vstep_swap1To1} would have been triggered. Since $v_j(X'_j) > v_j(X_j)$, agent $j$ is EFX towards agent $i$ in $\X'$, and \cref{prop:a,prop:b} are satisfied for agent $j$. 

Finally, for any pair of agents $i,j \notin R(\X)$, their bundles have not changed from $\X$ to $\X'$, so when considered against each other, \cref{prop:a,prop:b} hold by the induction hypothesis. \smallskip

\noindent \textit{If Step~\ref{3Vstep_swap1To2} is executed:} Consider the agent $i$ whose bundle gets updated in this step and notice that $|X_i|=1$. Since $v_i(X_i') > v_i(X_i)$ and \cref{prop:a} is satisfied for the agent in $\X$ by the induction hypothesis,  \cref{prop:a,prop:b} are satisfied for agent $i$ in $\X'$ as well. Consider any other agent $j \in N$. Looking at agent $i$'s bundle, $X'_i$, we observe that it contains $2$ goods which were either both received from $\PP(\X)$ {(first part of Step~\ref{3Vstep_swap1To2})}, or one came from $\PP(\X)$ and one from $X_i$, $i$ being a source of {$\Gr(\X)$} {(second part of Step~\ref{3Vstep_swap1To2}).}
As we argued in Step~\ref{3Vstep_2a} above, 
\begin{itemize}[leftmargin=20pt]
	\item[-] since Step~\ref{3Vstep_swap1To1} did not run, if $|X_j|=1$, it holds that $v_j(X_j) \geq v_j(g)$ for all $g \in \PP(\X)$,
	\item[-] since Step~\ref{3Vstep_swap2To1} did not run, if $|X_j| \ge 2$, it holds that $v_j(X_j) \geq \frac{2}{3} v_j(g)$ for all $g \in \PP(\X)$. 
\end{itemize}
{Assuming that $X'_i$ is built in the first part of Step~\ref{3Vstep_swap1To2}, we directly have that} agent $j$ is EFX towards agent $i$ if $|X_j|=1$ and she is $2/3$-EFX towards $i$ if $|X_j|\ge 2$. 

{When $X'_i$ is built in the second part of Step~\ref{3Vstep_swap1To2}, \cref{cl:(a)(b)2Goods} can be applied, implying that again agent $j$ is EFX towards agent $i$ if $|X_j|=1$ and she is $2/3$-EFX towards $i$ if $|X_j|\ge 2$.}

{Invoking the induction hypothesis for all other pairs of agents,} we conclude that \cref{prop:a,prop:b} are satisfied for everyone.
\smallskip

\noindent \textit{If Step~\ref{3Vstep_swapManyTo2} is executed:} Consider the agent $i \in N$ whose bundle gets updated in this step and notice that $v_i(X_i') > v_i(X_i)$. Since $|X_i|\geq 2$ and $|X'_i|\geq 2$, the induction hypothesis implies that \cref{prop:b} is satisfied for agent $i$ (and so is \cref{prop:a} vacuously). Looking at agent $i$'s new bundle $X'_i$, we observe that it contains $2$ goods, one that she received from $\PP(\X)$ and one from $X_i$, {which has cardinality at least $2$}. By \cref{cl:(a)(b)2Goods}, \cref{prop:a,prop:b} are  satisfied for any other agent $j \in N \setminus\{i\}$, so overall both properties are satisfied.
\smallskip

\noindent \textit{If Step~\ref{3Vstep_swap3bToab} is executed:} Consider the agent $i \in N$ whose bundle gets updated in this step. By definition of the sets in the hierarchy (see \cref{def:hierarchy}), $i\in {N_{[1+b,2)}(\X) \cap N_{[1+b,2)}(\X')}$. In particular, we have that $v_i(X'_i)\geq 1+b$. Consider any other agent $j \in N$. By \cref{lem:Three_b}, for any $g\in X_j$, it holds that 
\[\frac 23\, v_i(X_j\setminus\{g\})<\frac 23 \Big(b+\frac 53 \Big)=b-\frac b3 +\frac {10}9< b - \frac 16 + \frac {10}9  = b +\frac{17}{18}<1+b,\] 
and therefore \cref{prop:b} is still satisfied for agent $i$. {Since $X'_i$ contains $2$ goods, one from $\PP(\X)$ and one from $X_i$, which has cardinality at least $2$}, by Claim~\ref{cl:(a)(b)2Goods}, \cref{prop:a,prop:b} are satisfied for any agent $j \in N \setminus \{i\}$, so overall both properties are satisfied. \smallskip

\noindent \textit{If Step~\ref{3Vstep_GeCycle} or Step~\ref{3Vstep_GdeCycle} is executed:}  Consider {the only} bundle $X$ that changes {its content (as opposed to just its owner)} during this step. {In $\X$, $X$ is the bundle $X_s$ of some source $s$ of $\Grplus(\X)$.} Its updated version in $\X'$ is $X'_i$, {where agent $i$ is as described in the formal description of Steps~\ref{3Vstep_GeCycle} and \ref{3Vstep_GdeCycle}. It is easy to see that in Step~\ref{3Vstep_GeCycle} agent $i$ goes from having value at most $1$ to having value $2b$ and in Step \ref{3Vstep_GdeCycle} agent $i$ goes from having value less than $2$ to having value $4b$. Therefore, in both cases, \cref{prop:b} is satisfied from $i$'s perspective.} 
Looking at $X'_i$, we also observe that it consists of one good $g^* \in \PP(\X)$ and {some goods} of $X_s$. By \cref{cl:(a)(b)ManyGoods}, \cref{prop:a,prop:b} are  satisfied in $\X'$ for any agent $j$ such that $v_j(X_j')\geq \frac 32 v_j(X_j)$, if $|X_j|>|X_j'|=1$, and $v_j(X_j')\geq v_j(X_j)$, otherwise. Taking the definitions of the corresponding envy graphs into consideration, that would include any agent $j\in N\setminus\{i\}$ whose value was not reduced during this iteration.

{Now consider any agent $j \in N$ such that $v_j(X'_j) < v_j(X_j)$. Note that this implies that either $j\in N_{1,(0,2/3)}(\X')$, or $j\in N_{[1+b,2)}(\X')$ (the latter happens only if during Step~\ref{3Vstep_GdeCycle}). Suppose first that $j\in N_{1,(0,2/3)}(\X')$. This can only happen because an edge specific to $\Geplus(\X)$ and, therefore, $|X'_j|\ge 2$ and $v_j(X'_j) > b$. For any other agent $\ell \in N \setminus \{j\}$, by \cref{lem:prop(f2)} (if $|X_{\ell}|\le 2$) or by \cref{lem:One_b,lem:maxValue} (if $|X_{\ell}|\ge 3$), we have that $v_j(X_{\ell}'\setminus\{g\})<\frac 23$, for any $g\in X_{\ell}'$. Therefore, we have $v_j(X_j')>b> \frac 49 >\frac 23 v_j(X'_{\ell}\setminus\{g\})$ for any $g\in X_{\ell}'$. That is, \cref{prop:b} is satisfied from $j$'s perspective. }

Finally, consider the case where $j\in N_{[1+b,2)}(\X')$. {This can only happen because of an edge specific to $\Gde(\X)$. Then, however, to show that \cref{prop:b} is satisfied from $j$'s perspective we may repeat---verbatim---the argument for the case of Step~\ref{3Vstep_swap3bToab} above. }

{Invoking the induction hypothesis for all other pairs of agents, we conclude that \cref{prop:a,prop:b} are satisfied for everyone.}

\noindent \textit{If Step~\ref{3Vstep_AddGood} is executed:} In this step, only some bundle $X_i$ is modified {from $\X$ to} $\X'$, by setting $X_i' = X_i \cup \{g\}$ for some good $g \in \PP(\X)$. Of course, $v_i(X_i')\geq v_i(X_i)$, which, combined with the fact that $|X_i| \leq |X'_i|$ implies that \cref{prop:a,prop:b} are satisfied {from $i$'s perspective} by the induction hypothesis. Notice also that agent $i$ is a source of $\Gde(\X)$ and therefore also a source of $\Grplus(\X)$. By \cref{cl:(a)(b)ManyGoods}, \cref{prop:a,prop:b} are  satisfied in $\X'$ for any agent $j \in N\setminus\{i\}$. \smallskip

\noindent \textit{If Step~\ref{3Vstep_AddGoodInKMax} is executed:} 
Consider the agent $i \in N$ whose bundle changes its content (as opposed to just its owner) during this step. Here the algorithm assumes the existence 
of a source $s$ of $\Gde(\X)$. Then this $s$ dictates who agent $i$ should be, as $i$ is the endpoint of a path $\Pi$ that starts at $s$ in $\Gde(\X)$. Such an $s$ is guaranteed to exist by the time the condition of Step~\ref{3Vstep_AddGoodInKMax} is considered, as it is explained in the final arguments of the proof of \cref{lem:3V_noCritical} below. 
We have that $X_i' = (X_s \setminus \{g'\}) \cup \{g\}$, where $g'$ is $i$'s less preferred good from $s$'s bundle in $\X$ and $g \in \PP(\X)$ is $i$'s critical good in $\X$.
Therefore, for any $j \in N \setminus \{i\}$, \cref{cl:(a)(b)ManyGoods} can be applied, implying that agent $j$ is EFX towards agent $i$, if $|X_j|=1$, and she is $2/3$-EFX towards $i$, if $|X_j|\ge 2$. The fact that \cref{prop:a,prop:b} are fully satisfied for everyone in $N \setminus \{i\}$ follows then by repeating the very simple argument used for Steps~\ref{3Vstep_GrCycle} and \ref{3Vstep_equalVinPool} with respect to the agents who received another bundle along the path $\Pi$.

What is still left to show is that \cref{prop:b} is satisfied from $i$'s perspective in $\X'$ (recall that $|X'_i| = k_{\max}\ge 3$). By the condition of Step~\ref{3Vstep_AddGoodInKMax}, agent $i$ has a critical good $g_i \in \PP(\X)$, thus, \cref{lem:3VcriticalGoods} implies that $v_i(X_i)=1$ and, by \cref{lem:Value1Cardinality1}, we also have $|X_i|=1$.

Consider some agent $j \in N\setminus\{i\}$ with bundle $X_j'$ (where, possibly, $X'_j = X_i$). First of all, if $X'_j = X_i$, then by the above discussion, agent $i$ is trivially $2/3$-EFX towards $j$ in $\X'$. So, we may assume that $X_j' = X_{\ell}$ for some $\ell \in N\setminus\{i\}$. By \cref{lem:prop(f2)} (if $|X_{\ell}|\le 2$) or by \cref{lem:One_b,lem:maxValue} (if $|X_{\ell}|\ge 3$), we have that $v_i(X_{\ell}\setminus\{\tilde{g}\})<\frac 23$ for any $\tilde{g}\in X_{\ell}$. Therefore, we have $v_i(X_i') > v_i(g)  = b > \frac 49 >\frac 23 v_i(X_{\ell}\setminus\{\tilde{g}\}) = \frac 23 v_i(X'_{j}\setminus\{\tilde{g}\})$, for any $\tilde{g}\in X'_{j}$. This establishes that \cref{prop:b} is satisfied from $i$'s perspective in $\X'$. 

Overall, invoking the induction hypothesis for all other pairs of agents, we conclude that \cref{prop:a,prop:b} are satisfied for everyone in $N$.\smallskip

\noindent \textit{If Step~\ref{3Vstep_Break} is executed:} In this case $\X'=\X$ and the algorithm terminates; \cref{prop:a,prop:b} are satisfied trivially by the induction hypothesis. 
\end{proof}

\paragraph{Concluding the proof of \cref{lem:3V_noCritical}} We argue that as long as there exists some agent $i \in N$ with a critical good, some step of the \PPAplusplus algorithm other than Step~\ref{3Vstep_Break} will be triggered. Suppose, towards a contradiction, that in the last iteration of the while loop of the algorithm Step~\ref{3Vstep_Break} ran. Since Steps~\ref{3Vstep_GrCycle} and ~\ref{3Vstep_2a} did not run, by \cref{lem:helpToStep9} the \regplus $\Grplus(\X)$ has a source vertex $s$. Consider the \enhplus $\Geplus(\X)$, and assume first that it does not have a source; this means that $s$ must be contained in some cycle of $\Geplus(\X)$. Consider any such cycle $C$, and let $(i,s)$ be an edge in $C$. Since $s$ is a source of $\Grplus(\X)$, $(i,s)$ is not an edge of $\Grplus(\X)$. From this it follows that $(i,s)$ is not a gray edge, which in turn implies that $C$ is jaundiced. That means that the condition of Step~\ref{3Vstep_GeCycle} of the \PPAplusplus algorithm should have been triggered instead, a contradiction. 

Now assume that there is a source in $\Geplus(\X)$, and consider the doubly enhanced graph $\Gde(\X)$. Similarly to before, if there is no source in $\Gde(\X)$, there would be a cycle that is jaundiced and the condition of Step~\ref{3Vstep_GdeCycle} should have been triggered instead; a contradiction. 
So, we may assume that $\Gde(\X)$ has at least one source. If some source $s$ of $\Gde(\X)$ has $|X_s|<k_{\max}$, then, since $i$ has a critical good, the condition of Step~\ref{3Vstep_AddGood} is satisfied; a contradiction. {Otherwise, every source of $\Gde(\X)$ has $|X_s|=k_{\max}$, and to reach a contradiction through Step~\ref{3Vstep_AddGoodInKMax} we need a source $s$ with $|X_s|=k_{\max}$ \emph{and} a path from $s$ to $i$. Towards that, we show that $i$ is reachable from a source of $\Gde(\X)$. First, let $A$ be the set of vertices from which $i$ is reachable in $\Gde(\X)$ (including $i$); then every edge entering $A$ originates in $A$. Next we contract the strongly connected components of $\Gde(\X)[A]$ into a DAG. Let $S$ be the vertex set of a connected component that is a source of that DAG; by the property of $A$ we mentioned above, $S$ has no incoming edges from $N\setminus S$ in $\Gde(\X)$ at all. 
If $|S|\ge2$, then $S$ is a nontrivial strongly connected component, so every vertex of $S$ lies on a cycle of $\Gde(\X)$. Each such cycle is unjaundiced since Steps~\ref{3Vstep_GrCycle}, \ref{3Vstep_GeCycle}, and \ref{3Vstep_GdeCycle} did not run, hence consists of gray edges and every agent of $S$ has value $1$. As no edge enters $S$ from outside, $S$ is then a nonempty \iso set, contradicting $R(\X)=\emptyset$ (as Step~\ref{3Vstep_2a} did not run). Therefore $|S|=1$, i.e., $S=\{s\}$ for a single vertex $s$, which is a source of $\Gde(\X)$  from which $i$ is reachable; as every source has $|X_s|=k_{\max}$, the condition of Step~\ref{3Vstep_AddGoodInKMax} is satisfied; a contradiction.}
 \qed
 \smallskip

\subsubsection{The Proof of \cref{lem:3V_polytime}} 
\label{subsec:3PA++_polytime}
We will show that the potential function we introduced in \cref{subsec:potential} lexicographically increases at each round of \PPAplusplus that performs any step, except possibly Steps \ref{3Vstep_GrCycle}, \ref{3Vstep_swap3bToab}, and \ref{3Vstep_AddGood}. Applying Steps \ref{3Vstep_GrCycle}, \ref{3Vstep_swap3bToab}, or \ref{3Vstep_AddGood} may leave the potential function unaffected, yet we will argue that those three steps may be applied sequentially for a polynomially bounded number of steps. Recall the potential function: 
\[\potential(\X)= (|N_{[2,\infty)}(\X)|, |N_{[1+b,2)}(\X)|, |N_{(1,1+b)}(\X)|, |N_{[2/3,1)}(\X)|, SW_{(0,1+b)}(\X))\]
By \cref{lem:agentHierarchy}, while running \PPAplusplus none of the  first four coordinates of the potential function ever decreases. Hence, $\potential(\X)$ increases {whenever} one of its terms increases {(even if the last coordinate occasionally decreases)}.

\begin{itemize}[leftmargin=*, itemsep=3pt]
    \item[-] If Step \ref{3Vstep_swap1To1} or \ref{3Vstep_swap2To1} is applied, one agent improves her value by getting a single good, so $SW_{(0,1+b)}(\X)$ strictly improves.
    
    \item[-] If Step \ref{3Vstep_equalVinPool} is applied, no agent decreases their value. Let $(i',i)$ be the {\em last} non-gray edge in the path. We first argue that $|X_{i}|=1$. This is true, because either agent $i$ is not the last vertex on the path and the next edge is gray, or $i$ is the last vertex on the path and we know that $v_{i}(X_{i})=v_{i}(g)$ for some $g\in \PP(\X)$. By Lemma~\ref{lem:Value1Cardinality1}, the latter implies $|X_{i}|=1$. 
    In any case, agent $i'$ strictly improves her value by getting a single good, so $SW_{(0,1+b)}(\X)$ strictly improves.
    
    \item[-] If Step \ref{3Vstep_2a} is applied, no agent in $R(\X)$  belonged to any of the sets $N_{[2,\infty)}(\X)$, $N_{[1+b,2)}(\X)$, $N_{(1,1+b)}(\X)$, since they had value $1$ for their bundles. However, after applying Step \ref{3Vstep_2a}, {at least one agent from $R(\X)$} belongs to one of these sets ({typically, this happens with \textit{all} the agents} unless the pool empties and the algorithm terminates), so the cardinality of at least one of them increases.
    
    \item[-] If Step \ref{3Vstep_swap1To2} is applied no value is decreased and {the agent $i$ whose bundle is modified strictly increases hers. If $i$ received two goods from the pool (first part of Step~\ref{3Vstep_swap1To2}), then, by Lemma~\ref{lem:prop(f1)}, agent $i$ values her new bundle} by at most $2b$, so $SW_{(0,1+b)}(\X)$ strictly improves. 
    {If $i$ received one good from the pool (second part of Step~\ref{3Vstep_swap1To2}), then either $SW_{(0,1+b)}(\X)$ strictly improves or $|N_{[1+b,2)}(\X)|$ increases; in any case the second part of Step~\ref{3Vstep_swap1To2} completes the allocation and the algorithm terminates after the next iteration. }

    \item[-] If Step \ref{3Vstep_swapManyTo2} is applied, let $i$ be the agent that modifies her bundle. If her value for her updated bundle is less than $1+b$, $SW_{(0,1+b)}(\X)$ strictly improves.  The {only other possible cases about $i$'s value for her updated bundle} are that it is either $1+b$, in which case $|N_{[1+b,2)}(\X)|$ increases, or it is $2$, in which case $|N_{[2,\infty)}(\X)|$ increases.
    
    \item[-] If Step \ref{3Vstep_GeCycle} or \ref{3Vstep_GdeCycle} is applied, some agents may reduce their value, but in Step \ref{3Vstep_GeCycle} $|N_{(1,1+b)}(\X)|$ increases (an agent with value {at most} $1$ improves her value to $2b$) and in Step \ref{3Vstep_GdeCycle} $|N_{[2,\infty)}(\X)|$ increases (an agent with value less than $2$ improves her value to $4b$). 

    \item[-] If Step \ref{3Vstep_AddGoodInKMax} is applied, $|N_{[2/3,1)}(\X)|$ increases, since an agent with value $1$ {(by \cref{lem:3VcriticalGoods})} receives a set for which she has value $b+(k_{\max}-1)c\geq \frac 23$.

    \item[-] If Step \ref{3Vstep_swap3bToab} is applied, the potential function remains unchanged because only a single agent from $N_{[1+b,2)}(\X)$ changes her bundle and she remains in $N_{[1+b,2)}(\X)$.

    \item[-] If Step \ref{3Vstep_GrCycle} or \ref{3Vstep_AddGood} is applied, no agent reduces their value, whereas the potential may or may not improve. 
\end{itemize}

Next we argue that it is not possible to consecutively  apply  Steps \ref{3Vstep_GrCycle}, \ref{3Vstep_swap3bToab}, and \ref{3Vstep_AddGood}  $2mn^2$ times without increasing $\potential(\X)$. First note that Step \ref{3Vstep_GrCycle} can be executed consecutively less than $n^2$ times, since each time the number of edges in $\Gr(\X)$ decreases. If only Steps~\ref{3Vstep_GrCycle} and \ref{3Vstep_swap3bToab} are executed, Step \ref{3Vstep_swap3bToab} can be executed at most $m$ times, since the cardinality of the pool always increases, so both steps can be consecutively executed less than $mn^2$ times. 
In the following claim we prove that if {at any point} Steps \ref{3Vstep_GrCycle} and \ref{3Vstep_swap3bToab} cannot be applied, then Step \ref{3Vstep_swap3bToab} may only be applied after the potential function has increased.

 \begin{claim}\label{claim:step8}
     If for some allocation $\X$ the conditions of Steps \ref{3Vstep_GrCycle} and \ref{3Vstep_swap3bToab} of \PPAplusplus {are false}, Step \ref{3Vstep_swap3bToab} may be only applied after the increment of the potential function.
 \end{claim}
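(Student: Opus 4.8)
The plan is to prove, by induction on the number of steps executed after $\X$, the following invariant: \emph{if at $\X$ neither Step~\ref{3Vstep_GrCycle} nor Step~\ref{3Vstep_swap3bToab} is applicable, then at every allocation reached before $\potential$ strictly increases, Step~\ref{3Vstep_swap3bToab} remains inapplicable} --- which, since $\potential$ never decreases lexicographically (\cref{lem:agentHierarchy} together with the per-step analysis preceding the claim), is exactly the assertion of the claim. The base case is the hypothesis. For the inductive step I would take an allocation $\X_{j-1}$ with $\potential(\X_{j-1})=\potential(\X)$ at which Step~\ref{3Vstep_swap3bToab} is inapplicable, perform one further step to $\X_j$ with $\potential(\X_j)=\potential(\X_{j-1})$, and show Step~\ref{3Vstep_swap3bToab} is still inapplicable at $\X_j$. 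By the per-step analysis, every step other than Steps~\ref{3Vstep_GrCycle}, \ref{3Vstep_swap3bToab}, and \ref{3Vstep_AddGood} either strictly increases $\potential$ or terminates the algorithm; since Step~\ref{3Vstep_swap3bToab} is inapplicable at $\X_{j-1}$, the step taken is Step~\ref{3Vstep_GrCycle} or Step~\ref{3Vstep_AddGood}.

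The argument then rests on two observations. First, neither Step~\ref{3Vstep_GrCycle} nor Step~\ref{3Vstep_AddGood} enlarges the pool (Step~\ref{3Vstep_GrCycle} leaves $\PP(\X)$ unchanged; Step~\ref{3Vstep_AddGood} removes one good), so any good that some agent values at $1$ and that lies in $\PP(\X_j)$ already lay in $\PP(\X_{j-1})$. Second, a jaundiced cycle in $\Grplus$ uses only grey edges, so every agent on it holds a singleton she values at $1$, and after the cycle is resolved each such agent again holds a singleton she values at $1$; hence Step~\ref{3Vstep_GrCycle} creates no new agent whose bundle contains three goods of value $b$. Combining these: if a Step~\ref{3Vstep_GrCycle} is taken and Step~\ref{3Vstep_swap3bToab} were applicable at $\X_j$ via an agent $k$, then $k$'s bundle (which must contain three $b$-goods) was off the resolved cycle, hence unchanged, so by the first observation Step~\ref{3Vstep_swap3bToab} was already applicable at $\X_{j-1}$ --- a contradiction.

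The hard part is the case of a Step~\ref{3Vstep_AddGood}: it adds to the bundle $X_s$ of a source $s$ of $\Gde(\X_{j-1})$ with $|X_s|<k_{\max}$ a good $g$ that is critical for some agent, and since Steps~\ref{3Vstep_GrCycle}--\ref{3Vstep_swap3bToab} did not run, \cref{lem:3VcriticalGoods} gives that that agent values $g$ at $b$ (though $v_s(g)$ may be $1$, $b$, or $c$). Suppose Step~\ref{3Vstep_swap3bToab} is applicable at $\X_j$, witnessed by an agent $k$. If $k\ne s$ its bundle is unchanged, so by the first observation Step~\ref{3Vstep_swap3bToab} was applicable at $\X_{j-1}$ --- a contradiction. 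So $k=s$, and $X_s\cup\{g\}$ consists, from $s$'s viewpoint, of exactly three goods of value $b$ and the rest of value $c$, forcing $v_s(g)\in\{b,c\}$. If $v_s(g)=c$ then $X_s$ already had exactly three goods of value $b$ and the rest of value $c$, and (first observation) the witnessing unit good for $s$ was in $\PP(\X_{j-1})$, so Step~\ref{3Vstep_swap3bToab} was applicable at $\X_{j-1}$ --- a contradiction. If $v_s(g)=b$ then $X_s$ had exactly two goods of value $b$ and at most $k_{\max}-3$ goods of value $c$ (to $s$), so deleting a maximum-value good of $X_s$ leaves a set that $s$ values at $b+(|X_s|-2)c\le b+(k_{\max}-3)c<\tfrac23<1$, using $c>0$ and $b+(k_{\max}-2)c<\tfrac23$; meanwhile the witnessing unit good $h$ for $s$ lies in $\PP(\X_{j-1})$, so Step~\ref{3Vstep_swapManyTo2} was applicable to $s$ at $\X_{j-1}$, contradicting that Step~\ref{3Vstep_AddGood}, which has lower priority than Step~\ref{3Vstep_swapManyTo2}, was executed there. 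This exhausts the cases. I expect this last subcase to be the only genuinely delicate point: it is precisely the bound $|X_s|<k_{\max}$ that forces $v_s(X_s\setminus\{g_1\})<1$ and hence the earlier Step~\ref{3Vstep_swapManyTo2} to have fired; everything else is bookkeeping about which goods can enter a bundle during Steps~\ref{3Vstep_GrCycle} and \ref{3Vstep_AddGood}.
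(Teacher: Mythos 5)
Your induction and your handling of the Step~\ref{3Vstep_AddGood} case are fine (and in fact more detailed than needed), but the Step~\ref{3Vstep_GrCycle} case rests on a false premise. You assert that ``a jaundiced cycle in $\Grplus$ uses only grey edges, so every agent on it holds a singleton she values at $1$.'' This inverts the paper's definition: a cycle is \emph{unjaundiced} when it consists entirely of grey edges, and Step~\ref{3Vstep_GrCycle} resolves \emph{jaundiced} cycles, i.e., cycles containing at least one genuine envy edge. Agents on such a cycle can hold arbitrary bundles, and after \CR an agent $k$ on the cycle receives her successor's bundle, which may well consist, from $k$'s perspective, of three goods of value $b$ plus goods of value $c$ (e.g., $k$ envies a successor whose bundle she values at $3b>v_k(X_k)$). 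So you cannot conclude that the witnessing agent for Step~\ref{3Vstep_swap3bToab} at $\X_j$ had an unchanged bundle, and the contradiction you derive does not follow. Moreover, you cannot repair this locally the way you repaired the Step~\ref{3Vstep_AddGood} case: when Step~\ref{3Vstep_GrCycle} is executed you learn nothing about Steps~\ref{3Vstep_swap1To1}--\ref{3Vstep_swap3bToab} at $\X_{j-1}$ (Step~\ref{3Vstep_GrCycle} has top priority), so the ``Step~\ref{3Vstep_swapManyTo2} would have fired earlier'' argument is unavailable there.

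The missing idea, which is how the paper argues, is a \emph{global} invariant carried from $\X$ rather than a step-by-step claim about bundles: since Steps~\ref{3Vstep_swapManyTo2} and \ref{3Vstep_swap3bToab} both fail at $\X$, and every bundle valued in $[1+b,2)$ consists (from its owner's view) either of one good of value $1$, one of value $b$ and the rest $c$, or of three goods of value $b$ and the rest $c$, no agent of $N_{[1+b,2)}(\X)$ values any good of $\PP(\X)$ at $1$. While the potential is constant, the set $N_{[1+b,2)}$ cannot change (any agent entering or leaving it would change one of the first two coordinates of $\potential$, by \cref{lem:agentHierarchy}), and Steps~\ref{3Vstep_GrCycle} and \ref{3Vstep_AddGood} only shrink the pool; hence the second precondition of Step~\ref{3Vstep_swap3bToab} --- a pool good of value $1$ for the triggering agent, who necessarily lies in $N_{[1+b,2)}$ --- can never become true, regardless of how bundles are permuted by cycle resolutions. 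Your pool-monotonicity observation is exactly half of this invariant; you need to pair it with the valuation bound on pool goods for the fixed set $N_{[1+b,2)}(\X)$ instead of trying to freeze the bundles themselves.
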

 
 \begin{proof}[Proof of \cref{claim:step8}]
If any of the Steps \ref{3Vstep_swap1To1}-\ref{3Vstep_swapManyTo2},  
\ref{3Vstep_GeCycle}, \ref{3Vstep_GdeCycle}, or \ref{3Vstep_AddGoodInKMax}  
could be applied on $\X$, 
the potential function would increase and the claim follows. So, assume that {none of Steps \ref{3Vstep_GrCycle}-\ref{3Vstep_swap3bToab} can} be applied on $\X$. This means that for any agent $i \in N_{[1+b,2)}(\X)$ and any good $g \in \PP(\X)$, $v_i(g)<1$, otherwise either Step \ref{3Vstep_swapManyTo2} or Step \ref{3Vstep_swap3bToab} would apply. Suppose, towards a contradiction, that it is possible that only Steps \ref{3Vstep_AddGood} and \ref{3Vstep_GrCycle} are executed until we get an allocation $\X'$ where Step \ref{3Vstep_swap3bToab} can be applied again; otherwise the potential function would increase and the claim would follow. The set $N_{[1+b,2)}(\X)$ remains the same during all those rounds, otherwise the potential function would increase. During Steps \ref{3Vstep_GrCycle} and \ref{3Vstep_AddGood}, $\PP(\X)$ may only stay the same or shrink, so for any agent $i \in N_{[1+b,2)}(\X')$ and any good $g \in \PP(\X') \subseteq \PP(\X)$, $v_i(g)<1$, and therefore Step \ref{3Vstep_swap3bToab} cannot be applied on $\X'$. This contradicts our assumption {about $\X'$}, and the claim follows. 
\end{proof}

As an immediate consequence of the claim, we get that after less than $mn^2$ steps (during which we have assumed that the potential remains the same), {neither Step~\ref{3Vstep_GrCycle} nor Step~\ref{3Vstep_swap3bToab} can} be applied anymore. Then, however, only Steps~\ref{3Vstep_AddGood} and \ref{3Vstep_GrCycle} may be executed without the improving $\potential(\X)$. 
Similarly to {our argument about Steps~\ref{3Vstep_GrCycle} and \ref{3Vstep_swap3bToab}} above, if only Steps \ref{3Vstep_GrCycle} and \ref{3Vstep_AddGood} are executed, Step \ref{3Vstep_AddGood} can be executed at most $m$ {consecutive} times, since it always shrinks the pool, so both steps can be executed sequentially less than $mn^2$ times. Overall, all three steps can be consecutively executed, without the execution of any other step, for less than $2mn^2$ times.

Equivalently, within $2mn^2$ steps, $\potential(\X)$ will improve. Already this suffices to show that \PPAplusplus terminates. This follows from the simple fact that eventually either the allocation will be complete, or the potential function will become equal to $(n, 0, 0, 0, 0)$, i.e., all agents will enter the set $N_{[2,\infty)}(\X)$. Once an agent gets into $N_{[2,\infty)}(\X)$ (i.e., \cref{level1}), 
by \cref{lem:agentHierarchy}, she never leaves this set (as she cannot fall in hierarchy). It is easy to check that if  $N_{[2,\infty)}(\X) = N$, none of the  Steps \ref{3Vstep_GrCycle}-\ref{3Vstep_AddGoodInKMax} can be applied and the algorithm terminates. Nevertheless, we are going to show something stronger, namely that \PPAplusplus terminates in polynomially many iterations. As each iteration only involves checking the conditions of the different steps in polynomial time and one of the steps running, again in polynomial time, this would prove the statement. 

So, every $2mn^2$ steps at most, $\potential(\X)$  improves. We are going to distinguish two types of improvements: \textit{welfare improvements}, where only the restricted social welfare $SW_{(0,1+b)}(\X)$ increases, and \textit{ascend improvements}, where {at least one} of the other coordinates of $\potential(\X)$ increases, i.e., at least one agent ascends in the hierarchy. 
Each agent can trigger at most $4$ ascend improvements, as she never falls in hierarchy. So there are at most $4n$ ascend improvements in total. In fact, by the above discussion, after at most $4n$ ascend improvements, the \PPAplusplus algorithm must terminate. 

Finally, we are going to bound the number of welfare improvements between two ascend improvements. By the definition of an ascend improvement, between any two of them, the set of agents contributing to $SW_{(0,1+b)}(\X)$, i.e., agents at \cref{level3,level4,level5}, remains unchanged. 
As long as none of the Steps \ref{3Vstep_2a}, \ref{3Vstep_GeCycle}, \ref{3Vstep_GdeCycle}, or \ref{3Vstep_AddGoodInKMax} is executed (all of which force an ascend improvement) every agent at \cref{level3,level4,level5} never decreases her value. Moreover, every $2mn^2$ steps at most, at least one of those agents strictly increases her value through Steps 
\ref{3Vstep_swap1To1}-\ref{3Vstep_equalVinPool} or \ref{3Vstep_swapManyTo2}, \textit{always} by getting a bundle of cardinality $1$ or $2$, as one can see by simple inspection of the different steps. Every time this happens for a specific agent $i\in N_{(1,1+b)}(\X) \cup N_{[2/3,1)}(\X) \cup N_{1,(0,2/3)}(\X)$, she must have a distinct value for her bundle. Since there are only {7 possible values for} bundles of cardinality at most $2$ that allow $i$ to be in that set {(namely, $c$, $2c$, $b$, $b+c$, $2b$, $1$, and $1+c$)}, $i$ can trigger at most $6$ welfare improvements between two ascend improvements. So, there are at most $6n$ welfare improvements, in total, between two ascend improvements.

Therefore, every $14mn^3$ steps or less, at least one ascend improvement occurs. If the \PPAplusplus algorithm does not terminate earlier, within $56mn^4$ iterations at most, all possible ascend improvements will happen, guaranteeing termination.  \qed

\subsubsection{Dealing with Zero Values}\label{app:zero_c}
In the pseudocode of \TVA the case where $c = 0$ is handled by replacing these zeros by a sufficiently small number. Here we will show that this is indeed a valid approach that reduces the problem to the $c > 0$ case we already dealt with. Moreover, as it is evident by the definition of $c'$ in the statement of the next lemma, a sufficient value for $\varepsilon$ in \TVA can be found in $O(m^2)$ time.

\begin{lemma}\label{lem:zero_c}
    Consider a $3$-value instance $(N, (v_i)_{i\in N}, M)$ such that $a=1 > b> c=0$ and construct the instance $(N, (\tilde{v}_i)_{i\in N}, M)$ by replacing any value that is $0$ by 
    \[c' = \frac{1}{3m}\,\min\{|k+\ell b| : k,\ell \in \{-3m,\ldots,  2m\} \textnormal{\ \ and\ \ } k+\ell b \neq 0\}\,.\]
    Then any $2/3$-EFX allocation for $(N, (\tilde{v}_i)_{i\in N}, M)$, is also a $2/3$-EFX allocation for $(N, ({v}_i)_{i\in N}, M)$.
\end{lemma}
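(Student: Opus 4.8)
The plan is to show that passing from the modified instance $(N,(\tilde v_i)_{i\in N},M)$ back to the original instance $(N,(v_i)_{i\in N},M)$ cannot destroy the $2/3$-EFX guarantee. Fix a complete allocation $\X=(X_1,\dots,X_n)$ that is $2/3$-EFX under $\tilde v$. Take any two agents $i,j\in N$ and any good $g\in X_j$. Since every good that $i$ values at $0$ under $v_i$ is valued at $c'$ under $\tilde v_i$, and every other good has the same value under both, we have $v_i(X_i)=\tilde v_i(X_i)-c'\cdot|\{h\in X_i: v_i(h)=0\}|$ and likewise $v_i(X_j\setminus\{g\})=\tilde v_i(X_j\setminus\{g\})-c'\cdot|\{h\in X_j\setminus\{g\}: v_i(h)=0\}|$. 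The point is that $v_i(X_i)$ and $v_i(X_j\setminus\{g\})$ are each obtained from their $\tilde v_i$-counterparts by subtracting a multiple of $c'$ of size at most $m\cdot c'$, which is tiny.

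The key step is to show that the $2/3$-EFX inequality $v_i(X_i)\ge \tfrac23 v_i(X_j\setminus\{g\})$ follows from $\tilde v_i(X_i)\ge \tfrac23 \tilde v_i(X_j\setminus\{g\})$. First I would dispose of the case where the original inequality holds as an equality or with slack only through $c'$-terms: note that both $v_i(X_i)$ and $v_i(X_j\setminus\{g\})$ are of the form $k+\ell b$ with $k,\ell$ nonnegative integers, and since $|X_i|\le m$ and $|X_j\setminus\{g\}|\le m$, we can bound $k\le m$ (number of goods valued $1$) and $\ell\le m$. Then $3v_i(X_i)-2v_i(X_j\setminus\{g\})$ is again of the form $k'+\ell' b$ with $k',\ell'\in\{-3m,\dots,2m\}$ (here $k'$ ranges over $\{-2m,\dots,3m\}$ and $\ell'$ over $\{-2m,\dots,3m\}$, which I would absorb into the symmetric range $\{-3m,\dots,3m\}$; a minor reindexing of the range in the definition of $c'$ — or observing $\{-3m,\ldots,2m\}$ already covers the needed span after noting $k',\ell'$ are bounded by $3m$ in absolute value — handles this). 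By the definition of $c'$, if this quantity is nonzero it has absolute value at least $3m\cdot c'$.

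Now the argument closes as follows. Suppose for contradiction $v_i(X_i)<\tfrac23 v_i(X_j\setminus\{g\})$, i.e.\ $3v_i(X_i)-2v_i(X_j\setminus\{g\})<0$. Since this quantity has the form $k'+\ell'b$ and is negative, hence nonzero, it is at most $-3m c'$. But $3v_i(X_i)-2v_i(X_j\setminus\{g\}) = \big(3\tilde v_i(X_i)-2\tilde v_i(X_j\setminus\{g\})\big) - c'\cdot\big(3|\{h\in X_i: v_i(h)=0\}| - 2|\{h\in X_j\setminus\{g\}: v_i(h)=0\}|\big)$, and the correction term has absolute value at most $3m c'$. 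Using $\tilde v_i(X_i)\ge\tfrac23\tilde v_i(X_j\setminus\{g\})$, i.e.\ $3\tilde v_i(X_i)-2\tilde v_i(X_j\setminus\{g\})\ge 0$, we get $3v_i(X_i)-2v_i(X_j\setminus\{g\})\ge -3mc' $, contradicting the strict bound $<-3mc'$ unless the quantity is exactly $-3mc'$ — but it is a multiple of... actually it is of the form $k'+\ell'b$, and we need it to be $\ge -3mc'$ while being $\le -3mc'$ if nonzero, forcing it to be $0$ or nonnegative. Hence $v_i(X_i)\ge\tfrac23 v_i(X_j\setminus\{g\})$, as desired. Applying this to all pairs $i,j$ and all $g\in X_j$ shows $\X$ is $2/3$-EFX under $v$.

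\textbf{Main obstacle.} The delicate point is the bookkeeping on the ranges of the integer coefficients $k',\ell'$ so that they genuinely fall within (a symmetric enlargement of) the index set used to define $c'$, and confirming that $3m c'$ is a strict enough gap to separate a nonzero value of $k'+\ell'b$ from the at-most-$3mc'$ perturbation. One has to be a little careful that $a=1$ and $c'$ could themselves be rational multiples making $k+\ell b$ land near $0$, but that is exactly what the $\min$ in the definition of $c'$ rules out. I expect no conceptual difficulty beyond getting these bounds exactly right; everything else is routine additivity.
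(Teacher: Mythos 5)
Your proposal follows essentially the same route as the paper's proof: both express the relevant quantity $3v_i(X_i)-2v_i(X_j\setminus\{g\})$ as its $\tilde v_i$-counterpart minus a perturbation of magnitude at most $3mc'$, and both invoke the fact that any nonzero number of the form $k+\ell b$ with coefficients in the stated range has absolute value at least $3mc'$ to rule out a violation of $2/3$-EFX in the original instance. The one loose end you flag yourself --- the boundary case where the quantity equals exactly $-3mc'$ --- is closed by noting that $X_j\ni g$ is nonempty, hence $|X_i|\le m-1$ and the perturbation is at most $3(m-1)c'<3mc'$, making the lower bound strict; the paper's own write-up glosses over this same point.
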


\begin{proof}
First, note that for any two $S, T\subseteq M$ and any $i\in N$, we have that $|2v_i(S) - 3v_i(T)|$ is either $0$ or takes a value in the set $\{|k+\ell b| : k,\ell \in \{-3m,\ldots,  2m\} \textnormal{\ \ and\ \ } k+\ell b \neq 0\}$. In particular, if $|2v_i(S) - 3v_i(T)|\neq 0$, then $|2v_i(S) - 3v_i(T)| \ge \min\{|k+\ell b| : k,\ell \in \{-3m,\ldots,  2m\} \textnormal{\ \ and\ \ } k+\ell b \neq 0\} =  3mc'$.

Let $\X$ be a $2/3$-EFX allocation for $(N, (\tilde{v}_i)_{i\in N}, M)$. {First, observe that if $m\le n$ then it is necessary that $|X_i|\le 1$, for all $i\in N$, and $\X$ is trivially a $2/3$-EFX allocation for $(N, ({v}_i)_{i\in N}, M)$ as well. So, for the rest of the proof, we may assume that $m>n$. Given that all values are positive in $(N, (\tilde{v}_i)_{i\in N}, M)$, we must have $|X_i|\ge 1$, for all $i\in N$. This implies that $|X_i|< m$, for all $i\in N$.} 

Consider any two agents $i, j \in N$. Of course, we have  $\tilde{v}_i(X_i) \ge \frac23 \max_{g\in X_j}\tilde{v}_i(X_j \setminus \{g\})$. By the definition of $\tilde{v}_i$, we have $\tilde{v}_i(X_i)\le {v}_i(X_i) + |X_i|c'$ and $\max_{g\in X_j}\tilde{v}_i(X_j \setminus \{g\}) \ge \max_{g\in X_j}{v}_i(X_j \setminus \{g\})$.
That is 
\[{v}_i(X_i) + |X_i|c' \ge \frac23\max_{g\in X_j}{v}_i(X_j \setminus \{g\}),  \text{\ \ or equivalently,\ \ } 2 {v}_i(Y_j) - 3{v}_i(X_i) \le 3|X_i|c'\,,\]
where $Y_j\subseteq X_j$ such that $v_i(Y_j) = \max_{g\in X_j}{v}_i(X_j \setminus \{g\})$.

Suppose that
${v}_i(X_i)  < \frac23\max_{g\in X_j}{v}_i(X_j \setminus \{g\})$. Then, $2 {v}_i(Y_j) - 3{v}_i(X_i) > 0$ and, by the above discussion, it must be $2 {v}_i(Y_j) - 3{v}_i(X_i) \ge 3mc' > 3|X_i|c' \ge 2 {v}_i(Y_j) - 3{v}_i(X_i)$, which is a contradiction. Hence, $\X$ is a $2/3$-EFX allocation for $(N, ({v}_i)_{i\in N}, M)$ as well. 
\end{proof}

\section{Conclusion and Discussion}\label{sec:conclusion}

In this work, we have pushed the frontier of existence and computation results for approximate EFX allocations
in three natural settings for which we show that $2/3$-EFX allocations can be computed in polynomial time. Interestingly, these settings strictly generalize the settings where the state of the art achieves exact EFX allocations.
The algorithmic framework that we develop, via the introduction of the \PPA algorithm and its variants, has the potential to be a powerful tool in obtaining approximate EFX allocations, possibly in polynomial time. While the existence of EFX allocations remains the holy grail of this line of work, one of the most sensible next steps would be to aim for the existence of $2/3$-EFX allocations for (general) additive valuation functions. The case of $3$-value instances already reveals many of the intricacies one has to deal with in this endeavor. 

A related direction is to explore the applicability of our approach in different variants of the problem, as it has already been done by \citet{PrakashMN25} and \citet{kakatelis2026almost}, or even in other fairness notions, e.g., as in the recent work of \citet{filos2026approximate}.

Finally, a  natural question here is whether our framework can be adjusted for the case where the resources are indivisible chores, i.e., each item has non-positive value for everyone.  
The existence of $4$-EFX allocations (under the natural modification of \cref{def:ef-efx}) for agents with additive valuation functions was shown in the breakthrough work of \citet{GargMQ25}, but polynomial computation is only known for a constant number of agents. 
Currently, the best known approximation factors---achievable in polynomial time---are $2$ for a constant number of  agents \citep{GargM26} and $O(n^2)$ in general \citep{ZhouW24}. Although the mechanics of criticality are quite different for chores, we believe that the high level approach of this work, i.e., the recursive improvement of an initial partial allocation and the use of multiple envy graphs, could be a good starting point for obtaining efficient constant-factor approximation algorithms, at least in restricted cases.

\section*{Acknowledgments}

Georgios Amanatidis was supported by the NWO Veni project No.~VI.Veni.192.153. Aris Filos-Ratsikas was supported by the UK Engineering and Physical Sciences Research Council (EPSRC) grant EP/Y003624/1. Alkmini Sgouritsa was supported by
the H.F.R.I.~Project No.~15635 which is implemented in the framework of H.F.R.I call ``Basic research Financing (Horizontal support of all Sciences)'' under the National Recovery and Resilience Plan ``Greece 2.0'' funded by the European Union--NextGenerationEU. Georgios Amanatidis and Alkmini Sgouritsa were also supported by
the project MIS 5154714 of the National Recovery and Resilience Plan ``Greece 2.0'' funded by the European Union--NextGenerationEU.

\bibliographystyle{plainnat}
\bibliography{efx}

\end{document}